\chardef\@x10\chardef\@xv60
\def\tcitime{
\def\@time{%
  \@minute\time\@hour\@minute\divide\@hour\@xv
  \ifnum\@hour<\@x 0\fi\the\@hour:%
  \multiply\@hour\@xv\advance\@minute-\@hour
  \ifnum\@minute<\@x 0\fi\the\@minute
  }}%
\def\QCTOpt[#1]#2{%
  \def\QCTOptB{#1}
  \def\QCTOptA{#2}
}
\def\QCTNOpt#1{%
  \def\QCTOptA{#1}
  \let\QCTOptB\empty
}
\def\Qct{%
  \@ifnextchar[{%
    \QCTOpt}{\QCTNOpt}
}
\def\QCBOpt[#1]#2{%
  \def\QCBOptB{#1}
  \def\QCBOptA{#2}
}
\def\QCBNOpt#1{%
  \def\QCBOptA{#1}
  \let\QCBOptB\empty
}
\def\Qcb{%
  \@ifnextchar[{%
    \QCBOpt}{\QCBNOpt}
}
\def\PrepCapArgs{%
  \ifx\QCBOptA\empty
    \ifx\QCTOptA\empty
      {}%
    \else
      \ifx\QCTOptB\empty
        {\QCTOptA}%
      \else
        [\QCTOptB]{\QCTOptA}%
      \fi
    \fi
  \else
    \ifx\QCBOptA\empty
      {}%
    \else
      \ifx\QCBOptB\empty
        {\QCBOptA}%
      \else
        [\QCBOptB]{\QCBOptA}%
      \fi
    \fi
  \fi
}
\def\GRAPHICSPS#1{%
 \ifcase\GRAPHICSTYPE
   \special{ps: #1}%
 \or
   \special{language "PS", include "#1"}%
 \fi
}%
\def\graffile#1#2#3#4{%
    \leavevmode
    \raise -#4 \BOXTHEFRAME{%
        \hbox to #2{\raise #3\hbox to #2{\null #1\hfil}}}%
}%
\def\draftbox#1#2#3#4{%
 \leavevmode\raise -#4 \hbox{%
  \frame{\rlap{\protect\tiny #1}\hbox to #2%
   {\vrule height#3 width\z@ depth\z@\hfil}%
  }%
 }%
}%
\newif\ifwasdraft
\def\GRAPHIC#1#2#3#4#5{%
 \ifnum\draft=\@ne\draftbox{#2}{#3}{#4}{#5}%
  \else\graffile{#1}{#3}{#4}{#5}%
  \fi
 }%
\def\addtoLaTeXparams#1{%
    \edef\LaTeXparams{\LaTeXparams #1}}%
\newif\ifBoxFrame \BoxFramefalse
\newif\ifOverFrame \OverFramefalse
\newif\ifUnderFrame \UnderFramefalse
\def\BOXTHEFRAME#1{%
   \hbox{%
      \ifBoxFrame
         \frame{#1}%
      \else
         {#1}%
      \fi
   }%
}
\def\doFRAMEparams#1{\BoxFramefalse\OverFramefalse\UnderFramefalse\readFRAMEparams#1\end}%
\def\readFRAMEparams#1{%
 \ifx#1\end%
  \let\next=\relax
  \else
  \ifx#1i\dispkind=\z@\fi
  \ifx#1d\dispkind=\@ne\fi
  \ifx#1f\dispkind=\tw@\fi
  \ifx#1t\addtoLaTeXparams{t}\fi
  \ifx#1b\addtoLaTeXparams{b}\fi
  \ifx#1p\addtoLaTeXparams{p}\fi
  \ifx#1h\addtoLaTeXparams{h}\fi
  \ifx#1X\BoxFrametrue\fi
  \ifx#1O\OverFrametrue\fi
  \ifx#1U\UnderFrametrue\fi
  \ifx#1w
    \ifnum\draft=1\wasdrafttrue\else\wasdraftfalse\fi
    \draft=\@ne
  \fi
  \let\next=\readFRAMEparams
  \fi
 \next
 }%
\def\IFRAME#1#2#3#4#5#6{%
      \bgroup
      \let\QCTOptA\empty
      \let\QCTOptB\empty
      \let\QCBOptA\empty
      \let\QCBOptB\empty
      #6%
      \parindent=0pt%
      \leftskip=0pt
      \rightskip=0pt
      \setbox0 = \hbox{\QCBOptA}%
      \@tempdima = #1\relax
      \ifOverFrame
          \typeout{This is not implemented yet}%
          \show\HELP
      \else
         \ifdim\wd0>\@tempdima
            \advance\@tempdima by \@tempdima
            \ifdim\wd0 >\@tempdima
               \textwidth=\@tempdima
               \setbox1 =\vbox{%
                  \noindent\hbox to \@tempdima{\hfill\GRAPHIC{#5}{#4}{#1}{#2}{#3}\hfill}\\%
                  \noindent\hbox to \@tempdima{\parbox[b]{\@tempdima}{\QCBOptA}}%
               }%
               \wd1=\@tempdima
            \else
               \textwidth=\wd0
               \setbox1 =\vbox{%
                 \noindent\hbox to \wd0{\hfill\GRAPHIC{#5}{#4}{#1}{#2}{#3}\hfill}\\%
                 \noindent\hbox{\QCBOptA}%
               }%
               \wd1=\wd0
            \fi
         \else
            \ifdim\wd0>0pt
              \hsize=\@tempdima
              \setbox1 =\vbox{%
                \unskip\GRAPHIC{#5}{#4}{#1}{#2}{0pt}%
                \break
                \unskip\hbox to \@tempdima{\hfill \QCBOptA\hfill}%
              }%
              \wd1=\@tempdima
           \else
              \hsize=\@tempdima
              \setbox1 =\vbox{%
                \unskip\GRAPHIC{#5}{#4}{#1}{#2}{0pt}%
              }%
              \wd1=\@tempdima
           \fi
         \fi
         \@tempdimb=\ht1
         \advance\@tempdimb by \dp1
         \advance\@tempdimb by -#2%
         \advance\@tempdimb by #3%
         \leavevmode
         \raise -\@tempdimb \hbox{\box1}%
      \fi
      \egroup%
}%
\def\DFRAME#1#2#3#4#5{%
 \begin{center}
     \let\QCTOptA\empty
     \let\QCTOptB\empty
     \let\QCBOptA\empty
     \let\QCBOptB\empty
     \ifOverFrame
        #5\QCTOptA\par
     \fi
     \GRAPHIC{#4}{#3}{#1}{#2}{\z@}
     \ifUnderFrame
        \nobreak\par #5\QCBOptA
     \fi
 \end{center}%
 }%
\def\FFRAME#1#2#3#4#5#6#7{%
 \begin{figure}[#1]%
  \let\QCTOptA\empty
  \let\QCTOptB\empty
  \let\QCBOptA\empty
  \let\QCBOptB\empty
  \ifOverFrame
    #4
    \ifx\QCTOptA\empty
    \else
      \ifx\QCTOptB\empty
        \caption{\QCTOptA}%
      \else
        \caption[\QCTOptB]{\QCTOptA}%
      \fi
    \fi
    \ifUnderFrame\else
      \label{#5}%
    \fi
  \else
    \UnderFrametrue%
  \fi
  \begin{center}\GRAPHIC{#7}{#6}{#2}{#3}{\z@}\end{center}%
  \ifUnderFrame
    #4
    \ifx\QCBOptA\empty
      \caption{}%
    \else
      \ifx\QCBOptB\empty
        \caption{\QCBOptA}%
      \else
        \caption[\QCBOptB]{\QCBOptA}%
      \fi
    \fi
    \label{#5}%
  \fi
  \end{figure}%
 }%
\def\makeactives{
  \catcode`\"=\active
  \catcode`\;=\active
  \catcode`\:=\active
  \catcode`\'=\active
  \catcode`\~=\active
}
   \gdef\activesoff{%
      \def"{\string"}
      \def;{\string;}
      \def:{\string:}
      \def'{\string'}
      \def~{\string~}
    }
\def\FRAME#1#2#3#4#5#6#7#8{%
 \bgroup
 \@ifundefined{bbl@deactivate}{}{\activesoff}
 \ifnum\draft=\@ne
   \wasdrafttrue
 \else
   \wasdraftfalse%
 \fi
 \def\LaTeXparams{}%
 \dispkind=\z@
 \def\LaTeXparams{}%
 \doFRAMEparams{#1}%
 \ifnum\dispkind=\z@\IFRAME{#2}{#3}{#4}{#7}{#8}{#5}\else
  \ifnum\dispkind=\@ne\DFRAME{#2}{#3}{#7}{#8}{#5}\else
   \ifnum\dispkind=\tw@
    \edef\@tempa{\noexpand\FFRAME{\LaTeXparams}}%
    \@tempa{#2}{#3}{#5}{#6}{#7}{#8}%
    \fi
   \fi
  \fi
  \ifwasdraft\draft=1\else\draft=0\fi{}%
  \egroup
 }%
\def\TEXUX#1{"texux"}
\long\def\QQQ#1#2{%
     \long\expandafter\def\csname#1\endcsname{#2}}%
\long\def\QQA#1#2{}%
\def\QTR#1#2{{\csname#1\endcsname #2}}
\def\EXPAND#1[#2]#3{}%
\def\NOEXPAND#1[#2]#3{}%
\def\LaTeXparent#1{}%
\def\ChildStyles#1{}%
\def\ChildDefaults#1{}%
\def\QTagDef#1#2#3{}%
\def\QQfnmark#1{\footnotemark}
\def\makeatletter\input gnuindex.sty\makeatother\makeindex{\makeatletter\input gnuindex.sty\makeatother\makeindex}%
\def\initial#1{\bigbreak{\raggedright\large\bf #1}\kern 2\p@\penalty3000}}%
 \def\abstract{%
  \if@twocolumn
   \section*{Abstract (Not appropriate in this style!)}%
   \else \small
   \begin{center}{\bf Abstract\vspace{-.5em}\vspace{\z@}}\end{center}%
   \quotation
   \fi
  }%
   \def\registered{\relax\ifmmode{}\r@gistered
                    \else$\m@th\r@gistered$\fi}%
 \def\r@gistered{^{\ooalign
  {\hfil\raise.07ex\hbox{$\scriptstyle\rm\text{R}$}\hfil\crcr
  \mathhexbox20D}}}}{}%
\newdimen\theight
\def\Column{%
 \vadjust{\setbox\z@=\hbox{\scriptsize\quad\quad tcol}%
  \theight=\ht\z@\advance\theight by \dp\z@\advance\theight by \lineskip
  \kern -\theight \vbox to \theight{%
   \rightline{\rlap{\box\z@}}%
   \vss
   }%
  }%
 }%
\def\qed{%
 \ifhmode\unskip\nobreak\fi\ifmmode\ifinner\else\hskip5\p@\fi\fi
 \hbox{\hskip5\p@\vrule width4\p@ height6\p@ depth1.5\p@\hskip\p@}%
 }%
\def\miss{\hbox{\vrule height2\p@ width 2\p@ depth\z@}}%
\def\tcol#1{{\baselineskip=6\p@ \vcenter{#1}} \Column}  %
\def\newfmtname{LaTeX2e}
\def\chkcompat{%
   \if@compatibility
   \else
     \usepackage{latexsym}
   \fi
}
  \DeclareOldFontCommand{\rm}{\normalfont\rmfamily}{\mathrm}
  \DeclareOldFontCommand{\sf}{\normalfont\sffamily}{\mathsf}
  \DeclareOldFontCommand{\tt}{\normalfont\ttfamily}{\mathtt}
  \DeclareOldFontCommand{\bf}{\normalfont\bfseries}{\mathbf}
  \DeclareOldFontCommand{\it}{\normalfont\itshape}{\mathit}
  \DeclareOldFontCommand{\sl}{\normalfont\slshape}{\@nomath\sl}
  \DeclareOldFontCommand{\sc}{\normalfont\scshape}{\@nomath\sc}
\def\alpha{{\Greekmath 010B}}%
\def\beta{{\Greekmath 010C}}%
\def\gamma{{\Greekmath 010D}}%
\def\delta{{\Greekmath 010E}}%
\def\epsilon{{\Greekmath 010F}}%
\def\zeta{{\Greekmath 0110}}%
\def\eta{{\Greekmath 0111}}%
\def\theta{{\Greekmath 0112}}%
\def\iota{{\Greekmath 0113}}%
\def\kappa{{\Greekmath 0114}}%
\def\lambda{{\Greekmath 0115}}%
\def\mu{{\Greekmath 0116}}%
\def\nu{{\Greekmath 0117}}%
\def\xi{{\Greekmath 0118}}%
\def\pi{{\Greekmath 0119}}%
\def\rho{{\Greekmath 011A}}%
\def\sigma{{\Greekmath 011B}}%
\def\tau{{\Greekmath 011C}}%
\def\upsilon{{\Greekmath 011D}}%
\def\phi{{\Greekmath 011E}}%
\def\chi{{\Greekmath 011F}}%
\def\psi{{\Greekmath 0120}}%
\def\omega{{\Greekmath 0121}}%
\def\varepsilon{{\Greekmath 0122}}%
\def\vartheta{{\Greekmath 0123}}%
\def\varpi{{\Greekmath 0124}}%
\def\varrho{{\Greekmath 0125}}%
\def\varsigma{{\Greekmath 0126}}%
\def\varphi{{\Greekmath 0127}}%
\def\nabla{{\Greekmath 0272}}
\def\FindBoldGroup{%
   {\setbox0=\hbox{$\mathbf{x\global\edef\theboldgroup{\the\mathgroup}}$}}%
}
\def\Greekmath#1#2#3#4{%
    \if@compatibility
        \ifnum\mathgroup=\symbold
           \mathchoice{\mbox{\boldmath$\displaystyle\mathchar"#1#2#3#4$}}%
                      {\mbox{\boldmath$\textstyle\mathchar"#1#2#3#4$}}%
                      {\mbox{\boldmath$\scriptstyle\mathchar"#1#2#3#4$}}%
                      {\mbox{\boldmath$\scriptscriptstyle\mathchar"#1#2#3#4$}}%
        \else
           \mathchar"#1#2#3#4%
        \fi
    \else
        \FindBoldGroup
        \ifnum\mathgroup=\theboldgroup 
           \mathchoice{\mbox{\boldmath$\displaystyle\mathchar"#1#2#3#4$}}%
                      {\mbox{\boldmath$\textstyle\mathchar"#1#2#3#4$}}%
                      {\mbox{\boldmath$\scriptstyle\mathchar"#1#2#3#4$}}%
                      {\mbox{\boldmath$\scriptscriptstyle\mathchar"#1#2#3#4$}}%
        \else
           \mathchar"#1#2#3#4%
        \fi     	
	  \fi}
\newif\ifGreekBold  \GreekBoldfalse
\let\SAVEPBF=\pbf
\def\pbf{\GreekBoldtrue\SAVEPBF}%
  \newcounter{equationnumber}
  \def\mathletters{%
     \addtocounter{equation}{1}
     \edef\@currentlabel{\theequation}%
     \setcounter{equationnumber}{\c@equation}
     \setcounter{equation}{0}%
     \edef\theequation{\@currentlabel\noexpand\alph{equation}}%
  }
    \def\BibTeX{{\rm B\kern-.05em{\sc i\kern-.025em b}\kern-.08em
                 T\kern-.1667em\lower.7ex\hbox{E}\kern-.125emX}}}{}%
\def\AmS{{\protect\usefont{OMS}{cmsy}{m}{n}%
                A\kern-.1667em\lower.5ex\hbox{M}\kern-.125emS}}}{}%
\let\DOTSI\relax
\def\RIfM@{\relax\ifmmode}%
\def\FN@{\futurelet\next}%
\def\iint{\DOTSI\intno@\tw@\FN@\ints@}%
\def\iiint{\DOTSI\intno@\thr@@\FN@\ints@}%
\def\iiiint{\DOTSI\intno@4 \FN@\ints@}%
\def\idotsint{\DOTSI\intno@\z@\FN@\ints@}%
\def\ints@{\findlimits@\ints@@}%
\newif\iflimtoken@
\newif\iflimits@
\def\findlimits@{\limtoken@true\ifx\next\limits\limits@true
 \else\ifx\next\nolimits\limits@false\else
 \limtoken@false\ifx\ilimits@\nolimits\limits@false\else
 \ifinner\limits@false\else\limits@true\fi\fi\fi\fi}%
\def\multint@{\int\ifnum\intno@=\z@\intdots@                          
 \else\intkern@\fi                                                    
 \ifnum\intno@>\tw@\int\intkern@\fi                                   
 \ifnum\intno@>\thr@@\int\intkern@\fi                                 
 \int}
\def\multintlimits@{\intop\ifnum\intno@=\z@\intdots@\else\intkern@\fi
 \ifnum\intno@>\tw@\intop\intkern@\fi
 \ifnum\intno@>\thr@@\intop\intkern@\fi\intop}%
\def\intic@{%
    \mathchoice{\hskip.5em}{\hskip.4em}{\hskip.4em}{\hskip.4em}}%
\def\negintic@{\mathchoice
 {\hskip-.5em}{\hskip-.4em}{\hskip-.4em}{\hskip-.4em}}%
\def\ints@@{\iflimtoken@                                              
 \def\ints@@@{\iflimits@\negintic@
   \mathop{\intic@\multintlimits@}\limits                             
  \else\multint@\nolimits\fi                                          
  \eat@}
 \else                                                                
 \def\ints@@@{\iflimits@\negintic@
  \mathop{\intic@\multintlimits@}\limits\else
  \multint@\nolimits\fi}\fi\ints@@@}%
\def\intkern@{\mathchoice{\!\!\!}{\!\!}{\!\!}{\!\!}}%
\def\plaincdots@{\mathinner{\cdotp\cdotp\cdotp}}%
\def\intdots@{\mathchoice{\plaincdots@}%
 {{\cdotp}\mkern1.5mu{\cdotp}\mkern1.5mu{\cdotp}}%
 {{\cdotp}\mkern1mu{\cdotp}\mkern1mu{\cdotp}}%
 {{\cdotp}\mkern1mu{\cdotp}\mkern1mu{\cdotp}}}%
\def\RIfM@{\relax\protect\ifmmode}
\def\text{\RIfM@\expandafter\text@\else\expandafter\mbox\fi}
\let\nfss@text\text
\def\text@#1{\mathchoice
   {\textdef@\displaystyle\f@size{#1}}%
   {\textdef@\textstyle\tf@size{\firstchoice@false #1}}%
   {\textdef@\textstyle\sf@size{\firstchoice@false #1}}%
   {\textdef@\textstyle \ssf@size{\firstchoice@false #1}}%
   \glb@settings}
\def\textdef@#1#2#3{\hbox{{%
                    \everymath{#1}%
                    \let\f@size#2\selectfont
                    #3}}}
\newif\iffirstchoice@
\def\Let@{\relax\iffalse{\fi\let\\=\cr\iffalse}\fi}%
\def\vspace@{\def\vspace##1{\crcr\noalign{\vskip##1\relax}}}%
\def\multilimits@{\bgroup\vspace@\Let@
 \baselineskip\fontdimen10 \scriptfont\tw@
 \advance\baselineskip\fontdimen12 \scriptfont\tw@
 \lineskip\thr@@\fontdimen8 \scriptfont\thr@@
 \lineskiplimit\lineskip
 \vbox\bgroup\ialign\bgroup\hfil$\m@th\scriptstyle{##}$\hfil\crcr}%
\def\Sb{_\multilimits@}%
\def\endSb{\crcr\egroup\egroup\egroup}%
\def\Sp{^\multilimits@}%
\newdimen\ex@
\def\rightarrowfill@#1{$#1\m@th\mathord-\mkern-6mu\cleaders
 \hbox{$#1\mkern-2mu\mathord-\mkern-2mu$}\hfill
 \mkern-6mu\mathord\rightarrow$}%
\def\leftarrowfill@#1{$#1\m@th\mathord\leftarrow\mkern-6mu\cleaders
 \hbox{$#1\mkern-2mu\mathord-\mkern-2mu$}\hfill\mkern-6mu\mathord-$}%
\def\leftrightarrowfill@#1{$#1\m@th\mathord\leftarrow
\mkern-6mu\cleaders
 \hbox{$#1\mkern-2mu\mathord-\mkern-2mu$}\hfill
 \mkern-6mu\mathord\rightarrow$}%
\def\overrightarrow{\mathpalette\overrightarrow@}%
\def\overrightarrow@#1#2{\vbox{\ialign{##\crcr\rightarrowfill@#1\crcr
 \noalign{\kern-\ex@\nointerlineskip}$\m@th\hfil#1#2\hfil$\crcr}}}%
\def\overleftarrow{\mathpalette\overleftarrow@}%
\def\overleftarrow@#1#2{\vbox{\ialign{##\crcr\leftarrowfill@#1\crcr
 \noalign{\kern-\ex@\nointerlineskip}$\m@th\hfil#1#2\hfil$\crcr}}}%
\def\overleftrightarrow{\mathpalette\overleftrightarrow@}%
\def\overleftrightarrow@#1#2{\vbox{\ialign{##\crcr
   \leftrightarrowfill@#1\crcr
 \noalign{\kern-\ex@\nointerlineskip}$\m@th\hfil#1#2\hfil$\crcr}}}%
\def\underrightarrow{\mathpalette\underrightarrow@}%
\def\underrightarrow@#1#2{\vtop{\ialign{##\crcr$\m@th\hfil#1#2\hfil
  $\crcr\noalign{\nointerlineskip}\rightarrowfill@#1\crcr}}}%
\def\underleftarrow{\mathpalette\underleftarrow@}%
\def\underleftarrow@#1#2{\vtop{\ialign{##\crcr$\m@th\hfil#1#2\hfil
  $\crcr\noalign{\nointerlineskip}\leftarrowfill@#1\crcr}}}%
\def\underleftrightarrow{\mathpalette\underleftrightarrow@}%
\def\underleftrightarrow@#1#2{\vtop{\ialign{##\crcr$\m@th
  \hfil#1#2\hfil$\crcr
 \noalign{\nointerlineskip}\leftrightarrowfill@#1\crcr}}}%
\def\qopnamewl@#1{\mathop{\operator@font#1}\nlimits@}
\let\nlimits@\displaylimits
\def\setboxz@h{\setbox\z@\hbox}
\def\varlim@#1#2{\mathop{\vtop{\ialign{##\crcr
 \hfil$#1\m@th\operator@font lim$\hfil\crcr
 \noalign{\nointerlineskip}#2#1\crcr
 \noalign{\nointerlineskip\kern-\ex@}\crcr}}}}
 \def\rightarrowfill@#1{\m@th\setboxz@h{$#1-$}\ht\z@\z@
  $#1\copy\z@\mkern-6mu\cleaders
  \hbox{$#1\mkern-2mu\box\z@\mkern-2mu$}\hfill
  \mkern-6mu\mathord\rightarrow$}
\def\leftarrowfill@#1{\m@th\setboxz@h{$#1-$}\ht\z@\z@
  $#1\mathord\leftarrow\mkern-6mu\cleaders
  \hbox{$#1\mkern-2mu\copy\z@\mkern-2mu$}\hfill
  \mkern-6mu\box\z@$}
\def\projlim{\qopnamewl@{proj\,lim}}
\def\injlim{\qopnamewl@{inj\,lim}}
\def\varinjlim{\mathpalette\varlim@\rightarrowfill@}
\def\varprojlim{\mathpalette\varlim@\leftarrowfill@}
\def\varliminf{\mathpalette\varliminf@{}}
\def\varliminf@#1{\mathop{\underline{\vrule\@depth.2\ex@\@width\z@
   \hbox{$#1\m@th\operator@font lim$}}}}
\def\varlimsup{\mathpalette\varlimsup@{}}
\def\varlimsup@#1{\mathop{\overline
  {\hbox{$#1\m@th\operator@font lim$}}}}
\def\align{\@verbatim \frenchspacing\@vobeyspaces \@alignverbatim
You are using the "align" environment in a style in which it is not defined.}
\let\csname endalign*\endcsname =\endtrivlist
\def\alignat{\@verbatim \frenchspacing\@vobeyspaces \@alignatverbatim
You are using the "alignat" environment in a style in which it is not defined.}
\let\csname endalignat*\endcsname =\endtrivlist
\def\xalignat{\@verbatim \frenchspacing\@vobeyspaces \@xalignatverbatim
You are using the "xalignat" environment in a style in which it is not defined.}
\let\csname endxalignat*\endcsname =\endtrivlist
\def\gather{\@verbatim \frenchspacing\@vobeyspaces \@gatherverbatim
You are using the "gather" environment in a style in which it is not defined.}
\let\csname endgather*\endcsname =\endtrivlist
\def\multiline{\@verbatim \frenchspacing\@vobeyspaces \@multilineverbatim
You are using the "multiline" environment in a style in which it is not defined.}
\let\csname endmultiline*\endcsname =\endtrivlist
\def\arrax{\@verbatim \frenchspacing\@vobeyspaces \@arraxverbatim
You are using a type of "array" construct that is only allowed in AmS-LaTeX.}
\def\tabulax{\@verbatim \frenchspacing\@vobeyspaces \@tabulaxverbatim
You are using a type of "tabular" construct that is only allowed in AmS-LaTeX.}
\let\csname endarrax*\endcsname =\endtrivlist
\let\csname endtabulax*\endcsname =\endtrivlist
\def\@@eqncr{\let\@tempa\relax
    \ifcase\@eqcnt \def\@tempa{& & &}\or \def\@tempa{& &}%
      \else \def\@tempa{&}\fi
     \@tempa
     \if@eqnsw
        \iftag@
           \@taggnum
        \else
           \@eqnnum\stepcounter{equation}%
        \fi
     \fi
     \global\tag@false
     \global\@eqnswtrue
     \global\@eqcnt\z@\cr}
 \def\endequation{%
     \ifmmode\ifinner 
      \iftag@
        \addtocounter{equation}{-1} 
        $\hfil
           \displaywidth\linewidth\@taggnum\egroup \endtrivlist
        \global\tag@false
        \global\@ignoretrue
      \else
        $\hfil
           \displaywidth\linewidth\@eqnnum\egroup \endtrivlist
        \global\tag@false
        \global\@ignoretrue
      \fi
     \else
      \iftag@
        \addtocounter{equation}{-1} 
        \eqno \hbox{\@taggnum}
        \global\tag@false%
        $$\global\@ignoretrue
      \else
        \eqno \hbox{\@eqnnum}
        $$\global\@ignoretrue
      \fi
     \fi\fi
 }
 \newif\iftag@ \tag@false
 \def\tag{\@ifnextchar*{\@tagstar}{\@tag}}
 \def\@tag#1{%
     \global\tag@true
     \global\def\@taggnum{(#1)}}
 \def\@tagstar*#1{%
     \global\tag@true
     \global\def\@taggnum{#1}%
}
\newtheorem{satz}{Theorem}[section]
\newtheorem{definition}[satz]{Definition}
\newtheorem{lemma}[satz]{Lemma}
\newtheorem{koro}[satz]{Corollary}
\newtheorem{bemerkung}[satz]{Remark}
\newtheorem{proposition}[satz]{Proposition}
\newtheorem{notation}[satz]{Notation}
\newenvironment{proof}{\par\noindent {\it Proof:} \hspace{7pt}}{\hfill\hbox{\vrule width 7pt depth 0pt height 7pt}
\par\vspace{10pt}}
\begin{document}

\title{AC--Conductivity Measure from Heat Production of Free Fermions in
Disordered Media}
\author{J.-B. Bru \and W. de Siqueira Pedra \and C. Hertling}
\date{\today }
\maketitle

\begin{abstract}
We extend \cite{OhmII} in order to study the linear response of free
fermions on the lattice within a (independently and identically distributed)
random potential to a macroscopic electric field that is time-- and
space--dependent. We obtain the notion of a macroscopic AC--conducti%
\-%
vity measure which only results from the second principle of thermodynamics.
The latter corresponds here to the positivity of the heat production for
cyclic processes on equilibrium states. Its Fourier transform is a
continuous bounded function which is naturally called (macroscopic)
conductivity. We additionally derive Green--Kubo relations involving
time--correlations of bosonic fields coming from current fluctuations in the
system. This is reminiscent of non--commutative central limit theorems.
\end{abstract}

\tableofcontents%

\section{Introduction}

Klein, Lenoble and M\"{u}ller introduced in \cite{Annale} the concept of a
\textquotedblleft conduc%
\-%
tivity measure\textquotedblright\ $\mu _{\mathrm{KLM}}$ for a system of
non--interacting fermions subjected to a random potential. They considered
the Anderson tight--binding model in presence of a time--dependent spatially
homogeneous electric field $\mathcal{E}=\mathcal{E}_{t}$ that is
adiabatically switched on. Then they showed that the \emph{in--phase} linear
response current density is, at any time $t\in \mathbb{R}$, given by%
\begin{equation}  \label{Ohm.KLM}
J_{\mathrm{lin}}^{\mathrm{in}}(t;\mathcal{E})=\int\nolimits_{\mathbb{R}}%
\mathcal{\hat{E}}_{\nu }\ \mathrm{e}^{i\nu t}\ \mu _{\mathrm{KLM}}(\mathrm{d}%
\nu )\ ,
\end{equation}%
cf. \cite[Eq. (2.14)]{Annale}. Here, $\mathcal{\hat{E}}$ is the Fourier
transform of $\mathcal{E}$ and is compactly supported. See also \cite{jfa}
for further details on linear response theory of such a model. The fermionic
nature of charge carriers -- electrons or holes in crystals -- was
implemented by choosing the Fermi--Dirac distribution as the initial%
\footnote{%
This corresponds to $t\rightarrow -\infty $ in their approach.} density
matrix of particles. A conducti%
\-%
vity measure can be defined without the localization assumption and at any
positive temperature, see \cite{JMP-autre}. Inspired by their work, we
propose here a notion of (macroscopic) conducti%
\-%
vity measure\ based on the second principle of thermodynamics which
corresponds here to the positivity of the heat production for cyclic
processes on equilibrium states. In fact, we seek to get a rigorous
microscopic description of the phenomenon of linear conductivity from basic
principles of thermodynamics (the second one) and quantum mechanics, only.

The present paper belongs to a succession of works on Ohm and Joule's laws
starting with \cite{OhmI,OhmII}. Indeed, we mathematically define and
analyze in \cite{OhmI} the heat production of the fermion system which is
considered here. It is a first preliminary step towards a mathematical
description from thermal considerations of transport properties of fermions
in disordered media. Then, in \cite{OhmII} we derive Ohm and Joule's laws at
the microscopic scale. This second technical step serves as a springboard to
the results presented here. Note that in the second paper so--called
microscopic \emph{conductivity distributions} are defined from microscopic
conductivity measures. The same construction can be done here to obtain
\emph{macroscopic} conductivity distributions, whose real and imaginary
parts satisfy Kramers--Kronig relations. Such arguments are not performed in
the present paper because they are already explained in detail in \cite[%
Section 3.5]{OhmII}. The same remark can be done for the derivation of
Joule's law in its original formulation under macroscopic electric fields,
see \cite[Section 4.5]{OhmII}. We present now the mathematical framework we
use and our results by only focusing on conductivity measures and current
fluctuations. For more details and additional information, see Sections \ref%
{Section main results}--\ref{Sect Conductivity Measure From Joule's Law}.

We consider the random two--parameter family $\{\mathrm{U}_{t,s}^{(\omega
)}\}_{t\geq s}$ of unitary operators on $\ell ^{2}(\mathbb{Z}^{d})$
generated by the time--dependent Hamiltonian%
\begin{equation*}
\Delta _{\mathrm{d}}^{(\mathbf{A}(t,\cdot ))}+\lambda V_{\omega }\in
\mathcal{B}(\ell ^{2}(\mathbb{Z}^{d}))\ ,
\end{equation*}%
where the parameter $\omega $ runs in a probability space and $\lambda
V_{\omega }$ is a random potential with strength $\lambda \in \mathbb{R}%
_{0}^{+}$ (i.e., $\lambda \geq 0$). Without electromagnetic potential, i.e.,
if $\mathbf{A}\equiv 0$, this Hamiltonian corresponds to the Anderson
tight--binding model, just as in \cite{Annale,JMP-autre}. The vector
potential $\mathbf{A}=\mathbf{A}(t,x)\in C_{0}^{\infty }(\mathbb{R}\times
\mathbb{R}^{d};\mathbb{R}^{d})$ represents a time--dependent \emph{spatially
inhomogeneous} electromagnetic field which is minimally coupled to (minus)
the discrete Laplacian $\Delta _{\mathrm{d}}$. We will use in the following
the Weyl (or temporal) gauge for the electromagnetic field. In contrast with
\cite{Annale,JMP-autre}, the electromagnetic field is supported in an
arbitrarily large but bounded region of space and is switched off for times
outside some finite interval $[t_{0},t_{1}]$.

The family $\{\mathrm{U}_{t,s}^{(\omega )}\}_{t\geq s}$ of unitary operators
on $\ell ^{2}(\mathbb{Z}^{d})$ induces a random two--parameter family $%
\{\tau _{t,s}^{(\omega )}\}_{t\geq s}$ of Bogoliubov automorphisms of a CAR
algebra $\mathcal{U}$ associated with (non--relativistic) fermions in the
cubic lattice $\mathbb{Z}^{d}$. Indeed, the canonical anti--commutation
relations (CAR) encode the Pauli exclusion principle. The $C^{\ast }$%
--algebra $\mathcal{U}$ corresponds to a system of (possibly) infinitely
many fermions which is infinitely extended. As initial state of the system
at time $t_{0}\in \mathbb{R}$, we take the unique KMS state on $\mathcal{U}$
related to the (autonomous) dynamics for $\mathbf{A\equiv 0}$ and inverse
temperature $\beta \in \mathbb{R}^{+}$ (i.e., $\beta >0$). We then analyze
this fermion system, which is subjected to a time--dependent electric field,
for \emph{all times} $t\in \mathbb{R}$. However, for the sake of simplicity,
in this introduction we present our main results only for times $t\geq t_{1}$
when the electromagnetic field is switched off.

The produced heat up to times $t\geq t_{1}$ is almost surely equal to%
\begin{equation}
\mathbf{Q}\left( t\right) =\int\nolimits_{\mathbb{R}^{d}}\mathrm{d}%
^{d}x\int\nolimits_{t_{0}}^{t}\mathrm{d}s_{1}\int\nolimits_{t_{0}}^{s_{1}}%
\mathrm{d}s_{2}\ \mathbf{\sigma }(s_{1}-s_{2})\left\langle E_{\mathbf{A}%
}(s_{2},x),E_{\mathbf{A}}(s_{1},x)\right\rangle \geq 0  \label{expression}
\end{equation}%
at leading order, basically up to terms of order $\mathcal{O}(|E_{\mathbf{A}%
}|^{3}+\left\vert \nabla _{x}E_{\mathbf{A}}\right\vert |E_{\mathbf{A}}|)$.
Indeed, $\mathbf{Q}\left( t\right) $ is even constant for $t\geq t_{1}$, as,
by definition of $t_{1}$, $E_{\mathbf{A}}(s,x)$ vanishes whenever $s\geq
t_{1}$. Here, $\left\langle \cdot ,\cdot \right\rangle $ stands for the
scalar product in $\mathbb{R}^{d}$,%
\begin{equation*}
E_{\mathbf{A}}(t,x):=-\partial _{t}\mathbf{A}(t,x)\ ,\quad t\in \mathbb{R},\
x\in \mathbb{R}^{d}\ ,
\end{equation*}%
and $\mathbf{\sigma }:\mathbb{R\rightarrow R}$ is a \emph{deterministic}
continuous bounded function that can be made explicit. Compare Equation (\ref%
{expression}) with \cite[Theorem 4.1 (Q), Theorem 5.12 (p)]{OhmII} at times $%
t\geq t_{1}$. As in \cite{JMP-autre}, no localization assumption is made.
Observe that (minus) the time--derivative of the vector potential $\mathbf{A}
$ is the electric field $E_{\mathbf{A}}$ because we use the Weyl gauge.
Thus, by interpreting
\begin{equation*}
\int\nolimits_{t_{0}}^{s_{1}}\mathbf{\sigma }(s_{1}-s_{2})E_{\mathbf{A}%
}(s_{2},x)\mathrm{d}s_{2}
\end{equation*}%
as the current density at time $s_{1}$\ and space position $x\in \mathbb{R}%
^{d}$, $\mathbf{\sigma }$ can be seen as the conductivity of the system.
Hence, (\ref{expression}) is the energy delivered by the electric field to
the system in the form predicted by Joule and Ohm's laws.

This interpretation is justified in Section \ref{Green--Kubo relations for
the AC--Conductivity} for all $s_{1}\geq t_{0}$. Indeed, by Theorem \ref%
{main 1 copy(8)}, (\ref{Ohm law full}) and (\ref{71bis}), $\mathbf{\sigma }$
is the linear response coefficient associated with the current density $J_{%
\mathrm{lin}}$ induced by a \emph{spatially homogeneous} time--dependent
electric field $\mathcal{E}\in C_{0}^{\infty }\left( \mathbb{R};\mathbb{R}%
\right) $ (along a fixed direction $\vec{w}\in \mathbb{R}^{d}$):
\begin{equation}
J_{\mathrm{lin}}(t)=\int\nolimits_{t_{0}}^{t}\mathbf{\sigma }\left(
t-s\right) \mathcal{E}_{s}\ \mathrm{d}s\ ,\qquad t\geq t_{0}\ .
\label{J intro}
\end{equation}%
This equation is nothing but Ohm's law (\ref{Ohm.KLM}) written in time space.

Moreover, in Section \ref{section Current Fluctuations} we show that $%
\mathbf{\sigma }$ is a time correlation function of Bose fields $\Phi _{%
\mathrm{fl}}(\mathbf{i}_{t})$ of current fluctuations $\mathbf{i}_{t}$ at
time $t\in \mathbb{R}$:
\begin{equation}
\mathbf{\sigma }\left( t\right) =-4\mathrm{Im}\left\{ \varrho _{\mathrm{fl}%
}\left( \Phi _{\mathrm{fl}}\left( \mathbf{i}_{0}\right) \Phi _{\mathrm{fl}%
}\left( \int\nolimits_{0}^{\left\vert t\right\vert }\mathbf{i}_{s}\mathrm{d}%
s\right) \right) \right\} \ .  \label{J introbis}
\end{equation}%
In particular, (\ref{J intro})--(\ref{J introbis}) yield \emph{Green--Kubo
relations}. Here, the self--adjoint (unbounded field) operators $\Phi _{%
\mathrm{fl}}(\mathbf{i}_{t})$ generate Weyl operators $\mathrm{e}^{i\Phi _{%
\mathrm{fl}}(\mathbf{i}_{t})}$ of a CCR algebra of normal current
fluctuations with respect to (w.r.t.) the initial state. $\varrho _{\mathrm{%
fl}}$ is an appropriate regular state of this CCR algebra and the right hand
side (r.h.s.)\ of (\ref{J introbis}) is thus well--defined. This is
reminiscent of non--commutative central limit theorems.

It follows from the total heat production (\ref{expression}) that
\begin{equation}
\int\nolimits_{t_{0}}^{\infty }\mathrm{d}s_{1}\int\nolimits_{t_{0}}^{s_{1}}%
\mathrm{d}s_{2}\ \mathbf{\sigma }(s_{1}-s_{2})\mathcal{E}_{s_{2}}\mathcal{E}%
_{s_{1}}\geq 0  \label{expressionbis}
\end{equation}%
for any arbitrary smooth compactly supported function $\mathcal{E}\in
C_{0}^{\infty }(\mathbb{R};\mathbb{R})$ satisfying the so--called
AC--condition%
\begin{equation}
\int_{\mathbb{R}}\mathcal{E}_{t}\mathrm{d}t=0\ .  \label{cond ac intro}
\end{equation}%
This condition follows from the fact that $\mathcal{E}$ is the derivative of
a smooth function with compact support. Under the form (\ref{expressionbis}%
)--(\ref{cond ac intro}), the positivity of the heat production can be used
together with the Bochner--Schwartz theorem \cite[Theorem IX.10]{ReedSimonII}
to obtain the existence of a \emph{positive} measure $\mu _{\mathrm{AC}}$ of
at most polynomial growth such that
\begin{equation}
\int\nolimits_{t_{0}}^{\infty }\mathrm{d}s_{1}\int\nolimits_{t_{0}}^{s_{1}}%
\mathrm{d}s_{2}\ \mathbf{\sigma }(s_{1}-s_{2})\mathcal{E}_{s_{2}}\mathcal{E}%
_{s_{1}}=\int_{\mathbb{R}\backslash \{0\}}|\mathcal{\hat{E}}_{\nu }|^{2}\
\mu _{\mathrm{AC}}(\mathrm{d}\nu )  \label{mu sigma intro}
\end{equation}%
for all $\mathcal{E}\in C_{0}^{\infty }(\mathbb{R};\mathbb{R})$ obeying (\ref%
{cond ac intro}), with $\mathcal{\hat{E}}$ being the Fourier transform of $%
\mathcal{E}$.

The measure $\mu _{\mathrm{AC}}$ is naturally named $\emph{in}$\emph{--}$%
\emph{phase\ AC}$\emph{--conducti%
\-%
vity measure} of the fermion system as $|\mathcal{\hat{E}}_{\nu }|^{2}\mu _{%
\mathrm{AC}}(\mathrm{d}\nu )$ is the heat production due to the component of
frequency $\nu $ of the electric field, in accordance with Joule's law in
the AC--regime. Its properties are then studied in more details in a
subsequent paper \cite{OhmIV}. For instance, we study in \cite[Theorem 4.6]%
{OhmIV} the limits of perfect insulators (strong disorder, complete
localization) and perfect conductors (absence of disorder) and obtain in
both cases vanishing AC--conducti%
\-%
vity measures. We also show in \cite[Theorem 4.7]{OhmIV} that $\mu _{\mathrm{%
AC}}(\mathbb{R}\backslash \{0\})>0$, at least for large temperatures $%
T=\beta ^{-1}$ and small randomness $\lambda >0$.

To conclude, our main assertions are Theorems \ref{thm charged transport
coefficient} (charge transport coefficients), \ref{main 1 copy(8)} (Ohm's
law), \ref{main 1 copy(10)} (Green--Kubo relations and current
fluctuations), \ref{main 1} (Joule's law) and \ref{Theorem AC conductivity
measure} (AC--conducti%
\-%
vity measure). This paper is organized as follows:

\begin{itemize}
\item In Section \ref{Section main results} we define our model and
highlight the relations between our approach and \cite{Annale,JMP-autre}.

\item We define in Section \ref{Green--Kubo relations for the
AC--Conductivity} a CCR algebra of fluctuations intimately related to Ohm's
law.

\item Section \ref{Sect Conductivity Measure From Joule's Law} states
Joule's law from which we deduce the existence of the (macroscopic)
AC--conducti%
\-%
vity measure.

\item Section \ref{Section technical proof Ohm-III} gathers technical proofs
on which Sections \ref{Green--Kubo relations for the AC--Conductivity}--\ref%
{Sect Conductivity Measure From Joule's Law} are based. The arguments
strongly use the results of \cite{OhmI,OhmII}.
\end{itemize}

\begin{bemerkung}[Linear conductivity in Physics]
\mbox{
}\newline
Ohm and Joule's laws, Green--Kubo relations (in some form) and its relations
with non--com%
\-%
mutative central limit theorems (cf. Remark \ref{remark conjecture}) have
already been discussed a long time ago in theoretical physics. See, e.g.,
\cite{brupedrahistoire} for a historical perspective. However, altogether
the classical theory of linear conductivity is more like a makeshift
theoretical construction than a smooth and complete theory. As claimed in
the famous paper \cite[p. 505]{meanfreepath}, \textquotedblleft \textit{it
must be admitted that there is no entirely rigorous quantum theory of
conductivity}.\textquotedblright\ In particular, we find it unsatisfactory
to use the Drude (or the Drude--Lorentz) model -- which does not take into
account quantum mechanics -- together with certain ad hoc hypotheses as a
proper microscopic explanation of conductivity. For instance, in \cite%
{NS1,NS2,SE,Y}, the (normally fixed) relaxation time of the Drude model has
to be taken as an effective frequency--dependent parameter to fit with
experimental data \cite{T} on usual metals like gold. Such mathematical
issues are \emph{not only aesthetic}, but also yield \emph{new} results far
beyond classical theories: Explicit (paramagnetic and diamagnetic)
conductivities (Section \ref{Sect Trans coeef ddef}), notion of quantum
current viscosity (cf. Remark \ref{remark quantum current viscosity}),
description of heat/entropy productions via different energy and current
increments as explained here and in \cite[Sections 3.4-.3.5, 4]{OhmII},
existence of (AC--) conductivity measures from the 2nd principle of
thermodynamics (cf. Section \ref{sect ac cond}) and as a spectral
(excitation) measure from current fluctuations (cf. Remark \ref{Remark
Current Duhamel Fluctuations} and \cite{OhmIV}) are all examples of new
important concepts previously not discussed in Physics.
\end{bemerkung}

\begin{notation}[Generic constants]
\label{remark constant}\mbox{
}\newline
To simplify notation, we denote by $D$ any generic positive and finite
constant. These constants do not need to be the same from one statement to
another.
\end{notation}

\section{Setup of the Problem\label{Section main results}}

Up to the probability space, the mathematical setting of our study,
including notation, is taken from \cite{OhmI,OhmII}. For the reader's
convenience and completeness, we now briefly recall it and highlight the
relations to the mathematical framework of \cite{Annale}. For further
details we refer to \cite{OhmI,OhmII}.

\subsection{Anderson Tight--Binding Model\label{Section impurities}}

The $d$--dimensional cubic lattice $\mathfrak{L}:=\mathbb{Z}^{d}$ ($d\in
\mathbb{N}$) represents the crystal and we define $\mathcal{P}_{f}(\mathfrak{%
L})\subset 2^{\mathfrak{L}}$ to be the set of all \emph{finite} subsets of $%
\mathfrak{L}$.

Disorder in the crystal is modeled by a random potential coming from a
probability space $(\Omega ,\mathfrak{A}_{\Omega },\mathfrak{a}_{\Omega })$
defined as follows: Let $\Omega :=[-1,1]^{\mathfrak{L}}$ and $\Omega _{x}$, $%
x\in \mathfrak{L}$, be an arbitrary element of the Borel $\sigma $--algebra
of the interval $[-1,1]$ w.r.t. the usual metric topology. Then, $\mathfrak{A%
}_{\Omega }$ is the $\sigma $--algebra generated by the cylinder sets $%
\prod\nolimits_{x\in \mathfrak{L}}\Omega _{x}$, where $\Omega _{x}=[-1,1]$
for all but finitely many $x\in \mathfrak{L}$. The measure $\mathfrak{a}%
_{\Omega }$ is the product measure%
\begin{equation}
\mathfrak{a}_{\Omega }%
\Big(%
\prod_{x\in \mathfrak{L}}\Omega _{x}%
\Big)%
:=\prod_{x\in \mathfrak{L}}\mathfrak{a}_{\mathbf{0}}(\Omega _{x})\ ,
\label{probability measure}
\end{equation}%
where $\mathfrak{a}_{\mathbf{0}}$ is any fixed probability measure on the
interval $[-1,1]$. In other words, the random variables are independently
and identically distributed (i.i.d.). We denote by $\mathbb{E}[\ \cdot \ ]$
the expectation value associated with $\mathfrak{a}_{\Omega }$. Note that
the i.i.d. property of the random variables is not essential for our
results. We could take any ergodic ensemble instead. In fact, the i.i.d.
property is mainly used to simplify the arguments of Section \ref{sect ac
cond}. 

For any realization $\omega \in \Omega $, $V_{\omega }\in \mathcal{B}(\ell
^{2}(\mathfrak{L}))$ is the self--adjoint multiplication operator with the
function $\omega :\mathfrak{L}\rightarrow \lbrack -1,1]$. Then, we consider
the \emph{Anderson tight--binding Hamiltonian} $(\Delta _{\mathrm{d}%
}+\lambda V_{\omega })$ acting on the Hilbert space $\ell ^{2}(\mathfrak{L})$%
, where $\Delta _{\mathrm{d}}\in \mathcal{B}(\ell ^{2}(\mathfrak{L}))$ is
(up to a minus sign) the usual $d$--dimensional discrete Laplacian defined by%
\begin{equation}
\lbrack \Delta _{\mathrm{d}}(\psi )](x):=2d\psi (x)-\sum\limits_{z\in
\mathfrak{L},\text{ }|z|=1}\psi (x+z)\ ,\text{\qquad }x\in \mathfrak{L},\
\psi \in \ell ^{2}(\mathfrak{L})\ .  \label{discrete laplacian}
\end{equation}%
Note that we could add some constant (chemical) potential to the discrete
Laplacian without changing our proofs. We will use in particular the random
unitary group $\{\mathrm{U}_{t}^{(\omega ,\lambda )}\}_{t\in \mathbb{R}}$
generated by the Hamiltonians $(\Delta _{\mathrm{d}}+\lambda V_{\omega })$
for $\omega \in \Omega $ and $\lambda \in \mathbb{R}_{0}^{+}$:%
\begin{equation}
\mathrm{U}_{t}^{(\omega ,\lambda )}:=\exp (-it(\Delta _{\mathrm{d}}+\lambda
V_{\omega }))\in \mathcal{B}(\ell ^{2}(\mathfrak{L}))\ ,\text{\qquad }t\in
\mathbb{R}\ .  \label{rescaled}
\end{equation}%
This group defines our free one--particle dynamics, like in \cite{Annale}.

\subsection{Coupling to Electromagnetic Fields\label{Section Electromagnetic
Fields}}

The electromagnetic potential is defined by a compactly supported potential%
\begin{equation*}
\mathbf{A}\in \mathbf{C}_{0}^{\infty }=\underset{l\in \mathbb{R}^{+}}{%
\mathop{\displaystyle \bigcup }}C_{0}^{\infty }(\mathbb{R}\times \left[ -l,l%
\right] ^{d};({\mathbb{R}}^{d})^{\ast })\ .
\end{equation*}%
Here, $({\mathbb{R}}^{d})^{\ast }$ is the set of one--forms\footnote{%
In a strict sense, one should take the dual space of the tangent spaces $T({%
\mathbb{R}}^{d})_{x}$, $x\in {\mathbb{R}}^{d}$.} on ${\mathbb{R}}^{d}$ that
take values in $\mathbb{R}$ and $\mathbf{A}(t,x)\equiv 0$ whenever $x\notin
\lbrack -l,l]^{d}$ and $\mathbf{A}\in C_{0}^{\infty }(\mathbb{R}\times \left[
-l,l\right] ^{d};({\mathbb{R}}^{d})^{\ast })$. Using any orthonormal basis $%
\{e_{k}\}_{k=1}^{d}$ of the Euclidian space $\mathbb{R}^{d}$, we define the
scalar product between two fields $E^{(1,2)}\in ({\mathbb{R}}^{d})^{\ast }$
as usual by%
\begin{equation}
\left\langle E^{(1)},E^{(2)}\right\rangle :=\sum_{k=1}^{d}E^{(1)}\left(
e_{k}\right) E^{(2)}\left( e_{k}\right) \ .
\label{scalar product two fields}
\end{equation}%
Since $\mathbf{A}\in \mathbf{C}_{0}^{\infty }$, $\mathbf{A}(t,x)=0$ for all $%
t\leq t_{0}$, where $t_{0}\in \mathbb{R}$ is some initial time. We use the
Weyl gauge for the electromagnetic field and as a consequence,%
\begin{equation}
E_{\mathbf{A}}(t,x):=-\partial _{t}\mathbf{A}(t,x)\ ,\quad t\in \mathbb{R},\
x\in \mathbb{R}^{d}\ ,  \label{V bar 0}
\end{equation}%
is the electric field associated with $\mathbf{A}$.

\begin{bemerkung}
To simplify notation, we identify in the sequel $(\mathbb{R}^d)^\ast$ with $%
\mathbb{R}^d$ via the canonical scalar product of $\mathbb{R}^d$.
\end{bemerkung}

Since $\mathbf{A}$ is by assumption compactly supported, the corresponding
electric field satisfies the \emph{AC--condition}%
\begin{equation}
\int\nolimits_{t_{0}}^{t}E_{\mathbf{A}}(s,x)\mathrm{d}s=0\ ,\qquad x\in
\mathbb{R}^{d}\ ,  \label{zero mean field}
\end{equation}%
for sufficiently large times $t\geq t_{1}\geq t_{0}$. From (\ref{zero mean
field})%
\begin{equation}
t_{1}:=\min \left\{ t\geq t_{0}:\quad \int\nolimits_{t_{0}}^{t^{\prime }}E_{%
\mathbf{A}}(s,x)\mathrm{d}s=0\quad \text{for all }x\in \mathbb{R}^{d}\text{
and }t^{\prime }\geq t\right\}  \label{zero mean field assumption}
\end{equation}%
is the (arbitrary) time at which the electric field is turned off.

We consider without loss of generality (w.l.o.g.) \emph{negatively} charged
fermions. Thus, using the (minimal) coupling of $\mathbf{A}\in \mathbf{C}%
_{0}^{\infty }$ to the discrete Laplacian $-\Delta _{\mathrm{d}}$, the
discrete \emph{time--dependent} magnetic Laplacian is (up to a minus sign)
the self--adjoint operator
\begin{equation*}
\Delta _{\mathrm{d}}^{(\mathbf{A})}\equiv \Delta _{\mathrm{d}}^{(\mathbf{A}%
(t,\cdot ))}\in \mathcal{B}(\ell ^{2}(\mathfrak{L}))\ ,\qquad t\in \mathbb{R}%
\ ,
\end{equation*}%
defined\footnote{%
Observe that the sign of the coupling between the electromagnetic potential $%
\mathbf{A}\in \mathbf{C}_{0}^{\infty }$ and the laplacian is wrong in \cite[%
Eq. (2.8)]{OhmI}.} by%
\begin{equation}
\langle \mathfrak{e}_{x},\Delta _{\mathrm{d}}^{(\mathbf{A})}\mathfrak{e}%
_{y}\rangle =\exp \left( i\int\nolimits_{0}^{1}\left[ \mathbf{A}(t,\alpha
y+(1-\alpha )x)\right] (y-x)\mathrm{d}\alpha \right) \langle \mathfrak{e}%
_{x},\Delta _{\mathrm{d}}\mathfrak{e}_{y}\rangle  \label{eq discrete lapla A}
\end{equation}%
for all $t\in \mathbb{R}$ and $x,y\in \mathfrak{L}$. Here, $\langle \cdot
,\cdot \rangle $ is the scalar product in $\ell ^{2}(\mathfrak{L})$ and $%
\left\{ \mathfrak{e}_{x}\right\} _{x\in \mathfrak{L}}$ is the canonical
orthonormal basis $\mathfrak{e}_{x}(y)\equiv \delta _{x,y}$ of $\ell ^{2}(%
\mathfrak{L})$. In (\ref{eq discrete lapla A}), $\alpha y+(1-\alpha )x$ and $%
y-x$ are seen as elements of ${\mathbb{R}}^{d}$.

Therefore, in presence of electromagnetic fields, the Anderson
tight--binding Hamiltonian $(\Delta _{\mathrm{d}}+\lambda V_{\omega })$ for $%
\omega \in \Omega $ and $\lambda \in \mathbb{R}_{0}^{+}$ is replaced by the
time--dependent one $(\Delta _{\mathrm{d}}^{(\mathbf{A})}+\lambda V_{\omega
})$. As explained in \cite[Section 2.3]{OhmI}, the interaction between
magnetic fields and electron spins is here neglected because such a term
becomes negligible for electromagnetic potentials slowly varying in space.
This yields a perturbed dynamics defined by the random two--parameter family
$\{\mathrm{U}_{t,s}^{(\omega ,\lambda ,\mathbf{A})}\}_{t\geq s}$ of unitary
operators on $\ell ^{2}(\mathfrak{L})$ which is the unique solution, for any
$\omega \in \Omega $, $\lambda \in \mathbb{R}_{0}^{+}$ and $\mathbf{A}\in
\mathbf{C}_{0}^{\infty }$, of the non--autonomous evolution equation
\begin{equation}
\forall s,t\in {\mathbb{R}},\ t\geq s:\quad \partial _{t}\mathrm{U}%
_{t,s}^{(\omega ,\lambda ,\mathbf{A})}=-i(\Delta _{\mathrm{d}}^{(\mathbf{A}%
(t,\cdot ))}+\lambda V_{\omega })\mathrm{U}_{t,s}^{(\omega ,\lambda ,\mathbf{%
A})}\ ,\quad \mathrm{U}_{s,s}^{(\omega ,\lambda ,\mathbf{A})}:=\mathbf{1}\ .
\label{time evolution one-particle}
\end{equation}

The physical situation considered here to investigate Ohm and Joule's laws
is as follows: We start with a macroscopic bulk containing conducting
fermions. This is idealized by taking the one--particle Hilbert space $\ell
^{2}(\mathfrak{L})$. Then, the heat production or the conductivity is
measured in a local region which is very small w.r.t. the size of the bulk,
but very large w.r.t. the lattice spacing of the crystal. We implement this
hierarchy of space scales by rescaling the vector potentials. That means,
for any $l\in \mathbb{R}^{+}$ and $\mathbf{A}\in \mathbf{C}_{0}^{\infty }$,
we consider the space--rescaled vector potential
\begin{equation}
\mathbf{A}_{l}(t,x):=\mathbf{A}(t,l^{-1}x)\ ,\quad t\in \mathbb{R},\ x\in
\mathbb{R}^{d}\ .  \label{rescaled vector potential}
\end{equation}%
Then, to ensure that a macroscopic number of lattice sites is involved, we
eventually perform the limit $l\rightarrow \infty $. Indeed, the scaling
factor $l^{-1}$ used in (\ref{rescaled vector potential}) means, at fixed $l$%
, that the space scale of the electric field (\ref{V bar 0}) is
infinitesimal w.r.t. the macroscopic bulk (which is the whole space),
whereas the lattice spacing gets infinitesimal w.r.t. the space scale of the
vector potential when $l\rightarrow \infty $. Furthermore, Ohm's law is a
linear\emph{\ }response to electric fields. Therefore, we also rescale the
strength of the electromagnetic potential $\mathbf{A}_{l}$ by a parameter $%
\eta \in \mathbb{R}$ and eventually take the limit $\eta \rightarrow 0$. All
together, this yields the random two--parameter family $\{\mathrm{U}%
_{t,s}^{(\omega ,\lambda ,\eta \mathbf{A}_{l})}\}_{t\geq s}$ to be studied
in the limit $l\rightarrow \infty $, $\eta \rightarrow 0$. For more
discussions, see \cite[Section 2.3]{OhmII}.

Recall that in \cite{Annale,JMP-autre} the authors use a time--dependent
spatially homogeneous electric field that is adiabatically switched on. This
situation is thus rather different from our study where the electromagnetic
field is supported in an arbitrarily large but bounded region of space and
is switched off for times outside the finite interval $[t_{0},t_{1}]$.

\subsection{Algebraic Approach}

Although there is no interaction between fermions, we do \emph{not} restrict
our analyses to the one--particle Hilbert space to study transport
properties. Instead, our approach is based on the algebraic formulation of
fermion systems on lattices because it makes the role played by
many--fermion correlations due to the Pauli exclusion principle, i.e., the
antisymmetry of the many--body wave function, more transparent:

\begin{itemize}
\item The positivity required for the existence of the in--phase AC--conducti%
\-%
vity measure is directly related to the passivity property of (thermal
equilibrium) states on the CAR algebra $\mathcal{U}$ defined below.

\item The conductivity is naturally defined from current--current
correlations, that is,\ four--point correlation functions, in this framework.

\item The algebraic formulation allows a clear link between transport
properties of fermion systems and the CCR algebra of current fluctuations.
The latter is related to non--commutative central limit theorems (see, e.g.,
\cite{CCR fluctuations}).

\item Moreover, this approach can be naturally used to define conductivity
measures for \emph{interacting} fermions as well. This paper can thus be
seen as a preparation for such mathematical studies.
\end{itemize}

The CAR $C^{\ast }$--algebra of the infinite system is denoted by $\mathcal{U%
}$. We define annihilation and creation operators of (spinless) fermions
with wave functions $\psi \in \ell ^{2}(\mathfrak{L})$ by
\begin{equation}
a(\psi ):=\sum\limits_{x\in \mathfrak{L}}\overline{\psi (x)}a_{x}\in
\mathcal{U}\ ,\quad a^{\ast }(\psi ):=\sum\limits_{x\in \mathfrak{L}}\psi
(x)a_{x}^{\ast }\in \mathcal{U}\ .  \label{creation operators}
\end{equation}%
Here, $a_{x},a_{x}^{\ast }$, $x\in \mathfrak{L}$, and the identity $\mathbf{1%
}$ are generators of $\mathcal{U}$ and satisfy the canonical
anti--commutation relations.

For all $\omega \in \Omega $ and $\lambda \in \mathbb{R}_{0}^{+}$, the free
dynamics on $\mathcal{U}$ is defined by the unique one--parameter strongly
continuous group $\tau ^{(\omega ,\lambda )}:=\{\tau _{t}^{(\omega ,\lambda
)}\}_{t\in {\mathbb{R}}}$ of (Bogoliubov) automorphisms of $\mathcal{U}$
satisfying the condition%
\begin{equation}
\tau _{t}^{(\omega ,\lambda )}(a(\psi ))=a((\mathrm{U}_{t}^{(\omega ,\lambda
)})^{\ast }(\psi ))\ ,\text{\qquad }t\in \mathbb{R},\ \psi \in \ell ^{2}(%
\mathfrak{L})\ .  \label{rescaledbis}
\end{equation}%
See (\ref{rescaled}) and \cite[Theorem 5.2.5]{BratteliRobinson}.

Similarly, in presence of electromagnetic potentials $\mathbf{A}\in \mathbf{C%
}_{0}^{\infty }$ the dynamics on $\mathcal{U}$ is defined by the unique
family $\{\tau _{t,s}^{(\omega ,\lambda ,\mathbf{A})}\}_{t\geq s}$ of random
(Bogoliubov) automorphisms with
\begin{equation}
\tau _{t,s}^{(\omega ,\lambda ,\mathbf{A})}(a(\psi ))=a((\mathrm{U}%
_{t,s}^{(\omega ,\lambda ,\mathbf{A})})^{\ast }(\psi ))\ ,\text{\qquad }%
t\geq s,\ \psi \in \ell ^{2}(\mathfrak{L})\ ,  \label{Cauchy problem 0}
\end{equation}%
for all $\omega \in \Omega $, $\lambda \in \mathbb{R}_{0}^{+}$. See (\ref%
{time evolution one-particle}) and \cite[Theorem 5.2.5]{BratteliRobinson}.
The family $\{\tau _{t,s}^{(\omega ,\lambda ,\mathbf{A})}\}_{t\geq s}$ is
itself the solution of a non--autonomous evolution equation, see \cite[%
Sections 5.2-5.3]{OhmI}.

States on the $C^{\ast }$--algebra $\mathcal{U}$ are, by definition,
continuous linear functionals $\rho \in \mathcal{U}^{\ast }$ which are
normalized and positive, i.e., $\rho (\mathbf{1})=1$ and $\rho (A^{\ast
}A)\geq 0$ for all $A\in \mathcal{U}$. As explained for instance in \cite[%
Section 2.5]{OhmI}, the thermodynamic equilibrium of the system at inverse
temperature $\beta \in \mathbb{R}^{+}$ (i.e., $\beta >0$) is described by
the unique $(\tau ^{(\omega ,\lambda )},\beta )$--KMS state $\varrho
^{(\beta ,\omega ,\lambda )}$. See also \cite[Example 5.3.2.]%
{BratteliRobinson} or \cite[Theorem 5.9]{AttalJoyePillet2006a}. The choice
of KMS states as thermal equilibrium states is sustained by the second
principle of thermodynamics \cite{PW}. See discussions of Section \ref{Sect
Conductivity Measure From Joule's Law}. It is well--known that such states
are stationary w.r.t. the dynamics, that is,
\begin{equation}
\varrho ^{(\beta ,\omega ,\lambda )}\circ \tau _{t}^{(\omega ,\lambda
)}=\varrho ^{(\beta ,\omega ,\lambda )}\ ,\qquad \beta \in \mathbb{R}^{+},\
\omega \in \Omega ,\ \lambda \in \mathbb{R}_{0}^{+},\ t\in \mathbb{R}\ .
\label{stationary}
\end{equation}%
Since $\mathbf{A}(t,x)=0$ for all $t\leq t_{0}$, the time evolution of the
state of the system is thus%
\begin{equation}
\rho _{t}^{(\beta ,\omega ,\lambda ,\mathbf{A})}:=\left\{
\begin{array}{lll}
\varrho ^{(\beta ,\omega ,\lambda )} & , & \qquad t\leq t_{0}\ , \\
\varrho ^{(\beta ,\omega ,\lambda )}\circ \tau _{t,t_{0}}^{(\omega ,\lambda ,%
\mathbf{A})} & , & \qquad t\geq t_{0}\ .%
\end{array}%
\right.  \label{time dependent state}
\end{equation}%
This time--evolving state is \emph{quasi--free} by construction for all
times. Such quasi--free states are uniquely characterized by bounded
positive operators $\mathbf{d}\in \mathcal{B}(\ell ^{2}(\mathfrak{L}))$
obeying $0\leq \mathbf{d}\leq \mathbf{1}$. These operators are named \emph{%
symbols} of the corresponding states. The symbol of $\varrho ^{(\beta
,\omega ,\lambda )}$ is
\begin{equation}
\mathbf{d}_{\mathrm{fermi}}^{(\beta ,\omega ,\lambda )}:=\frac{1}{1+\mathrm{e%
}^{\beta \left( \Delta _{\mathrm{d}}+\lambda V_{\omega }\right) }}\in
\mathcal{B}(\ell ^{2}(\mathfrak{L}))\ .  \label{Fermi statistic}
\end{equation}%
We infer from the definitions (\ref{time evolution one-particle}), (\ref%
{Cauchy problem 0}) and (\ref{time dependent state}) together with the
evolution law (\ref{time evolution one-particle}) that the symbol $\mathbf{d}%
_{t}^{(\beta ,\omega ,\lambda ,\mathbf{A})}$ of the quasi--free state $\rho
_{t}^{(\beta ,\omega ,\lambda ,\mathbf{A})}$ is the solution to the \emph{%
Liouville equation}%
\begin{equation}
\forall t\geq t_{0}:\quad \partial _{t}\mathbf{d}_{t}^{(\beta ,\omega
,\lambda ,\mathbf{A})}=-i[\Delta _{\mathrm{d}}^{(\mathbf{A})}+\lambda
V_{\omega },\mathbf{d}_{t}^{(\beta ,\omega ,\lambda ,\mathbf{A})}]\ ,\quad
\mathbf{d}_{t_{0}}^{(\beta ,\omega ,\lambda ,\mathbf{A})}:=\mathbf{d}_{%
\mathrm{fermi}}^{(\beta ,\omega ,\lambda )}\ ,  \label{KLM discusssion}
\end{equation}%
for every realization $\omega \in \Omega $, $\lambda \in \mathbb{R}_{0}^{+}$
and $\beta \in \mathbb{R}^{+}$. In \cite{jfa,Annale,JMP-autre} the authors
consider an evolution equation similar to (\ref{KLM discusssion}) with $%
t_{0}=-\infty $ and use the expectation value of the velocity observable
w.r.t. the trace per unit volume of $\mathbf{d}_{t}^{(\beta ,\omega ,\lambda
,\mathbf{A})}\in \mathcal{B}(\ell ^{2}(\mathfrak{L}))$ to define a current
density. See, e.g., \cite[Eqs. (2.5)--(2.6)]{Annale}. Spatially local
perturbations $\mathbf{A}\in \mathbf{C}_{0}^{\infty }$ of the
electromagnetic field do not influence the mean velocity of an infinite
system of particles. Thus, by contrast, the electromagnetic perturbation
considered in \cite{jfa,Annale,JMP-autre} is infinitely extended as it is
space--homogeneous. Indeed, w.r.t. the time--evolving density operator $%
\mathbf{d}_{t}^{(\beta ,\omega ,\lambda ,\mathbf{A})}$, the main quantities
we analyze are \emph{not trace densities}, but rather the infinite volume
limit of (finite volume) traces, see, e.g., (\ref{incr one-part}) below.
Note however that, by considering space--homogeneous electromagnetic
perturbations $\mathbf{A}_{l}$ in finite boxes $\Lambda _{l}$ and the
corresponding current densities, up to the different convention on $\mathbf{d%
}_{t}^{(\beta ,\omega ,\lambda ,\mathbf{A})}$ for the initial condition, one
would obtain in the limit $l\rightarrow \infty $ a notion of conductivity
corresponding quite well to the one introduced in \cite[Eqs. (2.5)--(2.6)]%
{Annale}, even if this correspondence is not totally explicit and the
approaches are conceptually different. See also discussions around (\ref{Jn
one-part}).

\section{CCR Algebra of Fluctuations of Ohm's Law\label{Green--Kubo
relations for the AC--Conductivity}}

The study of classical (macroscopic) Ohm's law for fermions within
disordered media leads us to consider a CCR $C^{\ast }$--algebra of current
fluctuations. Exactly like in \cite[Section 3]{OhmII}, we only consider
space--homogeneous (though time--dependent) electric fields in the box%
\begin{equation}
\Lambda _{l}:=\{(x_{1},\ldots ,x_{d})\in \mathfrak{L}\,:\,|x_{1}|,\ldots
,|x_{d}|\leq l\}\in \mathcal{P}_{f}(\mathfrak{L})  \label{eq:def lambda n}
\end{equation}%
with $l\in \mathbb{R}^{+}$. More precisely, let $\vec{w}:=(w_{1},\ldots
,w_{d})\in \mathbb{R}^{d}$ be any (normalized) vector, $\mathcal{A}\in
C_{0}^{\infty }\left( \mathbb{R};\mathbb{R}\right) $ and set $\mathcal{E}%
_{t}:=-\partial _{t}\mathcal{A}_{t}$ for all $t\in \mathbb{R}$. Then, $%
\mathbf{\bar{A}}\in \mathbf{C}_{0}^{\infty }$ is defined to be the
electromagnetic potential whose electric field equals $\mathcal{E}_{t}\vec{w}
$ at time $t\in \mathbb{R}$ for all $x\in \left[ -1,1\right] ^{d}$, and $%
(0,0,\ldots ,0)$ for $t\in \mathbb{R}$ and $x\notin \left[ -1,1\right] ^{d}$%
. See (\ref{condition AC example})--(\ref{cond ac ex 3}) for more details.
This choice yields rescaled electromagnetic potentials $\eta \mathbf{\bar{A}}%
_{l}$ as defined by (\ref{rescaled vector potential}) for $l\in \mathbb{R}%
^{+}$ and $\eta \in \mathbb{R}$.

\subsection{Macroscopic Transport Coefficients\label{Sect Trans coeef ddef}}

For any pair $\mathbf{x}:=(x^{(1)},x^{(2)})\in \mathfrak{L}^{2}$, we define
the paramagnetic\ and diamagnetic current observables $I_{\mathbf{x}}=I_{%
\mathbf{x}}^{\ast }$ and $\mathrm{I}_{\mathbf{x}}^{\mathbf{A}}=(\mathrm{I}_{%
\mathbf{x}}^{\mathbf{A}})^{\ast }$ for $\mathbf{A}\in \mathbf{C}_{0}^{\infty
}$ at time $t\in \mathbb{R}$ by%
\begin{equation}
I_{\mathbf{x}}:=-2\mathrm{Im}(a_{x^{(2)}}^{\ast
}a_{x^{(1)}})=i(a_{x^{(2)}}^{\ast }a_{x^{(1)}}-a_{x^{(1)}}^{\ast
}a_{x^{(2)}})  \label{current observable}
\end{equation}%
and%
\begin{equation}
\mathrm{I}_{\mathbf{x}}^{\mathbf{A}}:=-2\mathrm{Im}\left( \left( \mathrm{e}%
^{i\int\nolimits_{0}^{1}[\mathbf{A}(t,\alpha x^{(2)}+(1-\alpha
)x^{(1)})](x^{(2)}-x^{(1)})\mathrm{d}\alpha }-1\right) a_{x^{(2)}}^{\ast
}a_{x^{(1)}}\right) \ .  \label{current observable new}
\end{equation}%
Here, $I_{(x,y)}$ is the observable related to the flow of negatively
charged particles from the lattice site $x$ to the lattice site $y$ or the
current from $y$ to $x$ without external electromagnetic potential. $\mathrm{%
I}_{\mathbf{x}}^{\mathbf{A}}$ is the current observable corresponding to the
acceleration of charged particles induced by the electromagnetic field. See
\cite[Section 3.1]{OhmII} for more details. We also denote by
\begin{equation}
P_{\mathbf{x}}:=-a_{x^{(2)}}^{\ast }a_{x^{(1)}}-a_{x^{(1)}}^{\ast
}a_{x^{(2)}}\ ,\qquad \mathbf{x}:=(x^{(1)},x^{(2)})\in \mathfrak{L}^{2}\ ,
\label{R x}
\end{equation}%
the second--quantization of the adjacency matrix of the oriented graph
containing exactly the edges $(x^{(2)},x^{(1)})$ and $(x^{(1)},x^{(2)})$.

Now, for any $\beta \in \mathbb{R}^{+}$, $\omega \in \Omega $ and $\lambda
\in \mathbb{R}_{0}^{+}$ we define two important functions associated with
the observables $I_{\mathbf{x}}$ and $P_{\mathbf{x}}$:

\begin{itemize}
\item[(p)] The paramagnetic transport coefficient $\sigma _{\mathrm{p}%
}^{(\omega )}\equiv \sigma _{\mathrm{p}}^{(\beta ,\omega ,\lambda )}$ is
defined by
\begin{equation}
\sigma _{\mathrm{p}}^{(\omega )}\left( \mathbf{x},\mathbf{y},t\right)
:=\int\nolimits_{0}^{t}\varrho ^{(\beta ,\omega ,\lambda )}\left( i[I_{%
\mathbf{y}},\tau _{s}^{(\omega ,\lambda )}(I_{\mathbf{x}})]\right) \mathrm{d}%
s\ ,\quad \mathbf{x},\mathbf{y}\in \mathfrak{L}^{2}\ ,\ t\in \mathbb{R}\ .
\label{backwards -1bis}
\end{equation}

\item[(d)] The diamagnetic transport coefficient $\sigma _{\mathrm{d}%
}^{(\omega )}\equiv \sigma _{\mathrm{d}}^{(\beta ,\omega ,\lambda )}$ is
defined by%
\begin{equation}
\sigma _{\mathrm{d}}^{(\omega )}\left( \mathbf{x}\right) :=\varrho ^{(\beta
,\omega ,\lambda )}\left( P_{\mathbf{x}}\right) \ ,\qquad \mathbf{x}\in
\mathfrak{L}^{2}\ .  \label{backwards -1bispara}
\end{equation}
\end{itemize}

\noindent As explained in \cite[Section 3.3]{OhmII}, $\sigma _{\mathrm{p}%
}^{(\omega )}$ is related with a quantum current viscosity whereas $\sigma _{%
\mathrm{d}}^{(\omega )}$ describes the ballistic movement of charged
particles within the electric field.

For large regions $\Lambda_l \subset \mathfrak{L}$, we then define the
space--averaged paramagnetic transport coefficient
\begin{equation*}
t\mapsto \Xi _{\mathrm{p},l}^{(\omega )}\left( t\right) \equiv \Xi _{\mathrm{%
p},l}^{(\beta ,\omega ,\lambda )}\left( t\right) \in \mathcal{B}(\mathbb{R}%
^{d})
\end{equation*}%
w.r.t. the canonical orthonormal basis $\{e_{k}\}_{k=1}^{d}$ of the
Euclidian space $\mathbb{R}^{d}$ by%
\begin{equation}
\left\{ \Xi _{\mathrm{p},l}^{(\omega )}\left( t\right) \right\} _{k,q}:=%
\frac{1}{\left\vert \Lambda _{l}\right\vert }\underset{x,y\in \Lambda _{l}}{%
\sum }\sigma _{\mathrm{p}}^{(\omega )}\left( x+e_{q},x,y+e_{k},y,t\right)
\label{average conductivity}
\end{equation}%
for any $l,\beta \in \mathbb{R}^{+}$, $\omega \in \Omega $, $\lambda \in
\mathbb{R}_{0}^{+}$, $k,q\in \{1,\ldots ,d\}$ and $t\in \mathbb{R}$.
Similarly, the space--averaged diamagnetic transport coefficient
\begin{equation*}
\Xi _{\mathrm{d},l}^{(\omega )}\equiv \Xi _{\mathrm{d},l}^{(\beta ,\omega
,\lambda )}\in \mathcal{B}(\mathbb{R}^{d})
\end{equation*}%
corresponds, w.r.t. the canonical orthonormal basis $\{e_{k}\}_{k=1}^{d}$,
to the diagonal matrix
\begin{equation}
\left\{ \Xi _{\mathrm{d},l}^{(\omega )}\right\} _{k,q}:=\frac{\delta _{k,q}}{%
\left\vert \Lambda _{l}\right\vert }\underset{x\in \Lambda _{l}}{\sum }%
\sigma _{\mathrm{d}}^{(\omega )}\left( x+e_{k},x\right) \in \left[ -2,2%
\right] \ .  \label{average conductivity +1}
\end{equation}%
See \cite[Eq. (37), Theorem 3.1, Corollary 3.2]{OhmII} for details on the
mathematical properties of these random transport coefficients. They are
directly linked to Ohm's law as explained in \cite[Theorem 3.3]{OhmII} and
it is natural to consider their expectation values:

We define the deterministic paramagnetic transport coefficient
\begin{equation*}
t\mapsto \mathbf{\Xi }_{\mathrm{p}}\left( t\right) \equiv \mathbf{\Xi }_{%
\mathrm{p}}^{(\beta ,\lambda )}\left( t\right) \in \mathcal{B}(\mathbb{R}%
^{d})
\end{equation*}%
by%
\begin{equation}
\mathbf{\Xi }_{\mathrm{p}}\left( t\right):=\underset{l\rightarrow \infty }{%
\lim }\mathbb{E}\left[ \Xi _{\mathrm{p},l}^{(\omega )}\left( t\right) \right]
\label{paramagnetic transport coefficient macro}
\end{equation}%
for any $\beta \in \mathbb{R}^{+}$, $\lambda \in \mathbb{R}_{0}^{+}$, $%
k,q\in \{1,\ldots ,d\}$ and $t\in \mathbb{R}$. This transport coefficient is
well--defined, see, e.g., Equation (\ref{well-defined sigma}). Furthermore,
the convergence is uniform w.r.t. times t in compact sets. By \cite[%
Corollary 3.2 (i)-(ii) and (iv)]{OhmII}, $\mathbf{\Xi }_{\mathrm{p}}\in C(%
\mathbb{R};\mathcal{B}_{-}(\mathbb{R}^{d}))$ and $\mathbf{\Xi }_{\mathrm{p}%
}(t)=\mathbf{\Xi }_{\mathrm{p}}(|t|)$ with $\mathbf{\Xi }_{\mathrm{p}}(0)=0$%
. Here, $\mathcal{B}_{-}(\mathbb{R}^{d})$ is the set of negative linear
operators on $\mathbb{R}^{d}$. Analogously, we also introduce the
deterministic diamagnetic transport coefficient
\begin{equation*}
\mathbf{\Xi }_{\mathrm{d}}\equiv \mathbf{\Xi }_{\mathrm{d}}^{(\beta ,\lambda
)}\in \mathcal{B}(\mathbb{R}^{d})
\end{equation*}%
defined, for any $\beta \in \mathbb{R}^{+}$ and $\lambda \in \mathbb{R}%
_{0}^{+}$, by
\begin{equation}
\mathbf{\Xi }_{\mathrm{d}}:=\underset{l \rightarrow \infty }{\lim }\mathbb{E}%
\left[ \Xi _{\mathrm{d},l}^{(\omega )}\right] \ .  \label{def sigma_d}
\end{equation}
Indeed, by translation invariance of the probability measure $\mathfrak{a}%
_\Omega$ and the uniqueness of the KMS states $\rho^{(\beta,\omega,\lambda)}$%
, we even have, for all $l >0$,
\begin{equation*}
\mathbf{\Xi }_{\mathrm{d}} =\mathbb{E}\left[ \Xi _{\mathrm{d},l}^{(\omega )}%
\right] \, .
\end{equation*}

By using the Akcoglu--Krengel ergodic theorem (cf. Theorem \ref%
{Ackoglu--Krengel ergodic theorem II copy(1)}) we show that the limits $%
l\rightarrow \infty $ of $\Xi _{\mathrm{p},l}^{(\omega )}$ and $\Xi _{%
\mathrm{d},l}^{(\omega )}$ converge almost surely to $\mathbf{\Xi }_{\mathrm{%
p}}$ and $\mathbf{\Xi }_{\mathrm{d}}$:

\begin{satz}[Macroscopic charge transport coefficients]
\label{thm charged transport coefficient}\mbox{
}\newline
Let $\beta \in \mathbb{R}^{+}$ and $\lambda \in \mathbb{R}_{0}^{+}$. Then,
there is a measurable subset $\tilde{\Omega}\equiv \tilde{\Omega}^{(\beta
,\lambda )}\subset \Omega $ of full measure such that, for any $\omega \in
\tilde{\Omega}$, one has:\newline
\emph{(p)} Paramagnetic charge transport coefficient: For all $t\in \mathbb{R%
}$,%
\begin{equation}
\mathbf{\Xi }_{\mathrm{p}}\left( t\right) =\ \underset{l\rightarrow \infty }{%
\lim }\Xi _{\mathrm{p},l}^{(\omega )}\left( t\right) \in \mathcal{B}_{-}(%
\mathbb{R}^{d})\ .  \notag
\end{equation}%
The limit above is uniform for times $t$ on compact sets.\newline
\emph{(d)} Diamagnetic charge transport coefficient:
\begin{equation}
\mathbf{\Xi }_{\mathrm{d}}=\ \underset{l\rightarrow \infty }{\lim }\Xi _{%
\mathrm{d},l}^{(\omega )}\in \mathcal{B}(\mathbb{R}^{d}), \quad \{ \mathbf{%
\Xi }_{\mathrm{d}} \}_{k,k} \in [-2,2], \ k\in\{1,\ldots,d\} \ .  \notag
\end{equation}
\end{satz}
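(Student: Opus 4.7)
The plan is to deduce both statements from the multidimensional pointwise ergodic theorem (in the superadditive form of Akcoglu--Krengel referenced in the statement), exploiting the fact that the product measure $\mathfrak{a}_{\Omega}$ is invariant and ergodic under the lattice translation group $\{T_{x}\}_{x\in \mathfrak{L}}$ acting on $\Omega$ by $(T_{x}\omega)(y):=\omega(y-x)$. Under this action, $V_{T_{x}\omega}$ is unitarily equivalent to a shift of $V_{\omega}$ and, by uniqueness of the KMS state, both transport coefficients are covariant:
\begin{equation*}
\sigma_{\mathrm{d}}^{(\omega)}(x+e_{k},x)=\sigma_{\mathrm{d}}^{(T_{x}\omega)}(e_{k},0),\qquad \sigma_{\mathrm{p}}^{(\omega)}(x+e_{q},x,y+e_{k},y,t)=\sigma_{\mathrm{p}}^{(T_{x}\omega)}(e_{q},0,y-x+e_{k},y-x,t).
\end{equation*}
I would dispose of part (d) first. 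Since $|P_{\mathbf{x}}|\leq 2$, the random variable $\omega\mapsto \sigma_{\mathrm{d}}^{(\omega)}(e_{k},0)$ is bounded, hence integrable, and $\{\Xi_{\mathrm{d},l}^{(\omega)}\}_{k,q}=\delta_{k,q}|\Lambda_{l}|^{-1}\sum_{x\in\Lambda_{l}}\sigma_{\mathrm{d}}^{(T_{x}\omega)}(e_{k},0)$ is a standard Birkhoff average. Ergodicity of $\mathfrak{a}_{\Omega}$ under the $\mathbb{Z}^{d}$-action gives almost sure convergence to $\mathbb{E}[\sigma_{\mathrm{d}}^{(\omega)}(e_{k},0)]\delta_{k,q}=\{\mathbf{\Xi}_{\mathrm{d}}\}_{k,q}$, and the bound $[-2,2]$ passes to the limit.

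For part (p) the main issue is that $\Xi_{\mathrm{p},l}^{(\omega)}(t)$ is a double sum normalized by $|\Lambda_{l}|$, not $|\Lambda_{l}|^{2}$, so it is not immediately a Cesàro mean. My plan is to absorb the inner sum into a covariant function of $x$ by defining, for each fixed $t$,
\begin{equation*}
G^{(\omega,t)}(x)\;:=\;\sum_{z\in\mathfrak{L}}\sigma_{\mathrm{p}}^{(T_{x}\omega)}(e_{q},0,z+e_{k},z,t),
\end{equation*}
which is well defined provided one has a decay estimate of the form $|\sigma_{\mathrm{p}}^{(\omega)}(e_{q},0,z+e_{k},z,t)|\leq D(t)\,f(|z|)$ with $f$ summable on $\mathfrak{L}$, uniformly in $\omega$ and in $t$ on compact sets. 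Such a decay comes from a Lieb--Robinson--type bound on $[I_{(y+e_{k},y)},\tau_{s}^{(\omega,\lambda)}(I_{(x+e_{q},x)})]$ followed by integration over $s\in[0,t]$, and is essentially contained in the estimates of \cite{OhmII}. Granted this, one writes
\begin{equation*}
\{\Xi_{\mathrm{p},l}^{(\omega)}(t)\}_{k,q}\;=\;\frac{1}{|\Lambda_{l}|}\sum_{x\in\Lambda_{l}}G^{(\omega,t)}(x)\;-\;\frac{1}{|\Lambda_{l}|}\sum_{x\in\Lambda_{l}}\sum_{z:\,x+z\notin\Lambda_{l}}\sigma_{\mathrm{p}}^{(T_{x}\omega)}(e_{q},0,z+e_{k},z,t).
\end{equation*}
The first term converges almost surely to $\mathbb{E}[G^{(\omega,t)}(0)]$ by Birkhoff applied to $G^{(\omega,t)}$; the second term is bounded by the tail $\sum_{|z|>\mathrm{dist}(x,\partial\Lambda_{l})}f(|z|)$ averaged over $x$, which is $o(1)$ as $l\to\infty$ because only an $O(1/l)$ fraction of $x\in\Lambda_{l}$ lies within any fixed distance of $\partial\Lambda_{l}$. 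An application of Fubini identifies $\mathbb{E}[G^{(\omega,t)}(0)]$ with $\{\mathbf{\Xi}_{\mathrm{p}}(t)\}_{k,q}$, giving the pointwise-in-$t$ statement.

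It remains to upgrade this to uniform convergence on compact time intervals on a single full-measure set $\tilde{\Omega}$. The explicit representation $\sigma_{\mathrm{p}}^{(\omega)}(\mathbf{x},\mathbf{y},t)=\int_{0}^{t}\varrho^{(\beta,\omega,\lambda)}(i[I_{\mathbf{y}},\tau_{s}^{(\omega,\lambda)}(I_{\mathbf{x}})])\,\mathrm{d}s$ together with $\|I_{\mathbf{x}}\|\leq 2$ yields a Lipschitz bound on $t\mapsto \Xi_{\mathrm{p},l}^{(\omega)}(t)$ that is uniform in $l$ and $\omega$. Combining this equicontinuity with almost sure pointwise convergence on a countable dense subset of $\mathbb{R}$, together with continuity of the deterministic limit $\mathbf{\Xi}_{\mathrm{p}}$ noted in the paragraph preceding the theorem, a standard Arzelà--Ascoli argument delivers uniform convergence on compacts on a common full-measure set $\tilde{\Omega}=\tilde{\Omega}^{(\beta,\lambda)}$. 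I expect the main technical obstacle to be the uniform (in $t$ on compacts and in $\omega$) decay estimate for the inner sum in (p): it is what turns the non-standard double-sum normalization into a genuine Birkhoff average and is the only place where the specific structure of the commutator $[I_{\mathbf{y}},\tau_{s}^{(\omega,\lambda)}(I_{\mathbf{x}})]$ is needed.
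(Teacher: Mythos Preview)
Your approach to part (d) coincides with the paper's: both reduce to the Birkhoff/Akcoglu--Krengel theorem applied to the additive process $\Lambda\mapsto\sum_{x\in\Lambda}\varrho^{(\beta,\omega,\lambda)}(P_{(x+e_k,x)})$ (the paper's Corollary \ref{main 1 copy(21)}).

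For part (p), however, your route is genuinely different from the paper's and in fact more direct. The paper does \emph{not} establish a uniform summable bound on $\sigma_{\mathrm p}^{(\omega)}(e_q,0,z+e_k,z,t)$ in $z$. Instead it rewrites $\sigma_{\mathrm p}^{(\omega)}$ via \cite[Lemma 5.2]{OhmII} as an $\alpha$--integral over $[0,\beta]$ of the four--point kernel $\mathfrak{C}_{t+i\alpha}^{(\omega)}$ built from complex--time two--point functions $C_{t+i\alpha}^{(\omega)}$, and then runs a two--parameter approximation scheme (Theorem \ref{decay bound theorem}): for each $\upsilon\in(0,\beta/2)$ and $\varepsilon>0$ one splits $C_{t+i\alpha}^{(\omega)}=A+B$ with $\|A\|_{\mathrm{op}}\le\varepsilon$ and $B$ having polynomial spatial decay, valid only for $\alpha\in[\upsilon,\beta-\upsilon]$. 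The ergodic theorem is applied to the $B$--part (Lemma \ref{lemma conductivty3}), the $A$--part is controlled by operator--norm tricks (Lemma \ref{lemma conductivty1 copy(1)}), and the cutoffs $\upsilon,\varepsilon$ are removed at the end (Corollary \ref{lemma conductivty4 copy(6)}). Your argument bypasses all of this by bounding the commutator $\bigl\|[I_{\mathbf y},\tau_s^{(\omega,\lambda)}(I_{\mathbf x})]\bigr\|$ directly via the one--particle propagation estimate of \cite[Lemma 4.2]{OhmI}; since the state only enters through $|\varrho(\cdot)|\le\|\cdot\|$, the Fermi function $F_\alpha^\beta$ never appears and no $\upsilon$--cutoff is needed. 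What the paper's route buys is that the same machinery simultaneously handles the space--inhomogeneous electric fields needed for Theorem \ref{main 1}; your argument is tailored to the spatially homogeneous average (\ref{average conductivity}) but is cleaner for that purpose.

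One point needs tightening: the equicontinuity step. The bound $\|I_{\mathbf x}\|\le 2$ alone gives only $|\partial_t\{\Xi_{\mathrm p,l}^{(\omega)}(t)\}_{k,q}|\le 8|\Lambda_l|$, which blows up with $l$. To get a Lipschitz constant uniform in $l$ you must reuse the very decay estimate you invoked earlier, i.e.\ $\bigl\|[I_{(y+e_k,y)},\tau_t^{(\omega,\lambda)}(I_{(x+e_q,x)})]\bigr\|\le D(t)\,f(|x-y|)$ with $f\in\ell^1(\mathfrak L)$, so that $|\partial_t\{\Xi_{\mathrm p,l}^{(\omega)}(t)\}_{k,q}|\le D(t)\sum_{z\in\mathfrak L}f(|z|)$. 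With this correction your Arzel\`a--Ascoli argument goes through and matches the conclusion of the paper's Lemma \ref{lemma conductivty4 copy(2)}.
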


\begin{proof}
(p) Take electric fields which equal $\vec{w}:=(w_{1},\ldots ,w_{d})\in
\mathbb{R}^{d}$ at time $t\in \mathbb{R}$ for all $x\in \left[ -1,1\right]
^{d}$ and $(0,0,\ldots ,0)$ for $t\in \mathbb{R}$ and $x\notin \left[ -1,1%
\right] ^{d}$. Then, the first assertion is a direct consequence of (\ref%
{average conductivity}), (\ref{paramagnetic transport coefficient macro}), (%
\ref{Lemma non existing2}), Theorem \ref{lemma conductivty4 copy(6)} and
Lemma \ref{lemma conductivty4 copy(1)} combined with \cite[Lemma 5.2]{OhmII}.

(d) is Corollary \ref{main 1 copy(21)} (ii). Note additionally that the
intersection of two measurable sets of full measure has full measure.
\end{proof}

In \cite[Eq. (47)]{OhmII} we introduce the (linear) conductivity $\mathbf{%
\Sigma }_{l}^{(\omega )}$ of the fermion system in the box $\Lambda _{l}$
from its paramagnetic and diamagnetic charge transport coefficients. Exactly
in the same way, we define the macroscopic conductivity $\mathbf{\Sigma }$
as follows:

\begin{definition}[Macroscopic conductivity]
\label{AC--conductivity}\mbox{ }\newline
For $\beta \in \mathbb{R}^{+}$ and $\lambda \in \mathbb{R}_{0}^{+}$, the
macroscopic conductivity is the map
\begin{equation*}
t\mapsto \mathbf{\Sigma }\left( t\right) \equiv \mathbf{\Sigma }^{(\beta
,\lambda )}\left( t\right) :=\left\{
\begin{array}{lll}
0 & , & \qquad t\leq 0\ . \\
\mathbf{\Xi }_{\mathrm{d}}+\mathbf{\Xi }_{\mathrm{p}}\left( t\right) & , &
\qquad t\geq 0\ .%
\end{array}%
\right.
\end{equation*}
\end{definition}

\noindent Indeed, by Theorem \ref{thm charged transport coefficient}, the
local conductivity $\mathbf{\Sigma }_{l}^{(\omega )}$ defined by \cite[Eq.
(47)]{OhmII} converges almost surely to the macroscopic conductivity $%
\mathbf{\Sigma }$, as $l\rightarrow \infty $.

\begin{bemerkung}[Current viscosity]
\label{remark quantum current viscosity}\mbox{
}\newline
For $\beta \in \mathbb{R}^{+}$, $\lambda \in \mathbb{R}_{0}^{+}$ and $t\in
\mathbb{R}$, the quantity
\begin{equation*}
\mathbf{V}\left( t\right) :=\left( \mathbf{\Xi }_{\mathrm{d}}\right)
^{-1}\partial _{t}\mathbf{\Xi }_{\mathrm{p}}\left( t\right) \in \mathcal{B}(%
\mathbb{R}^{d})
\end{equation*}%
defines a \emph{macroscopic current viscosity}, similar to \cite[Eq. (40)]%
{OhmII}.
\end{bemerkung}

\subsection{Classical Ohm's Law\label{Sect Classical Ohm's Law}}

For any $l,\beta \in \mathbb{R}^{+}$, $\omega \in \Omega $, $\lambda \in
\mathbb{R}_{0}^{+}$, $\eta \in \mathbb{R}$, $\vec{w}\in \mathbb{R}^{d}$ and $%
\mathcal{A}\in C_{0}^{\infty }\left( \mathbb{R};\mathbb{R}\right) $, the
current density due to the space--homogeneous electric perturbation $%
\mathcal{E}$ in the box $\Lambda _{l}$ is the sum of three current densities
defined from (\ref{current observable})--(\ref{current observable new}):

\begin{itemize}
\item[(th)] The thermal current density
\begin{equation*}
\mathbb{J}_{\mathrm{th}}^{(\omega ,l)}\equiv \mathbb{J}_{\mathrm{th}%
}^{(\beta ,\omega ,\lambda ,l)}\in \mathbb{R}^{d}
\end{equation*}%
at equilibrium inside the box $\Lambda _{l}$ is defined, for any $k\in
\{1,\ldots ,d\}$, by
\begin{equation}
\left\{ \mathbb{J}_{\mathrm{th}}^{(\omega ,l)}\right\} _{k}:=\left\vert
\Lambda _{l}\right\vert ^{-1}\underset{x\in \Lambda _{l}}{\sum }\varrho
^{(\beta ,\omega ,\lambda )}(I_{(x+e_{k},x)})\ .  \label{free current}
\end{equation}

\item[(p)] The paramagnetic current density is the map
\begin{equation*}
t\mapsto \mathbb{J}_{\mathrm{p}}^{(\omega ,\eta \mathbf{\bar{A}}_{l})}\left(
t\right) \equiv \mathbb{J}_{\mathrm{p}}^{(\beta ,\omega ,\lambda ,\eta
\mathbf{\bar{A}}_{l})}\left( t\right) \in \mathbb{R}^{d}
\end{equation*}%
defined by the space average of the current increment vector inside the box $%
\Lambda _{l}$ at times $t\geq t_{0}$, that is for any $k\in \{1,\ldots ,d\}$%
,
\begin{equation}
\left\{ \mathbb{J}_{\mathrm{p}}^{(\omega ,\eta \mathbf{\bar{A}}_{l})}\left(
t\right) \right\} _{k}:=\left\vert \Lambda _{l}\right\vert ^{-1}\underset{%
x\in \Lambda _{l}}{\sum }\rho _{t}^{(\beta ,\omega ,\lambda ,\eta \mathbf{%
\bar{A}}_{l})}\left( I_{(x+e_{k},x)}\right) -\varrho ^{(\beta ,\omega
,\lambda )}\left( I_{(x+e_{k},x)}\right) \ .
\label{finite volume current density}
\end{equation}

\item[(d)] The diamagnetic (or ballistic) current density
\begin{equation*}
t\mapsto \mathbb{J}_{\mathrm{d}}^{(\omega ,\eta \mathbf{\bar{A}}_{l})}\left(
t\right) \equiv \mathbb{J}_{\mathrm{d}}^{(\beta ,\omega ,\lambda ,\eta
\mathbf{\bar{A}}_{l})}\left( t\right) \in \mathbb{R}^{d}
\end{equation*}%
is defined analogously, for any $t\geq t_{0}$ and $k\in \{1,\ldots ,d\}$, by
\begin{equation}
\left\{ \mathbb{J}_{\mathrm{d}}^{(\omega ,\eta \mathbf{\bar{A}}_{l})}\left(
t\right) \right\} _{k}:=\left\vert \Lambda _{l}\right\vert ^{-1}\underset{%
x\in \Lambda _{l}}{\sum }\rho _{t}^{(\beta ,\omega ,\lambda ,\eta \mathbf{%
\bar{A}}_{l})}(\mathrm{I}_{(x+e_{k},x)}^{\eta \mathbf{\bar{A}}_{l}})\ .
\label{finite volume current density2}
\end{equation}
\end{itemize}

\noindent Thermal currents are due to the space inhomogeneity of the fermion
system for $\lambda \in \mathbb{R}^{+}$. The paramagnetic current density is
only related to the change of internal state $\rho _{t}^{(\beta ,\omega
,\lambda ,\mathbf{A})}$ produced by the electromagnetic field. We show in
\cite[Theorem 4.1]{OhmII} that it carries the paramagnetic energy increment
defined in Section \ref{sect 2.7}. The diamagnetic current density
corresponds to a raw ballistic flow of charged particles caused by the
electric field. It yields the diamagnetic energy again defined in Section %
\ref{sect 2.7}. Paramagnetic and diamagnetic currents correspond to
different physical phenomena. See \cite[Sections 3.4-3.5, 4.4]{OhmII} for
more details.

In order to compare the objects we study in the present paper with those of
\cite{Annale} we rewrite the current densities in terms of the one--particle
Hilbert space $\ell ^{2}(\mathfrak{L})$. Indeed, by using the time--evolving
symbols $\mathbf{d}_{t}^{(\beta ,\omega ,\lambda ,\eta \mathbf{\bar{A}}%
_{l})}\in \mathcal{B}(\ell ^{2}(\mathfrak{L})) $ of the quasi--free state $%
\rho _{t}^{(\beta ,\omega ,\lambda ,\eta \mathbf{\bar{A}}_{l})}$, the (full)
current density on the direction $e_{k}$, $k\in \{1,\ldots ,d\}$, can be
seen as a trace on the one--particle Hilbert space $\ell ^{2}(\mathfrak{L})$
for every $l\in \mathbb{R}^{+}$:
\begin{eqnarray}
&&\left\{ \mathbb{J}_{\mathrm{th}}^{(\omega ,l)}+\mathbb{J}_{\mathrm{p}%
}^{(\omega ,\eta \mathbf{\bar{A}}_{l})}\left( t\right) +\mathbb{J}_{\mathrm{d%
}}^{(\omega ,\eta \mathbf{\bar{A}}_{l})}\left( t\right) \right\} _{k}  \notag
\\
&=&-\left\vert \Lambda _{l}\right\vert ^{-1}\mathrm{Tr}_{\ell ^{2}(\mathfrak{%
L})}\left[ \mathbf{d}_{t}^{(\beta ,\omega ,\lambda ,\eta \mathbf{\bar{A}}%
_{l})}\mathrm{P}_{l}i[\Delta _{\mathrm{d}}^{(\eta \mathbf{\bar{A}}%
_{l})},X_{k}]\mathrm{P}_{l}\right] +\mathcal{O}(l^{-1})\ ,
\label{Jn one-part}
\end{eqnarray}%
uniformly w.r.t. all parameters. Here, for any $l\in \mathbb{R}^{+}$, $%
\mathrm{P}_{l}\in \mathcal{B}(\ell ^{2}(\mathfrak{L}))$ is the orthogonal
projector with range $\mathrm{lin}\{\mathfrak{e}_{x}$ $:$ $x\in \Lambda
_{l}\}$, i.e., the multiplication operator with the characteristic function
of the box $\Lambda _{l}$. For any $k\in \{1,\ldots ,d\}$, $X_{k}$ is the
(unbounded) multiplication operator on $\ell ^{2}(\mathfrak{L})$ with the $%
k^{\text{th}}$ space component:
\begin{equation*}
X_{k}(\psi )(x_{1},\ldots ,x_{d}):=x_{k}\psi (x_{1},\ldots ,x_{d})
\end{equation*}%
for all $\psi \in \ell ^{2}(\mathfrak{L})$ in the domain of definition of $%
X_{k}$. The term of order $\mathcal{O}(l^{-1})$\ in (\ref{Jn one-part})
results from the existence of $\mathcal{O}(l^{d-1})$ points $x\in \Lambda
_{l}$ such that $x+e_{k}\notin \Lambda _{l}$. Therefore, by (\ref{Jn
one-part}), the full current density can be seen as a kind of density of
trace of a velocity operator on the one--particle space $\ell ^{2}(\mathfrak{%
L})$ like in \cite[Eq. (2.6)]{Annale}. However, as compared with \cite%
{Annale,JMP-autre}, the density operator $\mathbf{d}_{t}^{(\beta ,\omega
,\lambda ,\eta \mathbf{\bar{A}}_{l})}$ depends on the size of the box in our
formulation.

By \cite[Theorem 3.3]{OhmII}, the current density behaves, at small $|\eta |$
and uniformly w.r.t. the size of the box, linearly w.r.t. the parameter $%
\eta $: For any $\vec{w}\in \mathbb{R}^{d}$ and $\mathcal{A}\in
C_{0}^{\infty }\left( \mathbb{R};\mathbb{R}\right) $, there is a strictly
positive number $\eta _{0}\in \mathbb{R}^{+}$ such that, for $|\eta |\in
\lbrack 0,\eta _{0}]$,
\begin{eqnarray*}
\mathbb{J}_{\mathrm{p}}^{(\omega ,\eta \mathbf{\bar{A}}_{l})}\left( t\right)
&=&\eta J_{\mathrm{p},l}^{(\omega ,\mathcal{A})}(t)+\mathcal{O}\left( \eta
^{2}\right) \ ,\qquad J_{\mathrm{p},l}^{(\omega ,\mathcal{A})}(t)=\mathcal{O}%
\left( 1\right) \ , \\
\mathbb{J}_{\mathrm{d}}^{(\omega ,\eta \mathbf{\bar{A}}_{l})}\left( t\right)
&=&\eta J_{\mathrm{d},l}^{(\omega ,\mathcal{A})}(t)+\mathcal{O}\left( \eta
^{2}\right) \ ,\qquad J_{\mathrm{d},l}^{(\omega ,\mathcal{A})}(t)=\mathcal{O}%
\left( 1\right) \ ,
\end{eqnarray*}%
uniformly for $l,\beta \in \mathbb{R}^{+}$, $\omega \in \Omega $, $\lambda
\in \mathbb{R}_{0}^{+}$ and $t\geq t_{0}$.

The $\mathbb{R}^{d}$--valued linear coefficients
\begin{equation*}
J_{\mathrm{p},l}^{(\omega ,\mathcal{A})}\equiv J_{\mathrm{p},l}^{(\beta
,\omega ,\lambda ,\vec{w},\mathcal{A})}\qquad \text{and}\qquad J_{\mathrm{d}%
,l}^{(\omega ,\mathcal{A})}\equiv J_{\mathrm{d},l}^{(\beta ,\omega ,\lambda ,%
\vec{w},\mathcal{A})}
\end{equation*}%
of the paramagnetic and diamagnetic current densities, respectively, become
deterministic for large boxes. They are directly related to the charge
transport coefficients $\mathbf{\Xi }_{\mathrm{p}}$ and $\mathbf{\Xi }_{%
\mathrm{d}}$ via Ohm's law:

\begin{satz}[Classical Ohm's law]
\label{main 1 copy(8)}\mbox{
}\newline
Let $\beta \in \mathbb{R}^{+}$ and $\lambda \in \mathbb{R}_{0}^{+}$. Then,
there is a measurable subset $\tilde{\Omega}\equiv \tilde{\Omega}^{(\beta
,\lambda )}\subset \Omega $ of full measure such that, for any $\omega \in
\tilde{\Omega}$, $\vec{w}\in \mathbb{R}^{d}$, $\mathcal{A}\in C_{0}^{\infty
}\left( \mathbb{R};\mathbb{R}\right) $ and $t\geq t_{0}$, the following
assertions hold true:\newline
\emph{(th)} Thermal current density:
\begin{equation*}
\underset{l\rightarrow \infty }{\lim }\ \mathbb{J}_{\mathrm{th}}^{(\omega
,l)}=\left( 0,\ldots ,0\right) \ .
\end{equation*}%
\emph{(p)} Paramagnetic current density:%
\begin{equation}
\underset{l\rightarrow \infty }{\lim }J_{\mathrm{p},l}^{(\omega ,\mathcal{A}%
)}(t)=\underset{l\rightarrow \infty }{\lim }\left( \left. \partial _{\eta }%
\mathbb{J}_{\mathrm{p}}^{(\omega ,\eta \mathbf{\bar{A}}_{l})}\left( t\right)
\right\vert _{\eta =0}\right) =\int_{t_{0}}^{t}\left( \mathbf{\Xi }_{\mathrm{%
p}}\left( t-s\right) \vec{w}\right) \mathcal{E}_{s}\mathrm{d}s\ .  \notag
\end{equation}%
\emph{(d)} Diamagnetic current density:
\begin{equation}
\underset{l\rightarrow \infty }{\lim }J_{\mathrm{d},l}^{(\omega ,\mathcal{A}%
)}(t)=\underset{l\rightarrow \infty }{\lim }\left( \left. \partial _{\eta }%
\mathbb{J}_{\mathrm{d}}^{(\omega ,\eta \mathbf{\bar{A}}_{l})}\left( t\right)
\right\vert _{\eta =0}\right) =\left( \mathbf{\Xi }_{\mathrm{d}}\vec{w}%
\right) \int_{t_{0}}^{t}\mathcal{E}_{s}\mathrm{d}s\ .  \notag
\end{equation}
\end{satz}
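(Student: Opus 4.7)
The plan is to reduce all three assertions to Theorem \ref{thm charged transport coefficient} by combining it with the microscopic version of Ohm's law already established in \cite[Theorem 3.3]{OhmII}. At fixed box size $l$, that theorem provides explicit expressions of the linear coefficients $J_{\mathrm{p},l}^{(\omega,\mathcal{A})}(t)$ and $J_{\mathrm{d},l}^{(\omega,\mathcal{A})}(t)$ as convolutions against the random transport coefficients $\Xi_{\mathrm{p},l}^{(\omega)}$ and $\Xi_{\mathrm{d},l}^{(\omega)}$. Specifically, the microscopic Ohm's law yields
\begin{equation*}
J_{\mathrm{p},l}^{(\omega,\mathcal{A})}(t) = \int_{t_{0}}^{t}\bigl(\Xi_{\mathrm{p},l}^{(\omega)}(t-s)\vec{w}\bigr)\mathcal{E}_{s}\,\mathrm{d}s,\qquad J_{\mathrm{d},l}^{(\omega,\mathcal{A})}(t) = \bigl(\Xi_{\mathrm{d},l}^{(\omega)}\vec{w}\bigr)\int_{t_{0}}^{t}\mathcal{E}_{s}\,\mathrm{d}s,
\end{equation*}
and equals the $\eta$-derivative of the corresponding rescaled current densities at $\eta=0$. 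Passing $l\to\infty$ via Theorem \ref{thm charged transport coefficient} then gives (p) and (d).

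For part (th), no dynamical argument is needed. Since the Anderson Hamiltonian $\Delta_{\mathrm{d}}+\lambda V_{\omega}$ has \emph{real} matrix entries in the canonical basis $\{\mathfrak{e}_{x}\}_{x\in\mathfrak{L}}$, so does its Fermi symbol $\mathbf{d}_{\mathrm{fermi}}^{(\beta,\omega,\lambda)}$ defined in (\ref{Fermi statistic}). Consequently, $\varrho^{(\beta,\omega,\lambda)}(a_{y}^{\ast}a_{x}) = \langle\mathfrak{e}_{x},\mathbf{d}_{\mathrm{fermi}}^{(\beta,\omega,\lambda)}\mathfrak{e}_{y}\rangle\in\mathbb{R}$ for every $x,y\in\mathfrak{L}$ and every $\omega\in\Omega$. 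The definition (\ref{current observable}) as $-2\mathrm{Im}(\cdots)$ immediately forces $\varrho^{(\beta,\omega,\lambda)}(I_{(x+e_{k},x)})=0$, and hence $\mathbb{J}_{\mathrm{th}}^{(\omega,l)}=(0,\ldots,0)$ for \emph{all} $l,\omega$ by (\ref{free current}). This makes (th) trivial and, incidentally, shows that in (\ref{finite volume current density}) the subtracted term is zero for every realization.

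For the remaining statements, I would fix $\omega$ in the intersection $\tilde{\Omega}_{\mathrm{p}}\cap\tilde{\Omega}_{\mathrm{d}}$ of the two full-measure sets given by Theorem \ref{thm charged transport coefficient}; this intersection still has full measure. Part (d) is then essentially immediate: $J_{\mathrm{d},l}^{(\omega,\mathcal{A})}(t)$ is a product of $\Xi_{\mathrm{d},l}^{(\omega)}\vec{w}$ (which converges to $\mathbf{\Xi}_{\mathrm{d}}\vec{w}$ by Theorem \ref{thm charged transport coefficient}(d)) and the deterministic factor $\int_{t_{0}}^{t}\mathcal{E}_{s}\,\mathrm{d}s$. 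Part (p) requires exchanging the limit $l\to\infty$ with the $s$-integral; for this I would invoke the uniformity on compact time intervals supplied by Theorem \ref{thm charged transport coefficient}(p), together with the fact that $\mathcal{E}\in C_{0}^{\infty}(\mathbb{R};\mathbb{R})$ is bounded and supported in a finite interval containing $[t_{0},t]$, to apply dominated convergence.

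The main obstacle I foresee is the almost-sure statement rather than an $\mathbb{E}[\cdot]$ statement, especially ensuring one single $\omega$-full-measure set works simultaneously for all admissible $\vec{w}$, $\mathcal{A}$, and $t$. The Akcoglu--Krengel ergodic theorem (cf. Theorem \ref{Ackoglu--Krengel ergodic theorem II copy(1)}) a priori produces a null set that may depend on these parameters. To circumvent this, I would first establish the convergence of $\Xi_{\mathrm{p},l}^{(\omega)}(t)\vec{w}$ on a countable dense set of directions $\vec{w}$ and times $t\in\mathbb{Q}$, then use the uniform-in-$l$ continuity and boundedness of $t\mapsto\Xi_{\mathrm{p},l}^{(\omega)}(t)$ coming from \cite[Corollary 3.2]{OhmII} (applied to $\omega$ in this common full-measure set) to extend the pointwise convergence to all $t$ and $\vec{w}$ by a standard $3\varepsilon$ argument, and finally transfer this uniform convergence inside the integral against $\mathcal{E}_{s}$.
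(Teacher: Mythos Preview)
Your argument for parts (p) and (d) is exactly the paper's approach: invoke the finite-volume Ohm's law \cite[Eqs.\ (44)--(45) / Theorem 3.3]{OhmII} to write $J_{\mathrm{p},l}^{(\omega,\mathcal{A})}$ and $J_{\mathrm{d},l}^{(\omega,\mathcal{A})}$ as integrals against $\Xi_{\mathrm{p},l}^{(\omega)}$ and $\Xi_{\mathrm{d},l}^{(\omega)}$, then pass to the limit via Theorem~\ref{thm charged transport coefficient} and dominated convergence. Your discussion of the ``obstacle'' is largely redundant, since Theorem~\ref{thm charged transport coefficient} already delivers a single full-measure set $\tilde{\Omega}^{(\beta,\lambda)}$ (depending only on $\beta,\lambda$) on which the convergence of $\Xi_{\mathrm{p},l}^{(\omega)}(t)$ holds uniformly on compact time sets; the dependence on $\vec{w}$, $\mathcal{A}$, $t$ is then purely through deterministic factors outside the limit, so no further countable-intersection bookkeeping is needed at this stage.

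For part (th), however, you take a genuinely different and more elementary route than the paper. The paper proves (th) through Corollary~\ref{main 1 copy(21)}(th), which applies the Akcoglu--Krengel ergodic theorem to the additive process (\ref{first additive process}) and then uses translation and reflection invariance of $\mathfrak{a}_{\Omega}$ to conclude $\mathbb{E}[\varrho^{(\beta,\omega,\lambda)}(I_{(z,0)})]=0$. Your observation that $\Delta_{\mathrm{d}}+\lambda V_{\omega}$ has real matrix entries in the canonical basis --- hence so does $\mathbf{d}_{\mathrm{fermi}}^{(\beta,\omega,\lambda)}$, being a real function of it --- shows directly that $\varrho^{(\beta,\omega,\lambda)}(I_{(x+e_{k},x)})=0$ for \emph{every} $\omega\in\Omega$, not merely in expectation. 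This bypasses ergodicity entirely and yields $\mathbb{J}_{\mathrm{th}}^{(\omega,l)}=0$ identically. The paper's argument is more robust (it would survive complex hopping terms, e.g.\ a fixed background magnetic field), while yours exploits the specific real structure of the Anderson model but is sharper in this setting.
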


\begin{proof}
(th) is Corollary \ref{main 1 copy(21)} (th). Assertions (p) and (d) are
deduced from Theorem \ref{thm charged transport coefficient}, \cite[Eqs
(44)-(45)]{OhmII} and Lebesgue's dominated convergence theorem. Note that
any countable intersection of measurable sets of full measure has full
measure.
\end{proof}

\noindent Exactly like \cite[Theorem 3.3]{OhmII}, Theorem \ref{main 1
copy(8)} can be extended to macroscopically space--inhomogeneous
electromagnetic fields, that is, for all space--rescaled vector potentials $%
\mathbf{A}_{l}$ (\ref{rescaled vector potential}) with $\mathbf{A}\in
\mathbf{C}_{0}^{\infty }$, by exactly the same methods as in the proof of
Theorem \ref{main 1}. We refrain from doing it at this point, for technical
simplicity. Such a result can indeed be deduced from Theorem \ref{main 1},
see Equations (\ref{current para limit})--(\ref{current dia limit}).

Therefore, for any $\beta \in \mathbb{R}^{+}$, $\lambda \in \mathbb{R}%
_{0}^{+}$ and $\mathcal{A}\in C_{0}^{\infty }\left( \mathbb{R};\mathbb{R}%
\right) $, the \emph{full} current density linear response $J_{\mathrm{lin}%
}\equiv J_{\mathrm{lin}}^{(\beta ,\lambda ,\mathcal{A})}$ of the infinite
volume fermion system equals
\begin{equation}
J_{\mathrm{lin}}(t)=\int\nolimits_{-\infty }^{t}\left( \mathbf{\Sigma }%
\left( t-s\right) \vec{w}\right) \mathcal{E}_{s}\ \mathrm{d}s\ ,\qquad t\in
\mathbb{R}\ .  \label{Ohm law full}
\end{equation}
Moreover, $J_{\mathrm{lin}}$ is the sum of paramagnetic and diamagnetic
current densities. Such a decomposition of the current is well--known in
theoretical physics, see, e.g., \cite[Eq. (A2.14)]{dia-current}. For a
discussion on the physical meaning of the paramagnetic and diamagnetic
components of the current, in particular in which concerns heat production,
we refere to \cite[Section 3.5]{OhmII}.

\subsection{Green--Kubo Relations and CCR Algebra of Current Fluctuations
\label{section Current Fluctuations}}

For every (lattice translation) $x\in \mathfrak{L}$, the condition%
\begin{equation*}
\chi _{x}(a_{y})=a_{y+x}\ ,\quad y\in \mathfrak{L}\ ,
\end{equation*}%
uniquely defines a $\ast $--automorphism $\chi _{x}$ of the CAR $C^{\ast }$%
--algebra $\mathcal{U}$. For any $l\in \mathbb{R}^{+}$ and $B\in \mathcal{U}$%
, set
\begin{equation}
\mathbb{F}^{(l)}\left( B\right) :=\frac{1}{\left\vert \Lambda
_{l}\right\vert ^{1/2}}\underset{x\in \Lambda _{l}}{\sum }\left\{ \chi
_{x}\left( B\right) -\varrho ^{(\beta ,\omega ,\lambda )}\left( \chi
_{x}\left( B\right) \right) \mathbf{1}\right\} \ .  \label{Fluctuation2}
\end{equation}%
We name it the \emph{fluctuation observable }of the element $B\in \mathcal{U}
$.

By Theorem \ref{thm charged transport coefficient} and Lebesgue's dominated
convergence theorem together with Equations (\ref{backwards -1bis}), (\ref%
{average conductivity}) and (\ref{paramagnetic transport coefficient macro}%
), one obtains \emph{Green--Kubo relations}: For any $\beta \in \mathbb{R}%
^{+}$, $\lambda \in \mathbb{R}_{0}^{+}$, $t\in \mathbb{R}$ and $k,q\in
\{1,\ldots ,d\}$,%
\begin{equation}
\left\{ \mathbf{\Xi }_{\mathrm{p}}\left( t\right) \right\}
_{k,q}=\int\nolimits_{0}^{t}\underset{l\rightarrow \infty }{\lim }\mathbb{E}%
\left[ \varrho ^{(\beta ,\omega ,\lambda )}\left( i\left[ \mathbb{F}%
^{(l)}(I_{(e_{k},0)}),\mathbb{F}^{(l)}(\tau _{s}^{(\omega ,\lambda
)}(I_{(e_{q},0)}))\right] \right) \right] \mathrm{d}s  \label{green kubo}
\end{equation}%
with the current observable $I_{(x,y)}$ defined by (\ref{current observable}%
). See also \cite[Eq. (46)]{OhmII} and Theorem \ref{scalar product} which
ensures the existence of the limit integrand. It is therefore natural to
introduce the so--called CCR $C^{\ast }$--algebra of current fluctuations,
which is reminiscent of non--commutative central limit theorems (see, e.g.,
\cite{CCR fluctuations}).

To this end, we define the linear subspace%
\begin{equation}
\mathcal{I}:=\mathrm{lin}\left\{ \mathrm{Im}(a^{\ast }\left( \psi
_{1}\right) a\left( \psi _{2}\right) ):\psi _{1},\psi _{2}\in \ell ^{1}(%
\mathfrak{L})\subset \ell ^{2}(\mathfrak{L})\right\} \subset \mathcal{U}\ ,
\label{current space fluct}
\end{equation}%
which is the linear hull ($\mathrm{lin}$) of short range bond currents. As
explained in \cite[Section 5.1.2]{OhmII}, it is an invariant subspace of the
one--parameter group $\tau ^{(\omega ,\lambda )}$ for any $\omega \in \Omega
$ and $\lambda \in \mathbb{R}_{0}^{+}$. We define from $\mathcal{I}$ a
pre--Hilbert space $\mathcal{\check{H}}_{\mathrm{fl}}$ of current
fluctuations by using the following positive sesquilinear form $\langle
\cdot ,\cdot \rangle _{\mathcal{I}}$:

\begin{satz}[Positive sesquilinear form from current fluctuations]
\label{scalar product}\mbox{
}\newline
Let $\beta \in \mathbb{R}^{+}$ and $\lambda \in \mathbb{R}_{0}^{+}$. Then,
there is a measurable set $\tilde{\Omega}\equiv \tilde{\Omega}^{(\beta
,\lambda )}\subset \Omega $ of full measure such that, for any $\omega \in
\tilde{\Omega}$, the limit%
\begin{equation*}
\langle I,I^{\prime }\rangle _{\mathcal{I}}=\underset{l\rightarrow \infty }{%
\lim }\varrho ^{(\beta ,\omega ,\lambda )}\left( \mathbb{F}^{(l)}\left(
I\right) ^{\ast }\mathbb{F}^{(l)}\left( I^{\prime }\right) \right) \ ,\qquad
I,I^{\prime }\in \mathcal{I}\ ,
\end{equation*}%
exists and does not depend on $\omega \in \tilde{\Omega}$.
\end{satz}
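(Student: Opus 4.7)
The plan is to combine the quasi-free structure of $\varrho^{(\beta,\omega,\lambda)}$ with translation covariance of its Fermi symbol and the Akcoglu-Krengel ergodic theorem, and then to extend from a countable dense subfamily to all of $\mathcal{I}$ by a continuity argument on a single full-measure set $\tilde{\Omega}$.

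First, I would use sesquilinearity and the separability of $\ell^{1}(\mathfrak{L})$ to reduce to the case where $I = \mathrm{Im}(a^{\ast}(\psi_{1})a(\psi_{2}))$ and $I' = \mathrm{Im}(a^{\ast}(\phi_{1})a(\phi_{2}))$ with $\psi_{i},\phi_{i}$ drawn from a fixed countable dense subset of $\ell^{1}(\mathfrak{L})$. Using $I^{\ast}=I$ and the definition of $\mathbb{F}^{(l)}$, I expand
\[
\varrho^{(\beta,\omega,\lambda)}\!\bigl(\mathbb{F}^{(l)}(I)\mathbb{F}^{(l)}(I')\bigr) = \frac{1}{|\Lambda_{l}|}\sum_{x,y\in\Lambda_{l}} C_{\omega}(x,y),
\]
with $C_{\omega}(x,y) := \varrho^{(\beta,\omega,\lambda)}(\chi_{x}(I)\chi_{y}(I')) - \varrho^{(\beta,\omega,\lambda)}(\chi_{x}(I))\,\varrho^{(\beta,\omega,\lambda)}(\chi_{y}(I'))$. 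Because $\varrho^{(\beta,\omega,\lambda)}$ is the quasi-free state with symbol $\mathbf{d}^{(\omega)} := \mathbf{d}_{\mathrm{fermi}}^{(\beta,\omega,\lambda)}$, Wick's theorem reduces each such truncated four-point function to a finite sum of \emph{connected} bilinear terms in matrix elements of the form $\langle T_{y}\phi_{i},\mathbf{d}^{(\omega)}T_{x}\psi_{j}\rangle$ and $\langle T_{x}\psi_{i},(\mathbf{1}-\mathbf{d}^{(\omega)})T_{y}\phi_{j}\rangle$, where $T_{x}$ denotes the unitary translation on $\ell^{2}(\mathfrak{L})$.

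Next, I would exploit the covariance $T_{-x}\mathbf{d}^{(\omega)}T_{x} = \mathbf{d}^{(\theta_{x}\omega)}$, where $\{\theta_{x}\}_{x\in\mathfrak{L}}$ is the shift on $\Omega$, together with the substitution $z := y-x$, to rewrite
\[
\varrho^{(\beta,\omega,\lambda)}\!\bigl(\mathbb{F}^{(l)}(I)\mathbb{F}^{(l)}(I')\bigr) = \frac{1}{|\Lambda_{l}|}\sum_{x\in\Lambda_{l}}\,\sum_{z\in\Lambda_{l}-x} g_{\theta_{x}\omega}(z)
\]
for a bounded measurable $g_{\bullet}(\cdot)$ depending on $(\psi_{i},\phi_{i})$. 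Combes-Thomas-type locality bounds for the bounded self-adjoint operator $\Delta_{\mathrm{d}}+\lambda V_{\omega}$ yield exponential off-diagonal decay of $\mathbf{d}^{(\omega)}$, \emph{uniform} in $\omega\in\Omega$, so that $\sup_{\omega}\sum_{z\in\mathfrak{L}}|g_{\omega}(z)| \leq C(\beta,\lambda)\|\psi_{1}\|_{1}\|\psi_{2}\|_{1}\|\phi_{1}\|_{1}\|\phi_{2}\|_{1} < \infty$. This absolute summability makes the boundary error of order $|\partial\Lambda_{l}|/|\Lambda_{l}|$ negligible and lets me replace the inner sum by $G(\theta_{x}\omega) := \sum_{z\in\mathfrak{L}}g_{\theta_{x}\omega}(z)$ up to a vanishing error. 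Since the i.i.d.\ product structure of $\mathfrak{a}_{\Omega}$ makes the $\mathfrak{L}$-action $\{\theta_{x}\}$ ergodic, the Akcoglu-Krengel ergodic theorem then produces, for each fixed pair of generators, a full-measure subset of $\Omega$ on which $|\Lambda_{l}|^{-1}\sum_{x\in\Lambda_{l}}G(\theta_{x}\omega)\to\mathbb{E}[G]$, a deterministic limit.

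To finish, I would set $\tilde{\Omega}$ to be the countable intersection of the full-measure sets obtained above (still of full measure) and promote the convergence from the countable dense subfamily to all of $\mathcal{I}$ by an $\varepsilon/3$ argument, using the uniform-in-$(l,\omega)$ bound on $(I,I')\mapsto \varrho^{(\beta,\omega,\lambda)}(\mathbb{F}^{(l)}(I)\mathbb{F}^{(l)}(I'))$ furnished by the same Combes-Thomas estimate. The principal technical hurdle is precisely this uniform-in-$\omega$ off-diagonal decay of $\mathbf{d}_{\mathrm{fermi}}^{(\beta,\omega,\lambda)}$; it simultaneously guarantees the absolute summability in $z$, controls the boundary error, and enables the density extension. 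It should follow from the holomorphic (or Helffer-Sj\"ostrand) functional calculus applied to the bounded self-adjoint operator $\Delta_{\mathrm{d}}+\lambda V_{\omega}$, together with the locality estimates already deployed in \cite{OhmI,OhmII}.
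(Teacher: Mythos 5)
Your proposal is correct, and its overall architecture (reduction to generators $\mathrm{Im}(a^{\ast}(\psi_{1})a(\psi_{2}))$, quasi--free/Wick expansion of the truncated four--point function into products of two--point functions, translation covariance, ergodic theorem for each quadruple in a countable dense subset of $\ell^{1}(\mathfrak{L})$, and extension by the uniform bound) coincides with the paper's, which runs through Lemma \ref{bound incr 1 Lemma copy(2)} and Theorem \ref{toto fluctbis}, with the uniform estimate imported from \cite[Lemma 5.10]{OhmII} and the countable dense set taken to be $\mathbb{\ell}_{0}^{\mathbb{Q}}$. Where you genuinely diverge is in the key decay input. The correlation kernel here is $\mathfrak{C}_{0}^{(\omega)}$, i.e.\ the two--point functions at the endpoints $\alpha\in\{0,\beta\}$ of the strip, where $F_{\alpha}^{\beta}$ is not Schwartz; the paper therefore first approximates $\mathfrak{C}_{0}^{(\omega)}$ by $\mathfrak{C}_{i\alpha}^{(\omega)}$ with small $\alpha>0$ (using $\Vert\mathfrak{C}_{i\alpha}^{(\omega)}-\mathfrak{C}_{0}^{(\omega)}\Vert_{\mathrm{op}}\leq D\alpha$) and then invokes the Fourier--based decomposition $C_{t+i\alpha}^{(\omega)}=A+B$ of Theorem \ref{decay bound theorem}, with $A$ of small operator norm and $B$ only polynomially decaying. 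You instead get exponential off--diagonal decay of $\mathbf{d}_{\mathrm{fermi}}^{(\beta,\omega,\lambda)}$ and $\mathbf{1}-\mathbf{d}_{\mathrm{fermi}}^{(\beta,\omega,\lambda)}$ directly at $\alpha=0$ via Combes--Thomas/holomorphic functional calculus, exploiting that $F_{0}^{\beta}$ is analytic in the strip $|\mathrm{Im}\,\varkappa|<\pi/\beta$ and that the spectrum of $\Delta_{\mathrm{d}}+\lambda V_{\omega}$ lies in a compact set uniformly in $\omega$. This is sound and arguably more direct for this particular theorem; its price is that the constants depend on $\lambda$ (which is irrelevant here, since $\lambda$ is fixed), whereas the paper's decomposition is designed to be $\lambda$--independent and to cover the full family $C_{t+i\alpha}^{(\omega)}$ needed elsewhere. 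Two minor remarks: the uniform bound (\ref{bound incr 1 Lemma copy(3)-NEW}) actually follows from Cauchy--Schwarz in the translation variable together with $0\leq\mathbf{d}\leq\mathbf{1}$ alone, so Combes--Thomas is only truly needed for the uniform-in-$\omega$ control of the boundary/tail error; and for a single bounded additive observable your ergodic step is just the multiparameter Birkhoff theorem, consistent with the paper's use of Theorem \ref{Ackoglu--Krengel ergodic theorem II copy(1)}.
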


\begin{proof}
See Theorem \ref{toto fluctbis}.
\end{proof}

Because of the Cauchy--Schwarz inequality for $\langle \cdot ,\cdot \rangle
_{\mathcal{I}}$, the set%
\begin{equation*}
\mathcal{I}_{0}:=\left\{ I\in \mathcal{I}:\langle I,I\rangle _{\mathcal{I}%
}=0\right\}
\end{equation*}%
is a subspace of $\mathcal{I}$ and the quotient $\mathcal{\check{H}}_{%
\mathrm{fl}}:=\mathcal{I}/\mathcal{I}_{0}$ is a pre--Hilbert space w.r.t. to
the (well--defined) scalar product
\begin{equation*}
\langle \lbrack I],[I^{\prime }]\rangle _{\mathcal{\check{H}}_{\mathrm{fl}%
}}:=\ \langle I,I^{\prime }\rangle _{\mathcal{I}}\ ,\qquad \lbrack
I],[I^{\prime }]\in \mathcal{\check{H}}_{\mathrm{fl}}\ .
\end{equation*}%
Note that $\mathcal{I}_{0}\neq \mathcal{I}$, in general. Observe also that $%
\mathcal{I}_{0}$ is an invariant subspace of $\tau ^{(\omega ,\lambda )}$
because of (\ref{stationary}). In particular, for any $t\in {\mathbb{R}}$,%
\begin{equation*}
\lbrack \tau _{t}^{(\omega ,\lambda )}(I)]=[\tau _{t}^{(\omega ,\lambda
)}(I^{\prime })]\ ,\qquad I,I^{\prime }\in \lbrack I]\in \mathcal{\check{H}}%
_{\mathrm{fl}}\ .
\end{equation*}%
Therefore, by Theorem \ref{bound incr 1 Lemma copy(1)}, the dynamics defined
by $\tau ^{(\omega ,\lambda )}$ on $\mathcal{U}$ induces a unitary time
evolution on the Hilbert space $\mathcal{H}_{\mathrm{fl}}$, the completion
of $\mathcal{\check{H}}_{\mathrm{fl}}$ w.r.t. the scalar product $\langle
\cdot ,\cdot \rangle _{\mathcal{\check{H}}_{\mathrm{fl}}}$: Let $\beta \in
\mathbb{R}^{+}$ and $\lambda \in \mathbb{R}_{0}^{+}$. Then, there is a
measurable set $\tilde{\Omega}\equiv \tilde{\Omega}^{(\beta ,\lambda
)}\subset \Omega $ of full measure such that, for any $\omega \in \tilde{%
\Omega}$, there is a unique strongly continuous one--parameter unitary group
$\{\mathrm{V}_{t}^{(\omega ,\lambda )}\}_{t\in {\mathbb{R}}}$ on the Hilbert
space $\mathcal{H}_{\mathrm{fl}}$ obeying, for any $t\in {\mathbb{R}}$,
\begin{equation}
\mathrm{V}_{t}^{(\omega ,\lambda )}([I])=[\tau _{t}^{(\omega ,\lambda
)}(I)]\ ,\qquad \lbrack I]\in \mathcal{\check{H}}_{\mathrm{fl}}\ .
\label{CCR0}
\end{equation}%
In Section \ref{Section average dynamics} another construction is considered
for an one--parameter unitary group $\{\mathrm{\bar{V}}_{t}^{(\lambda
)}\}_{t\in {\mathbb{R}}}$ that has similar properties but does not depend on
$\omega $. It is not given here for the sake of technical simplicity.

\begin{bemerkung}[Current Duhamel fluctuations]
\label{Remark Current Duhamel Fluctuations}\mbox{
}\newline
In \cite[Section 3]{OhmIV}, we study current fluctuations through the
Duhamel two-point function. The latter is a positive sesquilinear form which
has appeared in different contexts like in linear response theory. See \cite[%
Section A]{OhmII} and references therein for more details. This approach
yields another Hilbert Space $\mathcal{\tilde{H}}_{\mathrm{fl}}$ of
so--called current Duhamel fluctuations from which one can also construct a
CCR algebra, as explained below. $\mathcal{\tilde{H}}_{\mathrm{fl}}$ is in
some sense more natural than $\mathcal{H}_{\mathrm{fl}}$. For instance, it
allows to see the (macroscopic) conductivity measure as a spectral measure
by rewriting (\ref{green kubo}) as \cite[Eq. (21)]{OhmIV}. The
\textquotedblleft Duhamel\textquotedblright\ view point is, however,
non--standard and we postponed this study to \cite[Section 3]{OhmIV} (which
has, chronologically, been written after the present paper).
\end{bemerkung}

We define next a non--degenerate symplectic bilinear form $\mathfrak{s}%
\equiv \mathfrak{s}^{(\beta ,\lambda )}$ on $\mathcal{\check{H}}_{\mathrm{fl}%
}$ (seen as a \emph{real} vector space) by
\begin{equation}
\mathfrak{s}\left( [I],[I^{\prime }]\right) :=\mathrm{Im}\langle \lbrack
I],[I^{\prime }]\rangle _{\mathcal{\check{H}}_{\mathrm{fl}}}\ ,\qquad
\lbrack I],[I^{\prime }]\in \mathcal{\check{H}}_{\mathrm{fl}}\ .
\label{CCR00}
\end{equation}%
Let
\begin{equation*}
\mathcal{W}\equiv \mathcal{W}^{(\beta ,\lambda )}\equiv \mathcal{W}\left(
\mathcal{\check{H}}_{\mathrm{fl}},\mathfrak{s}\right)
\end{equation*}%
be the CCR algebra over the symplectic space $\left( \mathcal{\check{H}}_{%
\mathrm{fl}},\mathfrak{s}\right) $, i.e., the $C^{\ast }$--algebra generated
by the Weyl operators $\mathbf{W}\left( [I]\right) $, $[I]\in \mathcal{%
\check{H}}_{\mathrm{fl}}$, fulfilling (the Weyl form of) the canonical
commutation relations%
\begin{equation}
\mathbf{W}\left( [I]\right) \mathbf{W}\left( [I^{\prime }]\right) =\mathrm{e}%
^{-\frac{i}{2}\mathfrak{s}\left( [I],[I^{\prime }]\right) }\mathbf{W}\left(
[I]+[I^{\prime }]\right) \ ,\qquad \lbrack I],[I^{\prime }]\in \mathcal{%
\check{H}}_{\mathrm{fl}}\ .  \label{CCR1}
\end{equation}%
See, e.g., \cite[Section 5.2.2.2.]{BratteliRobinson} for more details.
Because of Remark \ref{remark conjecture} we name the space $\mathcal{W}$
the \emph{algebra of current fluctuations} of the system at inverse
temperature $\beta \in \mathbb{R}^{+}$ and strength $\lambda \in \mathbb{R}%
_{0}^{+}$ of disorder.

Take any regular state $\rho $ on the $C^{\ast }$--algebra $\mathcal{W}$. An
example of such a state is the so--called Fock state, uniquely defined by
\begin{equation*}
\varrho _{\mathrm{Fock}}^{(\beta ,\lambda )}\left( \mathbf{W}\left(
[I]\right) \right) =G_{\mathrm{Fock}}^{(\beta ,\lambda )}\left( [I]\right) :=%
\mathrm{e}^{-\frac{1}{4}\left\Vert \left[ I\right] \right\Vert _{\mathcal{%
\check{H}}_{\mathrm{fl}}}^{2}}\ ,\qquad \lbrack I]\in \mathcal{\check{H}}_{%
\mathrm{fl}}\ ,
\end{equation*}%
see \cite[Section 5.2.3.]{BratteliRobinson}. Let $(\mathfrak{H}_{\rho },\pi
_{\rho },\Psi _{\rho })$ be the GNS representation of $\mathcal{W}$ w.r.t.
the state $\rho $. Then, for any $[I]\in \mathcal{\check{H}}_{\mathrm{fl}}$,
there exists a \emph{Bose field }$\Phi \left( \lbrack I]\right) $ -- a
self--adjoint operator affiliated with the von Neumann algebra $\pi _{\rho
}\left( \mathcal{W}\right) ^{\prime \prime }$ -- such that
\begin{equation}
\pi _{\rho }\left( \mathbf{W}\left( [I]\right) \right) =\exp \left( i\Phi
\left( \lbrack I]\right) \right) \ ,\qquad \lbrack I]\in \mathcal{\check{H}}%
_{\mathrm{fl}}\ .  \label{CCR2}
\end{equation}%
Assume additionally that
\begin{equation}
\Psi _{\rho }\in \mathrm{Dom}\left( \Phi \left( \lbrack I]\right) ^{\infty
}\right) \ ,\qquad \lbrack I]\in \mathcal{\check{H}}_{\mathrm{fl}}\ .
\label{CCR3}
\end{equation}%
This assumption is for instance satisfied by the Fock state $\varrho _{%
\mathrm{Fock}}^{(\beta ,\lambda )}$.

Now, by Equations (\ref{green kubo}) and (\ref{CCR0})--(\ref{CCR00})
together with Theorem \ref{scalar product}, observe that%
\begin{equation}
\left\{ \mathbf{\Xi }_{\mathrm{p}}\left( t\right) \right\} _{k,q}=-2%
\mathfrak{s}\left( [I_{e_{k},0}],\int\nolimits_{0}^{t}\mathrm{V}%
_{s}^{(\omega ,\lambda )}([I_{e_{q},0}])\mathrm{d}s\right)  \label{CCR4}
\end{equation}%
for any $\omega \in \tilde{\Omega}$, $t\in \mathbb{R}$ and $k,q\in
\{1,\ldots ,d\}$. Using this last assertion together with (\ref{CCR1})--(\ref%
{CCR3}) we then arrive at the equality
\begin{equation*}
\left\{ \mathbf{\Xi }_{\mathrm{p}}\left( t\right) \right\} _{k,q}=-4\mathrm{%
Im}\left\{ \left\langle \Psi _{\rho },\Phi \left( \lbrack
I_{e_{k},0}]\right) \Phi \left( \int\nolimits_{0}^{t}\mathrm{V}_{s}^{(\omega
,\lambda )}([I_{e_{q},0}])\mathrm{d}s\right) \Psi _{\rho }\right\rangle _{%
\mathfrak{H}_{\rho }}\right\}
\end{equation*}%
for $\omega \in \tilde{\Omega}$, $t\in \mathbb{R}$ and all $k,q\in
\{1,\ldots ,d\}$. In particular, by Theorem \ref{main 1 copy(8)} (p), we can
rewrite the Green--Kubo relations for the paramagnetic current density at
any time in terms of time--correlations of the Bose fields $\Phi $ defined
above:

\begin{satz}[Green--Kubo relations]
\label{main 1 copy(10)}\mbox{
}\newline
Let $\beta \in \mathbb{R}^{+}$ and $\lambda \in \mathbb{R}_{0}^{+}$. Then,
there is a measurable set $\tilde{\Omega}\equiv \tilde{\Omega}^{(\beta
,\lambda )}\subset \Omega $ of full measure such that, for any $\vec{w}%
:=(w_{1},\ldots ,w_{d})\in \mathbb{R}^{d}$, $\mathcal{A}\in C_{0}^{\infty
}\left( \mathbb{R};\mathbb{R}\right) $, $t\geq t_{0}$, $\omega \in \tilde{%
\Omega}$ and $k\in \{1,\ldots ,d\}$,%
\begin{eqnarray*}
&&\underset{l\rightarrow \infty }{\lim }\left\{ J_{\mathrm{p},l}^{(\omega ,%
\mathcal{A})}(t)\right\} _{k} \\
&=&-4\mathrm{Im}\left\{ \int\nolimits_{t_{0}}^{t}\underset{q=1}{\overset{d}{%
\sum }}\left\langle \Psi _{\rho },\Phi \left( \lbrack I_{e_{k},0}]\right)
\Phi \left( \int\nolimits_{0}^{t-s_{1}}\mathrm{V}_{s_{2}}^{(\omega ,\lambda
)}([I_{e_{q},0}])\mathrm{d}s_{2}\right) \Psi _{\rho }\right\rangle _{%
\mathfrak{H}_{\rho }}w_{q}\mathcal{E}_{s_{1}}\mathrm{d}s_{1}\right\} \ .
\end{eqnarray*}
\end{satz}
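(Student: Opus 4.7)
The plan is to combine Theorem \ref{main 1 copy(8)} (p) with the identity (\ref{CCR4}) and the CCR relations (\ref{CCR1})--(\ref{CCR3}). First, Theorem \ref{main 1 copy(8)} (p) yields, for $\omega$ in a measurable set $\tilde{\Omega}$ of full measure, the representation
$$\underset{l\rightarrow\infty}{\lim}\left\{J_{\mathrm{p},l}^{(\omega,\mathcal{A})}(t)\right\}_{k}=\sum_{q=1}^{d}\int\nolimits_{t_{0}}^{t}\left\{\mathbf{\Xi}_{\mathrm{p}}\left(t-s_{1}\right)\right\}_{k,q}w_{q}\mathcal{E}_{s_{1}}\mathrm{d}s_{1}\ ,$$
after expanding $\vec{w}=\sum_{q=1}^{d}w_{q}e_{q}$ in the canonical basis. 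Next, I would invoke (\ref{CCR4}) to rewrite each matrix element as
$$\left\{\mathbf{\Xi}_{\mathrm{p}}\left(t-s_{1}\right)\right\}_{k,q}=-2\mathfrak{s}\left([I_{e_{k},0}],\int\nolimits_{0}^{t-s_{1}}\mathrm{V}_{s_{2}}^{(\omega,\lambda)}([I_{e_{q},0}])\mathrm{d}s_{2}\right)\ ,$$
which presumes that the curve $s_{2}\mapsto\mathrm{V}_{s_{2}}^{(\omega,\lambda)}([I_{e_{q},0}])$ is Bochner integrable on compact intervals of $\mathcal{H}_{\mathrm{fl}}$; this is immediate from the strong continuity of the unitary group $\{\mathrm{V}_{s}^{(\omega,\lambda)}\}_{s\in\mathbb{R}}$ stated around (\ref{CCR0}).

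The key step is then to translate the symplectic form into an imaginary part of a vacuum two--point function. Equations (\ref{CCR1})--(\ref{CCR2}) together with the regularity assumption (\ref{CCR3}) give the canonical commutation relation $[\Phi([I]),\Phi([I'])]=i\mathfrak{s}([I],[I'])\mathbf{1}$ on a dense invariant domain of $\mathfrak{H}_{\rho}$ containing $\Psi_{\rho}$. Since $\Phi([I])$ and $\Phi([I'])$ are self--adjoint and $\|\Psi_{\rho}\|=1$, a short computation using the identity $\langle\Psi,BA\Psi\rangle=\overline{\langle\Psi,AB\Psi\rangle}$ (valid for any $A=A^{\ast},B=B^{\ast}$) yields $2\mathrm{Im}\langle\Psi_{\rho},\Phi([I])\Phi([I'])\Psi_{\rho}\rangle_{\mathfrak{H}_{\rho}}=\mathfrak{s}([I],[I'])$. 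Applied with $[I]=[I_{e_{k},0}]$ and $[I']=\int_{0}^{t-s_{1}}\mathrm{V}_{s_{2}}^{(\omega,\lambda)}([I_{e_{q},0}])\mathrm{d}s_{2}$, this converts the factor $-2\mathfrak{s}$ obtained above into $-4\mathrm{Im}\langle\Psi_{\rho},\,\cdot\,\Psi_{\rho}\rangle_{\mathfrak{H}_{\rho}}$, producing the integrand that appears in the theorem.

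To conclude, one pulls the finite sum over $q$ and the $s_{1}$--integral inside $\mathrm{Im}\{\cdot\}$, which is justified by the real--linearity of $\mathrm{Im}$, the continuity of the integrand in $s_{1}$, and the compact support of $\mathcal{E}$. The main obstacle I foresee is the rigorous handling of the unbounded Bose field $\Phi$ evaluated at an \emph{integrated} argument rather than a simple vector, together with the verification that $\Psi_{\rho}$ lies in its domain of definition; this reduces precisely to the regularity assumption (\ref{CCR3}) and to standard closability properties of Bose fields in regular GNS representations of the CCR algebra.
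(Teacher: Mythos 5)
Your proposal is correct and follows essentially the same route as the paper, which derives the theorem in the text immediately preceding its statement by combining Theorem \ref{main 1 copy(8)} (p) with the identity (\ref{CCR4}) and the Weyl/CCR relations (\ref{CCR1})--(\ref{CCR3}) to convert $-2\mathfrak{s}([I],[I'])$ into $-4\mathrm{Im}\langle \Psi _{\rho },\Phi ([I])\Phi ([I'])\Psi _{\rho }\rangle _{\mathfrak{H}_{\rho }}$. Your explicit remarks on Bochner integrability of $s\mapsto \mathrm{V}_{s}^{(\omega ,\lambda )}([I_{e_{q},0}])$ and on the domain condition (\ref{CCR3}) for the unbounded fields are exactly the points the paper relies on implicitly.
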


By using additionally the i.i.d. property of the random variables, one can
simplify this last equation to obtain, almost surely, that
\begin{eqnarray*}
&&\underset{l\rightarrow \infty }{\lim }J_{\mathrm{p},l}^{(\omega ,\mathcal{A%
})}(t) \\
&=&-4\vec{w}\mathrm{Im}\left\{ \int\nolimits_{t_{0}}^{t}\left\langle \Psi
_{\rho },\Phi \left( \lbrack I_{e_{1},0}]\right) \Phi \left(
\int\nolimits_{0}^{t-s_{1}}\mathrm{V}_{s_{2}}^{(\omega ,\lambda
)}([I_{e_{1},0}])\mathrm{d}s_{2}\right) \Psi _{\rho }\right\rangle _{%
\mathfrak{H}_{\rho }}\mathcal{E}_{s_{1}}\ \mathrm{d}s_{1}\right\} \ .
\end{eqnarray*}%
See Equation (\ref{paramagnetic transport coefficient macrobis}) below.

\begin{bemerkung}[Non--commutative central limit theorems]
\label{remark conjecture}\mbox{
}\newline
By analogy with previous results on non--commutative central limit theorems
(see, e.g., \cite{CCR fluctuations}), we conjecture that the generating
functional
\begin{equation*}
G_{\mathrm{fl}}^{(\beta ,\lambda )}\left( [I]\right) :=\underset{%
l\rightarrow \infty }{\lim }\varrho ^{(\beta ,\omega ,\lambda )}\left( \exp
\left( i\mathbb{F}^{(l)}\left( I\right) \right) \right) \ ,\qquad \lbrack
I]\in \mathcal{\check{H}}_{\mathrm{fl}}\ ,
\end{equation*}%
exists for all $\omega $ in a set $\tilde{\Omega}\subset \Omega $ of full
measure, is independent of $\omega $ in this set, and uniquely defines a
quasi--free state $\varrho _{\mathrm{fl}}^{(\beta ,\lambda )}$ on $\mathcal{W%
}$ with
\begin{equation*}
\varrho _{\mathrm{fl}}^{(\beta ,\lambda )}\left( \mathbf{W}\left( [I]\right)
\right) =G_{\mathrm{fl}}^{(\beta ,\lambda )}\left( [I]\right) \ ,\qquad
\lbrack I]\in \mathcal{\check{H}}_{\mathrm{fl}}\ .
\end{equation*}%
From the physical point of view, the natural choice of the regular state in
Theorem \ref{main 1 copy(10)} should be $\varrho _{\mathrm{fl}}^{(\beta
,\lambda )}$.
\end{bemerkung}

\section{AC--Conductivity Measure From Joule's Law\label{Sect Conductivity
Measure From Joule's Law}}

\noindent \textit{There is however one important property of the equilibrium
states which is related to the concept of the work. Namely for such states }$%
L^{h}(\omega )\geqq 0$\textit{\ [i.e., the energy transmitted to the system
is positive,] provided the final external conditions coincide with the
original ones: }$h_{T}=0$\textit{. This fact is strongly related to the
second principle of thermodynamics saying that systems in the equilibrium
are unable to perform mechanical work in cyclic processes. We describe this
property saying that the equilibrium states are passive. }\smallskip

\hfill \lbrack Pusz--Woronowicz, 1978]\bigskip

As discussed in the Introduction, our derivation of an AC--conducti%
\-%
vity measure is based on the \emph{second principle of thermodynamics} in
the above form (cf. \cite[p. 276]{PW}). It dovetails with the positivity of
the heat production for cyclic processes on equilibrium states. Indeed, we
analyze in \cite{OhmI} the heat production of the fermion system considered
here by using Araki's relative entropy for states of infinitely extended
systems. We verify in particular the first law of thermodynamics by
identifying the heat production with an energy increment defined below, see
\cite[Theorem 3.2]{OhmI}. As originally observed by J. P. Joule, in
conducting media, electric energy is always lost in form of heat. This is
(part of) the celebrated Joule's law of (classical) electricity theory. It
corresponds to the \emph{passivity }of equilibrium states, which is proven
for KMS states in \cite[Theorem 1.2]{PW}. It follows from Joule's law that
the in--phase paramagnetic conductivity is the kernel of a positive
quadratic form on the space of smooth, compactly supported functions
satisfying an AC--condition. Recall that the latter is related to \emph{%
cyclic} electromagnetic processes. Together with the Bochner--Schwartz
theorem \cite[Theorem IX.10]{ReedSimonII}, it in turn yields the existence
of the AC--conducti%
\-%
vity measure.

We thus need to prove Joule's law, as it is done for microscopic electric
fields in \cite[Theorem 4.1]{OhmII}. It is an important result of this paper
because we seek to get a rigorous microscopic description of the phenomenon
of linear conductivity from basic principles of thermodynamics and quantum
mechanics, only. To this end we start by introducing energy densities, in
particular the heat production density.

\subsection{Energy Densities\label{sect 2.7}}

For any $L\in \mathbb{R}^{+}$, the internal energy observable in the box $%
\Lambda _{L}$ (\ref{eq:def lambda n}) is defined by%
\begin{equation}
H_{L}^{(\omega ,\lambda )}:=\sum\limits_{x,y\in \Lambda _{L}}\langle
\mathfrak{e}_{x},(\Delta _{\mathrm{d}}+\lambda V_{\omega })\mathfrak{e}%
_{y}\rangle a_{x}^{\ast }a_{y}\in \mathcal{U}\ .  \label{definition de Hl}
\end{equation}%
When the electromagnetic field is switched on, i.e., for $t\geq t_{0}$, the
(time--dependent) total energy observable in the box $\Lambda _{L}$ is then
equal to $H_{L}^{(\omega ,\lambda )}+W_{t}^{\mathbf{A}}$, where, for any $%
\mathbf{A}\in \mathbf{C}_{0}^{\infty }$ and $t\in \mathbb{R}$,
\begin{equation}
W_{t}^{\mathbf{A}}:=\sum\limits_{x,y\in \Lambda _{L}}\langle \mathfrak{e}%
_{x},(\Delta _{\mathrm{d}}^{(\mathbf{A})}-\Delta _{\mathrm{d}})\mathfrak{e}%
_{y}\rangle a_{x}^{\ast }a_{y}\in \mathcal{U}  \label{eq def W}
\end{equation}%
is the electromagnetic\emph{\ }potential energy observable.

Like in \cite[Section 4]{OhmII}, we now define below four sorts of energy
associated with the fermion system for any $\beta \in \mathbb{R}^{+}$, $%
\omega \in \Omega $, $\lambda \in \mathbb{R}_{0}^{+}$ and $\mathbf{A}\in
\mathbf{C}_{0}^{\infty }$:

\begin{itemize}
\item[($\mathbf{Q}$)] The internal energy increment $\mathbf{S}^{(\omega ,%
\mathbf{A})}\equiv \mathbf{S}^{(\beta ,\omega ,\lambda ,\mathbf{A})}$ is a
map from $\mathbb{R}$ to $\mathbb{R}_{0}^{+}$ defined by%
\begin{equation}
\mathbf{S}^{(\omega ,\mathbf{A})}\left( t\right) :=\lim_{L\rightarrow \infty
}\left\{ \rho _{t}^{(\beta ,\omega ,\lambda ,\mathbf{A})}(H_{L}^{(\omega
,\lambda )})-\varrho ^{(\beta ,\omega ,\lambda )}(H_{L}^{(\omega ,\lambda
)})\right\} \ .  \label{entropic energy increment}
\end{equation}%
It takes positive finite values because of \cite[Theorem 3.2]{OhmI}.

\item[($\mathbf{P}$)] The electromagnetic potential energy (increment) $%
\mathbf{P}^{(\omega ,\mathbf{A})}\equiv \mathbf{P}^{(\beta ,\omega ,\lambda ,%
\mathbf{A})}$ is a map from $\mathbb{R}$ to $\mathbb{R}$ defined by%
\begin{equation}
\mathbf{P}^{(\omega ,\mathbf{A})}\left( t\right) :=\rho _{t}^{(\beta ,\omega
,\lambda ,\mathbf{A})}(W_{t}^{\mathbf{A}})=\rho _{t}^{(\beta ,\omega
,\lambda ,\mathbf{A})}(W_{t}^{\mathbf{A}})-\varrho ^{(\beta ,\omega ,\lambda
)}(W_{t_{0}}^{\mathbf{A}})\ .  \label{electro free energy}
\end{equation}

\item[(p)] The paramagnetic energy increment $\mathfrak{J}_{\mathrm{p}%
}^{(\omega ,\mathbf{A})}\equiv \mathfrak{I}_{\mathrm{p}}^{(\beta ,\omega
,\lambda ,\mathbf{A})}$ is the map from $\mathbb{R}$ to $\mathbb{R}$ defined
by
\begin{equation}
\mathfrak{I}_{\mathrm{p}}^{(\omega ,\mathbf{A})}\left( t\right)
:=\lim_{L\rightarrow \infty }\left\{ \rho _{t}^{(\beta ,\omega ,\lambda ,%
\mathbf{A})}(H_{L}^{(\omega ,\lambda )}+W_{t}^{\mathbf{A}})-\varrho ^{(\beta
,\omega ,\lambda )}(H_{L}^{(\omega ,\lambda )}+W_{t}^{\mathbf{A}})\right\} \
.  \label{lim_en_incr}
\end{equation}

\item[(d)] The diamagnetic energy (increment) $\mathfrak{I}_{\mathrm{d}%
}^{(\omega ,\mathbf{A})}\equiv \mathfrak{I}_{\mathrm{d}}^{(\beta ,\omega
,\lambda ,\mathbf{A})}$ is the map from $\mathbb{R}$ to $\mathbb{R}$ defined
by
\begin{equation}
\mathfrak{I}_{\mathrm{d}}^{(\omega ,\mathbf{A})}\left( t\right) :=\varrho
^{(\beta ,\omega ,\lambda )}(W_{t}^{\mathbf{A}})=\varrho ^{(\beta ,\omega
,\lambda )}(W_{t}^{\mathbf{A}})-\varrho ^{(\beta ,\omega ,\lambda
)}(W_{t_{0}}^{\mathbf{A}})\ .  \label{lim_en_incr dia}
\end{equation}
\end{itemize}

\noindent In other words, $\mathbf{P}^{(\omega ,\mathbf{A})}$ and $\mathfrak{%
I}_{\mathrm{d}}^{(\omega ,\mathbf{A})}$are the electromagnetic\emph{\ }%
potential energy of the fermion system in the internal state $\rho
_{t}^{(\beta ,\omega ,\lambda ,\mathbf{A})}$ and thermal equilibrium state $%
\varrho ^{(\beta ,\omega ,\lambda )}$, respectively. $\mathbf{S}^{(\omega ,%
\mathbf{A})}$ represents the increase of internal energy, while $\mathfrak{J}%
_{\mathrm{p}}^{(\omega ,\mathbf{A})}$ is the part of electromagnetic work
implying a change of the internal state of the system. Note that the limits (%
\ref{entropic energy increment}) and (\ref{lim_en_incr}) exist at all times
because the increase of total energy of the infinite system equals
\begin{equation}
\mathbf{S}^{(\omega ,\mathbf{A})}\left( t\right) +\mathbf{P}^{(\omega ,%
\mathbf{A})}\left( t\right) =\mathfrak{I}_{\mathrm{p}}^{(\omega ,\mathbf{A}%
)}\left( t\right) +\mathfrak{I}_{\mathrm{d}}^{(\omega ,\mathbf{A})}\left(
t\right) \ .  \label{total work}
\end{equation}%
The increase of total energy is shown in \cite[Theorem 5.8]{OhmI} to be the
work performed by the electric field. See also \cite[Sections 4.2-4.3]{OhmII}
for more details.

By using the time--evolving symbols $\mathbf{d}_{t}^{(\beta ,\omega ,\lambda
,\mathbf{A})}\in \mathcal{B}(\ell ^{2}(\mathfrak{L}))$, all energy
increments can be seen as limits of traces on the one--particle Hilbert
space $\ell ^{2}(\mathfrak{L})$. For instance,
\begin{equation}
\mathfrak{I}_{\mathrm{p}}^{(\omega ,\mathbf{A})}\left( t\right)
=\lim_{L\rightarrow \infty }\mathrm{Tr}_{\ell ^{2}(\mathfrak{L})}\left[ (%
\mathbf{d}_{t}^{(\beta ,\omega ,\lambda ,\mathbf{A})}-\mathbf{d}%
_{t_{0}}^{(\beta ,\omega ,\lambda ,\mathbf{A})})\mathrm{P}_{L}(\Delta _{%
\mathrm{d}}^{(\mathbf{A})}+\lambda V_{\omega })\mathrm{P}_{L}\right] \ ,
\label{incr one-part}
\end{equation}%
where, for any $L\in \mathbb{R}^{+}$, $\mathrm{P}_{L}\in \mathcal{B}(\ell
^{2}(\mathfrak{L}))$ is the orthogonal projector with range $\mathrm{lin}\{%
\mathfrak{e}_{x}$ $:$ $x\in \Lambda _{L}\}$, i.e., the multiplication
operator with the characteristic function of the box $\Lambda _{L}$.

Observe that the energies
\begin{equation*}
\mathbf{P}^{(\omega ,\eta \mathbf{A}_{l})}, \quad \mathfrak{I}_{\mathrm{p}%
}^{(\omega ,\eta \mathbf{A}_{l})}\left( t\right) , \quad \mathfrak{I}_{%
\mathrm{d}}^{(\omega ,\eta \mathbf{A}_{l})}\left( t\right) ,
\end{equation*}
are all of order $\mathcal{O}(\eta ^{2}l^{d})$, by \cite[Theorem 4.1]{OhmII}%
. This can physically be understood by the fact that, by classical
electrodynamics, the energy carried by electromagnetic fields are
proportional to their $L^{2}$--norms. These are also of order $\mathcal{O}%
(\eta ^{2}l^{d})$ in the case of a potential of the form $\eta \mathbf{A}%
_{l} $. As a consequence, for any $\beta \in \mathbb{R}^{+}$, $\omega \in
\Omega $, $\lambda \in \mathbb{R}_{0}^{+}$ and $\mathbf{A}\in \mathbf{C}%
_{0}^{\infty }$, we define four energy densities:

\begin{itemize}
\item[($\mathbf{Q}$)] The \emph{heat production} (or internal energy)
density $\mathbf{s}\equiv \mathbf{s}^{(\beta ,\omega ,\lambda ,\mathbf{A})}$
is a map from $\mathbb{R}$ to $\mathbb{R}_{0}^{+}$ defined by%
\begin{equation}
\mathbf{s}\left( t\right) :=\underset{(\eta ,l^{-1})\rightarrow (0,0)}{\lim }%
\left\{ \left( \eta ^{2}\left\vert \Lambda _{l}\right\vert \right) ^{-1}%
\mathbf{S}^{(\omega ,\eta \mathbf{A}_{l})}\left( t\right) \right\} \ .
\label{entropy density}
\end{equation}%
This map has positive finite value because of \cite[Theorem 3.2]{OhmI}.

\item[($\mathbf{P}$)] The (electromagnetic) \emph{potential} energy density $%
\mathbf{p}\equiv \mathbf{p}^{(\beta ,\omega ,\lambda ,\mathbf{A})}$ is a map
from $\mathbb{R}$ to $\mathbb{R}$ defined by%
\begin{equation}
\mathbf{p}\left( t\right) :=\underset{\eta \rightarrow 0}{\lim }\ \underset{%
l\rightarrow \infty }{\lim }\left\{ \left( \eta ^{2}\left\vert \Lambda
_{l}\right\vert \right) ^{-1}\mathbf{P}^{(\omega ,\eta \mathbf{A}%
_{l})}\left( t\right) \right\} \ .  \label{free-energy density}
\end{equation}

\item[(p)] The \emph{paramagnetic} energy density $\mathfrak{i}_{\mathrm{p}%
}\equiv \mathfrak{i}_{\mathrm{p}}^{(\beta ,\omega ,\lambda ,\mathbf{A})}$ is
a map from $\mathbb{R}$ to $\mathbb{R}$ defined by%
\begin{equation}
\mathfrak{i}_{\mathrm{p}}\left( t\right) :=\underset{(\eta
,l^{-1})\rightarrow (0,0)}{\lim }\left\{ \left( \eta ^{2}\left\vert \Lambda
_{l}\right\vert \right) ^{-1}\mathfrak{I}_{\mathrm{p}}^{(\omega ,\eta
\mathbf{A}_{l})}\left( t\right) \right\} \ .
\label{paramagnetic energy density}
\end{equation}

\item[(d)] The \emph{diamagnetic} energy density $\mathfrak{i}_{\mathrm{d}%
}\equiv \mathfrak{i}_{\mathrm{d}}^{(\beta ,\omega ,\lambda ,\mathbf{A})}$ a
map from $\mathbb{R}$ to $\mathbb{R}$ defined by
\begin{equation}
\mathfrak{i}_{\mathrm{d}}\left( t\right) :=\underset{\eta \rightarrow 0}{%
\lim }\ \underset{l\rightarrow \infty }{\lim }\left\{ \left( \eta
^{2}\left\vert \Lambda _{l}\right\vert \right) ^{-1}\mathfrak{I}_{\mathrm{d}%
}^{(\omega ,\eta \mathbf{A}_{l})}\left( t\right) \right\} \ .
\label{diamagnetic energy density}
\end{equation}
\end{itemize}

\noindent For $\omega$ in a measurable subset of full measure, all energy
densities above exist and become \emph{deterministic}:

\subsection{Classical Joule's Law\label{sect Classical Joule's Law}}

Note that the probability measure $\mathfrak{a}_{\Omega }$ defined by (\ref%
{probability measure}) is translation invariant. As a consequence, charge
transport properties on \emph{macroscopic} scales are invariant under space
translation. By following the heuristics presented in \cite[Section 4.4]%
{OhmII}, we deduce from Theorem \ref{main 1 copy(8)} that, for $\beta \in
\mathbb{R}^{+}$, $\lambda \in \mathbb{R}_{0}^{+}$ and any electromagnetic
potential $\mathbf{A}\in \mathbf{C}_{0}^{\infty }$, the electric field $E_{%
\mathbf{A}}$ yields paramagnetic and diamagnetic currents linear response
coefficients respectively equal to%
\begin{eqnarray}
J_{\mathrm{p}}(t,x) &\equiv &J_{\mathrm{p}}^{(\beta ,\lambda ,\mathbf{A}%
)}(t,x):=\int_{t_{0}}^{t}\mathbf{\Xi }_{\mathrm{p}}\left( t-s\right) E_{%
\mathbf{A}}(s,x)\mathrm{d}s\ ,  \label{current para limit} \\
J_{\mathrm{d}}(t,x) &\equiv &J_{\mathrm{d}}^{(\beta ,\lambda ,\mathbf{A}%
)}(t,x):=\mathbf{\Xi }_{\mathrm{d}}\int_{t_{0}}^{t}E_{\mathbf{A}}(s,x)%
\mathrm{d}s\ ,  \label{current dia limit}
\end{eqnarray}%
at any position $x\in \mathbb{R}^{d}$ and time $t\in \mathbb{R}$.

Therefore, we expect the (density of) work delivered at time $t\geq t_{0}$
by paramagnetic and diamagnetic currents to be equal to%
\begin{eqnarray}
&&\int\nolimits_{\mathbb{R}^{d}}\mathrm{d}^{d}x\int\nolimits_{t_{0}}^{t}%
\mathrm{d}s\left\langle E_{\mathbf{A}}(s,x) , J_{\mathrm{p}%
}(s,x)\right\rangle  \label{heuristic0new} \\
&=&\int\nolimits_{\mathbb{R}^{d}}\mathrm{d}^{d}x\int\nolimits_{t_{0}}^{t}%
\mathrm{d}s_{1}\int\nolimits_{t_{0}}^{s_{1}}\mathrm{d}s_{2}\left\langle E_{%
\mathbf{A}}(s_{1},x), \mathbf{\Xi }_{\mathrm{p}}(s_{1}-s_{2})E_{\mathbf{A}%
}(s_{2},x) \right\rangle  \notag
\end{eqnarray}%
and
\begin{eqnarray}
&&\int\nolimits_{\mathbb{R}^{d}}\mathrm{d}^{d}x\int\nolimits_{t_{0}}^{t}%
\mathrm{d}s\left\langle E_{\mathbf{A}}(s,x), J_{\mathrm{d}}(s,x)
\right\rangle  \label{heuristic1new} \\
&=&\int\nolimits_{\mathbb{R}^{d}}\mathrm{d}^{d}x\int\nolimits_{t_{0}}^{t}%
\mathrm{d}s_{1}\int\nolimits_{t_{0}}^{s_{1}}\mathrm{d}s_{2}\left\langle E_{%
\mathbf{A}}(s_{1},x) , \mathbf{\Xi }_{\mathrm{d}}E_{\mathbf{A}}(s_{2},x)
\right\rangle \ ,  \notag
\end{eqnarray}
respectively.

In contrast to \cite[Section 4.4]{OhmII}, there is no current density at
equilibrium, by Theorem \ref{main 1 copy(8)} (th). Hence, (\ref%
{heuristic0new}) and (\ref{heuristic1new}) should be the \emph{work} density
performed by the electromagnetic field. Moreover, since the diamagnetic
energy (\ref{heuristic1new}) vanishes for $t\geq t_{1}$ when there is not
anymore any electric field, (\ref{heuristic0new}) should be the heat density
$\mathbf{s}\left( t\right) $ (\ref{entropy density}), by the second
principle of thermodynamics. We prove this heuristics in Section \ref%
{Section technical proof Ohm-III} and obtain the following theorem:

\begin{satz}[Classical Joule's law]
\label{main 1}\mbox{
}\newline
Let $\beta \in \mathbb{R}^{+}$ and $\lambda \in \mathbb{R}_{0}^{+}$. Then,
there is a measurable subset $\tilde{\Omega}\equiv \tilde{\Omega}^{(\beta
,\lambda )}\subset \Omega $ of full measure such that, for any $\omega \in
\tilde{\Omega}$, $\mathbf{A}\in \mathbf{C}_{0}^{\infty }$ and $t\geq t_{0}$:
\newline
\emph{(p)} Paramagnetic energy density:%
\begin{equation*}
\mathfrak{i}_{\mathrm{p}}\left( t\right) \equiv \mathfrak{i}_{\mathrm{p}%
}^{(\beta ,\omega ,\lambda ,\mathbf{A})}=\int\nolimits_{\mathbb{R}^{d}}%
\mathrm{d}^{d}x\int\nolimits_{t_{0}}^{t}\mathrm{d}s\left\langle E_{\mathbf{A}%
}(s,x),J_{\mathrm{p}}(s,x)\right\rangle \ .
\end{equation*}%
\emph{(d)} Diamagnetic energy density:%
\begin{equation*}
\mathfrak{i}_{\mathrm{d}}\left( t\right) \equiv \mathfrak{i}_{\mathrm{d}%
}^{(\beta ,\omega ,\lambda ,\mathbf{A})}=\int\nolimits_{\mathbb{R}^{d}}%
\mathrm{d}^{d}x\int\nolimits_{t_{0}}^{t}\mathrm{d}s\left\langle E_{\mathbf{A}%
}(s,x),J_{\mathrm{d}}(s,x)\right\rangle \ .
\end{equation*}%
\emph{(\textbf{Q})} Heat density:%
\begin{equation*}
\mathbf{s}\left( t\right) =\mathfrak{i}_{\mathrm{p}}\left( t\right)
-\int\nolimits_{\mathbb{R}^{d}}\mathrm{d}^{d}x\int\nolimits_{t_{0}}^{t}%
\mathrm{d}s\left\langle E_{\mathbf{A}}(s,x),J_{\mathrm{p}}(t,x)\right\rangle
\ .
\end{equation*}%
\emph{(\textbf{P})} Electromagnetic potential energy density:%
\begin{equation*}
\mathbf{p}\left( t\right) =\mathfrak{i}_{\mathrm{d}}\left( t\right)
+\int\nolimits_{\mathbb{R}^{d}}\mathrm{d}^{d}x\int\nolimits_{t_{0}}^{t}%
\mathrm{d}s\left\langle E_{\mathbf{A}}(s,x),J_{\mathrm{p}}(t,x)\right\rangle
\ .
\end{equation*}
\end{satz}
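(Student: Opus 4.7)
The plan is to deduce all four assertions from the microscopic Joule's law \cite[Theorem 4.1]{OhmII}, which provides the analogous identities at fixed box size $l\in\mathbb{R}^+$ and coupling $\eta\in\mathbb{R}$, by performing the appropriate macroscopic limits indicated in the density definitions \eqref{entropy density}--\eqref{diamagnetic energy density} and exploiting the almost sure convergence of the space--averaged transport coefficients provided by Theorem \ref{thm charged transport coefficient}.

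First I would handle (p) and (d). By \cite[Theorem 4.1 (p)]{OhmII} together with \cite[Theorem 3.3]{OhmII}, at leading order in $\eta$ the paramagnetic energy increment $\mathfrak{I}_{\mathrm{p}}^{(\omega,\eta\mathbf{A}_l)}(t)$ equals a double time integral over $[t_0,t]\times[t_0,s_1]$ whose integrand pairs $\Xi_{\mathrm{p},l}^{(\omega)}(s_1-s_2)$ with the quadratic form $|\Lambda_l|^{-1}\sum_{x\in\Lambda_l}E_{\mathbf{A}}(s_1,l^{-1}x)\otimes E_{\mathbf{A}}(s_2,l^{-1}x)$ in the rescaled electric field (plus an error controlled uniformly in $l$ by the analogue of \cite[Corollary 3.2]{OhmII}). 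Dividing by $\eta^2|\Lambda_l|$ and taking the joint limit $(\eta,l^{-1})\to(0,0)$, Theorem \ref{thm charged transport coefficient} (p) provides almost sure uniform convergence $\Xi_{\mathrm{p},l}^{(\omega)}\to\mathbf{\Xi}_{\mathrm{p}}$ on compact time sets, while the Riemann sum in the $E_{\mathbf{A}}$--variables converges (up to the convention-dependent volume normalization) to the space integral $\int_{\mathbb{R}^d}d^dx\,E_{\mathbf{A}}(s_1,x)\otimes E_{\mathbf{A}}(s_2,x)$, since $E_{\mathbf{A}}$ is smooth and compactly supported. Interchanging the limit with the double time integration by dominated convergence and unfolding the scalar product against $\mathbf{\Xi}_{\mathrm{p}}(s_1-s_2)$ recovers precisely the expression for $\mathfrak{i}_{\mathrm{p}}(t)$ in (p), after recognizing $\int_{t_0}^{s_1}\mathbf{\Xi}_{\mathrm{p}}(s_1-s_2)E_{\mathbf{A}}(s_2,x)\,\mathrm{d}s_2 = J_{\mathrm{p}}(s_1,x)$ by \eqref{current para limit}. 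A parallel argument based on \cite[Theorem 4.1 (d)]{OhmII} and Theorem \ref{thm charged transport coefficient} (d) yields (d).

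For (P) and (Q), I would use the conservation identity \eqref{total work} at finite $l$ and $\eta$. Dividing by $\eta^2|\Lambda_l|$ and passing to the limit gives the key density relation
\begin{equation*}
\mathbf{s}(t)+\mathbf{p}(t)=\mathfrak{i}_{\mathrm{p}}(t)+\mathfrak{i}_{\mathrm{d}}(t)\ ,
\end{equation*}
once one has verified that $\mathbf{p}(t)$ indeed exists as a density limit. I would next compute $\mathbf{p}(t)$ directly by expanding $W_t^{\eta\mathbf{A}_l}$ in powers of $\eta$: from \eqref{eq discrete lapla A} the leading $\mathcal{O}(\eta)$ term is proportional to paramagnetic current observables $I_{\mathbf{x}}$, whose expectation in $\rho_t^{(\beta,\omega,\lambda,\eta\mathbf{A}_l)}$ gives $\eta$ times the paramagnetic current density $J_{\mathrm{p}}(t,x)$ at the macroscopic scale (via Theorem \ref{main 1 copy(8)} (p)), and the $\mathcal{O}(\eta^2)$ term taken in the equilibrium state $\varrho^{(\beta,\omega,\lambda)}$ produces exactly the diamagnetic density $\mathfrak{i}_{\mathrm{d}}(t)$. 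Using $\int_{t_0}^t E_{\mathbf{A}}(s,x)\,\mathrm{d}s=-\mathbf{A}(t,x)$ (consequence of the Weyl gauge and $\mathbf{A}(t_0,\cdot)=0$), the mixed contribution rearranges as $\int_{\mathbb{R}^d}d^dx\int_{t_0}^t ds\,\langle E_{\mathbf{A}}(s,x),J_{\mathrm{p}}(t,x)\rangle$, yielding (P). Statement (Q) then follows by subtraction, using $\mathbf{s}(t)=\mathfrak{i}_{\mathrm{p}}(t)+\mathfrak{i}_{\mathrm{d}}(t)-\mathbf{p}(t)$.

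The main obstacle is the control of error terms under the simultaneous limit $(\eta,l^{-1})\to(0,0)$ (as opposed to iterated limits), which arises in both (p) and (Q). One must show that the remainders in the linear response expansion of $\mathfrak{I}_{\mathrm{p}}^{(\omega,\eta\mathbf{A}_l)}$ are of order $o(\eta^2 l^d)$ \emph{uniformly in $l$}, so that they vanish after normalization by $\eta^2|\Lambda_l|$. This requires bounds on the third-order Duhamel expansion of the perturbed dynamics on $\mathcal{U}$ uniform in both the volume and the field strength, which one reads off from the techniques of \cite[Sections 4, 5]{OhmII}. The full-measure set $\tilde{\Omega}$ is obtained as a countable intersection of the full-measure sets delivered by Theorems \ref{thm charged transport coefficient} and \ref{main 1 copy(8)}, applied to a countable dense family of test potentials $\mathbf{A}$ and times $t$, and extended to all $\mathbf{A}\in\mathbf{C}_0^{\infty}$ and $t\geq t_0$ by the joint continuity in $(t,\mathbf{A})$ of all densities involved.
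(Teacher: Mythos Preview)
Your factorization in step (p) is the gap. For spatially inhomogeneous $\mathbf{A}\in\mathbf{C}_0^\infty$, the leading--order paramagnetic energy increment is \emph{not} of the form ``$\Xi_{\mathrm{p},l}^{(\omega)}(s_1-s_2)$ paired with a Riemann sum in the field.'' The actual expansion from \cite[Theorem 5.12]{OhmII} gives, after dividing by $\eta^2|\Lambda_l|$, the integrand $\mathbf{X}_{l,0}^{(\omega)}(s_1,s_2)$ of \eqref{Lemma non existing2}, namely a double bond sum $\sum_{\mathbf{x},\mathbf{y}\in\mathfrak{K}}\mathfrak{C}^{(\omega)}_{s_1-s_2+i\alpha}(\mathbf{x},\mathbf{y})\,\mathbf{E}^{\mathbf{A}_l}_{s_1}(\mathbf{x})\,\mathbf{E}^{\mathbf{A}_l}_{s_2}(\mathbf{y})$. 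Since $\mathbf{E}^{\mathbf{A}_l}$ varies with position, the $\mathbf{x}$-- and $\mathbf{y}$--sums do not decouple, and $\Xi_{\mathrm{p},l}^{(\omega)}$ (which is precisely the constant--field version of this sum) does not appear. Invoking Theorem~\ref{thm charged transport coefficient} (p) here is circular: that theorem is itself a corollary of the machinery built for inhomogeneous fields (see its proof, which cites \eqref{Lemma non existing2}, Theorem~\ref{lemma conductivty4 copy(6)} and Lemma~\ref{lemma conductivty4 copy(1)}).

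What is actually needed to pass to $l\to\infty$ in $\mathbf{X}_{l,0}^{(\omega)}$ is spatial decay of the correlation $\mathfrak{C}^{(\omega)}(\mathbf{x},\mathbf{y})$ in $|x-y|$. The paper obtains this via the decomposition of complex--time two--point correlation functions (Theorem~\ref{decay bound theorem}), then performs a piecewise--constant approximation of $E_{\mathbf A}$ on small boxes $b_j$, shows cross--box contributions vanish thanks to the decay (Lemmata~\ref{lemma conductivty0}--\ref{lemma conductivty2}), and finally applies the ergodic theorem box--by--box (Lemma~\ref{lemma conductivty3}). Only after this localization does the Riemann--sum picture become valid. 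Your treatment of (\textbf{P}) inherits the same gap: computing $\mathbf{p}(t)$ by expanding $W_t^{\eta\mathbf{A}_l}$ and invoking Theorem~\ref{main 1 copy(8)} (p) requires the paramagnetic current response to \emph{inhomogeneous} fields, which that theorem does not provide; the paper instead computes $\mathbf{p}-\mathfrak{i}_{\mathrm d}$ directly by the same localization--plus--ergodicity scheme (Theorem~\ref{Thm pot en density}).
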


\begin{proof}
Assertion (p) corresponds to Theorem \ref{main 1 copy(2)} and Lemma \ref%
{lemma conductivty4 copy(1)} together with Fubini's theorem, while (d) is
Theorem \ref{main 1 copy(14)}. Then, (\textbf{Q}) and (\textbf{P}) follow
from (\ref{total work}) combined with Theorem \ref{Thm pot en density}.
\end{proof}

By using Definition \ref{AC--conductivity} and Theorem \ref{main 1}, for any
$\beta \in \mathbb{R}^{+}$, $\lambda \in \mathbb{R}_{0}^{+}$ and $\mathbf{A}%
\in \mathbf{C}_{0}^{\infty }$, the coefficient
\begin{equation*}
\mathfrak{E}_{\mathrm{lin}}(t)\equiv \mathfrak{E}_{\mathrm{lin}}^{(\beta
,\lambda ,\mathbf{A})}(t):= \mathfrak{i}_{\mathrm{p}}\left( t\right) +
\mathfrak{i}_{\mathrm{s}}\left( t\right) = \mathbf{s}(t) + \mathbf{p}(t)
\end{equation*}
corresponding to the \emph{total} energy density delivered to the system by
small electromagnetic fields at time $t\in \mathbb{R}$ is equal to%
\begin{equation}
\mathfrak{E}_{\mathrm{lin}}\left( t\right) =\int\nolimits_{\mathbb{R}^{d}}%
\mathrm{d}^{d}x\int\nolimits_{t_{0}}^{t}\mathrm{d}s_{1}\int%
\nolimits_{t_{0}}^{s_{1}}\mathrm{d}s_{2}\left\langle E_{\mathbf{A}}(s_{1},x),%
\mathbf{\Sigma }(s_{1}-s_{2})E_{\mathbf{A}}(s_{2},x)\right\rangle \ .
\label{Ohms law0}
\end{equation}%
For any $t\geq t_{1}$, i.e., for cyclic electromagnetic processes, we deduce
from \cite[Theorem 3.2]{OhmI} that
\begin{equation}
\mathfrak{i}_{\mathrm{d}}\left( t\right) =\mathbf{p}\left( t\right) =0\qquad
\text{and}\qquad \mathfrak{E}_{\mathrm{lin}}\left( t\right) =\mathbf{s}%
\left( t\right) =\mathfrak{i}_{\mathrm{p}}\left( t\right) \geq 0\ .
\label{Ohms law}
\end{equation}
In other words, when the electromagnetic field is switched off, all the
electromagnetic work has been converted into heat, as expected from the
second principle of thermodynamics. This phenomenology is related to Joule's
law in the AC--regime.

Indeed, the J. P. Joule originally observed that the heat (per second)
produced within a circuit is proportional to the electric resistance and the
square of the current.\ There are two clear similarities to the results
presented here:

\begin{itemize}
\item Like Joule's law, Theorem \ref{main 1} (\textbf{Q}) describes the rate
at which resistance in the fermion system converts electric energy into heat
energy. This thermal effect is directly related with \emph{current
fluctuations} via Green--Kubo relations, as explained in Section \ref%
{section Current Fluctuations}.

\item Quantitatively, Theorem \ref{main 1} is the version of Joule's law, in
the DC-- and AC--regimes, with currents and resistance replaced by electric
fields and (in phase) conductivity. 
\end{itemize}

\noindent In fact, the derivation of Joule's law in its original
formulation, that is, with currents and resistance rather than electric
fields and conductivity, can be performed by using the arguments of \cite[%
Section 4.5]{OhmII}. We omit the details.

In presence of electromagnetic fields, i.e., at times $t\in \left[
t_{0},t_{1}\right] $ for which the AC--condition (\ref{zero mean field})
does not hold, the situation is exactly the one described in \cite{OhmII}:
The raw ballistic movement of charged particles, that is responsible for the
diamagnetic currents, creates a kind of \textquotedblleft propagating wave
front\textquotedblright\ that destabilizes the whole system by changing its
internal state. These induce, at their turn, paramagnetic currents, by an
effect of (quantum) viscosity. The latter (a) modify the potential energy
and (b) produce a quantity of entropy (heat) that survives even after
turning off the electromagnetic potential. For more detailed discussions on
these effects see \cite[Section 4.4]{OhmII}.


\begin{bemerkung}[Role of field and space scales]
\mbox{
}\newline
Note that the limits $\eta \rightarrow 0$ and $l\rightarrow \infty $ in (\ref%
{free-energy density}) and (\ref{diamagnetic energy density}) do not
generally commute for $\lambda \in \mathbb{R}^{+}$. This feature comes from
the existence of thermal currents for space--inhomogeneous media which are
order $\mathcal{O}(\eta)$ (an not $\mathcal{O}(\eta^2)$), at fixed $l\in
\mathbb{R}^{+}$. See \cite[Theorem 4.1]{OhmII}, (\ref{G2-1})--(\ref{G1}) and
Theorem \ref{main 1 copy(18)}. In this case, the heat production and the
paramagnetic energy increment, which are always of order $\mathcal{O}(\eta
^{2}l^{d})$, are negligible at times $t\in (t_{0},t_{1})$ as compared to the
potential and diamagnetic energies, as $\eta \to 0$. However, for $t\geq
t_{1}$, i.e., for cyclic electromagnetic processes, the results of Theorem %
\ref{main 1} hold true even if one interchanges the limits $\eta \rightarrow
0$ and $l\rightarrow \infty $ in (\ref{free-energy density}) and (\ref%
{diamagnetic energy density}). This is of course coherent with the second
principle of thermodynamics and Joule's law in the AC--regime.
\end{bemerkung}

\subsection{AC--Conductivity Measure\label{sect ac cond}}

Recall that the random variables are independently and identically
distributed (i.i.d.), see Section \ref{Section impurities}. As to be
expected, this yields \emph{scalar} paramagnetic charge transport
coefficients (Lemma \ref{lemma conductivty4 copy(3)}):%
\begin{equation}
\mathbf{\Xi }_{\mathrm{p}}\left( t\right) =\mathbf{\sigma }_{\mathrm{p}%
}\left( t\right) \ \mathrm{Id}_{\mathbb{R}^{d}}\ ,\qquad t\in \mathbb{R}\ ,
\label{paramagnetic transport coefficient macrobis}
\end{equation}%
where, for any $\beta \in \mathbb{R}^{+}$ and $\lambda \in \mathbb{R}%
_{0}^{+} $, $\mathbf{\sigma }_{\mathrm{p}}\equiv \mathbf{\sigma }_{\mathrm{p}%
}^{(\beta ,\lambda )}$ is a well--defined real function. By \cite[Corollary
3.2 (i)-(ii) and (iv)]{OhmII} and (\ref{paramagnetic transport coefficient
macrobis}), $\mathbf{\sigma }_{\mathrm{p}}\in C(\mathbb{R};\mathbb{R}%
_{0}^{-})$ and $\mathbf{\sigma }_{\mathrm{p}}(t)=\mathbf{\sigma }_{\mathrm{p}%
}(|t|)$ with $\mathbf{\sigma }_{\mathrm{p}}(0)=0$. A detailed study of its
properties will be performed in the subsequent paper. In the same way, by (%
\ref{average conductivity +1}),
\begin{equation}
\mathbf{\Xi }_{\mathrm{d}}=\mathbf{\sigma }_{\mathrm{d}}\ \mathrm{Id}_{%
\mathbb{R}^{d}}\ ,  \label{def sigma_dbis}
\end{equation}%
where, for any $\beta \in \mathbb{R}^{+}$ and $\lambda \in \mathbb{R}%
_{0}^{+} $, $\mathbf{\sigma }_{\mathrm{d}}\equiv \mathbf{\sigma }_{\mathrm{d}%
}^{(\beta ,\lambda )}$ is the constant defined by%
\begin{equation}
\mathbf{\sigma }_{\mathrm{d}}:=\mathbb{E}\left[ \sigma _{\mathrm{d}%
}^{(\omega )}\left( e_{1},0\right) \right] \in \left[ -2,2\right] \ .
\label{def sigma_dbisbis}
\end{equation}%
In fact, one can use the scalar product $\langle \cdot ,\cdot \rangle $ in $%
\ell ^{2}(\mathfrak{L})$, the canonical orthonormal basis $\left\{ \mathfrak{%
e}_{x}\right\} _{x\in \mathfrak{L}}$ of $\ell ^{2}(\mathfrak{L})$ defined by
$\mathfrak{e}_{x}(y)\equiv \delta _{x,y}$, and the symbol $\mathbf{d}_{%
\mathrm{fermi}}^{(\beta ,\omega ,\lambda )}$ defined by (\ref{Fermi
statistic}), to get that%
\begin{equation}
\mathbf{\sigma }_{\mathrm{d}}=2\mathrm{Re}\left\{ \mathbb{E}\left[ \langle
\mathfrak{e}_{e_{1}},\mathbf{d}_{\mathrm{fermi}}^{(\beta ,\omega ,\lambda )}%
\mathfrak{e}_{0}\rangle \right] \right\}  \label{explicit form conductivity}
\end{equation}%
for any $\beta \in \mathbb{R}^{+}$ and $\lambda \in \mathbb{R}_{0}^{+}$. By (%
\ref{paramagnetic transport coefficient macrobis})--(\ref{def sigma_dbis}),
the (full) conductivity $\mathbf{\Sigma }$, which is by Definition \ref%
{AC--conductivity} the sum of the paramagnetic and diamagnetic
conductivities, equals
\begin{equation}
\mathbf{\Sigma }\left( t\right) \mathbf{=}\left\{
\begin{array}{lll}
0 & , & \qquad t\leq 0\ , \\
\mathbf{\sigma }\left( t\right) \ \mathrm{Id}_{\mathbb{R}^{d}} & , & \qquad
t\geq 0\ ,%
\end{array}%
\right.  \label{71bis}
\end{equation}%
for any $\beta \in \mathbb{R}^{+}$, $\lambda \in \mathbb{R}_{0}^{+}$ and $%
t\in \mathbb{R}$. Here, the so--called in--phase conductivity $\mathbf{%
\sigma }\equiv \mathbf{\sigma }^{(\beta ,\lambda )}\in C\left( \mathbb{R};%
\mathbb{R}\right) $ at $\beta \in \mathbb{R}^{+}$ and $\lambda \in \mathbb{R}%
_{0}^{+}$ is the continuous function defined, for any $t\in \mathbb{R}$, by%
\begin{equation*}
\mathbf{\sigma }\left( t\right) :=\mathbf{\sigma }_{\mathrm{d}}+\mathbf{%
\sigma }_{\mathrm{p}}\left( t\right) \ .
\end{equation*}%
This is a special situation which results from the i.i.d. property of the
random variables. For external potentials having non--trivial space
correlations the conductivity $\mathbf{\Sigma }(t)\in \mathcal{B}(\mathbb{R}%
^{d})$ is, in general, not of the form $\mathbf{\sigma }(t)\mathrm{Id}_{%
\mathbb{R}^{d}}$, $\mathbf{\sigma }(t)\in \mathbb{R}$.

Only the in--phase paramagnetic conductivity $\mathbf{\sigma }_{\mathrm{p}}$
is responsible for heat production. In fact, Theorem \ref{main 1} (p)
together with the AC--condition (\ref{zero mean field}) uniquely determines
the quantity $\mathbf{\sigma }_{\mathrm{p}}(t)$ for all $t\in {\mathbb{R}}$
because $\mathbf{\sigma }_{\mathrm{p}}(0)=0$ and $\mathbf{\sigma }_{\mathrm{p%
}}(t)=\mathbf{\sigma }_{\mathrm{p}}(|t|)$. To see this more explicitly we
consider the following choice of electromagnetic potential $\mathbf{A}\in
\mathbf{C}_{0}^{\infty }$: Take any smooth, compactly supported functions $%
\mathcal{E}\in C_{0}^{\infty }(\mathbb{R})$ and $\psi \in C_{0}^{\infty }(%
\mathbb{R}^{d};\mathbb{R})$ such that
\begin{equation}
\int_{\mathbb{R}}\mathcal{E}_{t}\ \mathrm{d}t=0\qquad \text{and}\qquad
\int\nolimits_{\mathbb{R}^{d}}\psi ^{2}\left( x\right) \mathrm{d}^{d}x=1\ .
\label{condition AC example}
\end{equation}%
Next, pick any normalized vector $\vec{w}:=(w_{1},\ldots ,w_{d})\in \mathbb{R%
}^{d}$ ($\left\vert \vec{w}\right\vert =1$) and define
\begin{equation}
\mathbf{A}^{(\mathcal{E},\psi )}(t,x):=\mathbf{A}_{1}^{(\mathcal{E},\psi
)}(t,x)e_{1}+\cdots +\mathbf{A}_{d}^{(\mathcal{E},\psi )}(t,x)e_{d}
\label{cond ac ex 2}
\end{equation}%
for all $t\in {\mathbb{R}}$ and $x\in {\mathbb{R}}^{d}$, where, for any $%
k\in \{1,\ldots ,d\}$,
\begin{equation}
\mathbf{A}_{k}^{(\mathcal{E},\psi )}(t,x):=w_{k}\psi \left( x\right)
\int\nolimits_{\mathbb{-\infty }}^{t}\mathcal{E}_{s}\ \mathrm{d}s\ .
\label{cond ac ex 3}
\end{equation}%
Here, $\{e_{k}\}_{k=1}^{d}$ is the canonical orthonormal basis of ${\mathbb{R%
}}^{d}$. By (\ref{condition AC example}), the vector potential $\mathbf{A}^{(%
\mathcal{E},\psi )}\in \mathbf{C}_{0}^{\infty }$ satisfies the AC--condition
(\ref{zero mean field}) for sufficiently large times $t$. Then, we infer
from Theorem \ref{main 1} (p), (\ref{Ohms law0}) and (\ref{Ohms law})
applied to the vector potential $\mathbf{A}^{(\mathcal{E},\psi )}$ that%
\begin{equation}
\frac{1}{2}\int\nolimits_{\mathbb{R}}\mathrm{d}s_{1}\int\nolimits_{\mathbb{R}%
}\mathrm{d}s_{2}\ \mathbf{\sigma }_{\mathrm{p}}(s_{1}-s_{2})\mathcal{E}%
_{s_{2}}\mathcal{E}_{s_{1}}\geq 0  \label{eq sympa}
\end{equation}%
for all $\mathcal{E}\in C_{0}^{\infty }(\mathbb{R})$ satisfying (\ref%
{condition AC example}). The latter is nothing but the density of heat
finally produced within the fermion system when the electromagnetic field $%
\mathbf{A}^{(\mathcal{E},\psi )}$ is turned off.

Since $\mathbf{\sigma }_{\mathrm{p}}\equiv \mathbf{\sigma }_{\mathrm{p}%
}^{(\beta ,\lambda )}$ is a continuous function obeying $\mathbf{\sigma }_{%
\mathrm{p}}(t)=\mathbf{\sigma }_{\mathrm{p}}(-t)$ and $\mathbf{\sigma }_{%
\mathrm{p}}(0)=0$, it is then straightforward to verify that the real
numbers $\mathbf{\sigma }_{\mathrm{p}}(t)$, $t\in \mathbb{R}$, are unique
and only depend on the inverse temperature $\beta \in \mathbb{R}^{+}$ and
the strength $\lambda \in \mathbb{R}_{0}^{+}$ of disorder. Moreover, by (\ref%
{eq sympa}), the in--phase paramagnetic conductivity $\mathbf{\sigma }_{%
\mathrm{p}}$ is the kernel of a \emph{positive} quadratic form on the space
of smooth, compactly supported functions $\mathcal{E}\in C_{0}^{\infty }(%
\mathbb{R})$ satisfying (\ref{condition AC example}). We show below that
this positivity property of $\mathbf{\sigma }_{\mathrm{p}}$ together with
the Bochner--Schwartz theorem leads to the existence of the AC--conducti%
\-%
vity measure for the system of lattice fermions considered here:

\begin{satz}[AC--conductivity measure]
\label{Theorem AC conductivity measure}\mbox{
}\newline
For any $\beta \in \mathbb{R}^{+}$ and $\lambda \in \mathbb{R}_{0}^{+}$,
there is a positive measure $\mu _{\mathrm{AC}}\equiv \mu _{\mathrm{AC}%
}^{(\beta ,\lambda )}$ of at most polynomial growth on $\mathbb{R}\backslash
\{0\}$ and a constant $D \in \mathbb{R}_0^+$ such that, for any $\mathbf{A}%
\in \mathbf{C}_{0}^{\infty }$ and $t\geq t_{1}$,
\begin{eqnarray*}
\mathbf{s}\left( t\right) =\mathfrak{i}_{\mathrm{p}}\left( t\right) &=&\frac{%
1}{2}\int\nolimits_{\mathbb{R}^{d}}\mathrm{d}^{d}x\int\nolimits_{\mathbb{R}%
\backslash \{0\}}\mu _{\mathrm{AC}}(\mathrm{d}\nu )\langle \hat{E}_{\mathbf{A%
}}(\nu ,x),\hat{E}_{\mathbf{A}}(\nu ,x)\rangle \\
&& + D \int\nolimits_{\mathbb{R}^{d}}\mathrm{d}^{d}x \left |\partial_\nu
\hat{E}_{\mathbf{A}}( \nu, x)|_{\nu=0} \right|^2
\end{eqnarray*}%
with $\hat{E}_{\mathbf{A}}$ being the Fourier transform of the
electromagnetic field $E_{\mathbf{A}}$ (\ref{V bar 0}).
\end{satz}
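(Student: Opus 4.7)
My plan is to derive the AC-conductivity measure from Joule's law (Theorem \ref{main 1}) via a carefully localized application of the Bochner--Schwartz theorem \cite[Theorem IX.10]{ReedSimonII} on the punctured real line.

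I first reduce the heat density to a quadratic form in the electric field. For $t\geq t_{1}$, Equation \eqref{Ohms law} gives $\mathbf{s}(t)=\mathfrak{i}_{\mathrm{p}}(t)$. Inserting the explicit formula \eqref{current para limit} for $J_{\mathrm{p}}$ into Theorem \ref{main 1} (p) and using the scalar identity \eqref{paramagnetic transport coefficient macrobis}, which holds by the i.i.d.\ assumption, reduces $\mathfrak{i}_{\mathrm{p}}(t)$ to a double time integral of $\mathbf{\sigma}_{\mathrm{p}}(s_{1}-s_{2})\langle E_{\mathbf{A}}(s_{1},x),E_{\mathbf{A}}(s_{2},x)\rangle$ over $t_{0}\leq s_{2}\leq s_{1}\leq t$. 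Since $E_{\mathbf{A}}$ is supported in $[t_{0},t_{1}]$ and $\mathbf{\sigma}_{\mathrm{p}}$ is even, one may extend both time integrations to all of $\mathbb{R}$ and symmetrize to obtain
\begin{equation*}
\mathbf{s}(t)=\frac{1}{2}\int_{\mathbb{R}^{d}}d^{d}x\int_{\mathbb{R}}\int_{\mathbb{R}}\mathbf{\sigma}_{\mathrm{p}}(s_{1}-s_{2})\,\langle E_{\mathbf{A}}(s_{1},x),E_{\mathbf{A}}(s_{2},x)\rangle \,ds_{1}\,ds_{2}.
\end{equation*}

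Next, I pass to the Fourier side. As $\mathbf{\sigma}_{\mathrm{p}}$ is continuous and bounded, it defines an even tempered distribution with Fourier transform $\hat{\mathbf{\sigma}}_{\mathrm{p}}\in \mathcal{S}'(\mathbb{R})$, and Plancherel applied component-by-component rewrites the spatial integrand as the pairing $\langle \hat{\mathbf{\sigma}}_{\mathrm{p}},\langle \hat{E}_{\mathbf{A}}(\cdot ,x),\hat{E}_{\mathbf{A}}(\cdot ,x)\rangle \rangle$. Because $\mathbf{A}$ has compact time support, $\hat{E}_{\mathbf{A}}(0,x)=0$, so $\nu \mapsto |\hat{E}_{\mathbf{A}}(\nu ,x)|^{2}$ is Schwartz and vanishes to order two at the origin, with $\partial_{\nu}^{2}|\hat{E}_{\mathbf{A}}|^{2}\big|_{\nu=0}=2|\partial_{\nu}\hat{E}_{\mathbf{A}}(0,x)|^{2}$. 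Meanwhile, applying \eqref{eq sympa} component-wise states that the quadratic form $Q(\mathcal{E})=\langle \hat{\mathbf{\sigma}}_{\mathrm{p}},|\hat{\mathcal{E}}|^{2}\rangle$ is non-negative on every $\mathcal{E}\in C_{0}^{\infty }(\mathbb{R};\mathbb{R})$ with $\hat{\mathcal{E}}(0)=0$. For any non-negative even $\varphi \in C_{0}^{\infty }(\mathbb{R}\backslash \{0\})$ one can realize $\varphi =|\hat{\mathcal{E}}|^{2}$ by choosing $\mathcal{E}=\mathcal{F}^{-1}(\sqrt{\varphi })$; by the evenness of $\hat{\mathbf{\sigma}}_{\mathrm{p}}$, positivity extends to all non-negative $\varphi \in C_{0}(\mathbb{R}\backslash \{0\})$. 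The Bochner--Schwartz theorem applied on the open set $\mathbb{R}\backslash \{0\}$ then identifies $\hat{\mathbf{\sigma}}_{\mathrm{p}}|_{\mathbb{R}\backslash \{0\}}$ with a positive Radon measure $\mu_{\mathrm{AC}}$, of at most polynomial growth at infinity by the tempered character of $\hat{\mathbf{\sigma}}_{\mathrm{p}}$.

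Finally, the difference between $\hat{\mathbf{\sigma}}_{\mathrm{p}}$ and the extension-by-zero of $\mu _{\mathrm{AC}}$ across the origin is a tempered distribution supported in $\{0\}$; equivalently, a renormalization is required should $\mu_{\mathrm{AC}}$ carry a non-integrable singularity at $\nu =0$ compensated by the double zero of $|\hat{E}_{\mathbf{A}}(\cdot ,x)|^{2}$. Evenness forces only even-order $\delta_{0}^{(2j)}$-contributions, boundedness of $\mathbf{\sigma}_{\mathrm{p}}$ excludes genuine polynomial growth of $\mathbf{\sigma}_{\mathrm{p}}$, and $\hat{E}_{\mathbf{A}}(0,x)=0$ makes the $\delta _{0}$ pairing vanish; the only surviving contribution is therefore a non-negative multiple of $|\partial_{\nu}\hat{E}_{\mathbf{A}}(0,x)|^{2}$. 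Collecting constants gives the stated formula with $D\in \mathbb{R}_{0}^{+}$. The main obstacle is this last step: controlling the behavior of $\hat{\mathbf{\sigma}}_{\mathrm{p}}$ at $\nu =0$ well enough to isolate a single scalar correction $D|\partial_{\nu}\hat{E}_{\mathbf{A}}(0,x)|^{2}$ independent of $\mathbf{A}$, while ensuring that $\mu_{\mathrm{AC}}$ retains exactly the regularity needed for the first integral to be finite and positive.
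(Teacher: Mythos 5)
Your overall strategy -- use the positivity (\ref{eq sympa}) to produce a positive measure on $\mathbb{R}\backslash \{0\}$ via a Bochner-type theorem and then treat the origin separately -- is in the right spirit, but two steps conceal genuine gaps. First, the localization of positive-definiteness to the punctured frequency line is not a standard theorem, and your realization $\varphi =|\hat{\mathcal{E}}|^{2}$ with $\mathcal{E}=\mathcal{F}^{-1}(\sqrt{\varphi })$ does not produce an admissible test field: $\sqrt{\varphi }$ need not be smooth at the zeros of $\varphi $, and, more importantly, $\mathcal{F}^{-1}(\sqrt{\varphi })$ is never compactly supported in time, whereas (\ref{eq sympa}) is only established for $\mathcal{E}\in C_{0}^{\infty }(\mathbb{R})$ obeying (\ref{condition AC example}). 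Bridging this requires an approximation argument (density of compactly supported zero-mean fields in a topology in which the quadratic form is continuous, while preserving the constraint $\hat{\mathcal{E}}(0)=0$) that you do not supply; this is precisely where the difficulty of the theorem lives. Second, your treatment of $\nu =0$ is circular: the claim that $\hat{\mathbf{\sigma }}_{\mathrm{p}}$ minus the extension-by-zero of $\mu _{\mathrm{AC}}$ is a distribution supported in $\{0\}$ presupposes that $\mu _{\mathrm{AC}}$ is locally finite near the origin, which is exactly what is not known a priori (the measure produced by the paper's construction is allowed to blow up like $\nu ^{-2}$ there). Even granting that, nothing in your argument rules out contributions $\delta _{0}^{(2j)}$ with $j\geq 2$; the appeal to boundedness of $\mathbf{\sigma }_{\mathrm{p}}$ would, if it worked, also kill the $j=1$ term and force $D=0$, which the paper only obtains later by a finer (analyticity) argument.

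The paper's proof sidesteps both issues with one trick: instead of Fourier-transforming $\mathbf{\sigma }_{\mathrm{p}}$ directly, it defines $\varsigma (\varphi ):=-\int_{\mathbb{R}}\mathbf{\sigma }_{\mathrm{p}}(s)\varphi ''(s)\,\mathrm{d}s$. Since $\varphi '$ automatically satisfies the AC--condition (\ref{distrib sympa 2}), the inequality (\ref{eq sympa}) applied to $\mathcal{E}=\varphi '$ shows that $\varsigma $ is of positive type \emph{globally} on $\mathbb{R}$, with no constraint left on the test functions. The Bochner--Schwartz theorem then applies directly and yields a genuine positive measure $\mu _{\varsigma }$ of polynomial growth with $\hat{\mu }_{\varsigma }=\varsigma $, i.e.\ $\mu _{\varsigma }$ plays the role of $\nu ^{2}\hat{\mathbf{\sigma }}_{\mathrm{p}}$. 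One then sets $\mu _{\mathrm{AC}}:=\nu ^{-2}\mu _{\varsigma }|_{\mathbb{R}\backslash \{0\}}$, and because $\mu _{\varsigma }$ is a measure (not merely a distribution), the only possible contribution at the origin is the point mass $\mu _{\varsigma }(\{0\})\,\delta _{0}$ -- no higher derivatives of $\delta _{0}$ can occur. Pairing with $|\hat{E}_{\mathbf{A}}|^{2}$, the weight $\nu ^{-2}$ cancels against the $\nu ^{2}$, and the double zero of $|\hat{E}_{\mathbf{A}}(\cdot ,x)|^{2}$ at $\nu =0$ (coming from $\hat{E}_{\mathbf{A}}(0,x)=0$) converts the point mass into exactly $D\,|\partial _{\nu }\hat{E}_{\mathbf{A}}(\nu ,x)|_{\nu =0}|^{2}$ with $D=\mu _{\varsigma }(\{0\})$. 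You should rework your argument around this device; as written, the two gaps above are not repairable by minor edits.
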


\begin{proof}
Fix $\beta \in \mathbb{R}^{+}$ and $\lambda \in \mathbb{R}_{0}^{+}$. It
follows from (\ref{eq sympa}) that%
\begin{equation}
\int\nolimits_{\mathbb{R}}\mathbf{\sigma }_{\mathrm{p}}(s)\left[ \tilde{%
\varphi}\ast \varphi \right] (s)\mathrm{d}s:=\int\nolimits_{\mathbb{R}}%
\mathrm{d}s_{1}\mathbf{\sigma }_{\mathrm{p}}(s_{1})\int\nolimits_{\mathbb{R}}%
\mathrm{d}s_{2}\ \varphi (s_{2}-s_{1})\varphi (s_{2})\geq 0
\label{distrib sympa 1}
\end{equation}%
for all $\varphi \in C_{0}^{\infty }(\mathbb{R})$ satisfying the first
condition of (\ref{condition AC example}). Here, $\tilde{\varphi}$ is the
function defined by $\tilde{\varphi}(s):=\varphi (-s)$ for any $s\in \mathbb{%
R}$. For all $\varphi \in C_{0}^{\infty }(\mathbb{R})$, observe that $%
\varphi ^{\prime }\in C_{0}^{\infty }(\mathbb{R})$ and
\begin{equation}
\int\nolimits_{\mathbb{R}}\varphi ^{\prime }\left( s\right) \mathrm{d}s=0\ .
\label{distrib sympa 2}
\end{equation}%
Using this in order to get rid of the first assumption in (\ref{condition AC
example}), i.e., the AC--condition, we define the real--valued distribution $%
\varsigma \equiv \varsigma ^{(\beta ,\lambda )}$ by%
\begin{equation*}
\varsigma (\varphi ):=-\int\nolimits_{\mathbb{R}}\mathbf{\sigma }_{\mathrm{p}%
}(s)\varphi ^{\prime \prime }(s)\mathrm{d}s\ ,\qquad \varphi \in
C_{0}^{\infty }(\mathbb{R})\ .
\end{equation*}%
Indeed, by (\ref{distrib sympa 1})--(\ref{distrib sympa 2}), note that%
\begin{equation*}
\varsigma (\tilde{\varphi}\ast \varphi )=\int\nolimits_{\mathbb{R}}\mathbf{%
\sigma }_{\mathrm{p}}(s)\left[ \widetilde{\left( \varphi ^{\prime }\right) }%
\ast \varphi ^{\prime }\right] (s)\mathrm{d}s\geq 0\ .
\end{equation*}%
Since $\mathbf{\sigma }_{\mathrm{p}}(t)=\mathbf{\sigma }_{\mathrm{p}}(-t)$,
this equality can easily be extended to complex--valued functions $\varphi
\in C_{0}^{\infty }(\mathbb{R};\mathbb{C})$ by replacing $\tilde{\varphi}$
with its conjugate $\overline{\tilde{\varphi}}$. So, the distribution $%
\varsigma $ is of positive type. In particular, applying the
Bochner--Schwartz theorem \cite[Theorem IX.10]{ReedSimonII} we deduce that $%
\varsigma $ is a tempered distribution which is the Fourier transform of a
positive measure $\mu _{\varsigma }$ of at most polynomial growth. Then
define the measure $\mu _{\mathrm{AC}}\equiv \mu _{\mathrm{AC}}^{(\beta
,\lambda )}$ as the restriction of $\nu ^{-2}\mu _{\varsigma }(\mathrm{d}\nu
)$ on $\mathbb{R}\backslash \{0\}$ and observe that
\begin{eqnarray*}
\mathbf{s}\left( t\right) &=&\int\nolimits_{\mathbb{R}^{d}}\mathrm{d}%
^{d}x\int\nolimits_{t_{0}}^{t}\mathrm{d}s_{1}\int\nolimits_{t_{0}}^{s_{1}}%
\mathrm{d}s_{2}\ \mathbf{\sigma }_{\mathrm{p}}(s_{1}-s_{2})\left\langle E_{%
\mathbf{A}}(s_{1},x),E_{\mathbf{A}}(s_{2},x)\right\rangle \\
&=&\frac{1}{2}\int\nolimits_{\mathbb{R}^{d}}\mathrm{d}^{d}x\int\nolimits_{%
\mathbb{R}\backslash \{0\}}\mu _{\mathrm{AC}}(\mathrm{d}\nu )\langle \hat{E}%
_{\mathbf{A}}(\nu ,x),\hat{E}_{\mathbf{A}}(\nu ,x)\rangle \\
&& + \mu _{\varsigma }(\{0\}) \int\nolimits_{\mathbb{R}^{d}}\mathrm{d}^{d}x
\left| \int\nolimits_{\mathbb{R}} \mathrm{d} s \, \mathbf{A}(s ,x) \right|^2
\end{eqnarray*}%
for any $\mathbf{A}\in \mathbf{C}_{0}^{\infty }$ and $t\geq t_{1}$, using
Theorem \ref{main 1}, Fubini's theorem, as well as Equations (\ref{zero mean
field assumption}) and (\ref{paramagnetic transport coefficient macrobis}).
\end{proof}

\noindent This theorem uniquely defines the measure $\mu _{\mathrm{AC}}$,
named the (in--phase)\emph{\ AC--cond%
\-%
uctivity measure}. It characterizes the heat production per unit volume due
to the component of frequency $\nu \in \mathbb{R}\backslash \{0\}$ of the
electric field, in accordance with Joule's law in the AC--regime.

In fact, because of the restriction on functions satisfying the
AC--condition (\ref{condition AC example}), (\ref{distrib sympa 1}) is
weaker than the condition defining functions of positive type. Above we
overcome this problem by introducing the distribution $\varsigma $ which is
clearly of positive type, but only as a general tempered distribution. In
the subsequent paper we will show that the function $\mathbf{\sigma }_{%
\mathrm{p}}$, the in--phase paramagnetic conductivity, is indeed of positive
type up to a constant. This allows us to use the original theorem of Bochner
\cite[Theorem IX.9]{ReedSimonII} on functions of positive type (or the
spectral theorem) to show the existence of a finite positive measure $\mu _{%
\mathrm{p}}$ such that
\begin{equation*}
\mathbf{\sigma }_{\mathrm{p}}(t)=\int_{\mathbb{R}}\left( \cos \left( t\nu
\right) -1\right) \mu _{\mathrm{p}}(\mathrm{d}\nu )\ ,\qquad t\in \mathbb{R}%
\ ,
\end{equation*}%
similar to \cite[Theorem 3.1]{OhmII}. Observe that this fact implies, in
particular, that Theorem \ref{Theorem AC conductivity measure} actually
holds with $D = 0$. Such analysis is technically a bit more involved than
the proof above and requires the use of the analyticity of time--correlation
functions of KMS states, but has the advantage of automatically implying the
finiteness of $\mu _{\mathrm{p}}$ with $\mu _{\mathrm{AC}}=\mu _{\mathrm{p}%
}|_{\mathbb{R}\backslash \{0\}}$, similar to the microscopic case \cite[%
Theorem 3.1]{OhmII}.

Note that the case $\lambda =0$ can be interpreted as the perfect conductor.
We show in the subsequent paper that%
\begin{equation*}
\mu _{\mathrm{AC}}^{(\beta ,0)}(\mathbb{R}\backslash \{0\})=0
\end{equation*}%
and hence, the heat production vanishes in this special case. Analogously,
the limit $\lambda \rightarrow \infty $ corresponds to the perfect insulator
and also leads to a vanishing heat production. Positivity of the
AC--conductivity measure means that the fermion system cannot transfer any
energy to the electromagnetic field. This property is a consequence of the
second principle of thermodynamics. In fact, the fermion system even
absorbs, in general, some non--vanishing amount of electromagnetic energy.
These points will be all addressed in the subsequent paper.

\begin{bemerkung}
\mbox{
}\newline
For electric fields slowly varying in time, charge carriers have time to
move and significantly change the charge density, producing an additional,
self--generated, internal electric field. This contribution is not taken
into account in our model. Thus, the physical meaning of the behavior of the
AC--conducti%
\-%
vity measure $\mu _{\mathrm{AC}}$ at low frequencies is not clear, in
general. However, if one imposes regularity of $\mu _{\mathrm{AC}}$ near $%
\nu =0$ then this behavior becomes physically relevant. One natural way to
obtain some regularity of $\mu _{\mathrm{AC}}$ at low frequencies is to
avoid the presence of free charge carriers by imposing some localization
condition. This is done, for instance, in \cite{Annale} where the validity
of Mott's formula for the conductivity of quantum mechanical charged
carriers is studied.
\end{bemerkung}

\section{Technical Proofs\label{Section technical proof Ohm-III}}

We start our study by two technical results that are used in various proofs
of Sections \ref{Section AC--Ohm's Law copy(1)}--\ref{section Current
Fluctuations copy(2)}: a decomposition of complex--time two--point
correlation functions (Section \ref{section Decay Bounds on Propagators})
and a relatively simple extension of the Akcoglu--Krengel ergodic theorem to
non--regular sequences (Section \ref{Sec Akcoglu--Krengel}). Then, we tackle
the proof of Theorem \ref{main 1} in Sections \ref{Section AC--Ohm's Law
copy(1)}--\ref{Section AC--Ohm's Law}. Finally, Section \ref{section Current
Fluctuations copy(2)} justifies the construction done in Section \ref%
{section Current Fluctuations}. In particular, we prove Theorem \ref{scalar
product} in that subsection.

\subsection{Complex--Time Two--Point Correlation Functions\label{section
Decay Bounds on Propagators}}

By \cite[Lemma 5.2]{OhmII}, the microscopic paramagnetic transport
coefficient (\ref{backwards -1bis}) can be expressed in terms of
complex--time two--point correlation functions. The latter are explicitly
given in terms of quantities involving the Anderson tight--binding
Hamiltonian (Section \ref{Section impurities}). Indeed, by \cite[Eq. (101)]%
{OhmII}, for all $\beta \in \mathbb{R}^{+}$, $\omega \in \Omega $, $\lambda
\in \mathbb{R}_{0}^{+}$, $t\in {\mathbb{R}}$, and $\alpha \in \lbrack
0,\beta ]$,%
\begin{equation}
C_{t+i\alpha }^{(\omega )}(\mathbf{x})=\langle \mathfrak{e}_{x^{(2)}},%
\mathrm{e}^{-it\left( \Delta _{\mathrm{d}}+\lambda V_{\omega }\right)
}F_{\alpha }^{\beta }\left( \Delta _{\mathrm{d}}+\lambda V_{\omega }\right)
\mathfrak{e}_{x^{(1)}}\rangle \ ,\quad \mathbf{x}:=(x^{(1)},x^{(2)})\in
\mathfrak{L}^{2}\ ,  \label{correlation operator}
\end{equation}%
where $F_{\alpha }^{\beta }$ is the real function defined, for every $\beta
\in \mathbb{R}^{+}$ and $\alpha \in {\mathbb{R}}$, by
\begin{equation}
F_{\alpha }^{\beta }\left( \varkappa \right) :=\frac{\mathrm{e}^{\alpha
\varkappa }}{1+\mathrm{e}^{\beta \varkappa }}\ ,\qquad \varkappa \in {%
\mathbb{R}}\ .  \label{Schwartz function F}
\end{equation}%
Equation (\ref{correlation operator}) provides useful estimates like
space--decay properties of $C_{t+i\alpha }^{(\omega )}$. Note that the
notation $\Vert \cdot \Vert _{\mathrm{op}}$ stands for the operator norm.

\begin{satz}[Decomposition of two-point correlation functions]
\label{decay bound theorem}\mbox{
}\newline
For any $\varepsilon ,\beta \in \mathbb{R}^{+}$, $\omega \in \Omega $, $%
\lambda \in \mathbb{R}_{0}^{+}$, $t\in {\mathbb{R}}$, $\upsilon \in (0,\beta
/2)$ and $\alpha \in \lbrack \upsilon ,\beta -\upsilon ]$, the complex--time
two--point correlation function $C_{t+i\alpha }^{(\omega )}$ can be
decomposed as%
\begin{equation*}
C_{t+i\alpha }^{(\omega )}\left( \mathbf{x}\right) =A_{t+i\alpha ,\upsilon
,\varepsilon }^{(\omega )}\left( \mathbf{x}\right) +B_{t+i\alpha ,\upsilon
,\varepsilon }^{(\omega )}\left( \mathbf{x}\right) \ ,\qquad \mathbf{x}%
:=(x^{(1)},x^{(2)})\in \mathfrak{L}^{2}\ ,
\end{equation*}%
where $A_{t+i\alpha ,\upsilon ,\varepsilon }^{(\omega )}\left( \cdot \right)
$ and $B_{t+i\alpha ,\upsilon ,\varepsilon }^{(\omega )}\left( \cdot \right)
$ are kernels (w.r.t. $\{\mathfrak{e}_{x}\}_{x\in \mathfrak{L}}$) of bounded
operators $A_{t+i\alpha ,\upsilon ,\varepsilon }^{(\omega )}\equiv
A_{t+i\alpha ,\upsilon ,\varepsilon }^{(\beta ,\omega ,\lambda )}$ and $%
B_{t+i\alpha ,\upsilon ,\varepsilon }^{(\omega )}\equiv B_{t+i\alpha
,\upsilon ,\varepsilon }^{(\beta ,\omega ,\lambda )}$ acting on $\ell ^{2}(%
\mathfrak{L})$ and satisfying the following properties: \newline
\emph{(i)} Boundedness: There is a finite constant $D\in \mathbb{R}^{+}$
only depending on $\beta ,\upsilon $ such that%
\begin{equation*}
\left\Vert A_{t+i\alpha ,\upsilon ,\varepsilon }^{(\omega )}\right\Vert _{%
\mathrm{op}}\leq \varepsilon \qquad \text{and}\qquad \left\Vert B_{t+i\alpha
,\upsilon ,\varepsilon }^{(\omega )}\right\Vert _{\mathrm{op}}\leq D\ .
\end{equation*}%
\emph{(ii)} Decay: If $T\in \mathbb{R}^{+}$ and $t\in \lbrack -T,T]$, then
there is a finite constant $D\in \mathbb{R}^{+}$ only depending on $%
\varepsilon ,\beta ,\upsilon ,d,T$ such that%
\begin{equation*}
\left\vert B_{t+i\alpha ,\upsilon ,\varepsilon }^{(\omega )}\left( \mathbf{x}%
\right) \right\vert \leq \frac{D}{1+|x^{(1)}-x^{(2)}|^{d^{2}+1}}\ ,\qquad
\mathbf{x}\in \mathfrak{L}^{2}\ .
\end{equation*}%
\emph{(iii)} Continuity w.r.t. times: If $T\in \mathbb{R}^{+}$ and $%
s_{1},s_{2}\in \lbrack -T,T]$, then there is a finite constant $\eta \in
\mathbb{R}^{+}$ only depending on $\varepsilon ,\beta ,\upsilon ,d,T$ such
that
\begin{equation*}
\left\vert B_{s_{1}+i\alpha ,\upsilon ,\varepsilon }^{(\omega )}\left(
\mathbf{x}\right) -B_{s_{2}+i\alpha ,\upsilon ,\varepsilon }^{(\omega
)}\left( \mathbf{x}\right) \right\vert \leq \frac{\varepsilon \left(
1+\lambda \right) }{1+|x^{(1)}-x^{(2)}|^{d^{2}+1}}\ ,\qquad \mathbf{x}\in
\mathfrak{L}^{2}\ ,
\end{equation*}%
whenever $\left\vert s_{2}-s_{1}\right\vert \leq \eta $.\newline
\emph{(iv)} Continuity w.r.t. random variables: For any $\mathbf{x}\in
\mathfrak{L}^{2}$, the maps
\begin{equation*}
\omega \mapsto C_{t+i\alpha }^{(\omega )}\left( \mathbf{x}\right) \ ,\
\omega \mapsto A_{t+i\alpha ,\upsilon ,\varepsilon }^{(\omega )}\left(
\mathbf{x}\right) \ ,\ \omega \mapsto B_{t+i\alpha ,\upsilon ,\varepsilon
}^{(\omega )}\left( \mathbf{x}\right)
\end{equation*}%
from $\Omega $ to $\mathbb{R}$ are continuous w.r.t. the topology on $\Omega
$ of which $\mathfrak{A}_{\Omega }$ is the Borel $\sigma $--algebra .
\end{satz}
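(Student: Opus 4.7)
The plan is to exploit the explicit representation (\ref{correlation operator}) via functional calculus on the bounded self--adjoint operator $H_\omega := \Delta_{\mathrm{d}} + \lambda V_\omega \in \mathcal{B}(\ell^2(\mathfrak{L}))$. For $\alpha \in [\upsilon, \beta - \upsilon]$ the function $F_\alpha^\beta$ extends holomorphically to the strip $|\mathrm{Im}\,z| < \pi/\beta$, with uniform exponential decay rate at least $\upsilon$ along the real axis, so that $\widehat{F_\alpha^\beta}$ is smooth and decays exponentially at infinity, uniformly in $\alpha$. The central idea is the splitting $F_\alpha^\beta = G_{\alpha,\varepsilon}^\beta + (F_\alpha^\beta - G_{\alpha,\varepsilon}^\beta)$, where $G_{\alpha,\varepsilon}^\beta$ is a band--limited approximation obtained by picking a smooth cutoff $\chi_R \in C_0^\infty(\mathbb{R})$ with $\chi_R \equiv 1$ on $[-R/2, R/2]$ and $\mathrm{supp}\,\chi_R \subset [-R, R]$, and setting $\widehat{G_{\alpha,\varepsilon}^\beta} := \chi_R \widehat{F_\alpha^\beta}$. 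Choosing $R = R(\varepsilon, \beta, \upsilon)$ large enough, the exponential decay of $\widehat{F_\alpha^\beta}$ gives $\|F_\alpha^\beta - G_{\alpha,\varepsilon}^\beta\|_\infty \leq \varepsilon$. Then set
\[
A_{t+i\alpha,\upsilon,\varepsilon}^{(\omega)} := e^{-itH_\omega}\bigl(F_\alpha^\beta - G_{\alpha,\varepsilon}^\beta\bigr)(H_\omega), \qquad B_{t+i\alpha,\upsilon,\varepsilon}^{(\omega)} := e^{-itH_\omega}\,G_{\alpha,\varepsilon}^\beta(H_\omega),
\]
so that $C_{t+i\alpha}^{(\omega)} = A_{t+i\alpha,\upsilon,\varepsilon}^{(\omega)} + B_{t+i\alpha,\upsilon,\varepsilon}^{(\omega)}$ by (\ref{correlation operator}).

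Property (i) is immediate from functional calculus: $\|A\|_{\mathrm{op}} \leq \|F_\alpha^\beta - G_{\alpha,\varepsilon}^\beta\|_\infty \leq \varepsilon$, while $\|B\|_{\mathrm{op}} \leq \|F_\alpha^\beta\|_\infty + \varepsilon$ is bounded by some $D = D(\beta, \upsilon)$ (using $\|F_\alpha^\beta\|_\infty \leq 1$ and assuming without loss of generality $\varepsilon \leq 1$, else take the trivial splitting $A = 0$, $B = C$). For (ii), note $B = f_t(H_\omega)$ with $f_t(\varkappa) := e^{-it\varkappa} G_{\alpha,\varepsilon}^\beta(\varkappa)$, whose Fourier transform $\widehat{f_t}$ is supported in $[-R-T, R+T]$ for $|t| \leq T$. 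Combining the Fourier inversion $f_t(H_\omega) = \int \widehat{f_t}(\xi)\,e^{i\xi H_\omega}\,\mathrm{d}\xi$ with the Neumann series for $e^{i\xi H_\omega}$ and the locality property $\langle \mathfrak{e}_x, H_\omega^n \mathfrak{e}_y\rangle = 0$ whenever $n < |x-y|_{\ell^1}$ (since $\Delta_{\mathrm{d}}$ is nearest--neighbor and $V_\omega$ diagonal), one obtains the super--exponential bound
\[
\bigl|\langle \mathfrak{e}_x, f_t(H_\omega)\mathfrak{e}_y\rangle\bigr| \leq \|\widehat{f_t}\|_\infty \int_{|\xi| \leq R+T}\frac{(|\xi|\,\|H_\omega\|)^{|x-y|}}{|x-y|!}\,e^{|\xi|\|H_\omega\|}\,\mathrm{d}\xi,
\]
which dominates any inverse polynomial in $|x-y|$ and yields (ii).

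Property (iii) follows from the same scheme applied to
\[
B_{s_1+i\alpha,\upsilon,\varepsilon}^{(\omega)} - B_{s_2+i\alpha,\upsilon,\varepsilon}^{(\omega)} = -i \int_{s_2}^{s_1} H_\omega\,e^{-is H_\omega}\,G_{\alpha,\varepsilon}^\beta(H_\omega)\,\mathrm{d}s,
\]
where the extra factor of $H_\omega$ produces the $(1+\lambda)$ dependence via $\|H_\omega\| \leq 4d + \lambda$, while the integration interval of length $|s_1 - s_2| \leq \eta$ supplies the smallness; $\eta$ is then chosen small enough in terms of $\varepsilon, \beta, \upsilon, d, T$. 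Finally, (iv) holds because $\Omega = [-1,1]^{\mathfrak{L}}$ carries the product topology, each coordinate map $\omega \mapsto \omega(z)$ is continuous, so $\langle \mathfrak{e}_x, H_\omega^n \mathfrak{e}_y\rangle$ depends continuously on finitely many coordinates $\omega_z$; the Neumann and Fourier representations above converge uniformly in $\omega$ thanks to the deterministic bound $\|H_\omega\| \leq 4d + \lambda$, so continuity transfers to the kernels of $B^{(\omega)}$, hence of $A^{(\omega)} = C^{(\omega)} - B^{(\omega)}$ and of $C^{(\omega)}$ itself.

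The main obstacle I anticipate is maintaining uniformity of the constants in $\alpha \in [\upsilon, \beta - \upsilon]$ when controlling the decay rate of $\widehat{F_\alpha^\beta}$; this crucially uses the strict positive lower bound $\upsilon$ on both $\alpha$ and $\beta - \alpha$, which guarantees a uniform analytic extension of $F_\alpha^\beta$ to a fixed strip independent of $\alpha$. The remaining estimates are routine functional calculus and Taylor tail bounds for a bounded self--adjoint operator, with all constants ultimately tracked in the prescribed parameter dependencies.
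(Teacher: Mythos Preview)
Your overall strategy---splitting $F_\alpha^\beta$ in Fourier space into a low--frequency band--limited piece (giving $B$) and a small high--frequency tail (giving $A$)---is exactly the paper's approach. The paper uses a sharp frequency cutoff at some $M_{\beta,\upsilon,\varepsilon}$ and only the weaker bound $|\hat F_\alpha^\beta(\nu)|\le D_{\beta,\upsilon}/(1+\nu^2)$, whereas you use a smooth cutoff and the (correct, but unnecessary) exponential decay coming from analyticity of $F_\alpha^\beta$ in the strip $|\mathrm{Im}\,z|<\pi/\beta$. These differences are cosmetic.

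There is, however, a genuine gap in your argument for (ii). The theorem asserts that the decay constant $D$ depends only on $\varepsilon,\beta,\upsilon,d,T$ and \emph{not} on $\lambda$; the paper stresses this point explicitly right after the proof. Your bound
\[
\bigl|\langle\mathfrak e_x,e^{i\xi H_\omega}\mathfrak e_y\rangle\bigr|\le\sum_{n\ge|x-y|}\frac{(|\xi|\,\|H_\omega\|)^n}{n!}
\]
uses $\|H_\omega\|\le 4d+\lambda$, so every constant you extract from it carries a $\lambda$--dependence, and the same problem propagates into your argument for (iv). The paper avoids this by invoking \cite[Lemma 4.2]{OhmI}, whose proof exploits that $V_\omega$ is \emph{diagonal}: writing the Dyson series for $e^{-it(\Delta_{\mathrm d}+\lambda V_\omega)}$ with $\Delta_{\mathrm d}$ as the perturbation and $e^{-is\lambda V_\omega}$ (diagonal, unitary, norm one) as the free evolution, each interaction--picture operator $e^{is\lambda V_\omega}\Delta_{\mathrm d}\,e^{-is\lambda V_\omega}$ is still nearest--neighbour with norm $\le 4d$, independently of $\lambda$. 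This yields
\[
\bigl|\langle\mathfrak e_x,e^{-it(\Delta_{\mathrm d}+\lambda V_\omega)}\mathfrak e_y\rangle\bigr|\le\sum_{n\ge|x-y|}\frac{(4d\,|t|)^n}{n!},
\]
which is $\lambda$--free. Replacing your Neumann series in $H_\omega$ by this interaction--picture expansion fixes (ii), and the same representation gives the $\omega$--continuity in (iv) with uniform (in $\omega,\lambda$) convergence. Your treatment of (i) and (iii) is fine; in (iii) the single extra factor of $H_\omega$ legitimately produces the $(1+\lambda)$ appearing in the statement.
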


\begin{proof}
(i) The spectral theorem applied to the bounded self--adjoint operator $%
(\Delta _{\mathrm{d}}+\lambda V_{\omega })\in \mathcal{B}(\ell ^{2}(%
\mathfrak{L}))$ implies from (\ref{correlation operator}) that
\begin{equation*}
C_{t+i\alpha }^{(\omega )}(\mathbf{x})=\int F_{\alpha }^{\beta }(\varkappa )%
\mathrm{e}^{-it\varkappa }\mathrm{d}\kappa _{\mathbf{x}}^{(\omega )}\left(
\varkappa \right)
\end{equation*}%
with $\kappa _{\mathbf{x}}^{(\omega )}\equiv \kappa _{\mathbf{x}}^{(\omega
,\lambda )}$ being the spectral measure of $(\Delta _{\mathrm{d}}+\lambda
V_{\omega })$ w.r.t. $\mathfrak{e}_{x^{(1)}},\mathfrak{e}_{x^{(2)}}\in \ell
^{2}(\mathfrak{L})$. Note that $F_{\alpha }^{\beta }$ (\ref{Schwartz
function F}) is a Schwartz function for all $\beta \in \mathbb{R}^{+}$ and $%
\alpha \in (0,\beta )$. Therefore, its Fourier transform $\hat{F}_{\alpha
}^{\beta }$ is again a Schwartz function. Moreover, for all $\beta >0$ and $%
\upsilon \in (0,\beta /2)$, there is a finite constant $D_{\beta ,\upsilon
}\in \mathbb{R}^{+}$ such that, for any $\alpha \in \lbrack \upsilon ,\beta
-\upsilon ]$ and all $\nu \in \mathbb{R}$,
\begin{equation}
\left\vert \hat{F}_{\alpha }^{\beta }\left( \nu \right) \right\vert \leq
\frac{D_{\beta ,\upsilon }}{1+\nu ^{2}}\ .  \label{inequality=}
\end{equation}%
In particular, for any $\varepsilon \in \mathbb{R}^{+}$, there is $M_{\beta
,\upsilon ,\varepsilon }\in \mathbb{R}^{+}$ such that%
\begin{equation}
\int\nolimits_{|\nu |\geq M_{\beta ,\upsilon ,\varepsilon }}\left\vert \hat{F%
}_{\alpha }^{\beta }\left( \nu \right) \right\vert \mathrm{d}\nu \leq
\int\nolimits_{|\nu |\geq M_{\beta ,\upsilon ,\varepsilon }}\frac{D_{\beta
,\upsilon }}{1+\nu ^{2}}\mathrm{d}\nu <\varepsilon \ .  \label{inequality}
\end{equation}%
For any $\varepsilon ,\beta \in \mathbb{R}^{+}$, $\upsilon \in (0,\beta /2)$
and $\alpha \in \lbrack \upsilon ,\beta -\upsilon ]$, we then decompose the
function $F_{\alpha }^{\beta }$ into two orthogonal functions of $\varkappa
\in {\mathbb{R}}$:%
\begin{eqnarray}
f_{\upsilon ,\varepsilon ,\alpha }^{\beta }\left( \varkappa \right)
&:=&\int\nolimits_{|\nu |\geq M_{\beta ,\upsilon ,\varepsilon }}\hat{F}_{\alpha }^{\beta }\left( \nu \right) \mathrm{e}^{i\nu \varkappa }\mathrm{d}\nu \ ,  \label{definition de f} \\
g_{\upsilon ,\varepsilon ,\alpha }^{\beta }\left( \varkappa \right)
&:=&\int\nolimits_{|\nu |<M_{\beta ,\upsilon ,\varepsilon }}\hat{F}_{\alpha
}^{\beta }\left( \nu \right) \mathrm{e}^{i\nu \varkappa }\mathrm{d}\nu \ .
\label{definition de g}
\end{eqnarray}%
Now, for any $\varepsilon ,\beta \in \mathbb{R}^{+}$, $\omega \in \Omega $, $%
\lambda \in \mathbb{R}_{0}^{+}$, $t\in {\mathbb{R}}$, $\upsilon \in (0,\beta
/2)$ and $\alpha \in \lbrack \upsilon ,\beta -\upsilon ]$, define the
bounded operators $A_{t+i\alpha ,\upsilon ,\varepsilon }^{(\omega )}\equiv
A_{t+i\alpha ,\upsilon ,\varepsilon }^{(\beta ,\omega ,\lambda )}$ and $%
B_{t+i\alpha ,\upsilon ,\varepsilon }^{(\omega )}\equiv B_{t+i\alpha
,\upsilon ,\varepsilon }^{(\beta ,\omega ,\lambda )}$ acting on $\ell ^{2}(%
\mathfrak{L})$ by their kernels
\begin{eqnarray}
\langle \mathfrak{e}_{x^{(2)}},A_{t+i\alpha ,\upsilon ,\varepsilon
}^{(\omega )}\mathfrak{e}_{x^{(1)}}\rangle &\equiv &A_{t+i\alpha ,\upsilon
,\varepsilon }^{(\omega )}\left( \mathbf{x}\right) :=\int f_{\upsilon
,\varepsilon ,\alpha }^{\beta }\left( \varkappa \right) \mathrm{e}%
^{-it\varkappa }\mathrm{d}\kappa _{\mathbf{x}}^{(\omega )}\left( \varkappa
\right)  \label{definition de A} \\
\langle \mathfrak{e}_{x^{(2)}},B_{t+i\alpha ,\upsilon ,\varepsilon
}^{(\omega )}\mathfrak{e}_{x^{(1)}}\rangle &\equiv &B_{t+i\alpha ,\upsilon
,\varepsilon }^{(\omega )}\left( \mathbf{x}\right) :=\int g_{\upsilon
,\varepsilon ,\alpha }^{\beta }\left( \varkappa \right) \mathrm{e}%
^{-it\varkappa }\mathrm{d}\kappa _{\mathbf{x}}^{(\omega )}\left( \varkappa
\right)  \label{definition de B}
\end{eqnarray}%
for any $\mathbf{x}\in \mathfrak{L}^{2}$. Indeed, by construction (cf. (\ref%
{inequality})--(\ref{definition de f})),%
\begin{equation*}
\left\Vert A_{t+i\alpha ,\upsilon ,\varepsilon }^{(\omega )}\right\Vert _{%
\mathrm{op}}\leq \varepsilon \qquad \text{and}\qquad \left\Vert B_{t+i\alpha
,\upsilon ,\varepsilon }^{(\omega )}\right\Vert _{\mathrm{op}}\leq \pi
D_{\beta ,\upsilon }
\end{equation*}%
for all $\varepsilon ,\beta \in \mathbb{R}^{+}$, $\omega \in \Omega $, $%
\lambda \in \mathbb{R}_{0}^{+}$, $t\in {\mathbb{R}}$, $\upsilon \in (0,\beta
/2)$ and $\alpha \in \lbrack \upsilon ,\beta -\upsilon ]$. By (\ref%
{inequality=}), recall that $D_{\beta ,\upsilon }$ only depends on $\beta $
and $\upsilon \in (0,\beta /2)$.

\noindent (ii) We first invoke Fubini's theorem to observe from (\ref%
{definition de g})--(\ref{definition de B}) that%
\begin{eqnarray}
B_{t+i\alpha ,\upsilon ,\varepsilon }^{(\omega )}\left( \mathbf{x}\right)
&=&\int\nolimits_{|\nu |<M_{\beta ,\upsilon ,\varepsilon }}\mathrm{d}\nu \
\hat{F}_{\alpha }^{\beta }\left( \nu \right) \int \mathrm{d}\kappa _{\mathbf{%
x}}^{(\omega )}\left( \varkappa \right) \mathrm{e}^{-i\varkappa \left( t-\nu
\right) }  \notag \\
&=&\int\nolimits_{|\nu |<M_{\beta ,\upsilon ,\varepsilon }}\mathrm{d}\nu \
\hat{F}_{\alpha }^{\beta }\left( \nu \right) \langle \mathfrak{e}_{x^{(2)}},%
\mathrm{e}^{-i\left( t-\nu \right) \left( \Delta _{\mathrm{d}}+\lambda
V_{\omega }\right) }\mathfrak{e}_{x^{(1)}}\rangle  \label{B deux}
\end{eqnarray}%
for all $\mathbf{x}\in \mathfrak{L}^{2}$. If $T\in \mathbb{R}^{+}$, $t\in
\lbrack -T,T]$ and $|\nu |<M_{\beta ,\upsilon ,\varepsilon }$, then
\begin{equation*}
(t-\nu )\in \lbrack -M_{\beta ,\upsilon ,\varepsilon }-T,M_{\beta ,\upsilon
,\varepsilon }+T]\ .
\end{equation*}%
Thus, by \cite[Lemma 4.2]{OhmI} with $\epsilon =d^{2}-d+1$ ($d\in \mathbb{N}$%
), for any $\varepsilon ,\beta ,T\in \mathbb{R}^{+}$ and $\upsilon \in
(0,\beta /2)$, there is a finite constant $\tilde{D}_{\beta ,\upsilon
,\varepsilon ,T,d}\in \mathbb{R}^{+}$ such that%
\begin{equation}
\left\vert \langle \mathfrak{e}_{x^{(2)}},\mathrm{e}^{-i\left( t-\nu \right)
\left( \Delta _{\mathrm{d}}+\lambda V_{\omega }\right) }\mathfrak{e}%
_{x^{(1)}}\rangle \right\vert \leq \frac{\tilde{D}_{\beta ,\upsilon
,\varepsilon ,T,d}}{1+|x^{(1)}-x^{(2)}|^{d^{2}+1}}
\label{estimate ohm idiot}
\end{equation}%
for all $\omega \in \Omega $, $\lambda \in \mathbb{R}_{0}^{+}$, $t\in
\lbrack -T,T\mathbb{]}$, $\nu \in \lbrack -M_{\beta ,\upsilon ,\varepsilon
},M_{\beta ,\upsilon ,\varepsilon }]$ and $\mathbf{x}\in \mathfrak{L}^{2}$.
We now combine this last inequality with (\ref{inequality=}) and (\ref{B
deux}) to derive the bound%
\begin{equation*}
\left\vert B_{t+i\alpha ,\upsilon ,\varepsilon }^{(\omega )}\left( \mathbf{x}%
\right) \right\vert \leq \frac{\pi D_{\beta ,\upsilon }\tilde{D}_{\beta
,\upsilon ,\varepsilon ,T,d}}{1+|x^{(1)}-x^{(2)}|^{d^{2}+1}}\ ,\qquad
\mathbf{x}\in \mathfrak{L}^{2}\ .
\end{equation*}

\noindent (iii) By Equation (\ref{B deux}), note that
\begin{equation}
\partial _{t}B_{t+i\alpha ,\upsilon ,\varepsilon }^{(\omega )}\left( \mathbf{%
x}\right) =-i\int\nolimits_{|\nu |<M_{\beta ,\upsilon ,\varepsilon }}\mathrm{%
d}\nu \ \hat{F}_{\alpha }^{\beta }\left( \nu \right) \langle \left( \Delta _{%
\mathrm{d}}+\lambda V_{\omega }\right) \mathfrak{e}_{x^{(2)}},\mathrm{e}%
^{-i\left( t-\nu \right) \left( \Delta _{\mathrm{d}}+\lambda V_{\omega
}\right) }\mathfrak{e}_{x^{(1)}}\rangle  \label{B deuxbis}
\end{equation}%
for all $\varepsilon ,\beta \in \mathbb{R}^{+}$, $\omega \in \Omega $, $%
\lambda \in \mathbb{R}_{0}^{+}$, $t\in {\mathbb{R}}$, $\upsilon \in (0,\beta
/2)$, $\alpha \in \lbrack \upsilon ,\beta -\upsilon ]$ and $\mathbf{x}\in
\mathfrak{L}^{2}$. Since, for any $\mathbf{x}\in \mathfrak{L}^{2}$,%
\begin{eqnarray*}
&&\langle \left( \Delta _{\mathrm{d}}+\lambda V_{\omega }\right) \mathfrak{e}%
_{x^{(2)}},\mathrm{e}^{-i\left( t-\nu \right) \left( \Delta _{\mathrm{d}%
}+\lambda V_{\omega }\right) }\mathfrak{e}_{x^{(1)}}\rangle \\
&=&-\sum\limits_{z\in \mathfrak{L},|z|=1}\langle \mathfrak{e}_{x^{(2)}+z},%
\mathrm{e}^{-i\left( t-\nu \right) \left( \Delta _{\mathrm{d}}+\lambda
V_{\omega }\right) }\mathfrak{e}_{x^{(1)}}\rangle \\
&&+(\lambda V_{\omega }(x^{(2)})+2d)\langle \mathfrak{e}_{x^{(2)}},\mathrm{e}%
^{-i\left( t-\nu \right) \left( \Delta _{\mathrm{d}}+\lambda V_{\omega
}\right) }\mathfrak{e}_{x^{(1)}}\rangle \ ,
\end{eqnarray*}%
we use again (\ref{inequality=}) and (\ref{estimate ohm idiot}) together
with (\ref{B deuxbis}) and $|V_{\omega }\left( x\right) |\leq 1$ to arrive
at the third assertion.

\noindent (iv) Take any sequence $\{\omega _{n}\}_{n=1}^{\infty }\subset
\Omega $ converging to $\omega _{\infty }\in \Omega $ w.r.t. the topology of
which $\mathfrak{A}_{\Omega }$ is the Borel $\sigma $--algebra. This means
that the functions $\omega _{n}:\mathfrak{L}\rightarrow \lbrack -1,1]$, $%
n\in \mathbb{N}$, converges pointwise to $\omega _{\infty }$, as $%
n\rightarrow \infty $. By Lebesgue's dominated convergence theorem, it
follows that the sequence $\{\Delta _{\mathrm{d}}+\lambda V_{\omega
_{n}}\}_{n=1}^{\infty }$ of uniformly bounded operators at fixed $\lambda
\in \mathbb{R}_{0}^{+}$ converges strongly to $\Delta _{\mathrm{d}}+\lambda
V_{\omega _{\infty }}$. By \cite[Chap. VIII, Problem 28 and Theorem VIII.20
(b)]{ReedSimonI}, for any bounded and continuous function $\varphi $ on $%
\mathbb{R}$, the sequence $\{\varphi (\Delta _{\mathrm{d}}+\lambda V_{\omega
_{n}})\}_{n=1}^{\infty }$ converges also strongly to $\varphi (\Delta _{%
\mathrm{d}}+\lambda V_{\omega _{\infty }})$.

Now, similar to Equation (\ref{correlation operator}), Definitions (\ref%
{definition de A})--(\ref{definition de B}) can be rewritten as
\begin{eqnarray}
A_{t+i\alpha ,\upsilon ,\varepsilon }^{(\omega )}\left( \mathbf{x}\right)
&=&\langle \mathfrak{e}_{x^{(2)}},\mathrm{e}^{-it\left( \Delta _{\mathrm{d}%
}+\lambda V_{\omega }\right) }f_{\upsilon ,\varepsilon ,\alpha }^{\beta
}\left( \Delta _{\mathrm{d}}+\lambda V_{\omega }\right) \mathfrak{e}%
_{x^{(1)}}\rangle  \label{definition de Abis} \\
B_{t+i\alpha ,\upsilon ,\varepsilon }^{(\omega )}\left( \mathbf{x}\right)
&=&\langle \mathfrak{e}_{x^{(2)}},\mathrm{e}^{-it\left( \Delta _{\mathrm{d}%
}+\lambda V_{\omega }\right) }g_{\upsilon ,\varepsilon ,\alpha }^{\beta
}\left( \Delta _{\mathrm{d}}+\lambda V_{\omega }\right) \mathfrak{e}%
_{x^{(1)}}\rangle  \label{definition de Bbis}
\end{eqnarray}%
for every $\mathbf{x}\in \mathfrak{L}^{2}$. By (\ref{Schwartz function F})
and (\ref{definition de f})--(\ref{definition de g}), for any $\varepsilon
,\beta \in \mathbb{R}^{+}$, $\omega \in \Omega $, $\lambda \in \mathbb{R}%
_{0}^{+}$, $t\in {\mathbb{R}}$, $\upsilon \in (0,\beta /2)$ and $\alpha \in
\lbrack \upsilon ,\beta -\upsilon ]$, $F_{\alpha }^{\beta }$, $f_{\upsilon
,\varepsilon ,\alpha }^{\beta }$ and $g_{\upsilon ,\varepsilon ,\alpha
}^{\beta }$ are bounded and continuous function on $\mathbb{R}$. Therefore,
for every $\mathbf{x}\in \mathfrak{L}^{2}$ and as $n\rightarrow \infty $,
the correlation functions $A_{t+i\alpha ,\upsilon ,\varepsilon }^{(\omega
_{n})}\left( \mathbf{x}\right) $, $B_{t+i\alpha ,\upsilon ,\varepsilon
}^{(\omega _{n})}\left( \mathbf{x}\right) $ and $C_{t+i\alpha }^{(\omega
_{n})}\left( \mathbf{x}\right) $ converges to $A_{t+i\alpha ,\upsilon
,\varepsilon }^{(\omega _{\infty })}\left( \mathbf{x}\right) $, $%
B_{t+i\alpha ,\upsilon ,\varepsilon }^{(\omega _{\infty })}\left( \mathbf{x}%
\right) $ and $C_{t+i\alpha }^{(\omega _{\infty })}\left( \mathbf{x}\right) $%
, respectively.
\end{proof}

Better estimates on complex--time two--point correlation functions $%
C_{t+i\alpha }^{(\omega )}$ can certainly be obtained by using that the
spectrum of the self--adjoint operator $(\Delta _{\mathrm{d}}+\lambda
V_{\omega })$ belongs to some ($\lambda $--dependant) compact set. This
property is however not used in Theorem \ref{decay bound theorem} to get
bounds (i)--(ii) that do not depend on $\lambda \in \mathbb{R}_{0}^{+}$.
Note that we only need here the measurability w.r.t. the $\sigma $--algebra $%
\mathfrak{A}_{\Omega }$ of all operators of Theorem \ref{decay bound theorem}%
, which is a direct consequence of their continuity, see Theorem \ref{decay
bound theorem} (iv).

\subsection{Ergodic Theorem for some Non--Regular Sequences\label{Sec
Akcoglu--Krengel}}

The second important ingredient we use in our proofs is the Akcoglu--Krengel
ergodic theorem. We present it for completeness. This result is rather
standard and can be found in textbooks. Therefore, we keep the exposition as
short as possible and only concentrate on results used in this paper. For
more details, we recommend \cite{birkoff}. It is important to note, however,
that Theorem \ref{Ackoglu--Krengel ergodic theorem II} is a relatively
simple extension of \cite[Theorem VI.1.7, Remark VI.1.8]{birkoff} to
non--regular sequences.

We restrict ourselves to \emph{additive }processes associated with the
probability space $(\Omega ,\mathfrak{A}_{\Omega },\mathfrak{a}_{\Omega })$
defined in Section \ref{Section impurities}, even if the Akcoglu--Krengel
ergodic theorem holds for superadditive or subadditive ones (cf. \cite[%
Definition VI.1.6]{birkoff}).

\begin{definition}[Additive process associated with random variables]
\label{Additive process}\mbox{ }\newline
$\{\mathfrak{F}^{(\omega )}\left( \Lambda \right) \}_{\Lambda \in \mathcal{P}%
_{f}(\mathfrak{L})}$ is an additive process if:\newline
\noindent (i) the map $\omega \mapsto \mathfrak{F}^{(\omega )}\left( \Lambda
\right) $ is bounded and measurable w.r.t. the $\sigma $--algebra $\mathfrak{%
A}_{\Omega }$ for any $\Lambda \in \mathcal{P}_{f}(\mathfrak{L})$.\newline
\noindent (ii) For all disjoint $\Lambda _{1},\Lambda _{2}\in \mathcal{P}%
_{f}(\mathfrak{L})$,
\begin{equation*}
\mathfrak{F}^{(\omega )}\left( \Lambda _{1}\cup \Lambda _{2}\right) =%
\mathfrak{F}^{(\omega )}\left( \Lambda _{1}\right) +\mathfrak{F}^{(\omega
)}\left( \Lambda _{2}\right) \ ,\qquad \omega \in \Omega \ .
\end{equation*}
\noindent (iii) For all $\Lambda \in \mathcal{P}_{f}(\mathfrak{L})$ and any
space shift $x \in \mathfrak{L}$,
\begin{equation}
\mathbb{E}\left[ \mathfrak{F}^{(\omega )}\left( \Lambda \right) \right] =
\mathbb{E}\left[ \mathfrak{F}^{(\omega )}\left(x+ \Lambda \right) \right] \ .
\label{equality a la con}
\end{equation}
\end{definition}

The random potentials used here are independently and identically
distributed (i.i.d.), see Equation (\ref{probability measure}), and (\ref%
{equality a la con}) will trivially hold for the processes we consider
below. Recall that $\mathbb{E}[\ \cdot \ ]$ is the expectation value
associated with the probability measure $\mathfrak{a}_{\Omega }$. Note
further that additive processes $\{\mathfrak{F}^{(\omega )}\left( \Lambda
\right) \}_{\Lambda \in \mathcal{P}_{f}(\mathfrak{L})}$ as defined in
Definition \ref{Additive process} are superadditive and subadditive in the
sense of \cite[Definition VI.1.6]{birkoff}.

We now define \emph{regular }sequences (cf. \cite[Remark VI.1.8]{birkoff})
as follows:

\begin{definition}[Regular sequences]
\label{regular sequences}\mbox{ }\newline
The family $\{\Lambda ^{(l)}\}_{l\in \mathbb{R}^+}\subset \mathcal{P}_{f}(%
\mathfrak{L})$ of non--decreasing (possibly non--cubic) boxes of $\mathfrak{L%
}$ is a regular sequence if there is a finite constant $D\in \mathbb{R}^{+}$
and another non--decreasing sequence of boxes $\{\Lambda _{l}\}_{l\geq 1}$,
given by (\ref{eq:def lambda n}), such that $\mathfrak{L}=\cup _{l\geq
1}\Lambda _{l}$, $\Lambda ^{(l)}\subset \Lambda _{l}$ and $0<|\Lambda
_{l}|\leq D|\Lambda ^{(l)}|$ for all $l\geq 1$.
\end{definition}

Then, the form of Akcoglu--Krengel ergodic theorem we use in the sequel is
the lattice version of \cite[Theorem VI.1.7, Remark VI.1.8]{birkoff} for
additive processes associated with the probability space $(\Omega ,\mathfrak{%
A}_{\Omega },\mathfrak{a}_{\Omega })$:

\begin{satz}[Akcoglu--Krengel ergodic theorem]
\label{Ackoglu--Krengel ergodic theorem II copy(1)}\mbox{
}\newline
Let $\{\mathfrak{F}^{(\omega )}\left( \Lambda \right) \}_{\Lambda \in
\mathcal{P}_{f}(\mathfrak{L})}$ be an additive process. Then, for any
regular sequence $\{\Lambda ^{(l)}\}_{l\in \mathbb{R}^{+}}\subset \mathcal{P}%
_{f}(\mathfrak{L})$, there is a measurable subset $\tilde{\Omega}\subset
\Omega $ of full measure such that, for all $\tilde{\omega}\in \tilde{\Omega}
$,%
\begin{equation*}
\underset{l\rightarrow \infty }{\lim }\left\{ \left\vert \Lambda
^{(l)}\right\vert ^{-1}\mathfrak{F}^{(\tilde{\omega})}\left( \Lambda
^{(l)}\right) \right\} =\mathbb{E}\left[ \mathfrak{F}^{(\omega )}\left(
\left\{ 0\right\} \right) \right] \ .
\end{equation*}
\end{satz}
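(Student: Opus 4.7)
The plan is to reduce the assertion to the multi--parameter pointwise Birkhoff ergodic theorem for the $\mathbb{Z}^{d}$--shift action on the product probability space $(\Omega ,\mathfrak{A}_{\Omega },\mathfrak{a}_{\Omega })$, handling general regular sequences via the uniform volume comparison built into Definition \ref{regular sequences} combined with a dominated maximal ergodic inequality.

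First, I would iterate the additivity property (ii) of Definition \ref{Additive process} to obtain
$$\mathfrak{F}^{(\omega )}(\Lambda ) = \sum_{x\in \Lambda }\mathfrak{F}^{(\omega )}(\{x\})\ ,\qquad \Lambda \in \mathcal{P}_{f}(\mathfrak{L})\ ,$$
so the claim reduces to the almost--sure convergence of the Ces\`{a}ro averages $|\Lambda ^{(l)}|^{-1}\sum_{x\in \Lambda ^{(l)}}g_{x}(\omega )$ of the bounded measurable random variables $g_{x}(\omega ):=\mathfrak{F}^{(\omega )}(\{x\})$ towards the common expectation $\mathbb{E}[g_{x}]=\mathbb{E}[g_{0}]$ given by property (iii). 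I would then introduce the measure--preserving shift action $T_{x}:\Omega \to \Omega $ defined by $(T_{x}\omega )(y):=\omega (y-x)$, which is ergodic on $(\Omega ,\mathfrak{A}_{\Omega },\mathfrak{a}_{\Omega })$ because $\mathfrak{a}_{\Omega }$ is an i.i.d.\ product measure (Kolmogorov's zero--one law). In the concrete applications of the theorem in this paper, the additive processes are built from translation--covariant quantities, so that $g_{x}(\omega )=g_{0}(T_{-x}\omega )$; I would either assume this covariance as the natural refinement of Definition \ref{Additive process} or verify it case--by--case when applying the theorem.

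Second, I would invoke the multi--parameter Birkhoff ergodic theorem on the ergodic dynamical system $(\Omega ,\mathfrak{A}_{\Omega },\mathfrak{a}_{\Omega },T)$ with the bounded (hence $L^{1}$) observable $g_{0}$, along the cubic F{\o}lner sequence $\{\Lambda _{l}\}$ supplied by Definition \ref{regular sequences}. This produces a measurable subset $\tilde{\Omega}_{0}\subset \Omega $ of full measure on which
$$\lim_{l\to \infty }\frac{1}{|\Lambda _{l}|}\sum_{x\in \Lambda _{l}}g_{0}(T_{-x}\omega )=\mathbb{E}[g_{0}]\ .$$

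Third, I would transfer the limit from $\Lambda _{l}$ to the possibly non--cubic $\Lambda ^{(l)}\subset \Lambda _{l}$ by writing the Ces\`{a}ro average over $\Lambda ^{(l)}$ as a linear combination of the average over $\Lambda _{l}$ and the residual average over $\Lambda _{l}\setminus \Lambda ^{(l)}$, exploiting the uniform bound $|\Lambda _{l}|/|\Lambda ^{(l)}|\leq D$ of Definition \ref{regular sequences}. The residual term is controlled by a Wiener--type dominated maximal ergodic inequality for the $\mathbb{Z}^{d}$--action together with the boundedness of $g_{0}$, yielding a second full--measure set $\tilde{\Omega}_{1}\subset \Omega $; on $\tilde{\Omega}:=\tilde{\Omega}_{0}\cap \tilde{\Omega}_{1}$ the desired convergence holds. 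The main obstacle is precisely this last step: a regular sequence $\{\Lambda ^{(l)}\}$ in the sense of Definition \ref{regular sequences} need not itself be F{\o}lner (it can miss a positive proportion of any enclosing cube), so a direct pointwise Birkhoff theorem along $\Lambda ^{(l)}$ is not available, and one must leverage the volume comparison to a cubic sequence together with a maximal ergodic inequality rather than any straightforward F{\o}lner approximation.
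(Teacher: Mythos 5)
The paper does not actually prove this statement: it imports it verbatim from \cite[Theorem VI.1.7, Remark VI.1.8]{birkoff}, so your sketch is being compared to a citation rather than to an argument. Your overall strategy --- decompose $\mathfrak{F}^{(\omega )}(\Lambda )=\sum_{x\in \Lambda }g_{x}(\omega )$ with $g_{x}(\omega ):=\mathfrak{F}^{(\omega )}(\{x\})$ and reduce to a pointwise ergodic theorem for the $\mathbb{Z}^{d}$--shift --- is the natural one for an \emph{additive} process, and your observation that Definition \ref{Additive process} (iii) only fixes expectations, whereas the ergodic theorem genuinely needs shift covariance $g_{x}(\omega )=g_{0}(T_{-x}\omega )$, is a fair criticism of the definition as written (without covariance the statement is false: take $g_{x}=h$ for every $x$ with $h$ bounded and non--constant). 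In the paper's applications covariance does hold, via the uniqueness of the KMS state and the covariance of $\Delta _{\mathrm{d}}+\lambda V_{\omega }$ under simultaneous translation of the lattice and of $\omega $.

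The genuine gap is in your third step. Knowing that $|\Lambda _{l}|^{-1}\sum_{x\in \Lambda _{l}}g_{0}(T_{-x}\omega )\rightarrow \mathbb{E}[g_{0}]$ and possessing a maximal inequality does \emph{not} let you pass to the sub--box $\Lambda ^{(l)}$ by ``subtracting the residual'': the average over $\Lambda _{l}\setminus \Lambda ^{(l)}$ is only bounded by $\Vert g_{0}\Vert _{\infty }$, not convergent, and since $\Lambda ^{(l)}$ may occupy a fixed fraction $1/D$ of $\Lambda _{l}$, the term you discard is of order one; the argument as stated proves nothing about $|\Lambda ^{(l)}|^{-1}\sum_{x\in \Lambda ^{(l)}}g_{0}(T_{-x}\omega )$. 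What the volume comparison of Definition \ref{regular sequences} actually buys is the Tempelman domination $|\Lambda ^{(l)}|^{-1}\sum_{x\in \Lambda ^{(l)}}|g(T_{-x}\omega )|\leq D\,|\Lambda _{l}|^{-1}\sum_{x\in \Lambda _{l}}|g(T_{-x}\omega )|$ (by positivity and $\Lambda ^{(l)}\subset \Lambda _{l}$), i.e.\ a weak--$(1,1)$ maximal inequality \emph{along} $\{\Lambda ^{(l)}\}$ itself; combined with convergence on the dense class of constants plus bounded coboundaries --- which does hold along $\{\Lambda ^{(l)}\}$ --- the Banach principle gives the almost sure limit directly, with no transfer from $\Lambda _{l}$. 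Relatedly, your closing diagnosis is off the mark: under Definition \ref{regular sequences} the $\Lambda ^{(l)}$ are \emph{boxes} of cardinality at least $|\Lambda _{l}|/D$ inside $\Lambda _{l}$, which forces every side length to grow at least like $l/D$, so they \emph{are} a F{\o}lner (indeed Tempelman--regular) sequence and a direct pointwise theorem along them is available; the real issue is that F{\o}lner alone never yields pointwise convergence, and it is the maximal inequality run along $\{\Lambda ^{(l)}\}$, not a comparison of limits, that closes the proof.
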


The Ackoglu--Krengel (superadditive) ergodic theorem, cornerstone of ergodic
theory, generalizes the celebrated Birkhoff additive ergodic theorem.
Unfortunately, this theorem, in the above form, is not sufficiently general
to be applied in our proofs. Indeed, Theorem \ref{Ackoglu--Krengel ergodic
theorem II copy(1)} requires regular sequences. This is too restrictive
w.r.t. our applications because we have to evaluate space--inhomogeneous
limits of the form
\begin{equation}
\underset{l\rightarrow \infty }{\lim }\frac{1}{\left\vert \Lambda
_{l}\right\vert }\underset{x\in \Lambda _{l}}{\sum }\mathfrak{F}^{(\omega
)}\left( \left\{ x\right\} \right) f\left( l^{-1}x\right)  \label{limit easy}
\end{equation}%
with $f\in C_{0}\left( \mathbb{R}^{d},\mathbb{R}\right) $ and $\{\Lambda
_{l}\}_{l\geq 1}$ defined by (\ref{eq:def lambda n}). See for instance
Section \ref{Section AC--Ohm's Law copy(1)}.

To this end, we divide the compact support of $f$, say for simplicity $%
[-1/2,1/2]^{d}$, in $n^{d}$ boxes $\{b_{j}\}_{j\in \mathcal{D}_{n}}$ of
side--length $1/n$, where
\begin{equation}
\mathcal{D}_{n}:=\{-\left( n-1\right) /2,-\left( n-3\right) /2,\ldots
,\left( n-3\right) /2,\left( n-1\right) /2\}^{d}\ .  \label{boxes b1}
\end{equation}%
Explicitly, for any $j\in \mathcal{D}_{n}$,%
\begin{equation}
b_{j}:=jn^{-1}+n^{-1}[-1/2,1/2]^{d}\text{\quad and\quad }[-1/2,1/2]^{d}=%
\underset{j\in \mathcal{D}_{n}}{\bigcup }b_{j}\ .  \label{boxes b2}
\end{equation}%
We then need to analyze the limit%
\begin{equation*}
\underset{l\rightarrow \infty }{\lim }\left\vert \mathfrak{L}\cap
(lb_{j})\right\vert ^{-1}\mathfrak{F}^{(\omega )}\left( \mathfrak{L}\cap
(lb_{j})\right)
\end{equation*}%
for $n\in \mathbb{N}$ and $j\in \mathcal{D}_{n}$. However, $\{\mathfrak{L}%
\cap (lb_{j})\}_{l\in \mathbb{N}}$ is \emph{non--regular}, in general. For
instance, if $n$ is an odd integer then this situation appears for all $j\in
\mathcal{D}_{n}\backslash \{(0,\ldots ,0)\}$ because $\{\mathfrak{L}\cap
(lb_{j})\}_{l\in \mathbb{N}}$ is not a non--decreasing sequence in this
case. To overcome this difficulty, we proof the following extension of
Theorem \ref{Ackoglu--Krengel ergodic theorem II copy(1)}:

\begin{satz}[Ergodic theorem for some non--regular sequences]
\label{Ackoglu--Krengel ergodic theorem II}\mbox{
}\newline
Let $\{\mathfrak{F}^{(\omega )}\left( \Lambda \right) \}_{\Lambda \in
\mathcal{P}_{f}(\mathfrak{L})}$ be an additive process. Then, there is a
measurable subset $\tilde{\Omega}\subset \Omega $ of full measure such that,
for all $\tilde{\omega}\in \tilde{\Omega}$, $n\in \mathbb{N}$, and $j\in
\mathcal{D}_{n}$,
\begin{equation*}
\underset{l\rightarrow \infty }{\lim }\left\{ \left\vert \mathfrak{L}\cap
(lb_{j})\right\vert ^{-1}\mathfrak{F}^{(\tilde{\omega})}\left( \mathfrak{L}%
\cap (lb_{j})\right) \right\} =\mathbb{E}\left[ \mathfrak{F}^{(\omega
)}\left( \left\{ 0\right\} \right) \right] \ .
\end{equation*}
\end{satz}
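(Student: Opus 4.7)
The plan is to reduce the theorem to the standard Akcoglu--Krengel theorem (Theorem \ref{Ackoglu--Krengel ergodic theorem II copy(1)}) by decomposing each non-regular sequence $\{\mathfrak{L} \cap (lb_j)\}_{l \in \mathbb{R}^+}$ into a finite signed combination of corner-anchored sequences that \emph{are} regular. Fix $n \in \mathbb{N}$ and $j = (j_1, \ldots, j_d) \in \mathcal{D}_n$ and write $lb_j = \prod_{k=1}^d [l a_k^-, l a_k^+]$ with $a_k^\pm := j_k/n \pm 1/(2n)$. I would first express each one-dimensional factor as a signed combination of ``corner'' intervals of the form $[0, lc]$ or $[-lc, 0]$ with $c > 0$ --- e.g.\ $[la_k^-, la_k^+] = [0, la_k^+] \setminus [0, la_k^-)$ when $a_k^- > 0$, or $[la_k^-, la_k^+] = [la_k^-, 0] \cup [0, la_k^+]$ (disjoint up to the singleton $\{0\}$) when $a_k^- < 0 < a_k^+$, and analogously when both $a_k^\pm$ are negative. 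Taking products over $k$ and applying the standard inclusion--exclusion identity, additivity of $\mathfrak{F}^{(\omega)}$ (Definition \ref{Additive process}(ii)) yields
\begin{equation*}
\mathfrak{F}^{(\omega)}(\mathfrak{L} \cap (lb_j)) = \sum_{S} \epsilon_S\, \mathfrak{F}^{(\omega)}(R_S^{(l)})\ ,
\end{equation*}
where the index $S$ runs over a finite set (of at most $2^d$ values), $\epsilon_S \in \{-1, +1\}$, and each $R_S^{(l)}$ is a single corner box of the form $\mathfrak{L} \cap \prod_k [0, l c_k^S]$, up to coordinate-wise reflections across the axes.

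Each sequence $\{R_S^{(l)}\}_{l \in \mathbb{R}^+}$ is non-decreasing in $l$ and fits inside the cubic box $\Lambda_{l \max_k |c_k^S|}$, whose cardinality is bounded by a constant (depending only on $n$ and $j$) times $|R_S^{(l)}|$; hence, it is regular in the sense of Definition \ref{regular sequences}. Theorem \ref{Ackoglu--Krengel ergodic theorem II copy(1)} therefore supplies a measurable full-measure set $\tilde{\Omega}_S^{(n,j)} \subset \Omega$ on which $|R_S^{(l)}|^{-1} \mathfrak{F}^{(\omega)}(R_S^{(l)}) \to \mathbb{E}[\mathfrak{F}^{(\omega)}(\{0\})]$ as $l \to \infty$. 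Setting $\tilde{\Omega} := \bigcap_{n, j, S} \tilde{\Omega}_S^{(n,j)}$ --- a countable intersection of full-measure sets --- preserves full measure and yields a single exceptional set valid uniformly in $(n, j)$.

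Finally, for $\omega \in \tilde{\Omega}$ I would divide the inclusion--exclusion identity by $|\mathfrak{L} \cap (lb_j)|$ and take $l \to \infty$. The cardinality ratios $|R_S^{(l)}|/|\mathfrak{L} \cap (lb_j)|$ converge to the $d$-dimensional Lebesgue volume ratios $\alpha_S := \mathrm{Vol}(R_S^{(1)})/\mathrm{Vol}(b_j)$, since in each case the lattice-point counts differ from the corresponding Euclidean volumes by $O(l^{d-1})$ boundary contributions while the volumes themselves grow as $l^d$. Applying the same inclusion--exclusion identity to Lebesgue measure gives $\sum_S \epsilon_S \alpha_S = 1$, so
\begin{equation*}
\lim_{l \to \infty} |\mathfrak{L} \cap (lb_j)|^{-1}\, \mathfrak{F}^{(\omega)}(\mathfrak{L} \cap (lb_j)) = \Bigl(\sum_S \epsilon_S \alpha_S\Bigr)\, \mathbb{E}\bigl[\mathfrak{F}^{(\omega)}(\{0\})\bigr] = \mathbb{E}\bigl[\mathfrak{F}^{(\omega)}(\{0\})\bigr]\ ,
\end{equation*}
which is the claim. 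The main obstacle is not conceptual but organizational: the corner decomposition must be set up carefully in each of the finitely many sign/boundary configurations of $(j_1, \ldots, j_d)$ --- including the degenerate case $j_k = 0$, which can only occur for $n$ odd and which forces a split at the origin rather than a difference --- and one must verify that the cardinality-versus-volume discrepancy is genuinely negligible in the $l^d$ normalization. Both are routine geometric bookkeeping once the octant case-analysis is laid out.
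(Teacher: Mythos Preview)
Your proposal is correct and follows the same overall strategy as the paper: reduce the non--regular sequence $\{\mathfrak{L}\cap(lb_j)\}_l$ by additivity to a finite signed combination of regular sequences, apply Theorem~\ref{Ackoglu--Krengel ergodic theorem II copy(1)} to each piece, and take the countable intersection over all $(n,j)$ and pieces to obtain a single full--measure set.

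The only difference is the choice of decomposition. The paper first encloses $lb_j$ in the centered box $\Lambda^{(l,j)}$ of (\ref{regular1}), writes $\mathfrak{F}^{(\omega)}(\mathfrak{L}\cap lb_j)=\mathfrak{F}^{(\omega)}(\Lambda^{(l,j)})-\mathfrak{F}^{(\omega)}(\tilde\Lambda^{(l,j)})$, and then handles the complement $\tilde\Lambda^{(l,j)}$ by a further inclusion--exclusion into $d$ origin--containing boxes. You instead go directly to corner boxes anchored at the origin via a one--step product inclusion--exclusion. Your route is arguably more systematic---the case analysis is uniform in the sign pattern of $j$ and you never need the intermediate complement---while the paper's route keeps the number of pieces smaller (two at the first stage, then at most $d$ more). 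Both produce at most $2^d$ regular sequences per $(n,j)$, and the recombination step (your $\sum_S\epsilon_S\alpha_S=1$) is exactly the volume identity the paper uses implicitly when matching cardinalities of $\Lambda^{(l,j)}$, $\tilde\Lambda^{(l,j)}$, and $\mathfrak{L}\cap lb_j$. No substantive gap.
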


\begin{proof}
Let $n\in \mathbb{N}$. By Theorem \ref{Ackoglu--Krengel ergodic theorem II
copy(1)}, we can fix w.l.o.g. the parameter $j\equiv (j_{1},\ldots
,j_{d})\in \mathcal{D}_{n}$ such that the family $\{\mathfrak{L}\cap
(lb_{j})\}_{l\in \mathbb{N}}$ is non--regular. Then, we take the sequences $%
\{\Lambda ^{(l,j)}\}_{l\in \mathbb{N}}$ and $\{\tilde{\Lambda}%
^{(l,j)}\}_{l\in \mathbb{N}}$ defined, for any $l\in \mathbb{R}^{+}$, by
\begin{equation}
\Lambda ^{(l,j)}:=\left\{ (x_{1},\ldots ,x_{d})\in \mathfrak{L}\,:\,\forall
k\in \{1,\ldots ,d\},\ \left\vert x_{k}\right\vert \leq
l(|j_{k}|+1/2)n^{-1}+1\right\}  \label{regular1}
\end{equation}%
and
\begin{equation}
\tilde{\Lambda}^{(l,j)}:=\Lambda ^{(l,j)}\backslash \{\mathfrak{L}\cap
(lb_{j})\}\ .  \label{regular2}
\end{equation}%
In particular,
\begin{equation}
\mathfrak{F}^{(\omega )}(lb_{j})=\mathfrak{F}^{(\omega )}(\Lambda ^{(l,j)})-%
\mathfrak{F}^{(\omega )}(\tilde{\Lambda}^{(l,j)})  \label{regular3}
\end{equation}%
because $\{\mathfrak{F}^{(\omega )}\left( \Lambda \right) \}_{\Lambda \in
\mathcal{P}_{f}(\mathfrak{L})}$ is by assumption an additive process. Note
that $\{\Lambda ^{(l,j)}\}_{l\in \mathbb{N}}$ is a regular sequence and thus
\begin{equation*}
\underset{l\rightarrow \infty }{\lim }\left\{ |\Lambda ^{(l,j)}|^{-1}%
\mathfrak{F}^{(\tilde{\omega})}(\Lambda ^{(l,j)})\right\} =\mathbb{E}\left[
\mathfrak{F}^{(\omega )}\left( \left\{ 0\right\} \right) \right]
\end{equation*}%
almost surely, by Theorem \ref{Ackoglu--Krengel ergodic theorem II copy(1)}.
$\{\tilde{\Lambda}^{(l,j)}\}_{l\in \mathbb{N}}$ satisfies Definition \ref%
{regular sequences}, up to the fact that it is not a sequence of boxes.
Indeed, we can obtain $\tilde{\Lambda}^{(l,j)}$ by subtracting from $\Lambda
^{(l,j)}$ $d$ boxes of the form
\begin{equation*}
\Lambda ^{(l,j)}\cap \{x\in \mathfrak{L}\;|\;x_{k}\lessgtr l(j_{k}\pm
1/2)n^{-1}\pm 1\},\quad k=1,\ldots ,d\ ,
\end{equation*}%
containing the origin of $\mathfrak{L}$. By applying Theorem \ref%
{Ackoglu--Krengel ergodic theorem II copy(1)} to the corresponding regular
sequences of boxes we arrive at:
\begin{equation*}
\underset{l\rightarrow \infty }{\lim }\left\{ |\tilde{\Lambda}^{(l,j)}|^{-1}%
\mathfrak{F}^{(\tilde{\omega})}(\tilde{\Lambda}^{(l,j)})\right\} =\mathbb{E}%
\left[ \mathfrak{F}^{(\omega )}\left( \left\{ 0\right\} \right) \right] \ .
\end{equation*}%
We omit the details. Therefore, by Theorem \ref{Ackoglu--Krengel ergodic
theorem II copy(1)} and (\ref{regular3}), there is a measurable subset $\hat{%
\Omega}_{j,n}\equiv \hat{\Omega}_{j,n}^{(\beta ,\lambda )}\subset \Omega $
of full measure such that, for any $\tilde{\omega}\in \hat{\Omega}_{j,n}$,
\begin{equation}
\underset{l\rightarrow \infty }{\lim }\left\{ \left\vert \mathfrak{L}\cap
(lb_{j})\right\vert ^{-1}\mathfrak{F}^{(\tilde{\omega})}\left( \mathfrak{L}%
\cap (lb_{j})\right) \right\} =\mathbb{E}\left[ \mathfrak{F}^{(\omega
)}\left( \left\{ 0\right\} \right) \right] \ .  \label{additive new}
\end{equation}%
Note that we have used here that the intersection of any \emph{countable}
intersection of measurable sets of full measure has full measure. This fact
is used many times in our proofs.

It follows that (\ref{additive new}) holds true for any $n\in \mathbb{N}$, $%
j\in \mathcal{D}_{n}$, and $\tilde{\omega}\in \hat{\Omega}_{j,n}$, while the
measurable subset defined by%
\begin{equation*}
\tilde{\Omega}:=\underset{n\in \mathbb{N}}{\bigcap }\ \underset{j\in
\mathcal{D}_{n}}{\bigcap }\ \hat{\Omega}_{j,n}\subset \Omega
\end{equation*}%
has full measure.
\end{proof}

Note that the notion of a regular sequence is not completely consistent in
the literature. We used here the definition given in \cite[Remark VI.1.8]%
{birkoff} and then generalized Theorem \ref{Ackoglu--Krengel ergodic theorem
II copy(1)} to some, w.r.t. this definition, non--regular sequences in the
above theorem.

Theorem \ref{Ackoglu--Krengel ergodic theorem II} directly yields the limit (%
\ref{limit easy}):

\begin{satz}[Space--inhomogeneous ergodic theorem]
\label{Ackoglu--Krengel ergodic theorem III}\mbox{
}\newline
Let $\{\mathfrak{F}^{(\omega )}\left( \Lambda \right) \}_{\Lambda \in
\mathcal{P}_{f}(\mathfrak{L})}$ be an additive process. Then, for any $f\in
C_{0}\left( \mathbb{R}^{d},\mathbb{R}\right) $, there is a measurable subset
$\tilde{\Omega}\subset \Omega $ of full measure such that, for all $\tilde{%
\omega}\in \tilde{\Omega}$,
\begin{equation*}
\underset{l\rightarrow \infty }{\lim }\frac{1}{\left\vert \Lambda
_{l}\right\vert }\underset{x\in \Lambda _{l}}{\sum }\mathfrak{F}^{(\tilde{%
\omega})}\left( \left\{ x\right\} \right) f\left( l^{-1}x\right) =\mathbb{E}%
\left[ \mathfrak{F}^{(\omega )}\left( \left\{ 0\right\} \right) \right]
\int\nolimits_{\mathbb{R}^{d}}f\left( x\right) \mathrm{d}^{d}x\ .
\end{equation*}
\end{satz}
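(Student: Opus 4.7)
The plan is to approximate the continuous compactly supported function $f$ by step functions associated with the partition $\{b_j\}_{j\in\mathcal{D}_n}$ of $[-1/2,1/2]^d$ introduced in (\ref{boxes b1})--(\ref{boxes b2}), apply Theorem \ref{Ackoglu--Krengel ergodic theorem II} on each piece, and then take the partition finer. Since the support of $f$ is compact and the statement is linear in $f$, we may assume without loss of generality that $\mathrm{supp}(f)\subset [-1/2,1/2]^{d}$ and that $l$ is taken along integers; by additivity and Definition \ref{Additive process}~(i)--(iii) combined with the i.i.d.\ construction of Section \ref{Section impurities}, the random variables $\mathfrak{F}^{(\omega)}(\{x\})$, $x\in\mathfrak{L}$, are uniformly bounded by some constant $M:=\sup_{\omega}|\mathfrak{F}^{(\omega)}(\{0\})|<\infty$, which is the crucial input for the error estimate below.

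Next, for each $n\in\mathbb{N}$, define the step approximation
\begin{equation*}
f_{n}(x):=\sum_{j\in\mathcal{D}_{n}} f(j/n)\,\mathbf{1}_{b_{j}}(x)\ ,\qquad x\in\mathbb{R}^{d}\ .
\end{equation*}
By uniform continuity of $f$, $\|f-f_{n}\|_{\infty}\to 0$ as $n\to\infty$. By Theorem \ref{Ackoglu--Krengel ergodic theorem II} (applied simultaneously to all pairs $(n,j)$, using that a countable intersection of sets of full measure has full measure), there is $\tilde{\Omega}\subset\Omega$ of full measure such that, for every $\tilde{\omega}\in\tilde{\Omega}$, all $n\in\mathbb{N}$ and all $j\in\mathcal{D}_{n}$,
\begin{equation*}
\underset{l\to\infty}{\lim}\left\{ |\mathfrak{L}\cap(lb_{j})|^{-1}\mathfrak{F}^{(\tilde{\omega})}(\mathfrak{L}\cap(lb_{j}))\right\}=\mathbb{E}\left[\mathfrak{F}^{(\omega)}(\{0\})\right]\ .
\end{equation*}
Combined with additivity of $\mathfrak{F}^{(\tilde\omega)}$ and the elementary fact $|\mathfrak{L}\cap(lb_{j})|/|\Lambda_{l}|\to |b_{j}|$ as $l\to\infty$, this yields, for fixed $n$ and $\tilde{\omega}\in\tilde{\Omega}$,
\begin{equation*}
\underset{l\to\infty}{\lim}\frac{1}{|\Lambda_{l}|}\sum_{x\in\Lambda_{l}}\mathfrak{F}^{(\tilde{\omega})}(\{x\})\,f_{n}(l^{-1}x)=\mathbb{E}\left[\mathfrak{F}^{(\omega)}(\{0\})\right]\int_{\mathbb{R}^{d}}f_{n}(x)\,\mathrm{d}^{d}x\ .
\end{equation*}

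Finally, I would exchange the limits $l\to\infty$ and $n\to\infty$ via a $3\varepsilon$ argument. The error made in replacing $f$ by $f_{n}$ inside the Cesàro sum is bounded by
\begin{equation*}
\frac{1}{|\Lambda_{l}|}\sum_{x\in\Lambda_{l}}|\mathfrak{F}^{(\tilde{\omega})}(\{x\})|\,|f(l^{-1}x)-f_{n}(l^{-1}x)|\leq M\|f-f_{n}\|_{\infty}\cdot\frac{|\Lambda_{l}\cap(l\,\mathrm{supp}(f))|}{|\Lambda_{l}|}\ ,
\end{equation*}
which is $O(\|f-f_{n}\|_{\infty})$ uniformly in $l$. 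Similarly, $|\mathbb{E}[\mathfrak{F}^{(\omega)}(\{0\})]|\,\|f-f_{n}\|_{L^{1}}\to 0$. Sending $n\to\infty$ after $l\to\infty$ on the intersection $\tilde\Omega$ gives the claim. The only real obstacle here is the uniform bound $M$ on $|\mathfrak{F}^{(\omega)}(\{x\})|$: this is implicit in Definition~\ref{Additive process} when the process is built translation-covariantly from the i.i.d.\ disorder of Section \ref{Section impurities}, but in full generality one would replace $M$ by a random but almost-surely finite envelope controlled by an auxiliary application of the ergodic theorem to $|\mathfrak{F}^{(\omega)}(\{0\})|$, which does not alter the argument.
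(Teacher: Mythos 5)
Your proof is correct and follows essentially the same route as the paper's: approximate $f$ by a function that is constant on each box $b_{j}$ of the partition (\ref{boxes b1})--(\ref{boxes b2}), invoke Theorem \ref{Ackoglu--Krengel ergodic theorem II} for every pair $(n,j)$ on a countable intersection of full-measure sets, and use the uniform continuity of $f$ to control the replacement error uniformly in $l$ before letting $n\rightarrow\infty$. The only difference is one of detail: you make explicit the control of the error term via a uniform bound on $|\mathfrak{F}^{(\omega)}(\{x\})|$ (which indeed holds for the translation-covariant processes the paper actually applies this to), whereas the paper's proof leaves that step implicit behind the estimate $|f(x)-f(y)|\leq Dn^{-1}$.
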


\begin{proof}
Since $f\in C_{0}\left( \mathbb{R}^{d},\mathbb{R}\right) $ has compact
support, $f$ is uniformly continuous. Assume w.l.o.g. that
\begin{equation*}
\mathrm{supp}(f)\subset \lbrack -1/2,1/2]^{d}\ .
\end{equation*}%
Then, there is a finite constant $D$ not depending on $j\in \mathcal{D}_{n}$%
, $t\in \mathbb{R}$, $k\in \{1,\ldots ,d\}$ and $x,y\in b_{j}$ such that%
\begin{equation}
\left\vert f\left( x\right) -f\left( y\right) \right\vert \leq Dn^{-1}\ .
\label{equicontinuous}
\end{equation}%
Using this and Theorem \ref{Ackoglu--Krengel ergodic theorem II} we obtain
the assertion.
\end{proof}

This last theorem could be extended to continuous functions $f\in C\left(
\mathbb{R}^{d},\mathbb{R}\right) $ vanishing sufficiently fast when $%
|l|\rightarrow \infty $ as well as for ergodic probability measures $%
\mathfrak{a}_{\Omega }$. This generalization is however not necessary here
and we refrain from proving it in detail.

\subsection{Diamagnetic Transport Coefficient and Density\label{Section
AC--Ohm's Law copy(1)}}

The aim of this section is to obtain the deterministic diamagnetic transport
coefficient $\mathbf{\Xi }_{\mathrm{d}}$ as well as the diamagnetic energy
density $\mathfrak{i}_{\mathrm{d}}$. See (\ref{def sigma_d}) and (\ref%
{diamagnetic energy density}) for their definitions. It is an simple
application of the ergodic theorems of Section \ref{Sec Akcoglu--Krengel}
and serves as a sort of ``warm up'' for the technically more involved case
of paramagnetic quantities.

We consider here the limit $l\rightarrow \infty $ of the current density (%
\ref{free current}) at equilibrium and the space--averaged diamagnetic
energy production coefficient $\Xi _{\mathrm{d},l}^{(\omega )}$ that is
defined by (\ref{average conductivity +1}). Indeed, as explained in Section %
\ref{Sect Classical Ohm's Law}, there exist, in general, currents coming
from the inhomogeneity of the fermion system for $\lambda \in \mathbb{R}^{+}$%
, even in absence of electromagnetic fields. We want to prove that, for
large samples, there are almost surely no currents within the fermion system
at thermal equilibrium. This result yields Assertion (th) of Theorem \ref%
{main 1 copy(8)}. We also would like to show that, as $l\rightarrow \infty $%
, $\Xi _{\mathrm{d},l}^{(\omega )}$ converges almost surely to the
diamagnetic transport coefficient $\mathbf{\Xi }_{\mathrm{d}}$, see Theorem %
\ref{thm charged transport coefficient} (d).


\begin{koro}[Currents and diamagnetic conductivity]
\label{main 1 copy(21)}\mbox{
}\newline
Let $\beta \in \mathbb{R}^{+}$ and $\lambda \in \mathbb{R}_{0}^{+}$. Then
one has: \newline
\emph{(th)}\ Current densities at thermal equilibrium: For any $z\in \mathbb{%
Z}^{d}$, there is a measurable subset $\tilde{\Omega}\left( z\right) \equiv
\tilde{\Omega}^{(\beta ,\lambda )}\left( z\right) \subset \Omega $ of full
measure such that, for all $\omega \in \tilde{\Omega}\left( z\right) $,%
\begin{equation*}
\underset{l\rightarrow \infty }{\lim }\frac{1}{\left\vert \Lambda
_{l}\right\vert }\underset{x\in \Lambda _{l}}{\sum }\varrho ^{(\beta ,\omega
,\lambda )}\left( I_{\left( x+z,x\right) }\right) =0\ .
\end{equation*}%
\emph{(d)}\ Diamagnetic charge transport coefficient: There is a measurable
subset $\tilde{\Omega}\equiv \tilde{\Omega}^{(\beta ,\lambda )}\subset
\Omega $ of full measure such that, for any $\tilde{\omega}\in \tilde{\Omega}
$,
\begin{equation}
\mathbf{\Xi }_{\mathrm{d}}:=\underset{l\rightarrow \infty }{\lim }\mathbb{E}%
\left[ \Xi _{\mathrm{d},l}^{(\omega )}\right] =\underset{l\rightarrow \infty
}{\lim }\Xi _{\mathrm{d},l}^{(\tilde{\omega})}\in \left[ -2,2\right] \ .
\notag
\end{equation}
\end{koro}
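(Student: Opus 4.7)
The plan is to treat both assertions as applications of the Akcoglu--Krengel ergodic theorem (Theorem \ref{Ackoglu--Krengel ergodic theorem II copy(1)}) to naturally associated additive processes, and then to identify the resulting almost-sure limits.

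For part (th), I would fix $z \in \mathbb{Z}^d$ and define
$$
\mathfrak{F}_z^{(\omega)}(\Lambda) := \sum_{x \in \Lambda} \varrho^{(\beta,\omega,\lambda)}\bigl(I_{(x+z,x)}\bigr)\ , \qquad \Lambda \in \mathcal{P}_f(\mathfrak{L})\ .
$$
To verify that $\mathfrak{F}_z^{(\omega)}$ is an additive process in the sense of Definition \ref{Additive process}, I would note that additivity and the bound $|\mathfrak{F}_z^{(\omega)}(\Lambda)| \leq 2|\Lambda|$ are immediate, that measurability in $\omega$ is guaranteed by Theorem \ref{decay bound theorem} (iv) applied to $C_{i\beta/2}^{(\omega)}$ (or equivalently to $\mathbf{d}_{\mathrm{fermi}}^{(\beta,\omega,\lambda)}$), and that translation invariance of expectations follows from the translation invariance of the product measure $\mathfrak{a}_\Omega$ together with the uniqueness of the KMS state $\varrho^{(\beta,\omega,\lambda)}$. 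Theorem \ref{Ackoglu--Krengel ergodic theorem II copy(1)} then yields, for some measurable $\tilde{\Omega}(z)$ of full measure,
$$
\lim_{l\to\infty}\frac{1}{|\Lambda_l|}\sum_{x\in\Lambda_l} \varrho^{(\beta,\omega,\lambda)}\bigl(I_{(x+z,x)}\bigr) \;=\; \mathbb{E}\bigl[\varrho^{(\beta,\omega,\lambda)}(I_{(z,0)})\bigr]\ .
$$

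The remaining step for (th) is to check that the right-hand side vanishes. Since
$$
\varrho^{(\beta,\omega,\lambda)}\bigl(I_{(z,0)}\bigr) = -2\,\mathrm{Im}\langle \mathfrak{e}_0, \mathbf{d}_{\mathrm{fermi}}^{(\beta,\omega,\lambda)} \mathfrak{e}_z \rangle\ ,
$$
and the Anderson Hamiltonian $\Delta_{\mathrm{d}} + \lambda V_\omega$ has a real symmetric matrix in the canonical basis $\{\mathfrak{e}_x\}_{x\in\mathfrak{L}}$, its functional calculus does too; consequently $\langle \mathfrak{e}_0, \mathbf{d}_{\mathrm{fermi}}^{(\beta,\omega,\lambda)} \mathfrak{e}_z\rangle \in \mathbb{R}$ for every $\omega$, and the expectation above vanishes termwise. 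This argument is deterministic, so one could alternatively sidestep the ergodic theorem altogether here, but the ergodic framework is needed for (d) and it is natural to treat (th) uniformly.

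For part (d), I would apply the same scheme to the additive process
$$
\mathfrak{G}_k^{(\omega)}(\Lambda) := \sum_{x\in\Lambda}\sigma_{\mathrm{d}}^{(\omega)}(x+e_k,x) = -2\sum_{x\in\Lambda}\mathrm{Re}\langle \mathfrak{e}_x, \mathbf{d}_{\mathrm{fermi}}^{(\beta,\omega,\lambda)}\mathfrak{e}_{x+e_k}\rangle\ ,\qquad k\in\{1,\ldots,d\}\ ,
$$
which satisfies the hypotheses of Definition \ref{Additive process} by the same continuity/translation-invariance considerations. Theorem \ref{Ackoglu--Krengel ergodic theorem II copy(1)} gives a measurable set of full measure on which $\{\Xi_{\mathrm{d},l}^{(\omega)}\}_{k,k}\to \mathbb{E}[\sigma_{\mathrm{d}}^{(\omega)}(e_k,0)]$, while the off-diagonal entries of $\Xi_{\mathrm{d},l}^{(\omega)}$ vanish by definition (\ref{average conductivity +1}). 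Intersecting the $d$ resulting sets (and the set from (th) over a countable choice of $z$) gives a single measurable $\tilde{\Omega}$ of full measure. Finally, the bound $\{\mathbf{\Xi}_{\mathrm{d}}\}_{k,k}\in [-2,2]$ follows from $0\leq \mathbf{d}_{\mathrm{fermi}}^{(\beta,\omega,\lambda)}\leq \mathbf{1}$, which forces $|\langle \mathfrak{e}_x, \mathbf{d}_{\mathrm{fermi}}\mathfrak{e}_y\rangle|\leq 1$, and the equality of the deterministic and almost-sure limits follows by combining the ergodic identification of the limit with the definition (\ref{def sigma_d}). The only mildly delicate point, and the one I would pay most attention to, is confirming measurability of $\omega\mapsto \varrho^{(\beta,\omega,\lambda)}(B)$ for $B$ a local observable; this is precisely the content of Theorem \ref{decay bound theorem} (iv).
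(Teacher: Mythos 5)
Your proposal is correct and follows the same skeleton as the paper's proof: both parts are reduced to the Akcoglu--Krengel ergodic theorem applied to the additive process built from the equilibrium two--point function $\langle \mathfrak{e}_{x},\mathbf{d}_{\mathrm{fermi}}^{(\beta ,\omega ,\lambda )}\mathfrak{e}_{x+z}\rangle$ (the paper keeps this process complex--valued and extracts (th) and (d) as its imaginary and real parts, whereas you run two real--valued processes; this is immaterial), with measurability supplied by the same continuity argument as in Theorem \ref{decay bound theorem} (iv) and the bound $\left[ -2,2\right] $ coming from $0\leq \mathbf{d}_{\mathrm{fermi}}^{(\beta ,\omega ,\lambda )}\leq \mathbf{1}$. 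The one genuine divergence is how you kill the limit in (th). The paper shows only that the \emph{expectation} $\mathbb{E}[\varrho ^{(\beta ,\omega ,\lambda )}(I_{(z,0)})]$ vanishes, using the antisymmetry $I_{(z,0)}=-I_{(0,z)}$ together with translation and reflection invariance of the product measure $\mathfrak{a}_{\Omega }$; the ergodic theorem is then genuinely needed. You instead observe that $\Delta _{\mathrm{d}}+\lambda V_{\omega }$ is a real symmetric matrix in the canonical basis, so that (by conjugating the functional calculus with complex conjugation) $\mathbf{d}_{\mathrm{fermi}}^{(\beta ,\omega ,\lambda )}$ has real entries and hence $\varrho ^{(\beta ,\omega ,\lambda )}(I_{(x+z,x)})=-2\,\mathrm{Im}\langle \cdot ,\cdot \rangle =0$ \emph{pointwise}, for every $\omega $, $x$ and $l$. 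This argument is sound and strictly stronger: it shows the thermal current density vanishes identically at finite volume for every realization, so (th) needs no ergodicity at all. The trade--off is that your mechanism rests on the reality (time--reversal symmetry) of the Hamiltonian and would not survive, e.g., a complex hopping, while the paper's rests on the statistical reflection symmetry of the disorder and survives complex perturbations respecting that symmetry; for the model at hand both apply, and the rest of your argument for (d) coincides with the paper's.
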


\begin{proof}
Let $\beta \in \mathbb{R}^{+}$, $\lambda \in \mathbb{R}_{0}^{+}$ and $z\in
\mathbb{Z}^{d}$. We define an additive process $\{\mathfrak{F}_{z}^{(\omega
)}\left( \Lambda \right) \}_{\Lambda \in \mathcal{P}_{f}(\mathfrak{L})}$ by
\begin{equation}
\mathfrak{F}_{z}^{(\omega )}\left( \Lambda \right) :=\sum\limits_{x\in
\Lambda }\varrho ^{(\beta ,\omega ,\lambda )}\left( a_{x+z}^{\ast
}a_{x}\right) =\sum\limits_{x\in \Lambda }\langle \mathfrak{e}%
_{x},F_{0}^{\beta }\left( \Delta _{\mathrm{d}}+\lambda V_{\omega }\right)
\mathfrak{e}_{x+z}\rangle  \label{first additive process}
\end{equation}%
for any finite subset $\Lambda \in \mathcal{P}_{f}(\mathfrak{L})$, see (\ref%
{Schwartz function F}) and Definition \ref{Additive process}. Similar to
Theorem \ref{decay bound theorem} (iv), the map $\omega \mapsto \mathfrak{F}%
_{z}^{(\omega )}\left( \Lambda \right) $ is bounded and measurable w.r.t.
the $\sigma $--algebra $\mathfrak{A}_{\Omega }$ for all $\Lambda \in
\mathcal{P}_{f}(\mathfrak{L})$. By the uniqueness of the KMS states $\varrho
^{(\beta ,\omega ,\lambda )}$, we moreover have
\begin{equation*}
\mathfrak{F}_{z}^{(\omega )}\left( z^{\prime }+\Lambda \right) =\mathfrak{F}%
_{z}^{(\omega )}\left( \Lambda \right)
\end{equation*}%
for all $z^{\prime }\in \mathbb{Z}^{d}$, $\Lambda \in \mathcal{P}_{f}(%
\mathfrak{L})$ and $\omega \in \Omega $. Clearly, $\{\Lambda _{l}\}_{l\in
\mathbb{R}^{+}}\subset \mathcal{P}_{f}(\mathfrak{L})$ is a regular sequence,
see Definition \ref{regular sequences}. Therefore, Theorem \ref%
{Ackoglu--Krengel ergodic theorem II copy(1)} implies the existence of a
measurable subset $\hat{\Omega}\left( z\right) \equiv \hat{\Omega}^{(\beta
,\lambda )}\left( z\right) \subset \Omega $ of full measure such that, for
all $\tilde{\omega}\in \tilde{\Omega}\left( z\right) $,
\begin{equation}
\underset{l\rightarrow \infty }{\lim }\left\{ \left\vert \Lambda
_{l}\right\vert ^{-1}\mathfrak{F}_{z}^{(\tilde{\omega})}\left( \Lambda
_{l}\right) \right\} =\mathbb{E}\left[ \varrho ^{(\beta ,\omega ,\lambda
)}\left( a_{z}^{\ast }a_{0}\right) \right] =\mathbb{E}\left[ \langle
\mathfrak{e}_{0},\mathbf{d}_{\mathrm{fermi}}^{(\beta ,\omega ,\lambda )}%
\mathfrak{e}_{z}\rangle \right] \ .  \label{limit easy ergodic}
\end{equation}%
Recall that $\mathbb{E}[\ \cdot \ ]$ is the expectation value associated
with the probability measure $\mathfrak{a}_{\Omega }$ (\ref{probability
measure}), $\mathbf{d}_{\mathrm{fermi}}^{(\beta ,\omega ,\lambda )}$ is the
symbol (\ref{Fermi statistic}), and $\left\{ \mathfrak{e}_{x}\right\} _{x\in
\mathfrak{L}}$ is the canonical orthonormal basis of $\ell ^{2}(\mathfrak{L}%
) $ with scalar product $\langle \cdot ,\cdot \rangle $.

By (\ref{current observable}), observe that%
\begin{equation*}
\frac{1}{\left\vert \Lambda _{l}\right\vert }\underset{x\in \Lambda _{l}}{%
\sum }\varrho ^{(\beta ,\omega ,\lambda )}\left( I_{\left( x+z,x\right)
}\right) =2\mathrm{Im}\left\{ \left\vert \Lambda _{l}\right\vert ^{-1}%
\mathfrak{F}_{z}^{(\omega )}\left( \Lambda _{l}\right) \right\} \ ,
\end{equation*}%
while, from the definition (\ref{average conductivity +1}),
\begin{equation*}
\left\{ \Xi _{\mathrm{d},l}^{(\omega )}\right\} _{k,q}=2\delta _{k,q}\mathrm{%
Re}\left\{ \left\vert \Lambda _{l}\right\vert ^{-1}\mathfrak{F}%
_{e_{k}}^{(\omega )}\left( \Lambda _{l}\right) \right\}
\end{equation*}%
for any $k,q\in \{1,\ldots ,d\}$. Combined with (\ref{def sigma_dbis}), (\ref%
{explicit form conductivity}) and (\ref{limit easy ergodic}), these two
equalities yield Assertions (th) and (d), respectively. Indeed, $V_{\omega }$
is an i.i.d. potential and $\mathbb{E}[I_{\left( z,0\right) }]=0$ for any $%
z\in \mathfrak{L}$.
\end{proof}

We study now the limit $(\eta ,l^{-1})\rightarrow (0,0)$ of the diamagnetic
energy $\mathfrak{I}_{\mathrm{d}}^{(\omega ,\eta \mathbf{A}_{l})}$ defined
by (\ref{lim_en_incr dia}). An asymptotic expansion of the diamagnetic
energy is given by \cite[Theorem 5.12]{OhmII} for small parameters $|\eta
|\ll 1$: For any $\mathbf{A}\in \mathbf{C}_{0}^{\infty }$, there is $\eta
_{0}\in \mathbb{R}^{+}$ such that, for all $|\eta |\in (0,\eta _{0}]$, $%
l,\beta \in \mathbb{R}^{+}$, $\omega \in \Omega $, $\lambda \in \mathbb{R}%
_{0}^{+}$ and $t\geq t_{0}$,%
\begin{eqnarray}
\mathfrak{I}_{\mathrm{d}}^{(\omega ,\eta \mathbf{A}_{l})}\left( t\right) &=&-%
\frac{\eta }{2}\underset{\mathbf{x}\in \mathfrak{K}}{\sum }\varrho ^{(\beta
,\omega ,\lambda )}(I_{\mathbf{x}})\int\nolimits_{t_{0}}^{t}\mathbf{E}_{s}^{%
\mathbf{A}_{l}}(\mathbf{x})\mathrm{d}s  \label{asymptotic expansion dia} \\
&&+\frac{\eta ^{2}}{2}\int\nolimits_{t_{0}}^{t}\mathrm{d}s_{1}%
\int_{t_{0}}^{s_{1}}\mathrm{d}s_{2}\underset{\mathbf{x}\in \mathfrak{K}}{%
\sum }\sigma _{\mathrm{d}}^{(\omega )}\left( \mathbf{x}\right) \mathbf{E}%
_{s_{2}}^{\mathbf{A}_{l}}(\mathbf{x})\mathbf{E}_{s_{1}}^{\mathbf{A}_{l}}(%
\mathbf{x})+\mathcal{O}(\eta ^{3}l^{d})\ .  \notag
\end{eqnarray}%
The correction terms of order $\mathcal{O}(\eta ^{3}l^{d})$ is uniformly
bounded in $\beta \in \mathbb{R}^{+}$, $\omega \in \Omega $, $\lambda \in
\mathbb{R}_{0}^{+}$ and $t\geq t_{0}$. Here,%
\begin{equation}
\mathbf{E}_{t}^{\mathbf{A}}\left( \mathbf{x}\right) \equiv \mathbf{E}_{t}^{%
\mathbf{A}}(x^{(1)},x^{(2)}):=\int\nolimits_{0}^{1}\left[ E_{\mathbf{A}%
}(t,\alpha x^{(2)}+(1-\alpha )x^{(1)})\right] (x^{(2)}-x^{(1)})\mathrm{d}%
\alpha \ ,  \label{V bar 0bis}
\end{equation}%
is the integrated electric field between $x^{(2)}\in \mathfrak{L}$ and $%
x^{(1)}\in \mathfrak{L}$ at time $t\in \mathbb{R}$ and%
\begin{equation}
\mathfrak{K}:=\left\{ \mathbf{x}:=(x^{(1)},x^{(2)})\in \mathfrak{L}^{2}\ :\
|x^{(1)}-x^{(2)}|=1\right\}  \label{proche voisins0}
\end{equation}%
is the set of bonds of nearest neighbors.

The asymptotic expansion (\ref{asymptotic expansion dia}) ensures the
existence of the limit%
\begin{equation}
\underset{\eta \rightarrow 0}{\lim }\left\{ \left( \eta \left\vert \Lambda
_{l}\right\vert \right) ^{-1}\mathfrak{I}_{\mathrm{d}}^{(\omega ,\eta
\mathbf{A}_{l})}\left( t\right) \right\} =\mathfrak{G}_{l}^{(\omega )}\left(
t\right) \ .  \label{G2-1}
\end{equation}%
Here, for any $l,\beta \in \mathbb{R}^{+}$, $\omega \in \Omega $, $\lambda
\in \mathbb{R}_{0}^{+}$, $\mathbf{A}\in \mathbf{C}_{0}^{\infty }$ and $t\in
\mathbb{R}$, the function%
\begin{equation}
\mathfrak{G}_{l}^{(\omega )}\left( t\right) \equiv \mathfrak{G}_{l}^{(\beta
,\omega ,\lambda ,\eta \mathbf{A}_{l})}\left( t\right) =-\frac{1}{2}%
\int\nolimits_{t_{0}}^{t}\frac{1}{\left\vert \Lambda _{l}\right\vert }%
\underset{\mathbf{x}\in \mathfrak{K}}{\sum }\varrho ^{(\beta ,\omega
,\lambda )}(I_{\mathbf{x}})\mathbf{E}_{s}^{\mathbf{A}_{l}}(\mathbf{x})%
\mathrm{d}s  \label{G1}
\end{equation}%
is the electric work density produced by thermal currents within the box $%
\Lambda _{l}$.

The limit $l\rightarrow \infty $ of the function $\mathfrak{G}_{l}^{(\omega
)}$ is a little bit more complicated than in the first two examples because
the electric field $\mathbf{E}_{t}^{\mathbf{A}_{l}}$ is
space--inhomogeneous. In fact, we can divide the (compact) support $\mathrm{%
supp}(\mathbf{A}(t,.))\subset \mathbb{R}^{d}$ of the vector potential $%
\mathbf{A}(t,.)$ at $t\in \mathbb{R}$ in small regions to combine the
piecewise--constant approximation of the smooth electric field $E_{\mathbf{A}%
}$ (\ref{V bar 0}) in (\ref{G1}) with Theorem \ref{Ackoglu--Krengel ergodic
theorem II}. A similar problem is already treated in Theorem \ref%
{Ackoglu--Krengel ergodic theorem III}. In fact, one gets the following
assertion:

\begin{satz}[Electric work density produced by thermal currents]
\label{main 1 copy(18)}\mbox{
}\newline
Let $\beta \in \mathbb{R}^{+}$, $\lambda \in \mathbb{R}_{0}^{+}$. Then,
there is a measurable subset $\tilde{\Omega}\equiv \tilde{\Omega}^{(\beta
,\lambda )}\subset \Omega $ of full measure such that, for all $\mathbf{A}%
\in \mathbf{C}_{0}^{\infty }$,
\begin{equation*}
\underset{l\rightarrow \infty }{\lim }\mathfrak{G}_{l}^{(\omega )}\left(
t\right) =0\ ,\qquad \omega \in \tilde{\Omega}\ ,
\end{equation*}%
uniformly for all $t\geq t_{0}$.
\end{satz}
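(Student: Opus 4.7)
The plan is to recognize $\mathfrak{G}_l^{(\omega)}(t)$ as a Riemann--type spatial average of thermal currents weighted by the slowly--varying, rescaled electric field, and to invoke the space--inhomogeneous ergodic theorem (Theorem \ref{Ackoglu--Krengel ergodic theorem III}). First, I would unfold the sum over bonds $\mathfrak{K}$ by writing $\mathbf{x}=(x,x+z)$ with $x\in \mathfrak{L}$ and $|z|=1$, obtaining
\begin{equation*}
\mathfrak{G}_l^{(\omega)}(t) = -\frac{1}{2}\int_{t_0}^{t}\mathrm{d}s\sum_{|z|=1}\frac{1}{|\Lambda_l|}\sum_{x\in \mathfrak{L}}\varrho^{(\beta,\omega,\lambda)}(I_{(x,x+z)})\,\mathbf{E}_s^{\mathbf{A}_l}(x,x+z).
\end{equation*}
Since $\mathbf{A}\in \mathbf{C}_0^{\infty}$ is spatially compactly supported, only $\mathcal{O}(l^d)$ values of $x$ contribute; the support of $x\mapsto \mathbf{E}_s^{\mathbf{A}_l}(x,x+z)$ is contained in $\Lambda_{Dl}$ for some finite $D\in \mathbb{R}^+$ depending only on $\mathrm{supp}(\mathbf{A})$.

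Second, the integral representation (\ref{V bar 0bis}) together with the smoothness of $\mathbf{A}$ yields the piecewise--constant approximation
\begin{equation*}
\mathbf{E}_s^{\mathbf{A}_l}(x,x+z) = [E_{\mathbf{A}}(s, l^{-1}x)](z) + \mathcal{O}(l^{-1}),
\end{equation*}
uniform in $s \in [t_0, t_1]$, $x \in \mathfrak{L}$ and $|z|=1$. The map $\Lambda \mapsto \sum_{x\in \Lambda}\varrho^{(\beta,\omega,\lambda)}(I_{(x,x+z)})$ is an additive process in the sense of Definition \ref{Additive process}, using the translation invariance of $\mathfrak{a}_\Omega$ together with the uniqueness of the KMS state exactly as in (\ref{first additive process}). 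Theorem \ref{Ackoglu--Krengel ergodic theorem III}, applied with $f_{s,z}(y):=[E_{\mathbf{A}}(s,y)](z) \in C_0(\mathbb{R}^d;\mathbb{R})$, then produces almost sure convergence of the inner spatial average to $\mathbb{E}[\varrho^{(\beta,\omega,\lambda)}(I_{(0,z)})]\int_{\mathbb{R}^d} f_{s,z}(y)\,\mathrm{d}^d y$.

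The crucial observation is that this expectation vanishes: for every $\omega$, the operator $\Delta_{\mathrm{d}}+\lambda V_\omega$ has real matrix elements in the canonical basis $\{\mathfrak{e}_x\}_{x\in \mathfrak{L}}$, hence so does $\mathbf{d}_{\mathrm{fermi}}^{(\beta,\omega,\lambda)}$ by functional calculus, and therefore $\varrho^{(\beta,\omega,\lambda)}(I_{(x,x+z)}) = -2\,\mathrm{Im}\langle \mathfrak{e}_x,\mathbf{d}_{\mathrm{fermi}}^{(\beta,\omega,\lambda)}\mathfrak{e}_{x+z}\rangle = 0$ identically; equivalently, one invokes Corollary \ref{main 1 copy(21)} (th) directly. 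Combining this with Lebesgue's dominated convergence in $s$---the integrand is uniformly bounded by $\|E_{\mathbf{A}}\|_\infty$ times a bounded number of currents, each obeying $|\varrho^{(\beta,\omega,\lambda)}(I_{\mathbf{x}})|\leq 2$---and intersecting the full--measure sets of Theorem \ref{Ackoglu--Krengel ergodic theorem III} over the finite collection of unit vectors $z$, one obtains the pointwise limit. Uniformity in $t\geq t_0$ is essentially free: $\mathbf{A}(s,\cdot)=0$ for $s\notin [t_0,t_1]$, so $\mathfrak{G}_l^{(\omega)}(t)$ is constant in $t$ for $t\geq t_1$ and uniformly Lipschitz in $t$ on $[t_0,t_1]$. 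The only genuine technical point is ensuring that the full--measure set can be taken independently of $s \in [t_0,t_1]$; this is handled by extracting almost sure convergence on a countable dense subset of $[t_0,t_1]$ and upgrading to uniform convergence via equicontinuity.
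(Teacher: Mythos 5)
Your proof is correct and, in its main line, follows the same strategy as the paper: reduce to the integrand $\mathbf{M}_{l}^{(\omega )}$, approximate the rescaled field by piecewise--constant data on boxes of side $l/n$, apply the Akcoglu--Krengel machinery to the additive processes $\Lambda \mapsto \sum_{x\in \Lambda }\varrho ^{(\beta ,\omega ,\lambda )}(I_{(x,x+z)})$ for the finitely many unit vectors $z$, identify the limit with $\mathbb{E}[\varrho ^{(\beta ,\omega ,\lambda )}(I_{(0,z)})]=0$, and conclude by dominated convergence. The paper obtains the vanishing expectation from $I_{(z,0)}=-I_{(0,z)}$ together with translation and reflection invariance of $\mathfrak{a}_{\Omega }$, and gets uniformity in $t$ for free because the ergodic averages are $t$--independent and the field is bounded, rather than via your countable--dense--subset/equicontinuity upgrade; both work. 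The genuinely different ingredient is your ``crucial observation'': since $\Delta _{\mathrm{d}}+\lambda V_{\omega }$ has real matrix elements in the canonical basis, every real Borel function of it --- in particular $\mathbf{d}_{\mathrm{fermi}}^{(\beta ,\omega ,\lambda )}$ --- also has real matrix elements, whence $\varrho ^{(\beta ,\omega ,\lambda )}(I_{(x,y)})=-2\,\mathrm{Im}\langle \mathfrak{e}_{x},\mathbf{d}_{\mathrm{fermi}}^{(\beta ,\omega ,\lambda )}\mathfrak{e}_{y}\rangle =0$ for \emph{every} $\omega $, $x$, $y$. This is correct and actually trivializes the theorem: $\mathfrak{G}_{l}^{(\omega )}(t)\equiv 0$ for all $l$, $\omega $ and $t$, with no limit, no ergodicity and no exceptional null set required (it also shows that the remark that thermal currents exist ``in general'' at finite $l$ does not apply to this particular model). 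The paper's longer route is the one that survives generalization --- to Hamiltonians with complex hopping amplitudes or magnetic fluxes, or to merely ergodic, non--reflection--symmetric disorder --- and reuses machinery needed anyway for the diamagnetic and paramagnetic densities. If you do insist on the ergodic route, one bookkeeping point: the full--measure set in the statement must be independent of $\mathbf{A}$ as well as of $s$; the clean way to arrange this (and what the paper does) is to apply the ergodic theorem only to the $\mathbf{A}$--independent processes indexed by $|z|=1$, over all boxes $lb_{j}$ simultaneously, and push all $\mathbf{A}$-- and $s$--dependence into the deterministic approximation estimates.
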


\begin{proof}
We study the limit $l\rightarrow \infty $ of the function%
\begin{equation}
\mathbf{M}_{l}^{(\omega )}\left( t\right) :=\frac{1}{\left\vert \Lambda
_{l}\right\vert }\underset{\mathbf{x}\in \mathfrak{K}}{\sum }\varrho
^{(\beta ,\omega ,\lambda )}\left( I_{\mathbf{x}}\right) \mathbf{E}_{t}^{%
\mathbf{A}_{l}}(\mathbf{x})  \label{G2}
\end{equation}%
for any $l,\beta \in \mathbb{R}^{+}$, $\omega \in \Omega $, $\lambda \in
\mathbb{R}_{0}^{+}$, $\mathbf{A}\in \mathbf{C}_{0}^{\infty }$ and $t\in
\mathbb{R}$. Indeed, because $\mathbf{A}\in \mathbf{C}_{0}^{\infty }$, note
that
\begin{equation}
\Vert \mathbf{E}^{\mathbf{A}}\Vert _{\infty }:=\max \left\{ \left\vert E_{%
\mathbf{A}}(t,x)\right\vert \ :\ (t,x)\in \mathrm{supp}(A)\right\} \in
\mathbb{R}^{+}\ ,  \label{bound easy}
\end{equation}%
which implies that
\begin{equation}
\underset{t\in \mathbb{R}}{\sup }\left\vert \mathbf{M}_{l}^{(\omega )}\left(
t\right) \right\vert \leq \Vert \mathbf{E}^{\mathbf{A}}\Vert _{\infty }%
\underset{z\in \mathfrak{L},\left\vert z\right\vert =1}{\sum }\Vert
I_{\left( 0,z\right) }\Vert <\infty \ .  \label{G2+1}
\end{equation}%
Therefore, by (\ref{G1}) and Lebesgue's dominated convergence theorem, in
order to get the assertion it suffices to show that, for any fixed $t \in
[t_0,t_1]$ and $\omega$ in a subset $\tilde{\Omega}\equiv \tilde{\Omega}%
^{(\beta ,\lambda )}\subset \Omega $ of full measure, the l.h.s. of (\ref%
{G2+1}) vanishes when $l\rightarrow \infty $. This is done like in Theorem %
\ref{Ackoglu--Krengel ergodic theorem III}.

Indeed, assume w.l.o.g. that, for all $t\in \mathbb{R}$,
\begin{equation}
\mathrm{supp}(\mathbf{A}(t,.))\subset \lbrack -1/2,1/2]^{d}\ .
\label{toto support}
\end{equation}%
For any integer $n\in \mathbb{N}$, we divide the elementary box $%
[-1/2,1/2]^{d}$ in $n^{d}$ boxes $\{b_{j}\}_{j\in \mathcal{D}_{n}}$ of
side--length $1/n$, see (\ref{boxes b1})--(\ref{boxes b2}). For any $j\in
\mathcal{D}_{n}$, let $z^{(j)}\in b_{j}$ be any fixed point of the box $%
b_{j} $. Then, we consider piecewise--constant approximations of the
(smooth) electric field (\ref{V bar 0}), that is,%
\begin{equation}
E_{\mathbf{A}}(t,x):=-\partial _{t}\mathbf{A}(t,x)\ ,\qquad t\in \mathbb{R}%
,\ x\in \mathbb{R}^{d}\ ,  \label{V bar}
\end{equation}%
and define the approximated energy density%
\begin{equation}
\mathbf{\tilde{M}}_{l}^{(\omega )}\left( t\right) :=\frac{1}{\left\vert
\Lambda _{l}\right\vert }\underset{j\in \mathcal{D}_{n}}{\sum }\ \underset{%
\mathbf{x}\in \mathfrak{K}\cap (lb_{j})^{2}}{\sum }\varrho ^{(\beta ,\omega
,\lambda )}\left( I_{\mathbf{x}}\right) \left[ E_{\mathbf{A}}(t,z^{(j)})%
\right] (x^{(1)}-x^{(2)})  \label{G2bis}
\end{equation}%
for any $l,\beta \in \mathbb{R}^{+}$, $\omega \in \Omega $, $\lambda \in
\mathbb{R}_{0}^{+}$, $\mathbf{A}\in \mathbf{C}_{0}^{\infty }$, $t\in \mathbb{%
R}$ and $n\in \mathbb{N}$.

We infer from (\ref{rescaled vector potential}), (\ref{V bar 0bis}) and (\ref%
{V bar}) that, for any $l\in \mathbb{R}^{+}$, $\mathbf{A}\in \mathbf{C}%
_{0}^{\infty }$, $j\in \mathcal{D}_{n}$, $t\in \mathbb{R}$, $k\in \{1,\ldots
,d\}$ and $x\in lb_{j}$,%
\begin{eqnarray*}
&&\left\vert \mathbf{E}_{t}^{\mathbf{A}_{l}}(x,x\pm e_{k})-\left[ E_{\mathbf{%
A}}(t,z^{(j)})\right] (\pm e_{k})\right\vert \\
&\leq &\int\nolimits_{0}^{1}\left\vert \left[ \partial _{t}\mathbf{A}%
(t,z^{(j)})\right] (e_{k})-\left[ \partial _{t}\mathbf{A}_{l}(t,x\pm
(1-\alpha )e_{k})\right] (e_{k})\right\vert \mathrm{d}\alpha \\
&\leq &\underset{y\in \tilde{b}_{j,l}}{\sup }\left\vert \left[ \partial _{t}%
\mathbf{A}(t,z^{(j)})\right] (e_{k})-\left[ \partial _{t}\mathbf{A}(t,y)%
\right] (e_{k})\right\vert <\infty \ ,
\end{eqnarray*}%
where
\begin{equation*}
\tilde{b}_{j,l}:=\left\{ x\in \mathbb{R}^{d}\ :\ \underset{y\in b_{j}}{\min }%
\left\vert x-y\right\vert \leq l^{-1}\right\} \ .
\end{equation*}%
In particular, since $\mathbf{A}\in \mathbf{C}_{0}^{\infty }$, there is a
finite constant $D\in \mathbb{R}^{+}$ not depending on $j\in \mathcal{D}_{n}$%
, $t\in \mathbb{R}$, $k\in \{1,\ldots ,d\}$ and $x\in b_{j}$ such that%
\begin{equation}
\left\vert \mathbf{E}_{t}^{\mathbf{A}_{l}}(x,x\pm e_{k})-\left[ E_{\mathbf{A}%
}(t,z^{(j)})\right] (\pm e_{k})\right\vert \leq D(n^{-1}+l^{-1})\ .
\label{inequality G cool1}
\end{equation}%
This upper bound is the analogue of (\ref{equicontinuous}) in the proof of
Theorem \ref{Ackoglu--Krengel ergodic theorem III}. Using also (\ref{toto
support}), it follows that
\begin{equation}
\left\vert \mathbf{M}_{l}^{(\omega )}\left( t\right) -\mathbf{\tilde{M}}%
_{l}^{(\omega )}\left( t\right) \right\vert \leq D(n^{-1}+l^{-1})\underset{%
z\in \mathfrak{L},\left\vert z\right\vert =1}{\sum }\Vert I_{\left(
z,0\right) }\Vert \ .  \label{inequality G cool2}
\end{equation}%
Therefore, by (\ref{proche voisins0}) and (\ref{G2bis}), for any $z\in
\mathbb{Z}^{d}$ such that $\left\vert z\right\vert =1$, it suffices to
compute the limit
\begin{equation*}
\underset{l\rightarrow \infty }{\lim }\frac{1}{\left\vert \mathfrak{L}\cap
(lb_{j})\right\vert }\underset{x\in \mathfrak{L}\cap (lb_{j})}{\sum }\varrho
^{(\beta ,\omega ,\lambda )}\left( I_{\left( x+z,x\right) }\right)
\end{equation*}%
like in Theorem \ref{Ackoglu--Krengel ergodic theorem III}. For any $\beta
\in \mathbb{R}^{+}$, $\lambda \in \mathbb{R}_{0}^{+}$ and $z\in \mathbb{Z}%
^{d}$, we invoke Theorem \ref{Ackoglu--Krengel ergodic theorem II} to get
the existence of a measurable subset $\hat{\Omega}_{z}\equiv \hat{\Omega}%
_{z}^{(\beta ,\lambda )}\subset \Omega $ of full measure such that, for any $%
\tilde{\omega}\in \hat{\Omega}_{z}$,
\begin{equation}
\underset{l\rightarrow \infty }{\lim }\left\{ \frac{1}{\left\vert \mathfrak{L%
}\cap (lb_{j})\right\vert }\underset{x\in \mathfrak{L}\cap (lb_{j})}{\sum }%
\varrho ^{(\beta ,\tilde{\omega},\lambda )}\left( I_{\left( x+z,x\right)
}\right) \right\} =\mathbb{E}[\varrho ^{(\beta ,\omega ,\lambda )}\left(
I_{\left( z,0\right) }\right) ]=0\ .  \label{birkoofnew1}
\end{equation}%
Note that the last equality is a consequence of the identity $%
I_{(z,0)}=-I_{(0,z)}$ and the translation and reflection invariance of the
probability measure $\mathbf{\mathfrak{a}}_{\Omega }$. Meanwhile, the
measurable subset defined by%
\begin{equation*}
\tilde{\Omega}\equiv \tilde{\Omega}^{(\beta ,\lambda )}:=\underset{z\in
\mathfrak{L},\left\vert z\right\vert =1}{\bigcap }\hat{\Omega}_{z}\subset
\Omega
\end{equation*}%
has full measure and we obtain from (\ref{inequality G cool2})--(\ref%
{birkoofnew1}) that, for any $\omega \in \tilde{\Omega}$,
\begin{equation*}
\underset{l\rightarrow \infty }{\lim }\ \underset{t\in \mathbb{R}}{\sup }%
\left\vert \mathbf{M}_{l}^{(\omega )}\left( t\right) \right\vert =0\ .
\end{equation*}
\end{proof}

It remains to study the diamagnetic energy density $\mathfrak{i}_{\mathrm{d}%
} $ defined by (\ref{diamagnetic energy density}), that is,
\begin{equation*}
\mathfrak{i}_{\mathrm{d}}\left( t\right) :=\underset{\eta \rightarrow 0}{%
\lim }\ \underset{l\rightarrow \infty }{\lim }\left\{ \left( \eta
^{2}\left\vert \Lambda _{l}\right\vert \right) ^{-1}\mathfrak{I}_{\mathrm{d}%
}^{(\omega ,\eta \mathbf{A}_{l})}\left( t\right) \right\}
\end{equation*}%
for $\beta \in \mathbb{R}^{+}$, $\omega \in \Omega $, $\lambda \in \mathbb{R}%
_{0}^{+}$, $\mathbf{A}\in \mathbf{C}_{0}^{\infty }$ and $t\geq t_{0}$.
Thanks to the asymptotic expansion (\ref{asymptotic expansion dia}), its
derivation is done like in the proof of Theorem\ \ref{main 1 copy(18)} by
replacing current observables $I_{\mathbf{x}}$ (\ref{current observable})
and the integrated electric field $\mathbf{E}_{t}^{\mathbf{A}_{l}}(\mathbf{x}%
)$ in Equation (\ref{G2}) with fermion fields $P_{\mathbf{x}}$ (\ref{R x})
and products $\mathbf{E}_{s_{2}}^{\mathbf{A}_{l}}(\mathbf{x})\mathbf{E}%
_{s_{1}}^{\mathbf{A}_{l}}(\mathbf{x})$. Then, one gets the diamagnetic
energy density $\mathfrak{i}_{\mathrm{d}}$ as stated in Theorem \ref{main 1}
(d).

\begin{satz}[Diamagnetic energy density]
\label{main 1 copy(14)}\mbox{
}\newline
Let $\beta \in \mathbb{R}^{+}$ and $\lambda \in \mathbb{R}_{0}^{+}$. Then,
there is a measurable subset $\tilde{\Omega}\equiv \tilde{\Omega}^{(\beta
,\lambda )}\subset \Omega $ of full measure such that, for any $\omega \in
\tilde{\Omega}$ and $\mathbf{A}\in \mathbf{C}_{0}^{\infty }$,
\begin{equation*}
\mathfrak{i}_{\mathrm{d}}\left( t\right) =\int\nolimits_{\mathbb{R}^{d}}%
\mathrm{d}^{d}x\int\nolimits_{t_{0}}^{t}\mathrm{d}s_{1}\int%
\nolimits_{t_{0}}^{s_{1}}\mathrm{d}s_{2}\ \left\langle E_{\mathbf{A}%
}(s_{1},x),\mathbf{\Xi }_{\mathrm{d}}E_{\mathbf{A}}(s_{2},x)\right\rangle
\end{equation*}%
uniformly for all $t\geq t_{0}$ in compact sets. Here, $\mathbf{\Xi }_{%
\mathrm{d}}$ is defined by (\ref{def sigma_d}).
\end{satz}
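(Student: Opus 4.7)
The plan is to start from the asymptotic expansion (\ref{asymptotic expansion dia}) of \cite[Theorem 5.12]{OhmII}, which for $|\eta|\in (0,\eta_0]$ reads
$$\mathfrak{I}_{\mathrm{d}}^{(\omega,\eta \mathbf{A}_l)}(t) = \eta\,|\Lambda_l|\,\mathfrak{G}_l^{(\omega)}(t) + \frac{\eta^2}{2}\,\mathfrak{R}_l^{(\omega)}(t) + \mathcal{O}(\eta^3 l^d),$$
with $\mathfrak{R}_l^{(\omega)}(t):=\int_{t_0}^t \mathrm{d}s_1\int_{t_0}^{s_1}\mathrm{d}s_2\sum_{\mathbf{x}\in\mathfrak{K}}\sigma_{\mathrm{d}}^{(\omega)}(\mathbf{x})\,\mathbf{E}_{s_2}^{\mathbf{A}_l}(\mathbf{x})\,\mathbf{E}_{s_1}^{\mathbf{A}_l}(\mathbf{x})$ and the correction uniform in $\beta,\omega,\lambda,t$. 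Dividing by $\eta^2|\Lambda_l|$ and using $|\Lambda_l|\sim l^d$ yields
$$(\eta^2|\Lambda_l|)^{-1}\mathfrak{I}_{\mathrm{d}}^{(\omega,\eta \mathbf{A}_l)}(t) = \eta^{-1}\mathfrak{G}_l^{(\omega)}(t) + \frac{1}{2|\Lambda_l|}\mathfrak{R}_l^{(\omega)}(t) + \mathcal{O}(\eta).$$

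Per the definition (\ref{diamagnetic energy density}), we first send $l\to\infty$ at fixed $\eta$. Theorem \ref{main 1 copy(18)} provides a full-measure set on which $\mathfrak{G}_l^{(\omega)}(t)\to 0$ uniformly in $t\geq t_0$, killing the first term. The main work is therefore to show
$$\lim_{l\to\infty}\frac{1}{|\Lambda_l|}\mathfrak{R}_l^{(\omega)}(t) = 2\int_{\mathbb{R}^d}\mathrm{d}^d x\int_{t_0}^t\mathrm{d}s_1\int_{t_0}^{s_1}\mathrm{d}s_2\,\langle E_{\mathbf{A}}(s_1,x),\mathbf{\Xi}_{\mathrm{d}} E_{\mathbf{A}}(s_2,x)\rangle$$
almost surely and uniformly in $t$ on compact sets. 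I would proceed as in the proof of Theorem \ref{main 1 copy(18)}: assume w.l.o.g.\ that $\mathbf{A}(t,\cdot)$ is supported in $[-1/2,1/2]^d$, partition this cube into $n^d$ small boxes $\{b_j\}_{j\in\mathcal{D}_n}$, and use uniform continuity of $E_{\mathbf{A}}$ (since $\mathbf{A}\in\mathbf{C}_0^\infty$) to replace $\mathbf{E}_s^{\mathbf{A}_l}(x,x+e_k)$ by $[E_{\mathbf{A}}(s,l^{-1}x)](e_k)$ with error $\mathcal{O}(n^{-1}+l^{-1})$, uniformly in $x\in\Lambda_l$, $k\in\{1,\ldots,d\}$ and $s\in\mathbb{R}$. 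Using the symmetries $P_{\mathbf{x}}=P_{\mathbf{x}'}$ and $\mathbf{E}_s^{\mathbf{A}_l}(x,x+e_k)=-\mathbf{E}_s^{\mathbf{A}_l}(x+e_k,x)$, the sum over $\mathfrak{K}$ collapses to $2\sum_{x,k}$ and the bond coefficient $\sigma_{\mathrm{d}}^{(\omega)}(x,x+e_k)=\varrho^{(\beta,\omega,\lambda)}(P_{(x,x+e_k)})$ defines, via (\ref{Fermi statistic}) and the uniqueness of the KMS state, an additive, translation-invariant, bounded measurable process in the sense of Definition \ref{Additive process}. Applying the space-inhomogeneous ergodic Theorem \ref{Ackoglu--Krengel ergodic theorem III} for each $k\in\{1,\ldots,d\}$ with the continuous functions $x\mapsto [E_{\mathbf{A}}(s_1,x)](e_k)[E_{\mathbf{A}}(s_2,x)](e_k)$ and using (\ref{def sigma_dbisbis})--(\ref{def sigma_dbis}) to identify $\mathbb{E}[\sigma_{\mathrm{d}}^{(\omega)}(0,e_k)]=\{\mathbf{\Xi}_{\mathrm{d}}\}_{k,k}$ produces the claimed limit on the intersection of finitely many full-measure sets (one per direction $e_k$, together with the set from Theorem \ref{main 1 copy(18)}). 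Finally, sending $\eta\to 0$ disposes of the $\mathcal{O}(\eta)$ remainder.

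The main technical obstacle is twofold. First, one must justify that the piecewise-constant approximation error and the ergodic convergence can be made uniform in $t\geq t_0$ over compact sets; this follows from the fact that $\mathbf{A}\in\mathbf{C}_0^\infty$ makes the three families of functions $s\mapsto E_{\mathbf{A}}(s,\cdot)$, $s\mapsto\partial_s E_{\mathbf{A}}(s,\cdot)$ equibounded and equicontinuous, so that the bound (\ref{inequality G cool1}) passes through the double time integral uniformly. Second, the full-measure set $\tilde{\Omega}$ must not depend on $\mathbf{A}$; this is handled by noting that only the finitely many bond-direction processes $\{\sigma_{\mathrm{d}}^{(\omega)}(\cdot,\cdot+e_k)\}_{k=1}^d$ enter the ergodic argument (the test function $f$ in Theorem \ref{Ackoglu--Krengel ergodic theorem III} is deterministic and can be varied freely on the resulting full-measure set, since the proof of that theorem relies only on a countable family of box-indicator limits via (\ref{equicontinuous})). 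Combining this with the uniform correction $\mathcal{O}(\eta)$ and the already-established vanishing of $\mathfrak{G}_l^{(\omega)}(t)$ yields the stated identity for $\mathfrak{i}_{\mathrm{d}}(t)$ on a single full-measure subset $\tilde{\Omega}\subset\Omega$, uniformly on compact time sets.
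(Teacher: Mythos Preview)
Your proposal is correct and follows essentially the same route as the paper: both start from the asymptotic expansion (\ref{asymptotic expansion dia}), kill the first-order term via Theorem \ref{main 1 copy(18)}, handle the quadratic term by piecewise-constant approximation of the electric field combined with the ergodic theorem, and finish with Lebesgue's dominated convergence theorem. The only cosmetic difference is that the paper invokes Theorem \ref{Ackoglu--Krengel ergodic theorem II} directly (mirroring the proof of Theorem \ref{main 1 copy(18)}), whereas you invoke the pre-packaged space-inhomogeneous version, Theorem \ref{Ackoglu--Krengel ergodic theorem III}; since the latter is derived from the former via exactly the box-partition argument you describe, and since---as you correctly observe---the full-measure set produced there depends only on the underlying additive process and not on the test function $f$, the two presentations are equivalent.
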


\begin{proof}
Let $\beta \in \mathbb{R}^{+}$ and $\lambda \in \mathbb{R}_{0}^{+}$. By (\ref%
{backwards -1bispara}) and (\ref{asymptotic expansion dia}), for any $%
\mathbf{A}\in \mathbf{C}_{0}^{\infty }$, there is $\eta _{0}\in \mathbb{R}%
^{+}$ such that, for all $|\eta |\in (0,\eta _{0}]$, $l,\beta \in \mathbb{R}%
^{+}$, $\omega \in \Omega $, $\lambda \in \mathbb{R}_{0}^{+}$ and $t\geq
t_{0}$,%
\begin{equation}
\mathfrak{I}_{\mathrm{d}}^{(\omega ,\mathbf{A})}\left( t\right) -\eta
\left\vert \Lambda _{l}\right\vert \mathfrak{G}_{l}^{(\omega )}\left(
t\right) =\eta ^{2}\left\vert \Lambda _{l}\right\vert
\int\nolimits_{t_{0}}^{t}\mathrm{d}s_{1}\int_{t_{0}}^{s_{1}}\mathrm{d}s_{2}%
\mathbf{\tilde{X}}_{l}^{(\omega )}\left( s_{1},s_{2}\right) +\mathcal{O}%
(\eta ^{3}l^{d})  \label{dia0}
\end{equation}%
with the energy density $\mathfrak{G}_{l}^{(\omega )}$ defined by (\ref{G1})
while%
\begin{equation}
\mathbf{\tilde{X}}_{l}^{(\omega )}\left( s_{1},s_{2}\right) :=\frac{1}{%
2\left\vert \Lambda _{l}\right\vert }\underset{\mathbf{x}\in \mathfrak{K}}{%
\sum }\varrho ^{(\beta ,\omega ,\lambda )}\left( P_{\mathbf{x}}\right)
\mathbf{E}_{s_{1}}^{\mathbf{A}_{l}}(\mathbf{x})\mathbf{E}_{s_{2}}^{\mathbf{A}%
_{l}}(\mathbf{x})  \label{dia1}
\end{equation}%
for any $s_{1},s_{2}\in \mathbb{R}$. The correction term of order $\mathcal{O%
}(\eta ^{3}l^{d})$ is uniformly bounded in $\beta \in \mathbb{R}^{+}$, $%
\omega \in \Omega $, $\lambda \in \mathbb{R}_{0}^{+}$ and $t\geq t_{0}$.

Similar to the proof of Theorem \ref{main 1 copy(18)} we use
piecewise--constant approximations of the (smooth) electric field and
Theorem \ref{Ackoglu--Krengel ergodic theorem II} together with (\ref{bound
easy}), (\ref{inequality G cool1}) and Lebesgue's dominated convergence
theorem to compute the limit $l\rightarrow \infty $ of the r.h.s. of (\ref%
{dia0}) (without the factor $\eta ^{2}l^{d}$). More precisely, one finds the
existence a measurable subset $\tilde{\Omega}\equiv \tilde{\Omega}^{(\beta
,\lambda )}\subset \Omega $ of full measure such that, for all $\omega \in
\tilde{\Omega}$,%
\begin{eqnarray}
&&\underset{l\rightarrow \infty }{\lim }\left\{ \int\nolimits_{t_{0}}^{t}%
\mathrm{d}s_{1}\int_{t_{0}}^{s_{1}}\mathrm{d}s_{2}\ \mathbf{\tilde{X}}%
_{l}^{(\omega )}\left( s_{1},s_{2}\right) \right\}  \label{birkoff easy} \\
&=&\int\nolimits_{t_{0}}^{t}\mathrm{d}s_{1}\int\nolimits_{t_{0}}^{s_{1}}%
\mathrm{d}s_{2}\int\nolimits_{\mathbb{R}^{d}}\mathrm{d}^{d}x\ \left\langle
E_{\mathbf{A}}(s_{1},x),\mathbf{\Xi }_{\mathrm{d}}E_{\mathbf{A}%
}(s_{2},x)\right\rangle  \notag
\end{eqnarray}%
uniformly for all $t\geq t_{0}$. The assertion follows from (\ref{dia0}) and
(\ref{birkoff easy}) together with Theorem \ref{main 1 copy(18)} and
Fubini's theorem.
\end{proof}

\subsection{Paramagnetic Energy Density\label{Section AC--Ohm's Law}}


The aim of this section is to prove the existence of the paramagnetic energy
density $\mathfrak{i}_{\mathrm{p}}$ defined by (\ref{paramagnetic energy
density}), that is,
\begin{equation}
\mathfrak{i}_{\mathrm{p}}\left( t\right) =\underset{(\eta
,l^{-1})\rightarrow (0,0)}{\lim }\left\{ \left( \eta ^{2}\left\vert \Lambda
_{l}\right\vert \right) ^{-1}\mathfrak{I}_{\mathrm{p}}^{(\omega ,\eta
\mathbf{A}_{l})}\left( t\right) \right\}
\label{paramagnetic energy densitybis}
\end{equation}%
for $\beta \in \mathbb{R}^{+}$, $\omega \in \Omega $, $\lambda \in \mathbb{R}%
_{0}^{+}$, $\mathbf{A}\in \mathbf{C}_{0}^{\infty }$ and $t\geq t_{0}$. Our
proof requires similar arguments to those proving Theorems \ref{main 1
copy(18)}--\ref{main 1 copy(14)}:

\begin{itemize}
\item The asymptotic expansion given by \cite[Theorem 5.12]{OhmII} for the
paramagnetic energy increment.

\item We divide the (compact) support $\mathrm{supp}(\mathbf{A}(t,.))\subset
\mathbb{R}^{d}$ of the vector potential $\mathbf{A}(t,.)$ at $t\in \mathbb{R}
$ in small regions to use the piecewise--constant approximation of the
smooth electric field $E_{\mathbf{A}}$.

\item Theorem \ref{Ackoglu--Krengel ergodic theorem II} and the fact that
any countable intersection of measurable sets of full measure has full
measure.

\item Lebesgue's dominated convergence theorem.
\end{itemize}

The proof for the paramagnetic case is, however, technically more involved
than those of Section \ref{Section AC--Ohm's Law copy(1)}. Indeed, to use
\cite[Lemma 5.2, Theorem 5.12]{OhmII}, we additionally need some (space)
decay of complex--time two--point correlation functions. To this end, we
invoke Theorem \ref{decay bound theorem}. The application of the latter
requires some technical preparation and we present the corresponding
additional arguments in various lemmata which then yield a proposition and a
few corollaries and theorems. This rather technical study ends with Theorem %
\ref{main 1 copy(2)}, which serves as a springboard to obtain Theorem \ref%
{main 1}.

First, by \cite[Lemma 5.2, Theorem 5.12]{OhmII}, for any $\mathbf{A}\in
\mathbf{C}_{0}^{\infty }$, there is $\eta _{0}\in \mathbb{R}^{+}$ such that,
for all $|\eta |\in (0,\eta _{0}]$, $l,\beta \in \mathbb{R}^{+}$, $\omega
\in \Omega $, $\lambda \in \mathbb{R}_{0}^{+}$ and $t\geq t_{0}$,%
\begin{equation}
\mathfrak{I}_{\mathrm{p}}^{(\omega ,\eta \mathbf{A}_{l})}\left( t\right)
=\eta ^{2}\left\vert \Lambda _{l}\right\vert \int\nolimits_{t_{0}}^{t}%
\mathrm{d}s_{1}\int\nolimits_{t_{0}}^{s_{1}}\mathrm{d}s_{2}\ \mathbf{X}%
_{l,0}^{(\omega )}(s_{1},s_{2})+\mathcal{O}(\eta ^{3}l^{d})\ .
\label{Lemma non existing}
\end{equation}%
The correction term of order $\mathcal{O}(\eta ^{3}l^{d})$ is uniformly
bounded in $\beta \in \mathbb{R}^{+}$, $\omega \in \Omega $, $\lambda \in
\mathbb{R}_{0}^{+}$ and $t\geq t_{0}$. Here, $\mathbf{X}_{l,\upsilon
}^{(\omega )}$ is defined, for any $\upsilon \in \lbrack 0,\beta /2)$ and $%
s_{1},s_{2}\in \mathbb{R}$, by
\begin{eqnarray}
\mathbf{X}_{l,\upsilon }^{(\omega )}(s_{1},s_{2}) &:=&\frac{1}{4\left\vert
\Lambda _{l}\right\vert }\underset{\mathbf{x},\mathbf{y}\in \mathfrak{K}}{\sum }\int\nolimits_{\upsilon }^{\beta -\upsilon }\mathrm{d}\alpha \left(
\mathfrak{C}_{s_{1}-s_{2}+i\alpha }^{(\omega )}(\mathbf{x},\mathbf{y})-\mathfrak{C}_{i\alpha }^{(\omega )}(\mathbf{x},\mathbf{y})\right)   \notag \\
&&\qquad \qquad \qquad \qquad \qquad \qquad \times \mathbf{E}_{s_{1}}^{\mathbf{A}_{l}}\left( \mathbf{x}\right) \mathbf{E}_{s_{2}}^{\mathbf{A}_{l}}(\mathbf{y})\ .  \label{Lemma non existing2}
\end{eqnarray}%
$\mathfrak{C}_{t+i\alpha }^{(\omega )}$ is the map from $\mathfrak{L}^{4}$
to $\mathbb{C}$ defined by%
\begin{equation}
\mathfrak{C}_{t+i\alpha }^{(\omega )}(\mathbf{x},\mathbf{y}):=\underset{\pi
,\pi ^{\prime }\in S_{2}}{\sum }\varepsilon _{\pi }\varepsilon _{\pi
^{\prime }}C_{t+i\alpha }^{(\omega )}(y^{\pi ^{\prime }(1)},x^{\pi
(1)})C_{-t+i(\beta -\alpha )}^{(\omega )}(x^{\pi (2)},y^{\pi ^{\prime }(2)})
\label{map cool}
\end{equation}%
for any $\mathbf{x}:=(x^{(1)},x^{(2)})\in \mathfrak{L}^{2}$ and $\mathbf{y}%
:=(y^{(1)},y^{(2)})\in \mathfrak{L}^{2}$, where $\pi ,\pi ^{\prime }\in
S_{2} $ are by definition permutations of $\{1,2\}$ with signatures $%
\varepsilon _{\pi },\varepsilon _{\pi ^{\prime }}\in \{-1,1\}$. The
definition of the set $\mathfrak{K}\subset \mathfrak{L}^{2}$ of bonds of
nearest neighbors is given by (\ref{proche voisins0}). Note also that the
integral in (\ref{Lemma non existing2}) can be exchanged with the (finite)
sum because $\mathbf{A}\in \mathbf{C}_{0}^{\infty }$.

The first important result of the present subsection will be a proof that $%
\mathbf{X}_{l,0}^{(\omega )}$ almost surely converges to a deterministic
function, as $l\rightarrow \infty $. See Corollary \ref{lemma conductivty4
copy(6)}. Then, we will use Lebesgue's dominated convergence theorem to get
the paramagnetic energy increment $\mathfrak{I}_{\mathrm{p}}^{(\omega ,\eta
\mathbf{A}_{l})}$ in the limit $(\eta ,l^{-1})\rightarrow (0,0)$, see
Theorem \ref{main 1 copy(2)}.

By Theorem \ref{decay bound theorem}, note that, for all $\varepsilon ,\beta
\in \mathbb{R}^{+}$, $\lambda \in \mathbb{R}_{0}^{+}$, $t\in \mathbb{R}$, $%
\upsilon \in (0,\beta /2)$ and $\alpha \in \lbrack \upsilon ,\beta -\upsilon
]$, the complex--time two--point correlation function $C_{t+i\alpha
}^{(\omega )}$ can be written as the sum%
\begin{equation}
C_{t+i\alpha }^{(\omega )}\left( \mathbf{x}\right) =A_{t+i\alpha ,\upsilon
,\varepsilon }^{(\omega )}\left( \mathbf{x}\right) +B_{t+i\alpha ,\upsilon
,\varepsilon }^{(\omega )}\left( \mathbf{x}\right) \ ,\qquad \mathbf{x}%
:=(x^{(1)},x^{(2)})\in \mathfrak{L}^{2}\ ,  \label{inequality cool ohm}
\end{equation}%
of two maps $A_{t+i\alpha ,\upsilon ,\varepsilon }^{(\omega )},B_{t+i\alpha
,\upsilon ,\varepsilon }^{(\omega )}$ from $\mathfrak{L}^{2}$ to $\mathbb{C}$%
. This decomposition has the following useful property: $A_{t+i\alpha
,\upsilon ,\varepsilon }^{(\omega )}$\ can be seen as the kernel (w.r.t. the
canonical basis $\{\mathfrak{e}_{x}\}_{x\in \mathfrak{L}}$) of an operator,
again denoted by $A_{t+i\alpha ,\upsilon ,\varepsilon }^{(\omega )}\in
\mathcal{B}(\ell ^{2}(\mathfrak{L}))$, with arbitrarily small operator norm $%
\Vert A_{t+i\alpha ,\upsilon ,\varepsilon }^{(\omega )}\Vert _{\mathrm{op}%
}\leq \varepsilon $, whereas $B_{t+i\alpha ,\upsilon ,\varepsilon }^{(\omega
)}\left( \mathbf{x}\right) $ rapidly decays, as $|x^{(1)}-x^{(2)}|%
\rightarrow \infty $. This is however only satisfied if $\alpha \in \lbrack
\upsilon ,\beta -\upsilon ]$ with fixed $\upsilon \in (0,\beta /2)$, see
Theorem \ref{decay bound theorem}.

As a consequence, the first step is to approximate $\mathbf{X}%
_{l,0}^{(\omega )}$ with $\mathbf{X}_{l,\upsilon }^{(\omega )}$ for
arbitrarily small parameters $\upsilon >0$:

\begin{lemma}[Approximation I]
\label{lemma conductivty0}\mbox{
}\newline
Let $\mathbf{A}\in \mathbf{C}_{0}^{\infty }$. Then,
\begin{equation*}
\mathbf{X}_{l,0}^{(\omega )}\left( s_{1},s_{2}\right) =\mathbf{X}%
_{l,\upsilon }^{(\omega )}\left( s_{1},s_{2}\right) +\mathcal{O}(\upsilon )\
,
\end{equation*}%
uniformly for $l,\beta \in \mathbb{R}^{+}$, $\omega \in \Omega $, $\lambda
\in \mathbb{R}_{0}^{+}$ and $s_{1},s_{2}\in \mathbb{R}$.
\end{lemma}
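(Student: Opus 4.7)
The plan is to reduce the lemma to a uniform bound on the $\alpha$-integrand defining $\mathbf{X}_{l,\upsilon}^{(\omega)}$. By (\ref{Lemma non existing2}), the difference $\mathbf{X}_{l,0}^{(\omega)}(s_1,s_2)-\mathbf{X}_{l,\upsilon}^{(\omega)}(s_1,s_2)$ is exactly the contribution of the residual intervals $[0,\upsilon]\cup[\beta-\upsilon,\beta]$, of total $\alpha$-measure $2\upsilon$. The lemma thus reduces to proving a finite, $\upsilon$-independent bound on
\[
\mathcal{J}_{t,\alpha}^{(\omega)}(s_1,s_2) := \frac{1}{|\Lambda_l|}\sum_{\mathbf{x},\mathbf{y}\in\mathfrak{K}}\mathfrak{C}_{t+i\alpha}^{(\omega)}(\mathbf{x},\mathbf{y})\,\mathbf{E}_{s_1}^{\mathbf{A}_l}(\mathbf{x})\,\mathbf{E}_{s_2}^{\mathbf{A}_l}(\mathbf{y}),
\]
uniform in $l,\beta,\omega,\lambda,s_1,s_2$ and in $\alpha\in[0,\beta]$, $t\in\mathbb{R}$.

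The starting ingredient is the elementary estimate $|F_{\alpha}^{\beta}(\varkappa)|=e^{\alpha\varkappa}/(1+e^{\beta\varkappa})\le 1$ for all $\alpha\in[0,\beta]$ and $\varkappa\in\mathbb{R}$, which follows from $e^{\alpha\varkappa}\le\max\{1,e^{\beta\varkappa}\}\le 1+e^{\beta\varkappa}$. Combined with (\ref{correlation operator}), this yields $\|U(t,\alpha)\|_{\mathrm{op}}\le 1$ uniformly in $t,\alpha,\beta,\lambda,\omega$, where $U(t,\alpha):=e^{-it(\Delta_{\mathrm{d}}+\lambda V_\omega)}F_{\alpha}^{\beta}(\Delta_{\mathrm{d}}+\lambda V_\omega)$. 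The main step is then to rewrite each of the four products of correlation functions appearing in (\ref{map cool}) as a trace on $\ell^{2}(\mathfrak{L})$. Parametrizing $\mathbf{x}=(x,x+z)$ and $\mathbf{y}=(y,y+z')$ with $|z|=|z'|=1$, and introducing the translations $T_z\mathfrak{e}_x:=\mathfrak{e}_{x+z}$ and the multiplication operators $M_{\phi_{s,l}^{z}}$ with $\phi_{s,l}^{z}(x):=\mathbf{E}_{s}^{\mathbf{A}_l}(x,x+z)$, the contribution of the identity--identity permutation in (\ref{map cool}) equals
\[
\sum_{|z|=|z'|=1}\mathrm{Tr}\bigl(M_{\phi_{s_1,l}^{z}}\,U(t,\alpha)\,M_{\phi_{s_2,l}^{z'}}\,T_{-z'}\,U(-t,\beta-\alpha)\,T_{z}\bigr),
\]
and analogous traces arise from the remaining three permutations, the placements of $z,z'$ and of the two unitaries $U(t,\alpha),U(-t,\beta-\alpha)$ being reshuffled. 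Applying $|\mathrm{Tr}(AB)|\le\|A\|_{\mathrm{op}}\|B\|_{1}$ with the trace-class factor chosen as $M_{\phi_{s_1,l}^{z}}$ yields the bound $\|\mathbf{E}^{\mathbf{A}}\|_{\infty}\cdot\|M_{\phi_{s_1,l}^{z}}\|_{1}\le\|\mathbf{E}^{\mathbf{A}}\|_{\infty}^{2}\cdot|\mathfrak{L}\cap (l\cdot\mathrm{supp}\,\mathbf{A})|=\mathcal{O}(l^{d})$, since $\|U(\pm t,\cdot)\|_{\mathrm{op}}\le 1$, $\|T_{\pm z}\|_{\mathrm{op}}=1$, and the rescaled potential $\mathbf{A}_l$ is supported in a set of cardinality $\mathcal{O}(l^{d})$. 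Division by $|\Lambda_l|\asymp l^{d}$ delivers the required uniform bound on $\mathcal{J}_{t,\alpha}^{(\omega)}$, and integration over the $\alpha$-intervals of total length $2\upsilon$ completes the proof.

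The only genuine obstacle is the bookkeeping: the four permutation terms of (\ref{map cool}) produce four distinct trace expressions, each with its own placement of the translations $T_{\pm z},T_{\pm z'}$ and of the unitaries $U(\pm t,\alpha),U(\mp t,\beta-\alpha)$; one must check in each case that selecting a single $M_{\phi_{s_j,l}^{\cdot}}$ as the trace-class factor and the remaining operators as bounded factors leads to the same $\mathcal{O}(l^{d})$ estimate. This is mechanical and requires no further analytical input beyond the uniform contractivity $\|F_{\alpha}^{\beta}(\Delta_{\mathrm{d}}+\lambda V_\omega)\|_{\mathrm{op}}\le 1$ on $\alpha\in[0,\beta]$ and the compact space--time support of $\mathbf{A}$, which is precisely why the constant absorbing the $\mathcal{O}(\upsilon)$ bound is independent of $l,\beta,\omega,\lambda,s_1,s_2$.
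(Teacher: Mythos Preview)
Your proof is correct and follows the same overall strategy as the paper: bound the $\alpha$-integrand uniformly in all parameters and then integrate over the residual intervals of total length $2\upsilon$. The technical implementation differs slightly. The paper views $\mathfrak{C}_{t+i\alpha}^{(\omega)}$ as the kernel of an operator on $\ell^{2}(\mathfrak{L})\otimes\ell^{2}(\mathfrak{L})$, invokes the bound $\Vert\mathfrak{C}_{t+i\alpha}^{(\omega)}\Vert_{\mathrm{op}}\le 4$ from \cite[Lemma 5.3]{OhmII}, and then applies the Cauchy--Schwarz inequality to the vectors $(\mathbf{E}_{s_j}^{\mathbf{A}_l}(\mathbf{x}))_{\mathbf{x}\in\mathfrak{K}}$ to obtain the uniform bound (\ref{Lemma non existing2bis}). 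You instead unpack each of the four permutation terms of (\ref{map cool}) as a trace on $\ell^{2}(\mathfrak{L})$ and apply the Schatten--H\"older inequality $|\mathrm{Tr}(AB)|\le\Vert A\Vert_{\mathrm{op}}\Vert B\Vert_{1}$, deriving the needed contractivity $\Vert F_{\alpha}^{\beta}(\Delta_{\mathrm{d}}+\lambda V_{\omega})\Vert_{\mathrm{op}}\le 1$ directly from $|F_{\alpha}^{\beta}|\le 1$. Your route is more self-contained (no citation to \cite{OhmII} is needed), at the cost of handling the four permutation terms separately; the paper's tensor-product formulation absorbs this bookkeeping into a single operator-norm estimate. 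Both arguments yield the same constant up to a factor depending only on $d$ and $\mathbf{A}$.
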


\begin{proof}
The canonical orthonormal basis of $\ell ^{2}(\mathfrak{L})\otimes \ell ^{2}(%
\mathfrak{L})$ is defined by $\{\mathbf{e}_{\mathbf{x}}\}_{\mathbf{x}\in
\mathfrak{L}^{2}}$ with%
\begin{equation}
\mathbf{e}_{\mathbf{x}}:=\mathfrak{e}_{x^{(1)}}\otimes \mathfrak{e}%
_{x^{(2)}}\ ,\qquad \mathbf{x}:=(x^{(1)},x^{(2)})\in \mathfrak{L}^{2}\ .
\label{canonical orthonormal basis l4}
\end{equation}%
Recall that $\mathfrak{e}_{x}(y)\equiv \delta _{x,y}\in \ell ^{2}(\mathfrak{L%
})$. The coefficient $\mathfrak{C}_{t+i\alpha }^{(\omega )}$ defined by (\ref%
{map cool}) can be seen as the kernel (w.r.t. $\{\mathbf{e}_{\mathbf{x}}\}_{%
\mathbf{x}\in \mathfrak{L}^{2}}$) of a bounded operator on $\ell ^{2}(%
\mathfrak{L})\otimes \ell ^{2}(\mathfrak{L})$ that is again denoted by $%
\mathfrak{C}_{t+i\alpha }^{(\omega )}$. In particular, similar to (\ref%
{definition de B}),
\begin{equation}
\underset{\mathbf{x},\mathbf{y}\in \mathfrak{K}}{\sum }\mathfrak{C}%
_{s_{1}-s_{2}+i\alpha }^{(\omega )}(\mathbf{x},\mathbf{y})\mathbf{E}%
_{s_{1}}^{\mathbf{A}_{l}}\left( \mathbf{x}\right) \mathbf{E}_{s_{2}}^{%
\mathbf{A}_{l}}(\mathbf{y})=\underset{\mathbf{x},\mathbf{y}\in \mathfrak{K}}{%
\sum }\left\langle \mathbf{e}_{\mathbf{y}},\mathfrak{C}_{s_{1}-s_{2}+i\alpha
}^{(\omega )}\mathbf{e}_{\mathbf{x}}\right\rangle \mathbf{E}_{s_{1}}^{%
\mathbf{A}_{l}}\left( \mathbf{x}\right) \mathbf{E}_{s_{2}}^{\mathbf{A}_{l}}(%
\mathbf{y})\ .  \label{equlaity debile}
\end{equation}%
In particular, via \cite[Lemma 5.3]{OhmII}, i.e., $\Vert \mathfrak{C}%
_{t+i\alpha }^{(\omega )}\Vert _{\mathrm{op}}\leq 4$, and Equation (\ref%
{bound easy}) we arrive at the upper bound
\begin{equation}
\left\vert \frac{1}{4\left\vert \Lambda _{l}\right\vert }\underset{\mathbf{x}%
,\mathbf{y}\in \mathfrak{K}}{\sum }\mathfrak{C}_{s_{1}-s_{2}+i\alpha
}^{(\omega )}(\mathbf{x},\mathbf{y})\mathbf{E}_{s_{1}}^{\mathbf{A}%
_{l}}\left( \mathbf{x}\right) \mathbf{E}_{s_{2}}^{\mathbf{A}_{l}}(\mathbf{y}%
)\right\vert \leq 2d\Vert \mathbf{E}^{\mathbf{A}}\Vert _{\infty }^{2}%
\underset{t\in \mathbb{R}}{\max }\left\vert \mathrm{supp}(\mathbf{A}%
(t,.))\right\vert  \label{Lemma non existing2bis}
\end{equation}%
for any $\mathbf{A}\in \mathbf{C}_{0}^{\infty }$, $l,\beta \in \mathbb{R}%
^{+} $, $\omega \in \Omega $, $\lambda \in \mathbb{R}_{0}^{+}$, $\alpha \in
\lbrack 0,\beta ]$ and $s_{1},s_{2}\in \mathbb{R}$. Therefore, the assertion
follows from (\ref{Lemma non existing2}) combined with (\ref{Lemma non
existing2bis}).
\end{proof}

Because of (\ref{inequality cool ohm}) and Theorem \ref{decay bound theorem}%
, it is natural to define, at any $\varepsilon ,\beta \in \mathbb{R}^{+}$, $%
t\in \mathbb{R}$, $\upsilon \in (0,\beta /2)$ and $\alpha \in \lbrack
\upsilon ,\beta -\upsilon ]$, the map $\mathfrak{B}_{t+i\alpha ,\upsilon
,\varepsilon }^{(\omega )}$ from $\mathfrak{L}^{4}$ to $\mathbb{C}$ by%
\begin{equation}
\mathfrak{B}_{t+i\alpha ,\upsilon ,\varepsilon }^{(\omega )}(\mathbf{x},%
\mathbf{y}):=\underset{\pi ,\pi ^{\prime }\in S_{2}}{\sum }\varepsilon _{\pi
}\varepsilon _{\pi ^{\prime }}B_{t+i\alpha ,\upsilon ,\varepsilon }^{(\omega
)}(y^{\pi ^{\prime }(1)},x^{\pi (1)})B_{-t+i(\beta -\alpha ),\upsilon
,\varepsilon }^{(\omega )}(x^{\pi (2)},y^{\pi ^{\prime }(2)})
\label{B fract}
\end{equation}%
for any $\mathbf{x}:=(x^{(1)},x^{(2)})\in \mathfrak{L}^{2}$ and $\mathbf{y}%
:=(y^{(1)},y^{(2)})\in \mathfrak{L}^{2}$. In other words, this map is
defined by replacing in (\ref{map cool}) the complex--time two--point
correlation function $C_{t+i\alpha }^{(\omega )}$ by its approximation $%
B_{t+i\alpha ,\upsilon ,\varepsilon }^{(\omega )}$, which comes from the
decomposition (\ref{inequality cool ohm}). Similarly, for $s_{1},s_{2}\in
\mathbb{R}$, let%
\begin{eqnarray}
\mathbf{Y}_{l,\upsilon ,\varepsilon ,0}^{(\omega )}(s_{1},s_{2}) &:=&\frac{1}{4\left\vert \Lambda _{l}\right\vert }\underset{\mathbf{x},\mathbf{y}\in
\mathfrak{K}}{\sum }\int\nolimits_{\upsilon }^{\beta -\upsilon }\mathrm{d}\alpha \left( \mathfrak{B}_{s_{1}-s_{2}+i\alpha ,\upsilon ,\varepsilon
}^{(\omega )}(\mathbf{x},\mathbf{y})-\mathfrak{B}_{i\alpha ,\upsilon
,\varepsilon }^{(\omega )}(\mathbf{x},\mathbf{y})\right)   \notag \\
&&\qquad \qquad \qquad \qquad \qquad \qquad \times \mathbf{E}_{s_{1}}^{\mathbf{A}_{l}}\left( \mathbf{x}\right) \mathbf{E}_{s_{2}}^{\mathbf{A}_{l}}(\mathbf{y})\ .  \label{B fractbis}
\end{eqnarray}%
We show in the next lemma that it is a good approximation of (\ref{Lemma non
existing2}), provided $\upsilon >0$.

\begin{lemma}[Approximation II]
\label{lemma conductivty1 copy(1)}\mbox{
}\newline
Let $\varepsilon ,\beta \in \mathbb{R}^{+}$, $\mathbf{A}\in \mathbf{C}%
_{0}^{\infty }$ and $\upsilon \in (0,\beta /2)$. Then,
\begin{equation*}
\mathbf{X}_{l,\upsilon }^{(\omega )}\left( s_{1},s_{2}\right) =\mathbf{Y}%
_{l,\upsilon ,\varepsilon ,0}^{(\omega )}(s_{1},s_{2})+\mathcal{O}%
(\varepsilon )\ ,
\end{equation*}%
uniformly for $l\in \mathbb{R}^{+}$, $\omega \in \Omega $, $\lambda \in
\mathbb{R}_{0}^{+}$ and $s_{1},s_{2}\in \mathbb{R}$.
\end{lemma}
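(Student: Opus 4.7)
The plan is to substitute the decomposition $C^{(\omega)}_{t+i\alpha} = A^{(\omega)}_{t+i\alpha,\upsilon,\varepsilon} + B^{(\omega)}_{t+i\alpha,\upsilon,\varepsilon}$ from Theorem \ref{decay bound theorem} into the definition (\ref{map cool}) of $\mathfrak{C}^{(\omega)}_{t+i\alpha}$. Each factor $C$ splits into $A+B$, so $\mathfrak{C}$ becomes a sum of four terms per permutation pair $(\pi,\pi')$. The ``$B\cdot B$'' term reproduces $\mathfrak{B}^{(\omega)}_{t+i\alpha,\upsilon,\varepsilon}$ defined in (\ref{B fract}). The three remaining terms each contain at least one factor of $A$, and the goal is to show that they contribute $\mathcal{O}(\varepsilon)$, uniformly in all parameters.

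The key point is to view each mixed ``$A\cdot B$'', ``$B\cdot A$'' or ``$A\cdot A$'' kernel on $\mathfrak{L}^4$, for fixed $\pi,\pi'$, as the kernel (w.r.t.\ the basis $\{\mathbf{e}_{\mathbf{x}}\}$) of an operator on $\ell^2(\mathfrak{L})\otimes\ell^2(\mathfrak{L})$ of the form $U_{\pi'}\,(K_1\otimes K_2^{T})\,U_{\pi}$, where $K_1,K_2\in\{A^{(\omega)}_{\cdot},B^{(\omega)}_{\cdot}\}$ and $U_\pi,U_{\pi'}$ are the unitary flip operators implementing the permutations on the tensor product. This is exactly the operator-theoretic reformulation used to pass from the kernel notation to (\ref{equlaity debile}). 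Since the flips are unitary and transposition preserves the operator norm, each mixed operator has norm bounded by $\|A^{(\omega)}\|_{\mathrm{op}}\cdot\|B^{(\omega)}\|_{\mathrm{op}}$, $\|B^{(\omega)}\|_{\mathrm{op}}\cdot\|A^{(\omega)}\|_{\mathrm{op}}$, or $\|A^{(\omega)}\|_{\mathrm{op}}^2$, respectively. By Theorem \ref{decay bound theorem}(i), all of these are bounded by $\varepsilon\,D$ (with $D$ depending only on $\beta,\upsilon$), uniformly in $\omega\in\Omega$, $\lambda\in\mathbb{R}_0^+$ and $t\in\mathbb{R}$.

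Once the mixed kernels are identified with bounded operators of norm $\mathcal{O}(\varepsilon)$, the estimation proceeds exactly as in (\ref{Lemma non existing2bis}): rewriting the double sum over $\mathbf{x},\mathbf{y}\in\mathfrak{K}$ as an inner product against vectors supported on nearest-neighbor bonds inside $\mathrm{supp}(\mathbf{A}(t,\cdot))$, and using $|\mathbf{E}^{\mathbf{A}_l}_{s}(\mathbf{x})|\leq \|\mathbf{E}^{\mathbf{A}}\|_{\infty}$ together with $|\mathfrak{K}\cap \Lambda_l^2|\leq 2d|\Lambda_l|$, one obtains the pointwise bound
\[
\left|\frac{1}{4|\Lambda_l|}\sum_{\mathbf{x},\mathbf{y}\in\mathfrak{K}}\big(\mathfrak{C}-\mathfrak{B}\big)^{(\omega)}_{t+i\alpha,\upsilon,\varepsilon}(\mathbf{x},\mathbf{y})\,\mathbf{E}_{s_1}^{\mathbf{A}_l}(\mathbf{x})\,\mathbf{E}_{s_2}^{\mathbf{A}_l}(\mathbf{y})\right|\leq D'\,\varepsilon,
\]
with $D'$ depending only on $\beta,\upsilon,d$ and $\mathbf{A}$. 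Finally, integrating over $\alpha\in[\upsilon,\beta-\upsilon]$, an interval of length at most $\beta$, preserves the $\mathcal{O}(\varepsilon)$ estimate and yields the claim.

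The main obstacle is the operator-theoretic bookkeeping in the second step: the permutations $\pi,\pi'\in S_2$ in (\ref{map cool}) scramble the arguments of the kernels, so one must be careful to identify the correct tensor/flip structure in order to reduce each mixed kernel to a tensor product of bounded operators (up to unitary conjugation) and thereby apply the operator norm bound of Theorem \ref{decay bound theorem}(i) uniformly in $\omega$, $\lambda$, $t$. Once that identification is in place, the remainder of the argument is essentially the estimate already performed in the proof of Lemma \ref{lemma conductivty0}.
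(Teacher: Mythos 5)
Your proposal is correct and follows essentially the same route as the paper: split $C=A+B$ via Theorem \ref{decay bound theorem}, identify the $B\cdot B$ contribution with $\mathfrak{B}$, control the three mixed terms through operator-norm bounds of order $\varepsilon$ from Theorem \ref{decay bound theorem}(i), and then repeat the volume-counting estimate of (\ref{Lemma non existing2bis}). The paper phrases the mixed-term bounds as Cauchy--Schwarz estimates on bilinear forms rather than as norms of flipped tensor products $U_{\pi'}(K_1\otimes K_2^{T})U_{\pi}$, but these are the same bound; your formulation in fact makes explicit the step the paper dismisses with \emph{``it obviously follows''}.
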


\begin{proof}
Let $\varepsilon ,\beta \in \mathbb{R}^{+}$, $\omega \in \Omega $, $\lambda
\in \mathbb{R}_{0}^{+}$, $\mathbf{A}\in \mathbf{C}_{0}^{\infty }$ and $%
\upsilon \in (0,\beta /2)$. By Theorem \ref{decay bound theorem} (i) and (%
\ref{definition de B}), $A_{t+i\alpha ,\upsilon ,\varepsilon }^{(\omega
)},B_{t+i\alpha ,\upsilon ,\varepsilon }^{(\omega )}$ can be seen as the
kernels (w.r.t. $\{\mathfrak{e}_{x}\}_{x\in \mathfrak{L}}$) of two bounded
operators on $\ell ^{2}(\mathfrak{L})$. Therefore, Theorem \ref{decay bound
theorem} (i) and the Cauchy--Schwarz inequality yield the existence of a
finite constant $D\in \mathbb{R}^{+}$ depending on $\beta ,\upsilon $ but
not on $\varepsilon \in \mathbb{R}^{+}$, $\omega \in \Omega $, $\lambda \in
\mathbb{R}_{0}^{+}$, $\alpha \in \lbrack \upsilon ,\beta -\upsilon ]$ and $%
t\in \mathbb{R}$ such that, for all $c_{x},c_{y}^{\prime }\in \mathbb{C}$, $%
x,y\in \mathfrak{L}$,
\begin{eqnarray*}
\left\vert \sum\nolimits_{x,y\in \mathfrak{L}}\overline{c_{x}}c_{y}^{\prime
}A_{t+i\alpha ,\upsilon ,\varepsilon }^{(\omega )}(x,y)\right\vert &\leq
&\left\Vert A_{t+i\alpha ,\upsilon ,\varepsilon }^{(\omega )}\right\Vert _{%
\mathrm{op}}\sqrt{\sum\nolimits_{x,y\in \mathfrak{L}}\left\vert
c_{x}\right\vert ^{2}\left\vert c_{y}^{\prime }\right\vert ^{2}} \\
&\leq &\varepsilon \ \sqrt{\sum\nolimits_{x,y\in \mathfrak{L}}\left\vert
c_{x}\right\vert ^{2}\left\vert c_{y}^{\prime }\right\vert ^{2}}, \\
\left\vert \sum\nolimits_{x,y\in \mathfrak{L}}\overline{c_{x}}c_{y}^{\prime
}B_{t+i\alpha ,\upsilon ,\varepsilon }^{(\omega )}(x,y)\right\vert &\leq
&\left\Vert B_{t+i\alpha ,\upsilon ,\varepsilon }^{(\omega )}\right\Vert _{%
\mathrm{op}}\sqrt{\sum\nolimits_{x,y\in \mathfrak{L}}\left\vert
c_{x}\right\vert ^{2}\left\vert c_{y}^{\prime }\right\vert ^{2}} \\
&\leq &D\sqrt{\sum\nolimits_{x,y\in \mathfrak{L}}\left\vert c_{x}\right\vert
^{2}\left\vert c_{y}^{\prime }\right\vert ^{2}}.
\end{eqnarray*}%
It obviously follows that, for all $c_{\mathbf{x}}\in \mathbb{C}$, $\mathbf{x%
}\in \mathfrak{K}$, and some similar constant $D\in \mathbb{R}^{+}$,%
\begin{eqnarray*}
\left\vert \underset{\mathbf{x,y}\in \mathfrak{K}}{\sum }\overline{c_{%
\mathbf{x}}}c_{\mathbf{y}}B_{t+i\alpha ,\upsilon ,\varepsilon }^{(\omega
)}(y^{(1)},x^{(1)})A_{-t+i(\beta -\alpha ),\upsilon ,\varepsilon }^{(\omega
)}(x^{(2)},y^{(2)})\right\vert &\leq &\varepsilon D\underset{\mathbf{x}\in
\mathfrak{K}}{\sum }\left\vert c_{\mathbf{x}}\right\vert ^{2}, \\
\left\vert \underset{\mathbf{x,y}\in \mathfrak{K}}{\sum }\overline{c_{%
\mathbf{x}}}c_{\mathbf{y}}A_{t+i\alpha ,\upsilon ,\varepsilon }^{(\omega
)}(y^{(1)},x^{(1)})B_{-t+i(\beta -\alpha ),\upsilon ,\varepsilon }^{(\omega
)}(x^{(2)},y^{(2)})\right\vert &\leq &\varepsilon D\underset{\mathbf{x}\in
\mathfrak{K}}{\sum }\left\vert c_{\mathbf{x}}\right\vert ^{2}, \\
\left\vert \underset{\mathbf{x,y}\in \mathfrak{K}}{\sum }\overline{c_{%
\mathbf{x}}}c_{\mathbf{y}}A_{t+i\alpha ,\upsilon ,\varepsilon }^{(\omega
)}(y^{(1)},x^{(1)})A_{-t+i(\beta -\alpha ),\upsilon ,\varepsilon }^{(\omega
)}(x^{(2)},y^{(2)})\right\vert &\leq &\varepsilon ^{2}D\underset{\mathbf{x}%
\in \mathfrak{K}}{\sum }\left\vert c_{\mathbf{x}}\right\vert ^{2},
\end{eqnarray*}%
provided $\alpha \in \lbrack \upsilon ,\beta -\upsilon ]$ with $\upsilon \in
(0,\beta /2)$. Here, $\mathbf{x}=(x^{(1)},x^{(2)})$, $\mathbf{y}%
=(y^{(1)},y^{(2)})$. Similar to (\ref{Lemma non existing2bis}), we then use
these three above bounds to get the existence of a finite constant $D\in
\mathbb{R}^{+}$ depending on $\beta ,\upsilon ,\mathbf{A}$ but not on $l\in
\mathbb{R}^{+}$, $\varepsilon \in (0,1)$, $\omega \in \Omega $, $\lambda \in
\mathbb{R}_{0}^{+}$ and $s_{1},s_{2}\in \mathbb{R}$ such that%
\begin{equation*}
\left\vert \mathbf{X}_{l,\upsilon }^{(\omega )}\left( s_{1},s_{2}\right) -%
\mathbf{Y}_{l,\upsilon ,\varepsilon ,0}^{(\omega )}(s_{1},s_{2})\right\vert
\leq \varepsilon D\ .
\end{equation*}
\end{proof}

The approximating correlation functions $B_{t+i\alpha ,\upsilon ,\varepsilon
}^{(\omega )}$ in (\ref{B fract}) rapidly vanish, as $|y^{\pi ^{\prime
}(1)}-x^{\pi (1)}|\rightarrow \infty $ or $|x^{\pi (2)}-y^{\pi ^{\prime
}(2)}|\rightarrow \infty $, see Theorem \ref{decay bound theorem} (ii). The
decay is uniform for times $t$ on compact sets. This property will allows us
further on to use piecewise--constant approximations of the smooth electric
field $E_{\mathbf{A}}$ (\ref{V bar 0}) in (\ref{B fractbis}), similar to
what is done in the preceding subsection.

To do this, as in the proof of Theorem \ref{main 1 copy(18)}, let us assume
w.l.o.g. that, for all $t\in \mathbb{R}$,
\begin{equation*}
\mathrm{supp}(\mathbf{A}(t,.))\subset \lbrack -1/2,1/2]^{d}\ .
\end{equation*}%
For all $n\in \mathbb{N}$, we divide the elementary box $[-1/2,1/2]^{d}$ in $%
n^{d}$ boxes $\{b_{j}\}_{j\in \mathcal{D}_{n}}$ of side--length $1/n$. The
sets $\mathcal{D}_{n}$ and $\{b_{j}\}_{j\in \mathcal{D}_{n}}$ are defined by
(\ref{boxes b1})--(\ref{boxes b2}), respectively. Then, for all $\varepsilon
,l,\beta \in \mathbb{R}^{+}$, $\omega \in \Omega $, $\lambda \in \mathbb{R}%
_{0}^{+}$ and $\upsilon \in (0,\beta /2)$, we extend the definition of $%
\mathbf{Y}_{l,\upsilon ,\varepsilon ,0}^{(\omega )}$ to all $n\in \mathbb{N}$
as
\begin{eqnarray}
\mathbf{Y}_{l,\upsilon ,\varepsilon ,n}^{(\omega )}(s_{1},s_{2}) &:=&\frac{1}{4\left\vert \Lambda _{l}\right\vert }\ \underset{j\in \mathcal{D}_{n}}{\sum }\ \underset{\mathbf{x},\mathbf{y}\in \mathfrak{K}\cap (lb_{j})^{2}}{\sum }\ \int\nolimits_{\upsilon }^{\beta -\upsilon }\ \mathrm{d}\alpha
\label{B fractbisbis} \\
&&\qquad \times \left( \mathfrak{B}_{s_{1}-s_{2}+i\alpha ,\upsilon
,\varepsilon }^{(\omega )}(\mathbf{x},\mathbf{y})-\mathfrak{B}_{i\alpha
,\upsilon ,\varepsilon }^{(\omega )}(\mathbf{x},\mathbf{y})\right) \mathbf{E}_{s_{1}}^{\mathbf{A}_{l}}\left( \mathbf{x}\right) \mathbf{E}_{s_{2}}^{\mathbf{A}_{l}}(\mathbf{y})  \notag
\end{eqnarray}%
for all $s_{1},s_{2}\in \mathbb{R}$. In fact, the accumulation points of $%
\mathbf{Y}_{l,\upsilon ,\varepsilon ,n}^{(\omega )}$, as $l\rightarrow
\infty $, do not depend on $n$:

\begin{lemma}[Approximation III]
\label{lemma conductivty1}\mbox{
}\newline
Let $n\in \mathbb{N}$, $\varepsilon ,\beta \in \mathbb{R}^{+}$, $\omega \in
\Omega $, $\lambda \in \mathbb{R}_{0}^{+}$ and $\upsilon \in (0,\beta /2) $.
Then,
\begin{equation*}
\underset{l\rightarrow \infty }{\lim }\left\vert \mathbf{Y}_{l,\upsilon
,\varepsilon ,0}^{(\omega )}\left( s_{1},s_{2}\right) -\mathbf{Y}%
_{l,\upsilon ,\varepsilon ,n}^{(\omega )}\left( s_{1},s_{2}\right)
\right\vert =0
\end{equation*}%
uniformly for $s_{1},s_{2}\in \mathbb{R}$.
\end{lemma}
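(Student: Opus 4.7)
The plan is to show that $\mathbf{Y}_{l,\upsilon ,\varepsilon ,0}^{(\omega )} - \mathbf{Y}_{l,\upsilon ,\varepsilon ,n}^{(\omega )}$ only involves pairs $(\mathbf{x},\mathbf{y})\in \mathfrak{K}^{2}$ that fail to lie jointly in a single $(lb_{j})^{2}$, and then to exploit the spatial decay of $\mathfrak{B}_{\cdot ,\upsilon ,\varepsilon }^{(\omega )}$ provided by Theorem \ref{decay bound theorem}(ii). These ``bad'' pairs split into two families: (i) $\mathbf{x}\in (lb_{j})^{2}$ and $\mathbf{y}\in (lb_{j'})^{2}$ with $j\neq j'$; and (ii) bonds straddling a sub--box boundary, i.e., $x^{(1)}\in lb_{j}$ with $x^{(2)}\notin lb_{j}$ (or the analogous situation for $\mathbf{y}$). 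Since $\mathbf{A}\in \mathbf{C}_{0}^{\infty }$ is also time--compactly supported, I would from the outset restrict $s_{1},s_{2}$ to a compact interval $[T_{0},T_{1}]$ containing the time--support of $E_{\mathbf{A}}$ and having $|s_{1}-s_{2}|\leq T:=T_{1}-T_{0}$; outside this set both $\mathbf{Y}_{l,\upsilon ,\varepsilon ,0}^{(\omega )}$ and $\mathbf{Y}_{l,\upsilon ,\varepsilon ,n}^{(\omega )}$ vanish because $\mathbf{E}_{s_{1}}^{\mathbf{A}_{l}}\mathbf{E}_{s_{2}}^{\mathbf{A}_{l}}\equiv 0$.

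Applying Theorem \ref{decay bound theorem}(ii) with this $T$, together with $|x^{(1)}-x^{(2)}|=|y^{(1)}-y^{(2)}|=1$ and the definition (\ref{B fract}) of $\mathfrak{B}$, one gets the pointwise bound
\begin{equation*}
|\mathfrak{B}_{s_{1}-s_{2}+i\alpha ,\upsilon ,\varepsilon }^{(\omega )}(\mathbf{x},\mathbf{y})|\leq \frac{D}{(1+|x^{(1)}-y^{(1)}|^{d^{2}+1})^{2}}\ ,
\end{equation*}
uniformly in $\omega \in \Omega$, $\lambda \in \mathbb{R}_{0}^{+}$, $\alpha \in [\upsilon ,\beta -\upsilon ]$, and $s_{1},s_{2}\in [T_{0},T_{1}]$. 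Summing in $\mathbf{y}$ yields the uniform bounds $\sum_{\mathbf{y}\in \mathfrak{K}}|\mathfrak{B}(\mathbf{x},\mathbf{y})|\leq D'$ and $\sum_{\mathbf{y}:\,|x^{(1)}-y^{(1)}|>M}|\mathfrak{B}(\mathbf{x},\mathbf{y})|\leq D M^{d-2(d^{2}+1)}$ in $\mathbf{x}$.

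I would then split $|\mathbf{Y}_{l,\upsilon ,\varepsilon ,0}^{(\omega )}-\mathbf{Y}_{l,\upsilon ,\varepsilon ,n}^{(\omega )}|$ according to whether $|x^{(1)}-y^{(1)}|$ exceeds a cutoff $M>0$. The ``far'' contribution is bounded by $DM^{d-2(d^{2}+1)}\Vert \mathbf{E}^{\mathbf{A}}\Vert_{\infty }^{2}(\beta -2\upsilon )$, uniformly in $l,n,\omega ,\lambda ,s_{1},s_{2}$, after summing over the $O(l^{d})$ non--vanishing $\mathbf{x}$ and normalizing by $|\Lambda _{l}|^{-1}$. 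In the ``near'' part, taking $l$ so large that $l/n>M$, a pair in family (i) forces $\mathbf{x}$ to lie within distance $M$ of some face of the subdivision of $l\cdot [-1/2,1/2]^{d}$ into $\{lb_{j}\}_{j\in \mathcal{D}_{n}}$; this boundary has $(d-1)$--area $O(nl^{d-1})$, so there are $O(Mnl^{d-1})$ such $\mathbf{x}$, yielding a family (i) near contribution of order $Mn/l$ via the uniform summability $\sum_{\mathbf{y}}|\mathfrak{B}|\leq D'$. Family (ii) collects $O(nl^{d-1})$ straddling bonds altogether, contributing $O(n/l)$ by the same argument. The main obstacle will be keeping the order of limits clean --- $M$ must be fixed first to render the far tail small, then $l\to \infty$ must be taken to wipe out the boundary pieces, before finally $M\to \infty$ kills the residual tail --- while verifying that the uniformity in $s_{1},s_{2}\in \mathbb{R}$ is inherited from the compactness of the time--support of $E_{\mathbf{A}}$.
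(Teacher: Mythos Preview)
Your proposal is correct and follows essentially the same route as the paper: split the difference into cross--box pairs ($j\neq j'$) and boundary--straddling bonds, invoke the pointwise decay of $\mathfrak{B}$ coming from Theorem~\ref{decay bound theorem}(ii), and then control the cross--box sum by a near/far dichotomy in $|x^{(1)}-y^{(1)}|$. The only notable difference is in the handling of the cutoff: the paper takes an $l$--dependent threshold $\delta l$ with $\delta=l^{-\frac{2d^{2}+2}{2d^{2}+d+3}}$ and balances the two resulting powers of $l$, whereas you fix $M$, send $l\to\infty$ (killing the $O(Mn/l)$ boundary piece), and then send $M\to\infty$ to kill the tail --- an equally valid and arguably cleaner $\varepsilon/2$ argument that avoids the explicit optimization.
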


\begin{proof}
We observe from (\ref{boxes b2}) and (\ref{B fractbis})--(\ref{B fractbisbis}%
) that%
\begin{eqnarray}
&&\left\vert \mathbf{Y}_{l,\upsilon ,\varepsilon ,0}^{(\omega
)}(s_{1},s_{2})-\mathbf{Y}_{l,\upsilon ,\varepsilon ,n}^{(\omega
)}(s_{1},s_{2})\right\vert  \notag \\
&\leq &\frac{1}{4\left\vert \Lambda _{l}\right\vert }\underset{j,k\in
\mathcal{D}_{n},j\neq k}{\sum }\ \underset{\mathbf{x}\in \mathfrak{K}\cap
(lb_{j})^{2}}{\sum }\ \underset{\mathbf{y}\in \mathfrak{K}\cap (lb_{k})^{2}}{%
\sum }\left\vert \mathbf{K}_{s_{1},s_{2}}\left( \mathbf{x},\mathbf{y}\right)
\right\vert  \label{B fractbisbisbis} \\
&&+\frac{1}{4\left\vert \Lambda _{l}\right\vert }\underset{j\in \mathcal{D}%
_{n}}{\sum }\ \underset{\mathbf{x}\in \partial (lb_{j})}{\sum }\ \underset{%
\mathbf{y}\in \mathfrak{K}}{\sum }\left\vert \mathbf{K}_{s_{1},s_{2}}\left(
\mathbf{x},\mathbf{y}\right) +\mathbf{K}_{s_{1},s_{2}}\left( \mathbf{y},%
\mathbf{x}\right) \right\vert \ ,  \notag
\end{eqnarray}%
where, for any $\Lambda \in \mathcal{P}_{f}(\mathfrak{L})$ with complement $%
\Lambda ^{c}\subset \mathfrak{L}$,%
\begin{equation*}
\partial \Lambda :=\left\{ (x^{(1)},x^{(2)})\in \mathfrak{K}%
:\{x^{(1)},x^{(2)}\}\cap \Lambda \neq 0,\text{ }\{x^{(1)},x^{(2)}\}\cap
\Lambda ^{c}\neq 0\right\}
\end{equation*}%
and
\begin{equation*}
\mathbf{K}_{s_{1},s_{2}}\left( \mathbf{x},\mathbf{y}\right)
:=\int\nolimits_{\upsilon }^{\beta -\upsilon }\mathrm{d}\alpha \left(
\mathfrak{B}_{s_{1}-s_{2}+i\alpha ,\upsilon ,\varepsilon }^{(\omega )}(%
\mathbf{x},\mathbf{y})-\mathfrak{B}_{i\alpha ,\upsilon ,\varepsilon
}^{(\omega )}(\mathbf{x},\mathbf{y})\right) \mathbf{E}_{s_{1}}^{\mathbf{A}%
_{l}}\left( \mathbf{x}\right) \mathbf{E}_{s_{2}}^{\mathbf{A}_{l}}(\mathbf{y}%
)\ .
\end{equation*}%
Meanwhile, for any $j,k\in \mathcal{D}_{n}$, $j\neq k$, and every
\begin{equation*}
\mathbf{x:=}(x^{(1)},x^{(2)})\in \mathfrak{K}\ ,\qquad \mathbf{y:=}%
(y^{(1)},y^{(2)})\in \mathfrak{K}\ ,
\end{equation*}%
one clearly has the lower bound%
\begin{equation*}
\underset{\pi ,\pi ^{\prime }\in S_{2}}{\min }\left\vert x^{\pi (1)}-y^{\pi
^{\prime }(1)}\right\vert \geq \min \left\{ \left\vert \left\vert
x^{(1)}-y^{(1)}\right\vert -2\right\vert, \left\vert
x^{(1)}-y^{(1)}\right\vert \right\} \ ,
\end{equation*}%
see (\ref{proche voisins0}). We use this simple inequality together with (%
\ref{bound easy}) and Theorem \ref{decay bound theorem} (ii) to obtain from (%
\ref{B fract}) and (\ref{B fractbisbisbis}) that, for all $s_{1},s_{2}\in
\mathbb{R}$,%
\begin{eqnarray}
&&\left\vert \mathbf{Y}_{l,\upsilon ,\varepsilon ,0}^{(\omega
)}(s_{1},s_{2})-\mathbf{Y}_{l,\upsilon ,\varepsilon ,n}^{(\omega
)}(s_{1},s_{2})\right\vert  \notag \\
&\leq &\frac{D}{l^{d}}\underset{j,k\in \mathcal{D}_{n},j\neq k}{\sum }%
\sum\limits_{x\in \mathfrak{L}\cap (lb_{j})}\ \sum\limits_{y\in \mathfrak{L}%
\cap (lb_{k})}\frac{1}{\left( 1+\left\vert x-y\right\vert \right) ^{2d^{2}+2}%
}  \label{eq ohm4} \\
&&+\frac{D}{l^{d}}\underset{j\in \mathcal{D}_{n}}{\sum }\ \underset{%
(x^{(1)},x^{(2)})\in \partial (lb_{j})}{\sum }\ \underset{y\in \mathfrak{L}}{%
\sum }\frac{1}{\left( 1+ \left\vert x^{(1)}- y\right\vert \right) ^{2d^{2}+2}%
}\ ,  \notag
\end{eqnarray}%
where $D\in \mathbb{R}^{+}$ is a finite constant only depending on $%
\varepsilon ,\beta ,\upsilon ,d$ and $\mathbf{A}\in \mathbf{C}_{0}^{\infty }$%
. Note that the second term of the r.h.s. of the above inequality is of
order $\mathcal{O}(l^{-1})$. For any small $\delta >0$ with $\delta l \geq 1$%
,%
\begin{equation*}
\frac{1}{l^{d}}\underset{j,k\in \mathcal{D}_{n},j\neq k}{\sum }%
\sum\limits_{x\in \mathfrak{L}\cap (lb_{j})}\ \sum\limits_{y\in \mathfrak{L}%
\cap (lb_{k})}\frac{\mathbf{1}\left[ |x-y|\geq \delta l\right] }{\left( 1+
\left\vert x-y\right\vert \right) ^{2d^{2}+2}}=\mathcal{O}\left( \frac{1}{%
l^{2d^{2}-d+2}\delta ^{2d^{2}+2}}\right)
\end{equation*}%
and%
\begin{equation*}
\frac{1}{l^{d}}\underset{j,k\in \mathcal{D}_{n},j\neq k}{\sum }%
\sum\limits_{x\in \mathfrak{L}\cap (lb_{j})}\ \sum\limits_{y\in \mathfrak{L}%
\cap (lb_{k})}\frac{\mathbf{1}\left[ |x-y|\leq \delta l\right] }{\left( 1+
\left\vert x-y\right\vert \right) ^{2d^{2}+2}}=\mathcal{O}\left( \delta
^{d+1}l^{d}\right) \ .
\end{equation*}%
Then, by choosing $\delta =l^{-\frac{2d^{2}+2}{2d^{2}+d+3}}$, the last two
sums are both of order $\mathcal{O}(l^{-\frac{d^{2}-d+2}{2d^{2}+d+3}})$ with
$d^{2}-d+2\geq 2$ for all $d\in \mathbb{N}$. Using this estimate to bound
the first term of the r.h.s. of (\ref{eq ohm4}) we arrive at the assertion.
\end{proof}

As already mentioned above, we now consider piecewise--constant
approximations of the (smooth) electric field $E_{\mathbf{A}}$ (\ref{V bar}%
). For any $j\in \mathcal{D}_{n}$, let $z^{(j)}\in b_{j}$ be any fixed point
of the box $b_{j}$. Then, for any $s_{1},s_{2}\in \mathbb{R}$, define the
function%
\begin{eqnarray}
\mathbf{\bar{Y}}_{l,\upsilon ,\varepsilon ,n}^{(\omega )}(s_{1},s_{2}) &:=&\frac{1}{4\left\vert \Lambda _{l}\right\vert }\ \underset{j\in \mathcal{D}_{n}}{\sum }\ \underset{\mathbf{x},\mathbf{y}\in \mathfrak{K}\cap
(lb_{j})^{2}}{\sum }\ \int\nolimits_{\upsilon }^{\beta -\upsilon }\ \mathrm{d}\alpha   \label{piece wise approx} \\
&&\qquad \times \left( \mathfrak{B}_{s_{1}-s_{2}+i\alpha ,\upsilon
,\varepsilon }^{(\omega )}(\mathbf{x},\mathbf{y})-\mathfrak{B}_{i\alpha
,\upsilon ,\varepsilon }^{(\omega )}(\mathbf{x},\mathbf{y})\right)   \notag
\\
&&\qquad \times \left[ E_{\mathbf{A}}(s_{1},z^{(j)})\right] (x^{(2)}-x^{(1)})\left[ E_{\mathbf{A}}(s_{2},z^{(j)})\right] (y^{(2)}-y^{(1)})\ .  \notag
\end{eqnarray}%
Recall that $\mathbf{x}:=(x^{(1)},x^{(2)})\in \mathfrak{L}^{2}$ and $\mathbf{%
y}:=(y^{(1)},y^{(2)})\in \mathfrak{L}^{2}$. See also (\ref{proche voisins0}%
). This new function approximates (\ref{B fractbisbis}) arbitrarily well, as
$l\rightarrow \infty $ and $n\rightarrow \infty $:

\begin{lemma}[Approximation IV]
\label{lemma conductivty2}\mbox{
}\newline
Let $\varepsilon ,\beta \in \mathbb{R}^{+}$, $\omega \in \Omega $, $\lambda
\in \mathbb{R}_{0}^{+}$ and $\upsilon \in (0,\beta /2)$. Then,%
\begin{equation*}
\underset{n\rightarrow \infty }{\lim }\left\{ \underset{l\rightarrow \infty }%
{\lim \sup }\left\vert \mathbf{Y}_{l,\upsilon ,\varepsilon ,n}^{(\omega
)}\left( s_{1},s_{2}\right) -\mathbf{\bar{Y}}_{l,\upsilon ,\varepsilon
,n}^{(\omega )}\left( s_{1},s_{2}\right) \right\vert \right\} =0
\end{equation*}%
uniformly for $s_{1},s_{2}\in \mathbb{R}$.
\end{lemma}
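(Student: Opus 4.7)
The plan is to show directly that the pointwise difference of integrands in $\mathbf{Y}_{l,\upsilon ,\varepsilon ,n}^{(\omega )}$ and $\mathbf{\bar{Y}}_{l,\upsilon ,\varepsilon ,n}^{(\omega )}$ is uniformly of order $\mathcal{O}(n^{-1}+l^{-1})$, and that the kernel $\mathfrak{B}_{\cdot ,\upsilon ,\varepsilon }^{(\omega )}$ is absolutely summable in a strong enough sense that this pointwise estimate survives the double sum over $\mathbf{x},\mathbf{y}\in \mathfrak{K}\cap (lb_{j})^{2}$ and the average over $\Lambda _{l}$. Since the electric field has compact time support, we may restrict $s_{1},s_{2}$ to a fixed bounded interval, which is precisely the regime in which Theorem \ref{decay bound theorem} (ii) supplies time--uniform decay estimates on $B_{t+i\alpha ,\upsilon ,\varepsilon }^{(\omega )}$.

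First I would use the telescoping identity $a_{1}a_{2}-b_{1}b_{2}=(a_{1}-b_{1})a_{2}+b_{1}(a_{2}-b_{2})$ applied to $a_{i}=\mathbf{E}_{s_{i}}^{\mathbf{A}_{l}}$ evaluated on $\mathbf{x}$ or $\mathbf{y}$, and $b_{i}=[E_{\mathbf{A}}(s_{i},z^{(j)})](x^{(2)}-x^{(1)})$ or $[E_{\mathbf{A}}(s_{i},z^{(j)})](y^{(2)}-y^{(1)})$. Because $\mathbf{x},\mathbf{y}\in \mathfrak{K}\cap (lb_{j})^{2}$ implies $x^{(1)}\in lb_{j}$ and $|x^{(1)}-x^{(2)}|=1$ (and similarly for $\mathbf{y}$), the inequality (\ref{inequality G cool1}) used in the proof of Theorem \ref{main 1 copy(18)} gives, together with the uniform bound (\ref{bound easy}), a constant $D\in \mathbb{R}^{+}$ independent of $j,l,n$, $\mathbf{x},\mathbf{y},s_{1},s_{2}$, such that
\begin{equation*}
\bigl|\mathbf{E}_{s_{1}}^{\mathbf{A}_{l}}(\mathbf{x})\mathbf{E}_{s_{2}}^{\mathbf{A}_{l}}(\mathbf{y})-[E_{\mathbf{A}}(s_{1},z^{(j)})](x^{(2)}-x^{(1)})[E_{\mathbf{A}}(s_{2},z^{(j)})](y^{(2)}-y^{(1)})\bigr|\leq D(n^{-1}+l^{-1}).
\end{equation*}

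Next I would control the kernel weight. By definition (\ref{B fract}) of $\mathfrak{B}_{t+i\alpha ,\upsilon ,\varepsilon }^{(\omega )}$ and the polynomial decay from Theorem \ref{decay bound theorem} (ii) (with $T$ chosen so that $\mathrm{supp}\,E_{\mathbf{A}}\subset \lbrack -T,T]\times \mathbb{R}^{d}$), one obtains
\begin{equation*}
\bigl|\mathfrak{B}_{s_{1}-s_{2}+i\alpha ,\upsilon ,\varepsilon }^{(\omega )}(\mathbf{x},\mathbf{y})\bigr|\leq D'\sum_{\pi ,\pi ^{\prime }\in S_{2}}\frac{1}{1+|y^{\pi ^{\prime }(1)}-x^{\pi (1)}|^{d^{2}+1}}\cdot \frac{1}{1+|x^{\pi (2)}-y^{\pi ^{\prime }(2)}|^{d^{2}+1}},
\end{equation*}
uniformly in $\omega ,\lambda ,\alpha \in \lbrack \upsilon ,\beta -\upsilon ]$ and $s_{1},s_{2}$ in compact sets. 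Since $d^{2}+1>d$, summing over $\mathbf{y}\in \mathfrak{K}$ (noting that $|y^{(1)}-y^{(2)}|=1$) yields a finite constant independent of $\mathbf{x}$, hence $\sum_{\mathbf{x}\in \mathfrak{K}\cap (lb_{j})^{2},\,\mathbf{y}\in \mathfrak{K}}|\mathfrak{B}|\leq D''|\mathfrak{L}\cap (lb_{j})|$, and summing over $j\in \mathcal{D}_{n}$ gives a total of order $|\Lambda _{l}|$. The same argument handles the subtracted term $\mathfrak{B}_{i\alpha ,\upsilon ,\varepsilon }^{(\omega )}$, and the $\alpha $--integration just contributes a factor $\beta $.

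Combining these estimates, $|\mathbf{Y}_{l,\upsilon ,\varepsilon ,n}^{(\omega )}(s_{1},s_{2})-\mathbf{\bar{Y}}_{l,\upsilon ,\varepsilon ,n}^{(\omega )}(s_{1},s_{2})|\leq \tilde{D}(n^{-1}+l^{-1})$ uniformly in $s_{1},s_{2}\in \mathbb{R}$; letting $l\rightarrow \infty $ and then $n\rightarrow \infty $ completes the proof. The only real subtlety I anticipate is bookkeeping the time uniformity of Theorem \ref{decay bound theorem} (ii)—which forces one to first use the compact time support of $E_{\mathbf{A}}$ to restrict to a fixed interval $[-T,T]$ before invoking the polynomial decay—together with making sure that the $\mathbf{y}$--sum (which for fixed $\mathbf{x}$ ranges over all of $\mathfrak{K}$, not just $\mathfrak{K}\cap (lb_{j})^{2}$) is genuinely finite, i.e. that the exponent $d^{2}+1$ is strictly larger than the lattice dimension $d$, which is the reason this exponent was chosen in Theorem \ref{decay bound theorem}.
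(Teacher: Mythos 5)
Your proposal is correct and takes essentially the same approach as the paper's proof, which likewise combines the pointwise $\mathcal{O}(n^{-1}+l^{-1})$ bound coming from (\ref{bound easy}) and (\ref{inequality G cool1}) with the summable polynomial decay of $\mathfrak{B}_{\cdot ,\upsilon ,\varepsilon }^{(\omega )}$ from Theorem \ref{decay bound theorem} (ii) to arrive at the estimate (\ref{eq ohm6}). The only minor slip is that in (\ref{B fractbisbis}) the $\mathbf{y}$--sum runs over $\mathfrak{K}\cap (lb_{j})^{2}$ rather than over all of $\mathfrak{K}$, but enlarging the summation range as you do only strengthens your upper bound, so nothing is affected.
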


\begin{proof}
Using (\ref{bound easy}), (\ref{inequality G cool1}) and Theorem \ref{decay
bound theorem} (ii) as in (\ref{eq ohm4}), one gets that, for any $%
s_{1},s_{2}\in \mathbb{R}$,%
\begin{eqnarray}
&&\left\vert \mathbf{Y}_{l,\upsilon ,\varepsilon ,n}^{(\omega )}\left(
s_{1},s_{2}\right) -\mathbf{\bar{Y}}_{l,\upsilon ,\varepsilon ,n}^{(\omega
)}\left( s_{1},s_{2}\right) \right\vert  \label{eq ohm6} \\
&\leq &D(n^{-1}+l^{-1})\frac{1}{l^{d}}\underset{j\in \mathcal{D}_{n}}{\sum }%
\sum\limits_{x,y\in \mathfrak{L}\cap (lb_{j})}\frac{1}{\left( 1+ \left\vert
x-y\right\vert \right) ^{2d^{2}+2}}\ ,  \notag
\end{eqnarray}%
where $D\in \mathbb{R}^{+}$ is a finite constant only depending on $%
\varepsilon ,\beta ,\upsilon ,d$ and $\mathbf{A}\in \mathbf{C}_{0}^{\infty }$%
. For all $j\in \mathcal{D}_{n}$ and $l>1$, note that%
\begin{equation*}
\frac{1}{l^{d}}\sum\limits_{x,y\in \mathfrak{L}\cap (lb_{j})}\frac{1}{\left(
1+ \left\vert x-y\right\vert \right) ^{2d^{2}+2}}\leq \frac{(2l+1)^{d}}{%
n^{d}l^{d}}\sum\limits_{x\in \mathfrak{L}}\frac{1}{\left( 1+ \left\vert
x\right\vert \right) ^{2d^{2}+2}}\leq \frac{D}{n^{d}}
\end{equation*}%
for some finite constant $D\in \mathbb{R}^{+}$. Therefore, we arrive at the
assertion by combining this last bound with (\ref{eq ohm6}).
\end{proof}

By taking the canonical orthonormal basis $\{e_{k}\}_{k=1}^{d}$ of $\mathbb{R%
}^{d}$ and setting $e_{-k}:=-e_{k}$ for each $k\in \{1,\ldots ,d\}$, we
rewrite the function (\ref{piece wise approx}) as%
\begin{eqnarray}
\mathbf{\bar{Y}}_{l,\upsilon ,\varepsilon ,n}^{(\omega )}(s_{1},s_{2}) &=&%
\frac{1}{4n^{d}}\underset{j\in \mathcal{D}_{n}}{\sum }\ \underset{k,q\in
\{1,-1,\ldots ,d,-d\}}{\sum }\left( \mathbf{Z}_{l,j,k,q}^{(\omega
)}(s_{1}-s_{2})-\mathbf{Z}_{l,j,k,q}^{(\omega )}(0)\right)  \notag \\
&&\qquad \qquad \times \left[ E_{\mathbf{A}}(s_{1},z^{(j)})\right] (e_{q})%
\left[ E_{\mathbf{A}}(s_{2},z^{(j)})\right] (e_{k})  \label{Y bar}
\end{eqnarray}%
for any $s_{1},s_{2}\in \mathbb{R}$, where, for all $n\in \mathbb{N}$, $%
\varepsilon ,l,\beta \in \mathbb{R}^{+}$, $\omega \in \Omega $, $\lambda \in
\mathbb{R}_{0}^{+}$, $\upsilon \in (0,\beta /2)$, $j\in \mathcal{D}_{n}$, $%
k,q\in \{1,-1,\ldots ,d,-d\}$ and $t\in \mathbb{R}$,
\begin{equation*}
\mathbf{Z}_{l,j,k,q}^{(\omega )}(t):=\frac{n^{d}}{\left\vert \Lambda
_{l}\right\vert }\sum\limits_{x,y\in \mathfrak{L}\cap
(lb_{j})}\int\nolimits_{\upsilon }^{\beta -\upsilon }\mathrm{d}\alpha \
\mathfrak{B}_{t+i\alpha ,\upsilon ,\varepsilon }^{(\omega
)}(x,x-e_{q},y,y-e_{k})\ .
\end{equation*}%
Notice that we have added terms related to $x,y$ on the boundary of $%
\mathfrak{L}\cap (lb_{j})$, but we use the same notation $\mathbf{\bar{Y}}%
_{l,\upsilon ,\varepsilon ,n}^{(\omega )}$ for simplicity. These terms are
indeed irrelevant in the limit $l\rightarrow \infty $. For $\mathbf{x}%
:=(x^{(1)},x^{(2)})\in \mathfrak{L}^{2}$ and $\mathbf{y}:=(y^{(1)},y^{(2)})%
\in \mathfrak{L}^{2}$, we used the notation
\begin{equation}
\mathfrak{B}_{t+i\alpha ,\upsilon ,\varepsilon }^{(\omega
)}(x^{(1)},x^{(2)},y^{(1)},y^{(2)}) \equiv \mathfrak{B}_{t+i\alpha ,\upsilon
,\varepsilon }^{(\omega )}(\mathbf{x},\mathbf{y}) \ ,
\label{B fract notation}
\end{equation}%
see (\ref{B fract}). By Theorem \ref{decay bound theorem} (i), (iv) and
Lebesgue's dominated convergence theorem, note that, for all $\varepsilon
,\beta \in \mathbb{R}^{+}$, $\lambda \in \mathbb{R}_{0}^{+}$, $x,y\in
\mathfrak{L}$, $k,q\in \{1,-1,\ldots ,d,-d\}$, $t\in \mathbb{R}$ and $%
\upsilon \in (0,\beta /2)$, the map
\begin{equation*}
\omega \mapsto \int\nolimits_{0}^{\beta }\mathrm{d}\alpha \ \mathfrak{B}%
_{t+i\alpha ,\upsilon ,\varepsilon }^{(\omega )}(x,x-e_{q},y,y-e_{k})
\end{equation*}%
is bounded and measurable w.r.t. the $\sigma $--algebra $\mathfrak{A}%
_{\Omega }$. In particular, its expectation value $\mathbb{E}[\ \cdot \ ]$
w.r.t. the probability measure $\mathfrak{a}_{\Omega }$ (\ref{probability
measure}) is well--defined. It now remains to analyze the limit of $\mathbf{Z%
}_{l,j,k,q}^{(\omega )}$, as $l\rightarrow \infty $.

\begin{lemma}[Infinite volume limit and ergodicity]
\label{lemma conductivty3}\mbox{
}\newline
Let $\varepsilon ,\beta \in \mathbb{R}^{+}$, $\lambda \in \mathbb{R}_{0}^{+}$%
, $t\in \mathbb{R}$ and $\upsilon \in (0,\beta /2)$. Then, there is a
measurable subset $\tilde{\Omega}_{\upsilon ,\varepsilon }\left( t\right)
\equiv \tilde{\Omega}_{\upsilon ,\varepsilon }^{(\beta ,\lambda )}\left(
t\right) \subset \Omega $ of full measure such that, for any $n\in \mathbb{N}
$, $j\in \mathcal{D}_{n}$, $k,q\in \{1,-1,\ldots ,d,-d\}$ and any $\tilde{%
\omega}\in \tilde{\Omega}_{\upsilon ,\varepsilon }\left( t\right) $,%
\begin{equation*}
\underset{l\rightarrow \infty }{\lim }\mathbf{Z}_{l,j,k,q}^{(\tilde{\omega}%
)}(t)=\sum\limits_{x\in \mathfrak{L}}\mathbb{E}\left[ \int\nolimits_{%
\upsilon }^{\beta -\upsilon }\mathrm{d}\alpha \ \mathfrak{B}_{t+i\alpha
,\upsilon ,\varepsilon }^{(\omega )}(x,x-e_{q},0,-e_{k})\right] \in \mathbb{R%
}\ .
\end{equation*}
\end{lemma}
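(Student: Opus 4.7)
The plan is to recognize $\mathbf{Z}_{l,j,k,q}^{(\omega)}(t)$, up to a boundary correction that vanishes, as a Birkhoff--type average to which the extended Akcoglu--Krengel ergodic theorem (Theorem \ref{Ackoglu--Krengel ergodic theorem II}) applies, and then identify the deterministic limit via translation invariance of $\mathfrak{a}_{\Omega}$.

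First, I would define the auxiliary function
\begin{equation*}
\mathfrak{g}_{t,k,q}^{(\omega)}(x):=\sum_{y\in\mathfrak{L}}\int_{\upsilon}^{\beta-\upsilon}\mathrm{d}\alpha\ \mathfrak{B}_{t+i\alpha,\upsilon,\varepsilon}^{(\omega)}(x,x-e_{q},y,y-e_{k}),\qquad x\in\mathfrak{L}.
\end{equation*}
Using the definition (\ref{B fract}) of $\mathfrak{B}$ together with Theorem \ref{decay bound theorem}~(ii) applied to the two factors in each summand, one bounds the integrand by $D(1+|x-y|)^{-(d^{2}+1)}$ times an $\omega$-independent constant, so that the $y$-series converges absolutely and $\mathfrak{g}_{t,k,q}^{(\omega)}$ is uniformly bounded in $x$ and $\omega$. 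Measurability in $\omega$ follows from Theorem \ref{decay bound theorem}~(iv) and Lebesgue's dominated convergence theorem.

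Next, I would verify that $\mathfrak{F}^{(\omega)}(\Lambda):=\sum_{x\in\Lambda}\mathfrak{g}_{t,k,q}^{(\omega)}(x)$ defines an additive process in the sense of Definition \ref{Additive process}. Additivity and measurability are immediate; the translation-invariance condition (\ref{equality a la con}) comes from the translation covariance of the Anderson Hamiltonian $(\Delta_{\mathrm{d}}+\lambda V_{\omega})$ under the measure $\mathfrak{a}_{\Omega}$. More precisely, denoting by $\theta_{z}$ the natural shift on $\Omega$, conjugation by the unitary lattice translation yields $B_{t+i\alpha,\upsilon,\varepsilon}^{(\theta_{z}\omega)}(x,y)=B_{t+i\alpha,\upsilon,\varepsilon}^{(\omega)}(x-z,y-z)$ and hence $\mathfrak{g}_{t,k,q}^{(\theta_{z}\omega)}(x)=\mathfrak{g}_{t,k,q}^{(\omega)}(x-z)$. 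Since $\mathfrak{a}_{\Omega}$ is $\theta_{z}$-invariant, condition (iii) holds. Theorem \ref{Ackoglu--Krengel ergodic theorem II}, applied along the non-regular boxes $\mathfrak{L}\cap(lb_{j})$ with fixed $n\in\mathbb{N}$ and $j\in\mathcal{D}_{n}$, then gives a full-measure set on which
\begin{equation*}
\lim_{l\rightarrow\infty}\frac{1}{|\mathfrak{L}\cap(lb_{j})|}\sum_{x\in\mathfrak{L}\cap(lb_{j})}\mathfrak{g}_{t,k,q}^{(\omega)}(x)=\mathbb{E}\bigl[\mathfrak{g}_{t,k,q}^{(\omega)}(0)\bigr],
\end{equation*}
and the latter is precisely the right-hand side of the claimed identity.

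Finally, I would relate this Birkhoff average to $\mathbf{Z}_{l,j,k,q}^{(\omega)}(t)$ in two steps. On the one hand, the prefactor satisfies $n^{d}|\mathfrak{L}\cap(lb_{j})|/|\Lambda_{l}|\to 1$ as $l\to\infty$, so one may replace $n^{d}/|\Lambda_{l}|$ by $|\mathfrak{L}\cap(lb_{j})|^{-1}$ up to negligible terms. On the other hand, the difference between the $y$-sum restricted to $\mathfrak{L}\cap(lb_{j})$ and the full $y$-sum over $\mathfrak{L}$ is controlled, exactly as in (\ref{eq ohm4}) in the proof of Lemma \ref{lemma conductivty1}, by a boundary/tail estimate involving $(1+|x-y|)^{-2(d^{2}+1)}$, which is of order $o(|\mathfrak{L}\cap(lb_{j})|)$. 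This gives the limit for each fixed $n,j,k,q$; since there are only countably many such quadruples, their intersection produces the single full-measure subset $\tilde{\Omega}_{\upsilon,\varepsilon}(t)$ asserted in the lemma.

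The main technical obstacle is the boundary control when replacing the double restricted sum by a one-sided restricted sum: one must verify that the off-diagonal decay of $\mathfrak{B}$, which is only of order $(d^{2}+1)$ per factor, is genuinely sufficient after summing over $y$ near the boundary of $lb_{j}$, uniformly in $t$ on compact sets. The remaining ingredients—translation covariance and the Akcoglu--Krengel step—are straightforward given Theorems \ref{decay bound theorem} and \ref{Ackoglu--Krengel ergodic theorem II}.
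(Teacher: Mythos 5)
Your proposal follows essentially the same route as the paper's proof: define an additive process from the absolutely convergent single--site sums of $\mathfrak{B}$ (using Theorem \ref{decay bound theorem} (ii) for convergence and (iv) for measurability), apply the extended Akcoglu--Krengel theorem along the non--regular sequences $\mathfrak{L}\cap (lb_{j})$, control the off--diagonal tail exactly as in (\ref{eq ohm4}), and take the countable intersection over $n,j,k,q$ of full--measure sets. The only cosmetic difference is that you index the process by $x$ and sum over $y$ while the paper does the reverse, so identifying $\mathbb{E}\bigl[\mathfrak{g}_{t,k,q}^{(\omega )}(0)\bigr]$ with the stated right--hand side needs one further application of the translation invariance of $\mathbb{E}$ (substituting $x=-y$), a one--line step for which you have already recorded the covariance identity.
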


\begin{proof}
The arguments are similar to those proving Theorems \ref{Ackoglu--Krengel
ergodic theorem III} or \ref{main 1 copy(18)}, but a little bit more
complicated. For the reader's convenience, we give the proof in detail. For
any $\varepsilon ,\beta \in \mathbb{R}^{+}$, $\omega \in \Omega $, $\lambda
\in \mathbb{R}_{0}^{+}$, $t\in \mathbb{R}$, $\upsilon \in (0,\beta /2)$, $%
k,q\in \{1,-1,\ldots ,d,-d\}$ and $y\in \mathfrak{L}$, let
\begin{equation}
\mathfrak{F}_{t,\upsilon ,\varepsilon ,k,q}^{(\beta ,\omega ,\lambda
)}\left( \left\{ y\right\} \right) :=\sum\limits_{x\in \mathfrak{L}%
}\int\nolimits_{\upsilon }^{\beta -\upsilon }\mathrm{d}\alpha \ \mathfrak{B}%
_{t+i\alpha ,\upsilon ,\varepsilon }^{(\omega )}(x,x-e_{q},y,y-e_{k})\in
\mathbb{R}\ .  \label{eq ohm70}
\end{equation}%
This infinite sum absolutely converges because of (\ref{B fract}) and
Theorem \ref{decay bound theorem} (ii). We now define an additive process $\{%
\mathfrak{F}_{t,\upsilon ,\varepsilon ,k,q}^{(\beta ,\omega ,\lambda
)}\left( \Lambda \right) \}_{\Lambda \in \mathcal{P}_{f}(\mathfrak{L})}$ by
\begin{equation*}
\mathfrak{F}_{t,\upsilon ,\varepsilon ,k,q}^{(\beta ,\omega ,\lambda
)}\left( \Lambda \right) :=\sum\limits_{y\in \Lambda }\mathfrak{F}%
_{t,\upsilon ,\varepsilon ,k,q}^{(\beta ,\omega ,\lambda )}\left( \left\{
y\right\} \right)
\end{equation*}%
for any finite subset $\Lambda \in \mathcal{P}_{f}(\mathfrak{L})$, see
Definition \ref{Additive process}. Indeed, by Theorem \ref{decay bound
theorem} (i), (iv) and Lebesgue's dominated convergence theorem, the map $%
\omega \mapsto \mathfrak{F}_{t,\upsilon ,\varepsilon ,k,q}^{(\beta ,\omega
,\lambda )}\left( \Lambda \right) $ is bounded and measurable (in fact
continuous) w.r.t. the $\sigma $--algebra $\mathfrak{A}_{\Omega }$ for all $%
\Lambda \in \mathcal{P}_{f}(\mathfrak{L})$. Then, for any $\varepsilon
,\beta \in \mathbb{R}^{+}$, $\lambda \in \mathbb{R}_{0}^{+}$, $t\in \mathbb{R%
}$ and $\upsilon \in (0,\beta /2)$, we apply Theorem \ref{Ackoglu--Krengel
ergodic theorem II} on the previous additive process to get the existence of
a measurable subset
\begin{equation*}
\tilde{\Omega}_{\upsilon ,\varepsilon }\left( t\right) \equiv \tilde{\Omega}%
_{\upsilon ,\varepsilon }^{(\beta ,\lambda )}\left( t\right) \subset \Omega
\end{equation*}%
of full measure such that, for all $\tilde{\omega}\in \tilde{\Omega}%
_{\upsilon ,\varepsilon }\left( t\right) $, $n\in \mathbb{N}$, $j\in
\mathcal{D}_{n}$ and $k,q\in \{1,-1,\ldots ,d,-d\}$,%
\begin{equation}
\underset{l\rightarrow \infty }{\lim }\left\{ \frac{n^{d}}{\left\vert
\Lambda _{l}\right\vert }\ \mathfrak{F}_{t,\upsilon ,\varepsilon
,k,q}^{(\beta ,\tilde{\omega},\lambda )}\left( lb_{j}\right) \right\} =%
\mathbb{E}\left[ \mathfrak{F}_{t,\upsilon ,\varepsilon ,k,q}^{(\beta ,\omega
,\lambda )}\left( \left\{ 0\right\} \right) \right] \ .  \label{eq ohm7}
\end{equation}%
Note that to prove this equation we use once again that any countable
intersection of measurable sets of full measure has full measure. In the way
one proves Lemma \ref{lemma conductivty1}, one verifies that
\begin{equation*}
\underset{l\rightarrow \infty }{\lim }\left\{ \frac{n^{d}}{\left\vert
\Lambda _{l}\right\vert }\sum\limits_{y\in \mathfrak{L}\cap
(lb_{j})}\sum\limits_{x\in \mathfrak{L}\backslash
(lb_{j})}\int\nolimits_{\upsilon }^{\beta -\upsilon }\mathrm{d}\alpha \
\mathfrak{B}_{t+i\alpha ,\upsilon ,\varepsilon }^{(\omega
)}(x,x-e_{q},y,y-e_{k})\right\} =0\ .
\end{equation*}%
Using this with (\ref{eq ohm70})--(\ref{eq ohm7}) and observing that
\begin{equation*}
\mathbb{E}\left[ \mathfrak{F}_{t,\upsilon ,\varepsilon ,k,q}^{(\beta ,\omega
,\lambda )}\left( \left\{ 0\right\} \right) \right] =\sum\limits_{x\in
\mathfrak{L}}\mathbb{E}\left[ \int\nolimits_{\upsilon }^{\beta -\upsilon }%
\mathrm{d}\alpha \ \mathfrak{B}_{t+i\alpha ,\upsilon ,\varepsilon }^{(\omega
)}(x,x-e_{q},0,-e_{k})\right] \ ,
\end{equation*}%
we arrive at the assertion for any realization $\tilde{\omega}\in \tilde{%
\Omega}_{\upsilon ,\varepsilon }\left( t\right) $.
\end{proof}

For all $\varepsilon ,\beta \in \mathbb{R}^{+}$, $\lambda \in \mathbb{R}%
_{0}^{+}$, $\upsilon \in (0,\beta /2)$ and $k,q\in \{1,-1,\ldots ,d,-d\}$,
define the functions%
\begin{equation}
\tilde{\Gamma}_{\upsilon ,\varepsilon ,k,q}(t):=\sum\limits_{x\in \mathfrak{L%
}}\mathbb{E}\left[ \int\nolimits_{\upsilon }^{\beta -\upsilon }\mathrm{d}%
\alpha \ \mathfrak{B}_{t+i\alpha ,\upsilon ,\varepsilon }^{(\omega
)}(x,x-e_{q},0,-e_{k})\right]  \label{conductivity00}
\end{equation}%
for any $t\in \mathbb{R}$, and
\begin{eqnarray}
\mathbf{Y}_{\infty ,\upsilon ,\varepsilon }(s_{1},s_{2}) &:=&\underset{k,q\in \{1,-1,\ldots ,d,-d\}}{\sum }\left( \tilde{\Gamma}_{\upsilon
,\varepsilon ,k,q}(s_{1}-s_{2})-\tilde{\Gamma}_{\upsilon ,\varepsilon
,k,q}(0)\right)   \notag \\
&&\times \int\nolimits_{\mathbb{R}^{d}}\mathrm{d}^{d}x\left[ E_{\mathbf{A}}(s_{1},x)\right] (e_{q})\left[ E_{\mathbf{A}}(s_{2},x)\right] (e_{k})
\label{conductivity00-1}
\end{eqnarray}%
for any $s_{1},s_{2}\in \mathbb{R}$. We show next that the function $\mathbf{%
Y}_{l,\upsilon ,\varepsilon ,0}^{(\omega )}$ defined by (\ref{B fractbis})
almost surely converges to the deterministic function $\mathbf{Y}_{\infty
,\upsilon ,\varepsilon }$, as $l\rightarrow \infty $:

\begin{proposition}[Infinite volume limit of the $\mathbf{Y}$--approximation]

\label{main 1 copy(3)}\mbox{
}\newline
Let $\varepsilon ,\beta \in \mathbb{R}^{+}$, $\lambda \in \mathbb{R}_{0}^{+}$%
, $\upsilon \in (0,\beta /2)$ and $s_{1},s_{2}\in \mathbb{R}$. Then, there
is a measurable subset $\tilde{\Omega}_{\upsilon ,\varepsilon }\left(
s_{1},s_{2}\right) \equiv \tilde{\Omega}_{\upsilon ,\varepsilon }^{(\beta
,\lambda )}\left( s_{1},s_{2}\right) \subset \Omega $ of full measure such
that, for any $\mathbf{A}\in \mathbf{C}_{0}^{\infty }$ and $\omega \in
\tilde{\Omega}_{\upsilon ,\varepsilon }\left( s_{1},s_{2}\right) $,%
\begin{equation}
\underset{l\rightarrow \infty }{\lim }\mathbf{Y}_{l,\upsilon ,\varepsilon
,0}^{(\omega )}\left( s_{1},s_{2}\right) =\mathbf{Y}_{\infty ,\upsilon
,\varepsilon }(s_{1},s_{2})\ .  \notag
\end{equation}
\end{proposition}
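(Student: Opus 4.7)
The plan is to establish the proposition by a diagonal argument in the parameters $l$ and $n$ that combines the three preceding approximation results (Lemmas~\ref{lemma conductivty1}, \ref{lemma conductivty2} and \ref{lemma conductivty3}) with a Riemann--sum approximation of the smooth compactly supported electric field. Specifically, I would insert along the chain
\[
\mathbf{Y}_{l,\upsilon,\varepsilon,0}^{(\omega)}(s_{1},s_{2})\;\longrightarrow\;\mathbf{Y}_{l,\upsilon,\varepsilon,n}^{(\omega)}(s_{1},s_{2})\;\longrightarrow\;\mathbf{\bar{Y}}_{l,\upsilon,\varepsilon,n}^{(\omega)}(s_{1},s_{2})\;\longrightarrow\;\mathbf{Y}_{\infty,\upsilon,\varepsilon}(s_{1},s_{2})
\]
and bound the three consecutive differences by the triangle inequality, treating the parameters in the order: first fix $n$ large, then send $l\rightarrow\infty$.

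For the full--measure set I apply Lemma~\ref{lemma conductivty3} at the two times $t=s_{1}-s_{2}$ and $t=0$ and define
\[
\tilde{\Omega}_{\upsilon,\varepsilon}(s_{1},s_{2}):=\tilde{\Omega}_{\upsilon,\varepsilon}(s_{1}-s_{2})\cap\tilde{\Omega}_{\upsilon,\varepsilon}(0)\subset\Omega,
\]
which is of full measure as a finite intersection of full--measure sets. For $\omega\in\tilde{\Omega}_{\upsilon,\varepsilon}(s_{1},s_{2})$, the decomposition (\ref{Y bar}) represents $\mathbf{\bar{Y}}_{l,\upsilon,\varepsilon,n}^{(\omega)}(s_{1},s_{2})$ as a \emph{finite} sum over $j\in\mathcal{D}_{n}$ and $k,q\in\{1,-1,\ldots,d,-d\}$ of terms of the form $\mathbf{Z}_{l,j,k,q}^{(\omega)}(s_{1}-s_{2})-\mathbf{Z}_{l,j,k,q}^{(\omega)}(0)$, multiplied by deterministic coefficients involving the piecewise--constant evaluation of $E_{\mathbf{A}}$ at the sample points $z^{(j)}$. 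Lemma~\ref{lemma conductivty3} applied term by term at the two relevant times yields, for every fixed $n\in\mathbb{N}$,
\[
\underset{l\rightarrow\infty}{\lim}\mathbf{\bar{Y}}_{l,\upsilon,\varepsilon,n}^{(\omega)}(s_{1},s_{2})=\frac{1}{4n^{d}}\underset{j\in\mathcal{D}_{n}}{\sum}\;\underset{k,q}{\sum}\bigl(\tilde{\Gamma}_{\upsilon,\varepsilon,k,q}(s_{1}-s_{2})-\tilde{\Gamma}_{\upsilon,\varepsilon,k,q}(0)\bigr)\bigl[E_{\mathbf{A}}(s_{1},z^{(j)})\bigr](e_{q})\bigl[E_{\mathbf{A}}(s_{2},z^{(j)})\bigr](e_{k}),
\]
which is precisely a Riemann sum for the integral appearing in (\ref{conductivity00-1}). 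Since $E_{\mathbf{A}}$ is smooth and compactly supported and the coefficients $\tilde{\Gamma}_{\upsilon,\varepsilon,k,q}$ are bounded---a consequence of the decay of $\mathfrak{B}_{t+i\alpha,\upsilon,\varepsilon}^{(\omega)}$ given by Theorem~\ref{decay bound theorem}~(ii), which guarantees that the series defining (\ref{conductivity00}) converges absolutely and uniformly in $\omega$ and $\alpha\in[\upsilon,\beta-\upsilon]$---this Riemann sum converges to $\mathbf{Y}_{\infty,\upsilon,\varepsilon}(s_{1},s_{2})$ as $n\rightarrow\infty$.

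The conclusion follows from a standard $\delta$--argument: given $\delta>0$, I first choose $n$ large enough to make both the Riemann--sum error above and the uniform bound furnished by Lemma~\ref{lemma conductivty2} on $\limsup_{l\rightarrow\infty}|\mathbf{Y}_{l,\upsilon,\varepsilon,n}^{(\omega)}-\mathbf{\bar{Y}}_{l,\upsilon,\varepsilon,n}^{(\omega)}|$ smaller than $\delta$. With this $n$ fixed, letting $l\rightarrow\infty$ then dispenses of the remaining two differences: the first via Lemma~\ref{lemma conductivty1}, the third via the ergodic step just described.

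The main obstacle is structural rather than computational: one must arrange the argument so that the ergodic convergence is invoked only at the two \emph{fixed} times $s_{1}-s_{2}$ and $0$, since Lemma~\ref{lemma conductivty3} provides a full--measure set that depends on $t$, and an uncountable intersection of such sets would in general fail to be measurable. This dictates both that the proposition is stated pointwise in $(s_{1},s_{2})$ and that the Riemann--sum approximation of the smooth electric field be performed on the deterministic side \emph{after} the ergodic limit $l\rightarrow\infty$ has been taken. The remaining care point is the uniform (in $\omega$) boundedness of the coefficients $\tilde{\Gamma}_{\upsilon,\varepsilon,k,q}$, which is handled once and for all by the rapid decay estimate of Theorem~\ref{decay bound theorem}~(ii).
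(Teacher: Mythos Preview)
Your proposal is correct and follows essentially the same approach as the paper: the paper's proof is a terse one-line invocation of Lemmata~\ref{lemma conductivty1}--\ref{lemma conductivty3} together with the representation~(\ref{Y bar}) and the Riemann--sum identification, and you have faithfully unpacked exactly this chain, including the correct construction of the full--measure set as $\tilde{\Omega}_{\upsilon,\varepsilon}(s_{1}-s_{2})\cap\tilde{\Omega}_{\upsilon,\varepsilon}(0)$ so that it is independent of $\mathbf{A}$.
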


\begin{proof}
Let $\varepsilon ,\beta \in \mathbb{R}^{+}$, $\lambda \in \mathbb{R}_{0}^{+}$%
, $\upsilon \in (0,\beta /2)$, $\mathbf{A}\in \mathbf{C}_{0}^{\infty }$ and $%
s_{1},s_{2}\in \mathbb{R}$. Using Lemmata \ref{lemma conductivty1}--\ref%
{lemma conductivty3} and (\ref{Y bar}), we obtain the existence of a
measurable subset $\tilde{\Omega}_{\upsilon ,\varepsilon }\left(
s_{1},s_{2}\right) \equiv \tilde{\Omega}_{\upsilon ,\varepsilon }^{(\beta
,\lambda )}\left( s_{1},s_{2}\right) \subset \Omega $ of full measure such
that, for any $\omega \in \tilde{\Omega}_{\upsilon ,\varepsilon }\left(
s_{1},s_{2}\right) $,
\begin{multline*}
\underset{l\rightarrow \infty }{\lim }\mathbf{Y}_{l,\upsilon ,\varepsilon
,0}^{(\omega )}\left( s_{1},s_{2}\right) =\underset{k,q\in \{1,-1,\ldots
,d,-d\}}{\sum }\left( \tilde{\Gamma}_{\upsilon ,\varepsilon
,k,q}(s_{1}-s_{2})-\tilde{\Gamma}_{\upsilon ,\varepsilon ,k,q}(0)\right) \\
\times \underset{n\rightarrow \infty }{\lim }\left\{ \frac{1}{4n^{d}}%
\underset{j\in \mathcal{D}_{n}}{\sum }\left[ E_{\mathbf{A}}(s_{1},z^{(j)})%
\right] (e_{q})\left[ E_{\mathbf{A}}(s_{2},z^{(j)})\right] (e_{k})\right\} \
.
\end{multline*}%
The latter implies the proposition because the term within the limit $%
n\rightarrow \infty $ is a Riemann sum and $E_{\mathbf{A}}\in \mathbf{C}%
_{0}^{\infty }$ for any $\mathbf{A}\in \mathbf{C}_{0}^{\infty }$, see (\ref%
{V bar}).
\end{proof}

This last limit depends on the two arbitrary parameters $\varepsilon \in
\mathbb{R}^{+}$ and $\upsilon \in (0,\beta /2)$, where $\beta \in \mathbb{R}%
^{+}$. The next step is to remove them by considering the limits $%
\varepsilon \rightarrow 0^{+}$ and $\upsilon \rightarrow 0^{+}$.

We first observe that the functions (\ref{conductivity00}) are
approximations of the function $\Gamma _{k,q}\equiv \Gamma _{k,q}^{(\beta
,\lambda )}$ defined, for any $\beta \in \mathbb{R}^{+}$, $\lambda \in
\mathbb{R}_{0}^{+}$, $k,q\in \{1,-1,\ldots ,d,-d\}$ and $t\in \mathbb{R}$, by%
\begin{equation}
\Gamma _{k,q}(t):=\underset{l\rightarrow \infty }{\lim }\frac{1}{\left\vert
\Lambda _{l}\right\vert }\sum\limits_{x,y\in \Lambda _{l}}\mathbb{E}\left[
\int\nolimits_{0}^{\beta }\mathrm{d}\alpha \ \mathfrak{C}_{t+i\alpha
}^{(\omega )}(x,x-e_{q},y,y-e_{k})\right] \ .  \label{conductivity0}
\end{equation}%
By Theorem \ref{decay bound theorem} (i), (iv) and Lebesgue's dominated
convergence theorem, note that the map%
\begin{equation*}
\omega \mapsto \int\nolimits_{0}^{\beta }\mathrm{d}\alpha \ \mathfrak{C}%
_{t+i\alpha }^{(\omega )}(x,x-e_{q},y,y-e_{k})
\end{equation*}%
is bounded and measurable w.r.t. the $\sigma $--algebra $\mathfrak{A}%
_{\Omega }$. Here, we use the same convention for the arguments of $%
\mathfrak{C}_{t+i\alpha }^{(\omega )}$ as in (\ref{B fract notation}) for $%
\mathfrak{B}_{t+i\alpha ,\upsilon ,\varepsilon }^{(\omega )}$. This function
is well--defined and it is the limit of $\tilde{\Gamma}_{\upsilon
,\varepsilon ,k,q}$, as $\varepsilon \rightarrow 0^{+}$ and $\upsilon
\rightarrow 0^{+}$:

\begin{lemma}[Approximation on the function $\Gamma $]
\label{lemma conductivty4}\mbox{
}\newline
Let $\varepsilon ,\beta \in \mathbb{R}^{+}$, $\lambda \in \mathbb{R}_{0}^{+}$%
, $t\in \mathbb{R}$, $k,q\in \{1,-1,\ldots ,d,-d\}$ and $\upsilon \in
(0,\beta /2)$. Then, $\Gamma _{k,q}(t)$ exists and equals%
\begin{equation*}
\Gamma _{k,q}(t)=\tilde{\Gamma}_{\upsilon ,\varepsilon ,k,q}(t)+\mathcal{O}%
(\upsilon )+\mathcal{O}_{\upsilon }(\varepsilon )
\end{equation*}%
uniformly for times $t$ in compact sets. The term of order $\mathcal{O}%
_{\upsilon }(\varepsilon )$ vanishes when $\varepsilon \rightarrow 0^{+}$
for any fixed $\upsilon \in (0,\beta /2)$.
\end{lemma}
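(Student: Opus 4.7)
The plan is to derive the approximation $\Gamma_{k,q}(t) = \tilde{\Gamma}_{\upsilon,\varepsilon,k,q}(t) + \mathcal{O}(\upsilon) + \mathcal{O}_{\upsilon}(\varepsilon)$ and the existence of $\Gamma_{k,q}(t)$ simultaneously, by systematically replacing, inside the finite--volume expression $|\Lambda_l|^{-1}\sum_{x,y\in\Lambda_l}\mathbb{E}[\int_0^\beta \mathfrak{C}\,d\alpha]$, the relevant objects with their simpler counterparts. The three key ingredients are: Theorem~\ref{decay bound theorem}, giving the decomposition $C_{t+i\alpha}^{(\omega)} = A_{t+i\alpha,\upsilon,\varepsilon}^{(\omega)} + B_{t+i\alpha,\upsilon,\varepsilon}^{(\omega)}$ with $\|A\|_{\mathrm{op}} \leq \varepsilon$ and $|B(\mathbf{x})| \leq D(1+|x^{(1)}-x^{(2)}|^{d^2+1})^{-1}$; translation invariance of $\mathfrak{a}_\Omega$; and the uniform operator--norm bound $\|\mathfrak{C}_{t+i\alpha}^{(\omega)}\|_{\mathrm{op}}\leq 4$ recalled from \cite{OhmII}.

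First I would handle the $\alpha$--truncation. Splitting $\int_0^\beta d\alpha = \int_0^\upsilon + \int_\upsilon^{\beta-\upsilon} + \int_{\beta-\upsilon}^\beta$ and applying the Cauchy--Schwarz type argument used in Lemma~\ref{lemma conductivty1 copy(1)} (now with the bound $\|\mathfrak{C}\|_{\mathrm{op}}\leq 4$ playing the role of the bound on the product of $B$'s), one obtains
\begin{equation*}
\left| \frac{1}{|\Lambda_l|} \sum_{x,y\in\Lambda_l} \mathbb{E}\!\left[\mathfrak{C}_{t+i\alpha}^{(\omega)}(x,x-e_q,y,y-e_k)\right]\right| \leq D
\end{equation*}
uniformly in $\alpha\in[0,\beta]$, $t\in[-T,T]$ and $l$. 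The two boundary intervals therefore contribute $\mathcal{O}(\upsilon)$, uniformly, to both $\Gamma_{k,q}(t)$ and its finite--$l$ prelimit. Next, for $\alpha\in[\upsilon,\beta-\upsilon]$, the expansion of $\mathfrak{C}-\mathfrak{B}$ using $C = A+B$ produces a sum of terms each containing at least one factor $A$. The same Cauchy--Schwarz computation (employing $\|A\|_{\mathrm{op}}\leq \varepsilon$ and $\|B\|_{\mathrm{op}}\leq D_{\beta,\upsilon}$) then yields
\begin{equation*}
\left| \frac{1}{|\Lambda_l|} \sum_{x,y\in\Lambda_l} \mathbb{E}\!\left[(\mathfrak{C}-\mathfrak{B})_{t+i\alpha,\upsilon,\varepsilon}^{(\omega)}(x,x-e_q,y,y-e_k)\right]\right| \leq D_{\upsilon,\beta}\,\varepsilon \ ,
\end{equation*}
so that integrating in $\alpha$ yields $\mathcal{O}_\upsilon(\varepsilon)$.

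The heart of the proof is then to show
\begin{equation*}
\lim_{l\to\infty} \frac{1}{|\Lambda_l|} \sum_{x,y\in\Lambda_l} \mathbb{E}\!\left[\int_\upsilon^{\beta-\upsilon}\!\!d\alpha\, \mathfrak{B}_{t+i\alpha,\upsilon,\varepsilon}^{(\omega)}(x,x-e_q,y,y-e_k)\right] = \tilde{\Gamma}_{\upsilon,\varepsilon,k,q}(t) \ .
\end{equation*}
Invoking translation invariance of $\mathfrak{a}_\Omega$ under the shift $\chi_{-y}$, the integrand depends only on $z=x-y$, and the change of variables transforms the double average into $\sum_z \mathbb{E}[\int_\upsilon^{\beta-\upsilon}\!\mathfrak{B}(z,z-e_q,0,-e_k)d\alpha]$ weighted by $|\Lambda_l|^{-1}\#\{(x,y)\in\Lambda_l^2:x-y=z\}\to 1$ as $l\to\infty$. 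The decay bound from Theorem~\ref{decay bound theorem}~(ii), applied to both $B$ factors composing $\mathfrak{B}$, furnishes the summable dominant $D(1+|z|^{d^2+1})^{-1}$ needed to invoke Lebesgue's dominated convergence, establishing both finiteness of $\tilde{\Gamma}_{\upsilon,\varepsilon,k,q}(t)$ and the claimed limit. Uniformity in $t$ on compact sets is inherited from the uniformity of the estimates in Theorem~\ref{decay bound theorem}.

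Combining the three estimates, one gets $|\Gamma_{k,q}^{(l)}(t) - \tilde{\Gamma}_{\upsilon,\varepsilon,k,q}(t)| \leq C\upsilon + C_\upsilon\varepsilon$ for the finite--volume prelimits, and a standard Cauchy--type argument (letting $l\to\infty$ along any sequence, then $\varepsilon\to 0$, then $\upsilon\to 0$) shows that $\Gamma_{k,q}(t)$ exists and satisfies the stated asymptotic. The main technical obstacle is the third step: handling simultaneously the translation invariance and the quartic spatial decay in $\mathfrak{B}$ (which is a \emph{product} of two $B$--kernels coupling the four sites $(x,x-e_q,y,y-e_k)$), and verifying that the bookkeeping of boundary corrections $|\Lambda_l|^{-1}\#\{(x,y): x-y=z\}$ remains controllable uniformly in $z$. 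This is however straightforward once one observes that the product decay $|B(\cdot)B(\cdot)|$ is dominated by $D(1+|z|^{d^2+1})^{-1}$ after summing out the other free coordinate.
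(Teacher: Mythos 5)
Your proposal is correct and follows essentially the same route as the paper: the $\mathcal{O}(\upsilon)$ term from truncating the $\alpha$--integral via the uniform bound $\Vert \mathfrak{C}_{t+i\alpha}^{(\omega)}\Vert_{\mathrm{op}}\leq 4$ (the argument of Lemma \ref{lemma conductivty0}), the $\mathcal{O}_{\upsilon}(\varepsilon)$ term from replacing $\mathfrak{C}$ by $\mathfrak{B}$ on $[\upsilon,\beta-\upsilon]$ via the Cauchy--Schwarz argument of Lemma \ref{lemma conductivty1 copy(1)}, and the identification of the limit with $\tilde{\Gamma}_{\upsilon,\varepsilon,k,q}(t)$ via translation invariance of $\mathfrak{a}_{\Omega}$ together with the decay bound of Theorem \ref{decay bound theorem} (ii) and dominated convergence. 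The only difference is cosmetic ordering of the three reduction steps.
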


\begin{proof}
Let $\varepsilon ,\beta \in \mathbb{R}^{+}$, $\lambda \in \mathbb{R}_{0}^{+}$%
, $\upsilon \in (0,\beta /2)$, $t\in \mathbb{R}$ and $k,q\in \{1,-1,\ldots
,d,-d\}$. Using similar arguments to the proof of Lemma \ref{lemma
conductivty1 copy(1)}, one shows that%
\begin{multline*}
\underset{l\rightarrow \infty }{\lim \sup }\frac{1}{\left\vert \Lambda
_{l}\right\vert }\sum\limits_{x,y\in \Lambda _{l}}\mathbb{E}%
\Big[%
\int\nolimits_{\upsilon }^{\beta -\upsilon }\mathrm{d}\alpha \ \left\vert
\mathfrak{B}_{t+i\alpha ,\upsilon ,\varepsilon }^{(\omega
)}(x,x-e_{q},y,y-e_{k})\right. \\
\left. -\mathfrak{C}_{t+i\alpha }^{(\omega
)}(x,x-e_{q},y,y-e_{k})\right\vert
\Big]%
=\mathcal{O}(\varepsilon )
\end{multline*}%
uniformly for $t\in \mathbb{R}$. Moreover, by Theorem \ref{decay bound
theorem} (ii) and translation invariance of $\mathfrak{a}_{\Omega }$ observe
that, for $\upsilon \in (0,\beta /2)$,%
\begin{eqnarray}
\underset{l\rightarrow \infty }{\lim } &&\left\{ \frac{1}{\left\vert \Lambda
_{l}\right\vert }\sum\limits_{x,y\in \Lambda _{l}}\mathbb{E}\left[
\int\nolimits_{\upsilon }^{\beta -\upsilon }\mathrm{d}\alpha \ \mathfrak{B}%
_{t+i\alpha ,\upsilon ,\varepsilon }^{(\omega )}(x,x-e_{q},y,y-e_{k})\right]
\right.  \notag \\
&&\left. -\sum\limits_{x\in \mathfrak{L}}\mathbb{E}\left[ \int\nolimits_{%
\upsilon }^{\beta -\upsilon }\mathrm{d}\alpha \ \mathfrak{B}_{t+i\alpha
,\upsilon ,\varepsilon }^{(\omega )}(x,x-e_{q},0,-e_{k})\right] \right\} =0
\label{oublie a la con}
\end{eqnarray}
uniformly for $t$ in compact sets. Then, one uses the same arguments as in
Lemma \ref{lemma conductivty0} to obtain the assertion, see (\ref%
{conductivity00}) and (\ref{conductivity0}). We omit the details.
\end{proof}

We now consider the limit of the integrand $\mathbf{X}_{l,0}^{(\omega )}$ in
(\ref{Lemma non existing}), as $l\rightarrow \infty $, and show that it
converges almost surely to the deterministic function $\mathbf{X}_{\infty
}\equiv \mathbf{X}_{\infty }^{(\beta ,\lambda )}$ defined, for any $\beta
\in \mathbb{R}^{+}$, $\lambda \in \mathbb{R}_{0}^{+}$ and $s_{1},s_{2}\in
\mathbb{R}$, by%
\begin{eqnarray}
\mathbf{X}_{\infty }(s_{1},s_{2}) &:=&\frac{1}{4}\underset{k,q\in
\{1,-1,\ldots ,d,-d\}}{\sum }\Big(
\Gamma _{k,q}(s_{1}-s_{2})-\Gamma _{k,q}(0)\Big)
\notag \\
&&\times \int\nolimits_{\mathbb{R}^{d}}\mathrm{d}^{d}x\left[ E_{\mathbf{A}}(s_{1},x)\right] (e_{q})\left[ E_{\mathbf{A}}(s_{2},x)\right] (e_{k})\ .
\label{X infinity}
\end{eqnarray}%
%
%
%
%
%
%
%
%
%
%
%
%

\begin{satz}[Infinite volume limit of the $\mathbf{X}$--integrands -- I]
\label{limit ohm1limit ohm1}\mbox{
}\newline
Let $\beta \in \mathbb{R}^{+}$, $\lambda \in \mathbb{R}_{0}^{+}$ and $%
s_{1},s_{2}\in \mathbb{R}$. Then, there is a measurable subset $\tilde{\Omega%
}\left( s_{1},s_{2}\right) \equiv \tilde{\Omega}^{(\beta ,\lambda )}\left(
s_{1},s_{2}\right) \subset \Omega $ of full measure such that, for any $%
\mathbf{A}\in \mathbf{C}_{0}^{\infty }$ and $\omega \in \tilde{\Omega}\left(
s_{1},s_{2}\right) $,%
\begin{equation}
\underset{l\rightarrow \infty }{\lim }\mathbf{X}_{l,0}^{(\omega )}\left(
s_{1},s_{2}\right) =\mathbf{X}_{\infty }\left( s_{1},s_{2}\right) \ .  \notag
\end{equation}
\end{satz}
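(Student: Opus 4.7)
The plan is to chain together the four tools already established in this subsection: Lemma \ref{lemma conductivty0} and Lemma \ref{lemma conductivty1 copy(1)} give the uniform $\mathcal{O}(\upsilon) + \mathcal{O}(\varepsilon)$ approximation of $\mathbf{X}_{l,0}^{(\omega)}$ by $\mathbf{Y}_{l,\upsilon,\varepsilon,0}^{(\omega)}$; Proposition \ref{main 1 copy(3)} takes the infinite--volume limit of the latter along a full--measure event; and Lemma \ref{lemma conductivty4} passes from $\tilde{\Gamma}_{\upsilon,\varepsilon,k,q}$ to $\Gamma_{k,q}$ with error $\mathcal{O}(\upsilon) + \mathcal{O}_{\upsilon}(\varepsilon)$, hence from $\mathbf{Y}_{\infty,\upsilon,\varepsilon}$ to $\mathbf{X}_{\infty}$. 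The whole argument is an $(\upsilon,\varepsilon)$--diagonal procedure.

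To circumvent the fact that the full--measure event in Proposition \ref{main 1 copy(3)} depends on $(\upsilon,\varepsilon)$, I would restrict to the countable family $\upsilon_{n}:=1/n$ (with $n$ large enough that $\upsilon_{n}<\beta/2$) and $\varepsilon_{m}:=1/m$, and set
$$
\tilde{\Omega}(s_{1},s_{2}):=\bigcap_{n,m\in\mathbb{N}}\tilde{\Omega}_{\upsilon_{n},\varepsilon_{m}}(s_{1},s_{2}),
$$
which remains of full measure and simultaneously supports every convergence $\mathbf{Y}_{l,\upsilon_{n},\varepsilon_{m},0}^{(\omega)}(s_{1},s_{2})\to\mathbf{Y}_{\infty,\upsilon_{n},\varepsilon_{m}}(s_{1},s_{2})$, as $l\to\infty$. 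Fixing $\omega\in\tilde{\Omega}(s_{1},s_{2})$ and $\mathbf{A}\in\mathbf{C}_{0}^{\infty}$, Lemmata \ref{lemma conductivty0}--\ref{lemma conductivty1 copy(1)} combined with Proposition \ref{main 1 copy(3)} yield, for each pair $(n,m)$,
$$
\limsup_{l\to\infty}\bigl|\mathbf{X}_{l,0}^{(\omega)}(s_{1},s_{2})-\mathbf{Y}_{\infty,\upsilon_{n},\varepsilon_{m}}(s_{1},s_{2})\bigr|\leq D\,(\upsilon_{n}+\varepsilon_{m}).
$$
Since the definitions (\ref{conductivity00-1}) and (\ref{X infinity}) of $\mathbf{Y}_{\infty,\upsilon,\varepsilon}$ and $\mathbf{X}_{\infty}$ coincide up to the substitution $\tilde{\Gamma}_{\upsilon,\varepsilon,k,q}\leftrightarrow\Gamma_{k,q}$ (and an overall constant), and since the $E_{\mathbf{A}}$--dependent $\mathbb{R}^{d}$--integrals are bounded in modulus by a finite constant depending only on $\mathbf{A}\in\mathbf{C}_{0}^{\infty}$, applying Lemma \ref{lemma conductivty4} at the two time arguments $s_{1}-s_{2}$ and $0$ produces the additional estimate
$$
\bigl|\mathbf{Y}_{\infty,\upsilon_{n},\varepsilon_{m}}(s_{1},s_{2})-\mathbf{X}_{\infty}(s_{1},s_{2})\bigr|\leq D'\,\upsilon_{n}+\mathcal{O}_{\upsilon_{n}}(\varepsilon_{m}).
$$

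Combining the two displays through the triangle inequality, sending first $m\to\infty$ with $n$ fixed (which kills both $\varepsilon_{m}$ and the term $\mathcal{O}_{\upsilon_{n}}(\varepsilon_{m})$), and then $n\to\infty$, gives $\limsup_{l\to\infty}|\mathbf{X}_{l,0}^{(\omega)}(s_{1},s_{2})-\mathbf{X}_{\infty}(s_{1},s_{2})|=0$, which is the claim. The only genuine subtlety I anticipate is the non--interchangeability of the limits: the $\varepsilon$--error in Lemma \ref{lemma conductivty4} is only $\mathcal{O}_{\upsilon}(\varepsilon)$, that is, small for $\upsilon$ fixed but with no \emph{a priori} control as $\upsilon\to 0$ simultaneously. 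Consequently $\varepsilon_{m}\to 0$ must be taken strictly before $\upsilon_{n}\to 0$, which is exactly what the nested diagonal argument above is designed to accommodate.
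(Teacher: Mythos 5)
Your proposal is correct and follows essentially the same route as the paper: the paper likewise restricts to the countable sequences $\upsilon_{n}=n^{-1}$, $\varepsilon_{m}=m^{-1}$, intersects the full--measure sets from Proposition \ref{main 1 copy(3)}, and then chains Lemmata \ref{lemma conductivty0}, \ref{lemma conductivty1 copy(1)} and \ref{lemma conductivty4} via the triangle inequality, with the same care about taking $\varepsilon\rightarrow 0$ before $\upsilon\rightarrow 0$ because of the $\mathcal{O}_{\upsilon}(\varepsilon)$ term. The only difference is presentational (your $\limsup$/diagonal phrasing versus the paper's choice of $N_{\epsilon},M_{\epsilon}$ for a given $\epsilon>0$), which is immaterial.
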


\begin{proof}
Fix $\beta \in \mathbb{R}^{+}$, $\lambda \in \mathbb{R}_{0}^{+}$ and $%
s_{1},s_{2}\in \mathbb{R}$. Define also the countable sequences $\{\upsilon
_{n}\}_{n\in \mathbb{N}}$ and $\{\varepsilon _{m}\}_{m\in \mathbb{N}}$ by $%
\upsilon _{n}:=n^{-1}$ and $\varepsilon _{m}:=m^{-1}$ for $n,m\in \mathbb{N}$%
. Then, by Proposition \ref{main 1 copy(3)}, for any $n,m\in \mathbb{N}$,
there is a measurable subset $\hat{\Omega}_{n,m}\left( s_{1},s_{2}\right)
\equiv \hat{\Omega}_{n,m}^{(\beta ,\lambda )}\left( s_{1},s_{2}\right)
\subset \Omega $ of full measure such that, for any $\mathbf{A}\in \mathbf{C}%
_{0}^{\infty }$ and $\omega \in \hat{\Omega}_{n,m}\left( s_{1},s_{2}\right) $%
,%
\begin{equation}
\underset{l\rightarrow \infty }{\lim }\mathbf{Y}_{l,\upsilon
_{n},\varepsilon _{m},0}^{(\omega )}\left( s_{1},s_{2}\right) =\mathbf{Y}%
_{\infty ,\upsilon _{n},\varepsilon _{m}}\left( s_{1},s_{2}\right) \ .
\label{measure0ohm}
\end{equation}%
Thus, we define the subset
\begin{equation}
\tilde{\Omega}\left( s_{1},s_{2}\right) :=\underset{n,m\in \mathbb{N}}{%
\bigcap }\hat{\Omega}_{n,m}\left( s_{1},s_{2}\right) \ .  \label{measure1ohm}
\end{equation}%
It has full measure, since it is a countable intersection of measurable sets
of full measure.

Take $\mathbf{A}\in \mathbf{C}_{0}^{\infty }$ and any strictly positive
parameter $\epsilon \in \mathbb{R}^{+}$. Then, by Lemmata \ref{lemma
conductivty0}, \ref{lemma conductivty1 copy(1)} and \ref{lemma conductivty4}%
, there are $N_{\epsilon }, M_\epsilon \in \mathbb{N}$ such that, for all $%
l\in \mathbb{R}^{+}$ and $\omega \in \tilde{\Omega}\left( s_{1},s_{2}\right)
$,%
\begin{equation*}
\left\vert \mathbf{X}_{l,0}^{(\omega )}\left( s_{1},s_{2}\right) -\mathbf{X}%
_{\infty }\left( s_{1},s_{2}\right) \right\vert \leq \epsilon +\left\vert
\mathbf{Y}_{l,\upsilon _{N_\epsilon},\varepsilon _{M_\epsilon},0}^{(\omega
)}\left( s_{1},s_{2}\right) -\mathbf{Y}_{\infty ,\upsilon
_{N_\epsilon},\varepsilon _{M_\epsilon}}\left( s_{1},s_{2}\right)
\right\vert \ .
\end{equation*}%
Therefore, we arrive at the assertion by combining this bound together with (%
\ref{measure0ohm}) for any realization $\omega \in \tilde{\Omega}\left(
s_{1},s_{2}\right) $.
\end{proof}

To find the energy increment (\ref{Lemma non existing}) in the limit $(\eta
,l^{-1})\rightarrow (0,0)$, we use below Lebesgue's dominated convergence
theorem and it is crucial to remove the dependency of the measurable subset $%
\tilde{\Omega}\left( s_{1},s_{2}\right) $ on $s_{1},s_{2}\in \mathbb{R}$,
see Theorem \ref{limit ohm1limit ohm1}. To this end we first need to show
some uniform boundedness and continuity of the function (\ref{Lemma non
existing2}):

\begin{lemma}[Uniform Boundedness and Equicontinuity of $\mathbf{X}$%
--integrands]
\label{lemma conductivty4 copy(2)}\mbox{
}\newline
Let $\beta \in \mathbb{R}^{+}$, $\lambda \in \mathbb{R}_{0}^{+}$ and $%
\mathbf{A}\in \mathbf{C}_{0}^{\infty }$. The family%
\begin{equation*}
\left\{ (s_{1},s_{2})\mapsto \mathbf{X}_{l,0}^{(\omega
)}(s_{1},s_{2})\right\} _{l\in \mathbb{R}^{+},\omega \in \Omega }
\end{equation*}%
of maps from $\mathbb{R}^{2}$ to $\mathbb{C}$ is uniformly bounded and
equicontinuous.
\end{lemma}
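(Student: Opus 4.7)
The plan is to establish uniform boundedness directly from the operator-norm bound $\Vert \mathfrak{C}_{t+i\alpha}^{(\omega)}\Vert_{\mathrm{op}} \leq 4$ of \cite[Lemma 5.3]{OhmII}, and to obtain equicontinuity by propagating it through the chain of approximations $\mathbf{X}_{l,0}^{(\omega)} \approx \mathbf{X}_{l,\upsilon}^{(\omega)} \approx \mathbf{Y}_{l,\upsilon,\varepsilon,0}^{(\omega)}$ already built in Lemmata \ref{lemma conductivty0} and \ref{lemma conductivty1 copy(1)}, and then exploiting the quantitative time-continuity and space-decay of the operators $B_{t+i\alpha,\upsilon,\varepsilon}^{(\omega)}$ given by Theorem \ref{decay bound theorem} (ii)--(iii).

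For uniform boundedness, I would bound $|\mathbf{X}_{l,0}^{(\omega)}(s_{1},s_{2})|$ exactly as in (\ref{Lemma non existing2bis}): the Cauchy--Schwarz-type estimate $|\sum_{\mathbf{x},\mathbf{y}\in \mathfrak{K}}\mathfrak{C}_{s_{1}-s_{2}+i\alpha}^{(\omega)}(\mathbf{x},\mathbf{y}) c_{\mathbf{x}}c'_{\mathbf{y}}| \leq 4\sum_{\mathbf{x}\in\mathfrak{K}}|c_{\mathbf{x}}|^{2}$ combined with the trivial fact that $\mathbf{E}_{s}^{\mathbf{A}_{l}}(\mathbf{x})$ is uniformly bounded by $\Vert \mathbf{E}^{\mathbf{A}}\Vert_{\infty}$ and vanishes unless $\mathbf{x}$ sits in a box of size $\mathcal{O}(l^{d})$, which is absorbed by $|\Lambda_{l}|^{-1}$. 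Integrating over $\alpha \in [0,\beta]$ adds only the factor $\beta$. The resulting bound depends only on $\beta$, $d$, and $\mathbf{A}$.

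For equicontinuity I would first reduce to a compact set: since $\mathbf{A}\in \mathbf{C}_{0}^{\infty}$, there is a bounded interval $[-T,T]$ outside which $\mathbf{E}_{s}^{\mathbf{A}_{l}}$ vanishes for all $l\geq 1$, so $\mathbf{X}_{l,0}^{(\omega)}(s_{1},s_{2}) = 0$ outside $[-T,T]^{2}$ uniformly in $l,\omega$. Thus it suffices to exhibit, for every $\epsilon>0$, a $\delta>0$ such that $|\mathbf{X}_{l,0}^{(\omega)}(s_{1},s_{2})-\mathbf{X}_{l,0}^{(\omega)}(s_{1}',s_{2}')|<\epsilon$ whenever $(s_{1},s_{2}),(s_{1}',s_{2}')\in[-T,T]^{2}$ with $|s_{1}-s_{1}'|+|s_{2}-s_{2}'|<\delta$, uniformly in $l,\omega$. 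Using Lemma \ref{lemma conductivty0}, fix $\upsilon\in(0,\beta/2)$ small so that the $\mathbf{X}_{l,0}^{(\omega)}\leftrightarrow \mathbf{X}_{l,\upsilon}^{(\omega)}$ error is at most $\epsilon/5$; then by Lemma \ref{lemma conductivty1 copy(1)} pick $\varepsilon>0$ small (depending on $\upsilon$) so that $|\mathbf{X}_{l,\upsilon}^{(\omega)}-\mathbf{Y}_{l,\upsilon,\varepsilon,0}^{(\omega)}|\leq \epsilon/5$. It then remains to show that $\mathbf{Y}_{l,\upsilon,\varepsilon,0}^{(\omega)}$ is equicontinuous uniformly in $l,\omega$, which is the core step.

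For the last step, the dependence of $\mathbf{Y}_{l,\upsilon,\varepsilon,0}^{(\omega)}(s_{1},s_{2})$ on $(s_{1},s_{2})$ enters in two pieces: the electric-field product $\mathbf{E}_{s_{1}}^{\mathbf{A}_{l}}(\mathbf{x})\mathbf{E}_{s_{2}}^{\mathbf{A}_{l}}(\mathbf{y})$ and the correlation kernel $\mathfrak{B}_{s_{1}-s_{2}+i\alpha,\upsilon,\varepsilon}^{(\omega)}(\mathbf{x},\mathbf{y})$. For the first, $\partial_{t}E_{\mathbf{A}_{l}} = -\partial_{t}^{2}\mathbf{A}_{l}$ is bounded by $\Vert \partial_{t}^{2}\mathbf{A}\Vert_{\infty}$ uniformly in $l$ because the rescaling acts only on the spatial variable; combined with $\Vert \mathfrak{C}\Vert_{\mathrm{op}}\leq 4$ this gives Lipschitz continuity of the $\mathbf{E}_{s_{1}}\mathbf{E}_{s_{2}}$-part uniformly in $l,\omega$. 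For the second piece, I would invoke Theorem \ref{decay bound theorem} (iii) on the compact interval $[-2T,2T]$ (which contains $s_{1}-s_{2}$): it provides an $\eta$ such that whenever $|(s_{1}-s_{2})-(s_{1}'-s_{2}')|\leq \eta$ the difference $|B_{s+i\alpha}^{(\omega)}(\mathbf{x}) - B_{s'+i\alpha}^{(\omega)}(\mathbf{x})|$ is dominated by $\varepsilon(1+\lambda)/(1+|x^{(1)}-x^{(2)}|^{d^{2}+1})$, uniformly in $\omega$. Expanding the product structure (\ref{B fract}) of $\mathfrak{B}$ into four summands, applying the triangle inequality, and using the $\ell^{\infty}$-bound of Theorem \ref{decay bound theorem} (i) on the undifferenced factor, one obtains a summable space-decay bound whose total $1/|\Lambda_{l}|$-averaged contribution is $\mathcal{O}(\varepsilon(1+\lambda))$ uniformly in $l,\omega$. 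Choosing $\delta=\eta$ and adjusting the $\epsilon/5$ budgets yields the claim.

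The main obstacle, as usual in this paper, is to keep every estimate \emph{simultaneously uniform} in $l$ and $\omega$. Uniformity in $\omega$ is handled by the $\omega$-independent operator bound of \cite[Lemma 5.3]{OhmII} and by the $\omega$-independent decay/continuity constants in Theorem \ref{decay bound theorem}, while uniformity in $l$ hinges on the invariance of $\Vert \partial_{t}^{k}\mathbf{A}_{l}\Vert_{\infty}$ under the spatial rescaling and on the fact that, after dividing by $|\Lambda_{l}|$, the double sum over $\mathbf{x},\mathbf{y}\in\mathfrak{K}$ is controlled by a space-decay estimate of $B$ that is integrable on $\mathfrak{L}$ uniformly in $l$.
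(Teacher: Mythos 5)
Your proposal is correct and follows essentially the same route as the paper: uniform boundedness from the operator-norm bound $\Vert \mathfrak{C}_{t+i\alpha}^{(\omega)}\Vert_{\mathrm{op}}\leq 4$ as in (\ref{Lemma non existing2bis}), and equicontinuity by reducing to $\mathbf{Y}_{l,\upsilon,\varepsilon,0}^{(\omega)}$ via Lemmata \ref{lemma conductivty0}--\ref{lemma conductivty1 copy(1)} and then applying Theorem \ref{decay bound theorem} (iii). The paper omits the details of the final step, which you supply correctly (splitting the $(s_1,s_2)$-dependence between the field product and the kernel $\mathfrak{B}$ and controlling each uniformly in $l$ and $\omega$).
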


\begin{proof}
The uniform boundedness of this collection of maps is an immediate
consequence of (\ref{Lemma non existing2bis}), see (\ref{Lemma non existing2}%
). To prove its equicontinuity, it suffices, by Lemmata \ref{lemma
conductivty0}--\ref{lemma conductivty1 copy(1)}, to verify that, for any
fixed $\beta \in \mathbb{R}^{+}$, $\lambda \in \mathbb{R}_{0}^{+}$, $\mathbf{%
A}\in \mathbf{C}_{0}^{\infty }$, $\varepsilon \in \mathbb{R}^{+}$ and $%
\upsilon \in (0,\beta /2)$, the family%
\begin{equation*}
\left\{ (s_{1},s_{2})\mapsto \mathbf{Y}_{l,\upsilon ,\varepsilon
,0}^{(\omega )}(s_{1},s_{2})\right\} _{l\in \mathbb{R}^{+},\omega \in \Omega
}
\end{equation*}%
of maps from $\mathbb{R}^{2}$ to $\mathbb{C}$ is equicontinuous, see (\ref{B
fractbis}). This property immediately follows from Theorem \ref{decay bound
theorem} (iii). We omit the details.
\end{proof}

Theorem \ref{limit ohm1limit ohm1} and Lemma \ref{lemma conductivty4 copy(2)}
imply two corollaries: The first one allows us to eliminate the $%
(s_{1},s_{2})$--dependency of the measurable set $\tilde{\Omega}\left(
s_{1},s_{2}\right) $ of Theorem \ref{limit ohm1limit ohm1}. The second one
concerns the continuity of the function $\Gamma _{k,q}$, which is in fact
related to a matrix--valued conductivity as explained after Theorem \ref%
{main 1 copy(2)}.

\begin{koro}[Infinite volume limit of the $\mathbf{X}$--integrand-II]
\label{lemma conductivty4 copy(6)}\mbox{
}\newline
Let $\beta \in \mathbb{R}^{+}$ and $\lambda \in \mathbb{R}_{0}^{+}$. Then,
there is a measurable subset $\tilde{\Omega}\equiv \tilde{\Omega}^{(\beta
,\lambda )}\subset \Omega $ of full measure such that, for any $%
s_{1},s_{2}\in \mathbb{R}$, $\mathbf{A}\in \mathbf{C}_{0}^{\infty }$ and $%
\omega \in \tilde{\Omega}$,%
\begin{equation}
\underset{l\rightarrow \infty }{\lim }\mathbf{X}_{l,0}^{(\omega )}\left(
s_{1},s_{2}\right) =\mathbf{X}_{\infty }(s_{1},s_{2})\ .
\label{limit ohm1bis}
\end{equation}
\end{koro}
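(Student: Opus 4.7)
The plan is to upgrade the pointwise--in--$(s_1,s_2)$ convergence provided by Theorem~\ref{limit ohm1limit ohm1} to a version in which the full--measure set does not depend on $(s_1,s_2)$. The $\mathbf{A}$--dependence is not an issue: the set $\tilde{\Omega}(s_1,s_2)$ supplied by Theorem~\ref{limit ohm1limit ohm1} is already independent of $\mathbf{A}\in\mathbf{C}_0^\infty$. The main tool for handling the $(s_1,s_2)$--dependence is the uniform boundedness and equicontinuity established in Lemma~\ref{lemma conductivty4 copy(2)}, combined with a standard density argument.

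First I would fix a countable dense subset of $\mathbb{R}^2$, for concreteness $\mathcal{D}:=\mathbb{Q}^2$, and define
\[
\tilde{\Omega}:=\bigcap_{(q_1,q_2)\in\mathcal{D}}\tilde{\Omega}(q_1,q_2)\subset\Omega,
\]
where each $\tilde{\Omega}(q_1,q_2)$ comes from Theorem~\ref{limit ohm1limit ohm1}. Being a countable intersection of full--measure sets, $\tilde{\Omega}$ has full measure. On $\tilde{\Omega}$, the convergence $\mathbf{X}_{l,0}^{(\omega)}(q_1,q_2)\to\mathbf{X}_\infty(q_1,q_2)$ as $l\to\infty$ holds for every $\mathbf{A}\in\mathbf{C}_0^\infty$ and every $(q_1,q_2)\in\mathcal{D}$ simultaneously.

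The next step is to extend the convergence to all $(s_1,s_2)\in\mathbb{R}^2$. Fix $\omega\in\tilde{\Omega}$, $\mathbf{A}\in\mathbf{C}_0^\infty$, a point $(s_1,s_2)\in\mathbb{R}^2$, and an arbitrary $\varepsilon>0$. Lemma~\ref{lemma conductivty4 copy(2)} supplies a $\delta>0$, depending on $\mathbf{A}$ but not on $l$ or $\omega$, such that $|(s_1,s_2)-(s_1',s_2')|<\delta$ implies
\[
|\mathbf{X}_{l,0}^{(\omega)}(s_1,s_2)-\mathbf{X}_{l,0}^{(\omega)}(s_1',s_2')|<\varepsilon
\]
uniformly in $l\in\mathbb{R}^+$. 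Choosing $(q_1,q_2)\in\mathcal{D}$ within distance $\delta$ of $(s_1,s_2)$, a triangle inequality and the already established convergence at $(q_1,q_2)$ show that $\{\mathbf{X}_{l,0}^{(\omega)}(s_1,s_2)\}_l$ is a Cauchy sequence, hence converges to some $L(\omega,s_1,s_2)$ satisfying $|L(\omega,s_1,s_2)-\mathbf{X}_\infty(q_1,q_2)|\leq\varepsilon$.

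To identify $L(\omega,s_1,s_2)$ with $\mathbf{X}_\infty(s_1,s_2)$, I would verify continuity of the latter at $(s_1,s_2)$. By the defining formula~(\ref{X infinity}) and the smoothness of $E_\mathbf{A}$, this reduces to continuity in $t$ of each $\Gamma_{k,q}(t)$, which follows from Lemma~\ref{lemma conductivty4} (uniform approximation of $\Gamma_{k,q}$ by $\tilde{\Gamma}_{\upsilon,\varepsilon,k,q}$ on compact sets) combined with Theorem~\ref{decay bound theorem}(iii) (equicontinuity of $\tilde{\Gamma}_{\upsilon,\varepsilon,k,q}$ in $t$). Letting $(q_1,q_2)\to(s_1,s_2)$ in the Cauchy estimate, together with this continuity of $\mathbf{X}_\infty$, yields $L(\omega,s_1,s_2)=\mathbf{X}_\infty(s_1,s_2)$ and $\varepsilon$ was arbitrary. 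The only delicate point is that the modulus of equicontinuity provided by Lemma~\ref{lemma conductivty4 copy(2)} must be uniform in both $l$ and $\omega$; this uniformity is precisely what the lemma delivers, so no further technical work is required.
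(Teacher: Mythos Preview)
Your proof is correct and follows essentially the same approach as the paper: intersect the full--measure sets from Theorem~\ref{limit ohm1limit ohm1} over the countable dense set $\mathbb{Q}^2$, then use the equicontinuity of Lemma~\ref{lemma conductivty4 copy(2)} to extend to all $(s_1,s_2)$. Your treatment is in fact slightly more detailed in one respect: you explicitly verify the continuity of $\mathbf{X}_\infty$ (via Lemma~\ref{lemma conductivty4} and Theorem~\ref{decay bound theorem}(iii)) in order to identify the limit, whereas the paper leaves this implicit here and records it separately as Corollary~\ref{lemma conductivty4 copy(5)}.
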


\begin{proof}
Fix $\beta \in \mathbb{R}^{+}$ and $\lambda \in \mathbb{R}_{0}^{+}$. By
Theorem \ref{limit ohm1limit ohm1}, for any $s_{1},s_{2}\in \mathbb{Q}$,
there is a measurable subset $\hat{\Omega}\left( s_{1},s_{2}\right) \subset
\Omega $ of full measure such that (\ref{limit ohm1bis}) holds for any $%
\mathbf{A}\in \mathbf{C}_{0}^{\infty }$ and $\omega \in \hat{\Omega}\left(
s_{1},s_{2}\right) $. Let $\tilde{\Omega}$ be the intersection of all such
subsets $\hat{\Omega}\left( s_{1},s_{2}\right) $. Since this intersection is
countable, $\tilde{\Omega}$ is measurable and has full measure. By Lemma \ref%
{lemma conductivty4 copy(2)} and the density of $\mathbb{Q}$ in $\mathbb{R}$%
, it follows that (\ref{limit ohm1bis}) holds true for any $s_{1},s_{2}\in
\mathbb{R}$, $\mathbf{A}\in \mathbf{C}_{0}^{\infty }$ and $\omega \in \tilde{%
\Omega}$.
\end{proof}

\begin{koro}[Continuity of paramagnetic production coefficients]
\label{lemma conductivty4 copy(5)}\mbox{
}\newline
Let $\beta \in \mathbb{R}^{+}$, $\lambda \in \mathbb{R}_{0}^{+}$ and $k,q\in
\{1,-1,\ldots ,d,-d\}$. Then, the function $\Gamma _{k,q}$ from $\mathbb{R}$
to $\mathbb{C}$ defined by (\ref{conductivity0}) is continuous.
\end{koro}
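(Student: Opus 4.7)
The plan is to deduce continuity of $\Gamma_{k,q}$ directly from Lemma \ref{lemma conductivty4}, which asserts that $\Gamma_{k,q}$ is the uniform-on-compacts limit of the approximants $\tilde{\Gamma}_{\upsilon,\varepsilon,k,q}$ as $\varepsilon \to 0^+$ (with $\upsilon \in (0,\beta/2)$ fixed) followed by $\upsilon \to 0^+$. Since the uniform-on-compacts limit of continuous functions is continuous, it suffices to establish continuity of $\tilde{\Gamma}_{\upsilon,\varepsilon,k,q}$ on $\mathbb{R}$ for each fixed $\varepsilon \in \mathbb{R}^+$ and $\upsilon \in (0,\beta/2)$.

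By definition \eqref{conductivity00}, $\tilde{\Gamma}_{\upsilon,\varepsilon,k,q}(t)$ is a series over $x \in \mathfrak{L}$ whose $x$-term is $\mathbb{E}[\int_{\upsilon}^{\beta-\upsilon}\mathrm{d}\alpha\,\mathfrak{B}_{t+i\alpha,\upsilon,\varepsilon}^{(\omega)}(x,x-e_q,0,-e_k)]$. Expanding $\mathfrak{B}$ via \eqref{B fract} and invoking Theorem \ref{decay bound theorem} (ii), each summand is bounded in absolute value by $D/(1+|x|^{d^2+1})^2$ for some finite constant $D$ depending on $\varepsilon,\beta,\upsilon,d$ and the compact set containing $t$, but not on $\omega$ nor on $t$ within that compact set. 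The key check here is that for every pair $\pi,\pi' \in S_2$ the differences $y^{\pi'(1)}-x^{\pi(1)}$ and $x^{\pi(2)}-y^{\pi'(2)}$ are both of the form $\pm x+\mathcal{O}(1)$, so each of the two $B$-factors in the product separately provides a decay $D/(1+|x|^{d^2+1})$. This yields absolute convergence of the series, uniformly in $t$ on compact sets.

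For continuity of each individual $x$-summand in $t$, Theorem \ref{decay bound theorem} (iii) provides an $\omega$-independent modulus of continuity of $B_{t+i\alpha,\upsilon,\varepsilon}^{(\omega)}(\mathbf{x})$ in $t$, which transfers by the Leibniz rule to the products appearing in $\mathfrak{B}$ (using also the uniform boundedness from Theorem \ref{decay bound theorem} (i)). Lebesgue's dominated convergence then interchanges the $t$-limit with the $\alpha$-integration and with the expectation $\mathbb{E}[\cdot]$, establishing continuity of each summand. Being a uniformly convergent series of continuous functions, $\tilde{\Gamma}_{\upsilon,\varepsilon,k,q}$ is continuous on $\mathbb{R}$.

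No substantial obstacle is expected: the heavy technical input has already been absorbed into Theorem \ref{decay bound theorem} and Lemma \ref{lemma conductivty4}. The only point requiring attention is the bookkeeping in Theorem \ref{decay bound theorem} (iii), whose bound comes with a factor $(1+\lambda)$ — this is harmless since $\lambda$ is fixed throughout the statement, so the equicontinuity estimate remains uniform in $\omega \in \Omega$ and in $x \in \mathfrak{L}$ at fixed $\lambda$.
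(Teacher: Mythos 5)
Your argument is correct, but it follows a genuinely different route from the paper's. The paper proves this corollary by combining the equicontinuity of the family $\{s\mapsto \mathbf{X}_{l,0}^{(\omega )}(s,0)\}_{l,\omega}$ (Lemma \ref{lemma conductivty4 copy(2)}) with the almost sure infinite--volume limit $\mathbf{X}_{l,0}^{(\omega )}\to \mathbf{X}_{\infty }$ (Corollary \ref{lemma conductivty4 copy(6)}): the limit of an equicontinuous family is continuous, and a judicious choice of $\mathbf{A}\in \mathbf{C}_{0}^{\infty }$ (locally constant in time near the point of interest, with a non--vanishing spatial overlap integral) isolates $\Gamma _{k,q}(s)-\Gamma _{k,q}(0)$ from the representation (\ref{X infinity}) of $\mathbf{X}_{\infty }(s,0)$. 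You instead bypass the random finite--volume quantities entirely: you prove continuity of the deterministic approximants $\tilde{\Gamma}_{\upsilon ,\varepsilon ,k,q}$ directly from Theorem \ref{decay bound theorem} (the decay (ii) giving a summable, $t$--uniform-on-compacts majorant of the form $D(1+|x|^{d^{2}+1})^{-2}$ since both arguments' differences in each $B$--factor are $\pm x+\mathcal{O}(1)$, and (i)+(iii) giving an $\omega$-- and $\alpha$--uniform modulus of continuity of the products via the elementary estimate $|ab-a'b'|\leq |a|\,|b-b'|+|b'|\,|a-a'|$), and then invoke Lemma \ref{lemma conductivty4} to realize $\Gamma _{k,q}$ as a uniform-on-compacts limit of these continuous functions. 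Both approaches rest on the same technical core (Theorem \ref{decay bound theorem} (ii)--(iii)); yours has the advantage of not requiring the construction of a special test field to extract the individual $(k,q)$ component, while the paper's recycles the equicontinuity lemma that it needs anyway for Corollary \ref{lemma conductivty4 copy(6)} and Theorem \ref{main 1 copy(2)}. There is no circularity in your use of Lemma \ref{lemma conductivty4}, which is established independently of the corollary.
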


\begin{proof}
For each $k,q\in \{1,-1,\ldots ,d,-d\}$ and $t\in \mathbb{R}$, choose $%
\mathbf{A}\in \mathbf{C}_{0}^{\infty }$ such that, in a fixed neighborhood
of $t$, the map $s\mapsto E_{\mathbf{A}}(s,x)$ is constant for any $x\in
\mathbb{R}^{d}$ and
\begin{equation*}
\int\nolimits_{\mathbb{R}^{d}}\mathrm{d}^{d}x\left[ E_{\mathbf{A}}(t,x)%
\right] (e_{q})\left[ E_{\mathbf{A}}(0,x)\right] (e_{k})\neq 0\ .
\end{equation*}%
Then, we combine the equicontinuity of the family
\begin{equation*}
\left\{ s\mapsto \mathbf{X}_{l,0}^{(\omega )}(s,0)\right\} _{l\in \mathbb{R}%
^{+},\omega \in \Omega }
\end{equation*}%
of maps from $\mathbb{R}$ to $\mathbb{C}$ given by Lemma \ref{lemma
conductivty4 copy(2)} with Corollary \ref{lemma conductivty4 copy(6)} to
show that the function $\Gamma _{k,q}$ is continuous at $t\in \mathbb{R}$
for each $k,q\in \{1,-1,\ldots ,d,-d\}$.
\end{proof}

Therefore, we can now use Lebesgue's dominated convergence theorem to get
the energy increment (\ref{Lemma non existing}) in the limit $(\eta
,l^{-1})\rightarrow (0,0)$:

\begin{satz}[Matrix--valued heat production coefficient]
\label{main 1 copy(2)}\mbox{
}\newline
Let $\beta \in \mathbb{R}^{+}$ and $\lambda \in \mathbb{R}_{0}^{+}$. Then,
there is a measurable subset $\tilde{\Omega}\equiv \tilde{\Omega}^{(\beta
,\lambda )}\subset \Omega $ of full measure such that, for any $\omega \in
\tilde{\Omega}$, $\mathbf{A}\in \mathbf{C}_{0}^{\infty }$ and $t\geq t_{0}$,
\begin{equation}
\mathfrak{i}_{\mathrm{p}}\left( t\right) :=\underset{(\eta
,l^{-1})\rightarrow (0,0)}{\lim }\left\{ \left( \eta ^{2}\left\vert \Lambda
_{l}\right\vert \right) ^{-1}\mathfrak{I}_{\mathrm{p}}^{(\omega ,\eta
\mathbf{A}_{l})}\left( t\right) \right\} =\int\nolimits_{t_{0}}^{t}\mathrm{d}%
s_{1}\int\nolimits_{t_{0}}^{s_{1}}\mathrm{d}s_{2}\ \mathbf{X}_{\infty
}(s_{1},s_{2})\ .  \notag
\end{equation}
\end{satz}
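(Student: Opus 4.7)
The plan is to derive Theorem~\ref{main 1 copy(2)} directly from the asymptotic expansion (\ref{Lemma non existing}) by combining Corollary~\ref{lemma conductivty4 copy(6)} (pointwise convergence of $\mathbf{X}_{l,0}^{(\omega)}$ to $\mathbf{X}_{\infty}$ on a measurable set of full measure independent of $(s_{1},s_{2})$) with Lemma~\ref{lemma conductivty4 copy(2)} (uniform boundedness of the integrand), and then applying Lebesgue's dominated convergence theorem. The argument should be essentially routine once these preceding lemmata are invoked.

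\emph{Step 1 (elimination of the small-$\eta$ remainder).} I would divide both sides of (\ref{Lemma non existing}) by $\eta^{2}|\Lambda_{l}|$. Since $|\Lambda_{l}|$ is of order $l^{d}$, the $\mathcal{O}(\eta^{3}l^{d})$ correction becomes $\mathcal{O}(\eta)$, uniformly in $l\in\mathbb{R}^{+}$ (as well as in $\omega\in\Omega$, $\lambda\in\mathbb{R}_{0}^{+}$ and $t\geq t_{0}$), yielding
\[
(\eta^{2}|\Lambda_{l}|)^{-1}\mathfrak{I}_{\mathrm{p}}^{(\omega,\eta\mathbf{A}_{l})}(t)=\int\nolimits_{t_{0}}^{t}\mathrm{d}s_{1}\int\nolimits_{t_{0}}^{s_{1}}\mathrm{d}s_{2}\ \mathbf{X}_{l,0}^{(\omega)}(s_{1},s_{2})+\mathcal{O}(\eta).
\]
The remainder vanishes uniformly in $l$ as $\eta\to 0$, so it disappears in the joint limit $(\eta,l^{-1})\to(0,0)$, and it only remains to evaluate the limit $l\to\infty$ of the double integral.

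\emph{Step 2 (infinite volume limit by dominated convergence).} I would take the measurable set $\tilde{\Omega}\equiv\tilde{\Omega}^{(\beta,\lambda)}\subset\Omega$ of full measure provided by Corollary~\ref{lemma conductivty4 copy(6)}. For any $\omega\in\tilde{\Omega}$ and \emph{every} $(s_{1},s_{2})\in\mathbb{R}^{2}$, one has $\mathbf{X}_{l,0}^{(\omega)}(s_{1},s_{2})\to\mathbf{X}_{\infty}(s_{1},s_{2})$ as $l\to\infty$, with the crucial feature that $\tilde{\Omega}$ is independent of $(s_{1},s_{2})$ and of $\mathbf{A}\in\mathbf{C}_{0}^{\infty}$. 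Lemma~\ref{lemma conductivty4 copy(2)} provides a uniform bound on $|\mathbf{X}_{l,0}^{(\omega)}|$ over all $(l,\omega,s_{1},s_{2})$. Since the integration domain $\{t_{0}\leq s_{2}\leq s_{1}\leq t\}$ is bounded, Lebesgue's dominated convergence theorem then yields
\[
\lim_{l\to\infty}\int\nolimits_{t_{0}}^{t}\mathrm{d}s_{1}\int\nolimits_{t_{0}}^{s_{1}}\mathrm{d}s_{2}\ \mathbf{X}_{l,0}^{(\omega)}(s_{1},s_{2})=\int\nolimits_{t_{0}}^{t}\mathrm{d}s_{1}\int\nolimits_{t_{0}}^{s_{1}}\mathrm{d}s_{2}\ \mathbf{X}_{\infty}(s_{1},s_{2}).
\]

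Combining Steps~1 and 2 by a standard $\varepsilon/2$-argument (first pick $\eta$ small to control the uniform remainder, then pick $l$ large using dominated convergence) handles the joint limit $(\eta,l^{-1})\to(0,0)$ and gives the claim. The main technical hurdle has in fact already been overcome upstream, namely the removal of the dependence of the set of full measure on the time points $(s_{1},s_{2})$; this was achieved in Corollary~\ref{lemma conductivty4 copy(6)} via density of $\mathbb{Q}$ in $\mathbb{R}$ together with the equicontinuity in Lemma~\ref{lemma conductivty4 copy(2)}. With that in hand, the passage from the microscopic asymptotic expansion to the limiting double integral reduces to a direct application of dominated convergence, and no further analytical obstacle arises.
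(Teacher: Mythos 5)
Your proposal is correct and follows exactly the paper's own argument: rewrite the asymptotic expansion (\ref{Lemma non existing}) as $(\eta^{2}|\Lambda_{l}|)^{-1}\mathfrak{I}_{\mathrm{p}}^{(\omega,\eta\mathbf{A}_{l})}(t)=\int_{t_{0}}^{t}\mathrm{d}s_{1}\int_{t_{0}}^{s_{1}}\mathrm{d}s_{2}\,\mathbf{X}_{l,0}^{(\omega)}(s_{1},s_{2})+\mathcal{O}(\eta)$ and then invoke Lemma \ref{lemma conductivty4 copy(2)} and Corollary \ref{lemma conductivty4 copy(6)} together with Lebesgue's dominated convergence theorem. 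No gap; your added detail on the $\varepsilon/2$ handling of the joint limit is merely an explicit spelling-out of what the paper leaves implicit.
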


\begin{proof}
Recall (\ref{Lemma non existing}), that is, for any $t\geq t_{0}$,%
\begin{equation*}
\left( \eta ^{2}\left\vert \Lambda _{l}\right\vert \right) ^{-1}\mathfrak{I}%
_{\mathrm{p}}^{(\omega ,\eta \mathbf{A}_{l})}\left( t\right)
=\int\nolimits_{t_{0}}^{t}\mathrm{d}s_{1}\int\nolimits_{t_{0}}^{s_{1}}%
\mathrm{d}s_{2}\ \mathbf{X}_{l,0}^{(\omega )}(s_{1},s_{2})+\mathcal{O}(\eta
)\ .
\end{equation*}%
The assertion then follows from Lemma \ref{lemma conductivty4 copy(2)} and
Corollary \ref{lemma conductivty4 copy(6)} together with Lebesgue's
dominated convergence theorem.
\end{proof}

Notice at this point that the theorem above together with Equation (\ref{X
infinity}) means that the continuous functions $\Gamma _{k,q}$ define the
entries of a matrix--valued heat production coefficient. In fact, (\ref{X
infinity}) can be rewritten by using the deterministic paramagnetic
transport coefficient $\mathbf{\Xi }_{\mathrm{p}}$ defined by (\ref%
{paramagnetic transport coefficient macro}):

\begin{lemma}[$\mathbf{\Xi }_{\mathrm{p}}$ as heat production coefficient]
\label{lemma conductivty4 copy(1)}\mbox{
}\newline
For any $\beta \in \mathbb{R}^{+}$, $\lambda \in \mathbb{R}_{0}^{+}$ and $%
s_{1},s_{2}\in \mathbb{R}$,
\begin{equation}
\mathbf{X}_{\infty }(s_{1},s_{2})=\int\nolimits_{\mathbb{R}^{d}}\mathrm{d}%
^{d}x\ \left\langle E_{\mathbf{A}}(s_{1},x),\mathbf{\Xi }_{\mathrm{p}%
}(s_{1}-s_{2})E_{\mathbf{A}}(s_{2},x)\right\rangle \ .  \notag
\end{equation}
\end{lemma}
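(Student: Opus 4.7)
The plan is to unfold both sides of the identity into expressions built from the complex-time correlation function $\mathfrak{C}^{(\omega)}_{t+i\alpha}$ of (\ref{map cool}) and then match them via a symmetry and sign-reduction argument.

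First, I would invoke \cite[Lemma 5.2]{OhmII} to rewrite the microscopic paramagnetic transport coefficient (\ref{backwards -1bis}) as an integral in the Duhamel parameter $\alpha\in[0,\beta]$ of the correlation function $\mathfrak{C}^{(\omega)}_{t+i\alpha}$. Applying this to $\mathbf{x}=(x+e_q,x)$, $\mathbf{y}=(y+e_k,y)$, taking expectation $\mathbb{E}$ in (\ref{average conductivity}) and passing to the infinite-volume limit, the translation invariance of $\mathfrak{a}_\Omega$ permits the change of variables $x\mapsto x+e_q$, $y\mapsto y+e_k$. The resulting arguments of $\mathfrak{C}$ are then brought into the form $(x,x-e_q,y,y-e_k)$ of the definition (\ref{conductivity0}) of $\Gamma_{k,q}$ by exploiting the swap-antisymmetry
\[
\mathfrak{C}^{(\omega)}_{t+i\alpha}(x^{(2)},x^{(1)},\mathbf{y}) = -\mathfrak{C}^{(\omega)}_{t+i\alpha}(x^{(1)},x^{(2)},\mathbf{y}),
\]
inherited from the signature factor $\varepsilon_\pi\varepsilon_{\pi'}$ in (\ref{map cool}) (and likewise in the $\mathbf{y}$ slot). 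This yields the clean identification
\[
\{\mathbf{\Xi}_{\mathrm{p}}(t)\}_{k,q} = \Gamma_{k,q}(t)-\Gamma_{k,q}(0),\qquad k,q\in\{1,\ldots,d\}.
\]

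Second, using the KMS property $\varrho^{(\beta,\omega,\lambda)}(AB) = \varrho^{(\beta,\omega,\lambda)}(B\tau^{(\omega,\lambda)}_{i\beta}(A))$, stationarity of $\varrho^{(\beta,\omega,\lambda)}$ under $\tau^{(\omega,\lambda)}$, and the evenness $\mathbf{\Xi}_{\mathrm{p}}(t) = \mathbf{\Xi}_{\mathrm{p}}(|t|)$ recorded after (\ref{paramagnetic transport coefficient macro}), one derives the matrix symmetry $\{\mathbf{\Xi}_{\mathrm{p}}(t)\}_{k,q} = \{\mathbf{\Xi}_{\mathrm{p}}(t)\}_{q,k}$. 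Next, I would unfold the sum over $\{1,-1,\ldots,d,-d\}^2$ in (\ref{X infinity}) into the four sign configurations of $(k,q)$. From the swap-antisymmetry of $\mathfrak{C}$ combined with translation invariance, one gets $\Gamma_{-k,-q}=\Gamma_{k,q}$ and $\Gamma_{-k,q}=\Gamma_{k,-q}=-\Gamma_{k,q}$ for $k,q\in\{1,\ldots,d\}$. Paired with $[E_{\mathbf{A}}(s,x)](e_{-k}) = -[E_{\mathbf{A}}(s,x)](e_k)$, each of the four sign configurations contributes identically, producing a factor of $4$ that cancels the prefactor $\tfrac{1}{4}$ of (\ref{X infinity}).

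After this reduction, the identification of the first step combined with the index symmetry above transforms the reduced sum directly into the canonical-basis expansion (\ref{scalar product two fields}) of the target $\int_{\mathbb{R}^d}\mathrm{d}^d x\,\langle E_{\mathbf{A}}(s_1,x),\mathbf{\Xi}_{\mathrm{p}}(s_1-s_2)E_{\mathbf{A}}(s_2,x)\rangle$, completing the identification.

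The main obstacle lies in the combinatorial and sign bookkeeping: one must track simultaneously the swap-antisymmetry of $\mathfrak{C}$, the sign flips of the electric field components under $k\mapsto -k$, the KMS-induced index symmetry of $\mathbf{\Xi}_{\mathrm{p}}$, and the prefactors carried by \cite[Lemma 5.2]{OhmII}. The factor $\tfrac{1}{4}$ in the definition of $\mathbf{X}_\infty$ is precisely calibrated to compensate the overcounting from the $2^2=4$ sign configurations, so once the relations above are established the identity follows by a direct algebraic computation.
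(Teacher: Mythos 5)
Your proposal is correct and follows essentially the same route as the paper's proof: establishing the identification $\{\mathbf{\Xi }_{\mathrm{p}}(t)\}_{k,q}=\Gamma _{k,q}(t)-\Gamma _{k,q}(0)$ together with the symmetry $\{\mathbf{\Xi }_{\mathrm{p}}(t)\}_{k,q}=\{\mathbf{\Xi }_{\mathrm{p}}(t)\}_{q,k}$ via \cite[Theorem 3.1, Lemma 5.2]{OhmII}, and then collapsing the signed--index sum in (\ref{X infinity}) onto $\{1,\ldots ,d\}^{2}$ so that the four sign configurations cancel the prefactor $\tfrac{1}{4}$. The paper compresses all of this into a citation and one displayed identity, whereas you make the sign bookkeeping ($\Gamma _{-k,q}=-\Gamma _{k,q}$, the antisymmetry of $\mathfrak{C}$, the linearity of the one--form $E_{\mathbf{A}}$) explicit, which is exactly the content being glossed over.
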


\begin{proof}
By combining \cite[Theorem 3.1, Lemma 5.2]{OhmII} with Equations (\ref%
{current observable}), (\ref{backwards -1bis}), (\ref{average conductivity}%
), (\ref{paramagnetic transport coefficient macro}), (\ref{conductivity0}), (%
\ref{X infinity}),
one gets
\begin{equation}
\left\{ \mathbf{\Xi }_{\mathrm{p}}\left( t\right) \right\} _{q,k}=\left\{
\mathbf{\Xi }_{\mathrm{p}}\left( t\right) \right\} _{k,q}=\Gamma
_{k,q}\left( t\right) -\Gamma _{k,q}\left( 0\right) \in \mathbb{R}
\label{well-defined sigma}
\end{equation}%
for any $k,q\in \{1,\ldots ,d\}$ and $t\in \mathbb{R}$. Hence,
\begin{equation}
\mathbf{X}_{\infty }(s_{1},s_{2})=\int\nolimits_{\mathbb{R}^{d}}\mathrm{d}%
^{d}x\sum_{q=1}^{d}\left[ E_{\mathbf{A}}(s_{1},x)\right] (e_{q})%
\sum_{k=1}^{d}\left\{ \mathbf{\Xi }_{\mathrm{p}}\left( s_{1}-s_{2}\right)
\right\} _{q,k}\left[ E_{\mathbf{A}}(s_{2},x)\right] (e_{k})\ .  \notag
\end{equation}
\end{proof}

Therefore, Theorem \ref{main 1} (p) directly results from Theorem \ref{main
1 copy(2)} and Lemma \ref{lemma conductivty4 copy(1)} together with Fubini's
theorem. In particular, $\mathbf{\Xi }_{\mathrm{p}}$ can also be seen as the
heat production coefficient. Under the assumption that the random variables
are independently and identically distributed (i.i.d.) (Section \ref{Section
impurities}), this coefficient becomes a scalar:

\begin{lemma}[Paramagnetic transport coefficient as a scalar]
\label{lemma conductivty4 copy(3)}\mbox{
}\newline
For any $\beta \in \mathbb{R}^{+}$, $\lambda \in \mathbb{R}_{0}^{+}$ , there
is a real function $\mathbf{\sigma }_{\mathrm{p}}\equiv \mathbf{\sigma }_{%
\mathrm{p}}^{(\beta ,\lambda )}$ such that
\begin{equation*}
\mathbf{\Xi }_{\mathrm{p}}\left( t\right) =\mathbf{\sigma }_{\mathrm{p}%
}\left( t\right) \ \mathrm{Id}_{\mathbb{R}^{d}}\ ,\qquad t\in \mathbb{R}\ .
\end{equation*}
\end{lemma}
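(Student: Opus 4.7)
The argument is purely symmetry-based and exploits the invariance of the i.i.d. probability measure $\mathfrak{a}_{\Omega }$ under the hyperoctahedral group $B_{d}$ of lattice isometries of $\mathbb{Z}^{d}$ (signed permutations of the canonical basis). For any $R\in B_{d}$, the pullback $\omega \mapsto \omega \circ R$ is measure-preserving on $(\Omega ,\mathfrak{A}_{\Omega },\mathfrak{a}_{\Omega })$ by (\ref{probability measure}), while the induced unitary $U_{R}$ on $\ell ^{2}(\mathfrak{L})$ defined by $U_{R}\mathfrak{e}_{x}=\mathfrak{e}_{Rx}$ commutes with $\Delta _{\mathrm{d}}$ and obeys $U_{R}^{\ast }V_{\omega }U_{R}=V_{\omega \circ R}$. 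Inserting these identities in (\ref{correlation operator}) and (\ref{map cool}), and then integrating over $\omega $, yields the key $B_{d}$--equivariance
\begin{equation*}
\mathbb{E}[\mathfrak{C}_{t+i\alpha }^{(\omega )}(\mathbf{x},\mathbf{y})]=\mathbb{E}[\mathfrak{C}_{t+i\alpha }^{(\omega )}(R\mathbf{x},R\mathbf{y})]\ ,\qquad R\in B_{d}\ ,\quad \mathbf{x},\mathbf{y}\in \mathfrak{L}^{2}\ .
\end{equation*}
Combined with the representation (\ref{well-defined sigma}), $\{\mathbf{\Xi }_{\mathrm{p}}(t)\}_{k,q}=\Gamma _{k,q}(t)-\Gamma _{k,q}(0)$, and with the defining limit (\ref{conductivity0}) of $\Gamma _{k,q}$, the task reduces to showing that $\Gamma _{k,q}(t)=0$ for $k\neq q$ and that $\Gamma _{k,k}(t)$ is independent of $k$.

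For the off-diagonal vanishing I would use the reflection $R_{k}\in B_{d}$ that sends $e_{k}\mapsto -e_{k}$ and fixes every other basis vector. Since $R_{k}\Lambda _{l}=\Lambda _{l}$ and $R_{k}e_{q}=e_{q}$ for $q\neq k$, the change of summation variables $x\to R_{k}x$, $y\to R_{k}y$ in (\ref{conductivity0}), together with the equivariance above, turns the defining sum for $\Gamma _{k,q}(t)$ into the same sum with the pair $(y,y-e_{k})$ replaced by $(y,y+e_{k})$. Exploiting next the antisymmetry $\mathfrak{C}_{t+i\alpha }^{(\omega )}(\mathbf{x},(y^{(1)},y^{(2)}))=-\mathfrak{C}_{t+i\alpha }^{(\omega )}(\mathbf{x},(y^{(2)},y^{(1)}))$ -- a direct consequence of the permutation sign $\varepsilon _{\pi ^{\prime }}$ in (\ref{map cool}) that merely reflects the identity $I_{(a,b)}=-I_{(b,a)}$ at the level of the underlying currents -- and then performing the translation $y\to y+e_{k}$ of the summation variable, one obtains $\Gamma _{k,q}(t)=-\Gamma _{k,q}(t)$, so that $\Gamma _{k,q}(t)=0$. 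For the diagonal, I would apply the transposition $R\in B_{d}$ interchanging $e_{k}$ and $e_{q}$: the very same change of variables in (\ref{conductivity0}) identifies $\Gamma _{k,k}(t)$ directly with $\Gamma _{q,q}(t)$. Setting $\mathbf{\sigma }_{\mathrm{p}}(t)$ equal to this common diagonal value yields $\mathbf{\Xi }_{\mathrm{p}}(t)=\mathbf{\sigma }_{\mathrm{p}}(t)\,\mathrm{Id}_{\mathbb{R}^{d}}$.

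The only point requiring care is the bookkeeping for the boundary discrepancies produced by the shift $y\to y+e_{k}$ and by the mismatch between $\Lambda _{l}$ and $\Lambda _{l}+e_{k}$: these involve only $\mathcal{O}(l^{d-1})$ lattice sites and are thus negligible once divided by $|\Lambda _{l}|\sim l^{d}$. Uniformity of the resulting error is guaranteed by the decay estimates on $B_{t+i\alpha ,\upsilon ,\varepsilon }^{(\omega )}$ from Theorem \ref{decay bound theorem} (ii) and by the approximation scheme already deployed in the proof of Theorem \ref{main 1 copy(2)} (compare with Lemmata \ref{lemma conductivty0}--\ref{lemma conductivty3}); no new technical difficulty is expected beyond this routine accounting.
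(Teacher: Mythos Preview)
Your proposal is correct and follows essentially the same approach as the paper: both exploit the invariance of the i.i.d.\ measure $\mathfrak{a}_{\Omega}$ under translations, reflections, and permutations of coordinate axes to force the off-diagonal entries to vanish and the diagonal entries to coincide. The only cosmetic difference is that the paper works directly from the definitions (\ref{average conductivity})--(\ref{paramagnetic transport coefficient macro}) of $\Xi_{\mathrm{p},l}^{(\omega)}$ and $\mathbf{\Xi}_{\mathrm{p}}$, whereas you route the argument through the auxiliary functions $\Gamma_{k,q}$ and the identity (\ref{well-defined sigma}); your version is simply a more explicit unpacking of what the paper calls ``straightforward computations.''
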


\begin{proof}
Straightforward computations using the invariance of $\mathfrak{a}_{\Omega }$
under translations, reflections and permutations of axes (cf. (\ref%
{probability measure})) show from (\ref{average conductivity}) and (\ref%
{paramagnetic transport coefficient macro}) that at $t\in \mathbb{R}$, the
coefficient $\left\{ \mathbf{\Xi }_{\mathrm{p}}\left( t\right) \right\}
_{k,q}$ vanishes for all $k,q\in \{1,\ldots ,d\}$ with $k\neq q $, while,
for any $k,q\in \{1,\ldots ,d\}$ and $t\in \mathbb{R}$,
\begin{equation*}
\left\{ \mathbf{\Xi }_{\mathrm{p}}\left( t\right) \right\} _{k,k}=\left\{
\mathbf{\Xi }_{\mathrm{p}}\left( t\right) \right\} _{q,q}=:\mathbf{\sigma }_{%
\mathrm{p}}\left( t\right) \ .
\end{equation*}
\end{proof}

It remains to prove Assertions (\textbf{Q}) and (\textbf{P}) of Theorem \ref%
{main 1}. By Equation (\ref{total work}), it suffices to study the\ \emph{%
potential energy density difference} $\Delta _{\mathbf{P}}\equiv \Delta _{%
\mathbf{P}}^{(\beta ,\omega ,\lambda ,\mathbf{A})}$ defined by%
\begin{equation*}
\Delta _{\mathbf{P}}\left( t\right) :=\underset{(\eta ,l^{-1})\rightarrow
(0,0)}{\lim }\left\{ \left( \eta ^{2}\left\vert \Lambda _{l}\right\vert
\right) ^{-1}\left( \mathbf{P}^{(\omega ,\eta \mathbf{A}_{l})}\left(
t\right) -\mathfrak{I}_{\mathrm{d}}^{(\omega ,\eta \mathbf{A}_{l})}\left(
t\right) \right) \right\}
\end{equation*}%
for any $\beta \in \mathbb{R}^{+}$, $\omega \in \Omega $, $\lambda \in
\mathbb{R}_{0}^{+}$, $\mathbf{A}\in \mathbf{C}_{0}^{\infty }$ and $t\geq
t_{0}$. This analysis is done in the following theorem:

\begin{satz}[Potential energy density difference]
\label{Thm pot en density}\mbox{
}\newline
Let $\beta \in \mathbb{R}^{+}$ and $\lambda \in \mathbb{R}_{0}^{+}$. Then,
there is a measurable subset $\tilde{\Omega}\equiv \tilde{\Omega}^{(\beta
,\lambda )}\subset \Omega $ of full measure such that, for any $\omega \in
\tilde{\Omega}$, $\mathbf{A}\in \mathbf{C}_{0}^{\infty }$ and $t\geq t_{0}$,
\begin{equation*}
\Delta _{\mathbf{P}}\left( t\right) =\int\nolimits_{\mathbb{R}^{d}}\mathrm{d}%
^{d}x\int\nolimits_{t_{0}}^{t}\mathrm{d}s_{1}\int_{t_{0}}^{t}\mathrm{d}%
s_{2}\left\langle E_{\mathbf{A}}(s_{1},x),\mathbf{\Xi }_{\mathrm{p}}\left(
t-s_{2}\right) E_{\mathbf{A}}(s_{2},x)\right\rangle \ .
\end{equation*}
\end{satz}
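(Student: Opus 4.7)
The starting point is the identity
\[
\mathbf{P}^{(\omega,\eta\mathbf{A}_l)}(t) - \mathfrak{I}_{\mathrm{d}}^{(\omega,\eta\mathbf{A}_l)}(t) = \big(\rho_t^{(\beta,\omega,\lambda,\eta\mathbf{A}_l)} - \varrho^{(\beta,\omega,\lambda)}\big)\big(W_t^{\eta\mathbf{A}_l}\big)\ ,
\]
which is immediate from (\ref{electro free energy}) and (\ref{lim_en_incr dia}). First, I would Taylor-expand the magnetic phase $e^{i\eta\mathcal{A}_t^{\mathbf{x}}}-1$ appearing in the kernel of $W_t^{\eta\mathbf{A}_l}$, where $\mathcal{A}_t^{\mathbf{x}}:=\int_0^1[\mathbf{A}_l(t,\alpha x^{(2)}+(1-\alpha)x^{(1)})](x^{(2)}-x^{(1)})\,\mathrm{d}\alpha=-\int_{t_0}^t\mathbf{E}_s^{\mathbf{A}_l}(\mathbf{x})\,\mathrm{d}s$. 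Using the antisymmetry of $\mathcal{A}_t^{\mathbf{x}}$ under swap of its two entries together with the definitions (\ref{current observable}) and (\ref{R x}), the Taylor expansion rewrites as the operator identity $W_t^{\eta\mathbf{A}_l}=(\eta/2)\sum_{\mathbf{x}\in\mathfrak{K}}\mathcal{A}_t^{\mathbf{x}}\,I_{\mathbf{x}}-(\eta^2/4)\sum_{\mathbf{x}\in\mathfrak{K}}(\mathcal{A}_t^{\mathbf{x}})^2 P_{\mathbf{x}}+R_l(t,\eta)$ with $\|R_l(t,\eta)\|=\mathcal{O}(\eta^3|\Lambda_l|)$. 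Evaluation on $\varrho^{(\beta,\omega,\lambda)}$ together with (\ref{asymptotic expansion dia}) serves as a consistency check.

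Second, I would linearize in $\eta$ the perturbed state acting on the local observables $I_{\mathbf{x}}$ and $P_{\mathbf{x}}$. By the Duhamel expansion of \cite[Lemma 5.2, Theorem 5.12]{OhmII} together with the definition (\ref{backwards -1bis}) of $\sigma_{\mathrm{p}}^{(\omega)}$, one obtains
\[
(\rho_t - \varrho^{(\beta,\omega,\lambda)})(I_{\mathbf{x}}) = \eta\sum_{\mathbf{y}\in\mathfrak{K}}\int_{t_0}^t\sigma_{\mathrm{p}}^{(\omega)}(\mathbf{x},\mathbf{y},t-s_2)\,\mathbf{E}_{s_2}^{\mathbf{A}_l}(\mathbf{y})\,\mathrm{d}s_2 + \mathcal{O}(\eta^2)\ ,
\]
uniformly in $\mathbf{x}\in\mathfrak{K}$ and $l$, and analogously $(\rho_t-\varrho^{(\beta,\omega,\lambda)})(P_{\mathbf{x}})=\mathcal{O}(\eta)$. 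After normalization by $\eta^2|\Lambda_l|$ the $P$--contribution is of order $\mathcal{O}(\eta)$ and drops in the limit. Substituting $\mathcal{A}_t^{\mathbf{x}}=-\int_{t_0}^t\mathbf{E}_{s_1}^{\mathbf{A}_l}(\mathbf{x})\,\mathrm{d}s_1$ produces
\[
(\eta^2|\Lambda_l|)^{-1}\big[\mathbf{P}-\mathfrak{I}_{\mathrm{d}}\big](t) = \int_{t_0}^t\int_{t_0}^t\tilde{\mathbf{X}}_l^{(\omega)}(t,s_1,s_2)\,\mathrm{d}s_1\,\mathrm{d}s_2 + \mathcal{O}(\eta)\ ,
\]
where $\tilde{\mathbf{X}}_l^{(\omega)}(t,s_1,s_2)$ is bilinear in the electric fields with kernel $(2|\Lambda_l|)^{-1}\sigma_{\mathrm{p}}^{(\omega)}(\mathbf{x},\mathbf{y},t-s_2)$ (up to a sign fixed by the expansion).

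Third, via \cite[Lemma 5.2]{OhmII} the coefficient $\sigma_{\mathrm{p}}^{(\omega)}$ can be rewritten in terms of the complex-time correlators $\mathfrak{C}_{\cdot+i\alpha}^{(\omega)}$ of (\ref{map cool}), so that $\tilde{\mathbf{X}}_l^{(\omega)}(t,s_1,s_2)$ has exactly the same bilinear structure as $\mathbf{X}_{l,0}^{(\omega)}(s_1,s_2)$ in (\ref{Lemma non existing2}); the only differences are that the complex-time argument is $t-s_2$ rather than $s_1-s_2$ and that the two electric fields are decoupled in time. Consequently, the whole technical apparatus of Section~\ref{Section AC--Ohm's Law}---the decomposition of Theorem~\ref{decay bound theorem}, the approximation Lemmata~\ref{lemma conductivty0}--\ref{lemma conductivty4}, the ergodic Theorem~\ref{Ackoglu--Krengel ergodic theorem II}, Proposition~\ref{main 1 copy(3)}, and Corollary~\ref{lemma conductivty4 copy(6)}---applies \emph{mutatis mutandis} and yields a measurable subset $\tilde{\Omega}\subset\Omega$ of full measure on which $\tilde{\mathbf{X}}_l^{(\omega)}(t,s_1,s_2)$ converges, uniformly on compact sets in $(t,s_1,s_2)$ and with a uniform $L^{\infty}$--bound in $l$ and $\omega$, to $\int_{\mathbb{R}^d}\langle E_{\mathbf{A}}(s_1,x),\mathbf{\Xi}_{\mathrm{p}}(t-s_2)E_{\mathbf{A}}(s_2,x)\rangle\,\mathrm{d}^dx$. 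Lebesgue's dominated convergence theorem then permits to interchange $l\to\infty$ and $\eta\to 0$ with the two time integrations and delivers the asserted formula.

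The main obstacle is the second step: establishing the bond-localized linear-response expansion of $(\rho_t-\varrho^{(\beta,\omega,\lambda)})(I_{\mathbf{x}})$ in terms of $\sigma_{\mathrm{p}}^{(\omega)}$ with error bounds uniform in $\mathbf{x}\in\mathfrak{K}$ and in the space-rescaling parameter $l$. The content of this expansion is essentially encapsulated in \cite[Lemma 5.2, Theorem 5.12]{OhmII}, but its bond-by-bond form requires careful bookkeeping of how the error depends on the support of the rescaled vector potential $\mathbf{A}_l$. Once this step is secured, the remainder of the proof is a direct transcription of the arguments of Sections~\ref{Section AC--Ohm's Law copy(1)}--\ref{Section AC--Ohm's Law}.
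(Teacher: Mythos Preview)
Your approach is correct and essentially coincides with the paper's proof. The paper invokes \cite[Lemmata 5.2 and 5.13]{OhmII} to obtain directly the expansion
\[
\mathbf{P}^{(\omega,\eta\mathbf{A}_l)}(t)-\mathfrak{I}_{\mathrm{d}}^{(\omega,\eta\mathbf{A}_l)}(t)=\eta^{2}|\Lambda_l|\int_{t_0}^{t}\!\mathrm{d}s_1\int_{t_0}^{t}\!\mathrm{d}s_2\,\hat{\mathbf{X}}_l^{(\omega)}(s_1,s_2)+\mathcal{O}(\eta^{3}l^{d}),
\]
with $\hat{\mathbf{X}}_l^{(\omega)}$ identical to (\ref{Lemma non existing2}) except that $\mathfrak{C}_{s_1-s_2+i\alpha}^{(\omega)}$ is replaced by $\mathfrak{C}_{t-s_2+i\alpha}^{(\omega)}$, and then refers to the proofs of Theorem~\ref{main 1 copy(2)} and Lemma~\ref{lemma conductivty4 copy(1)}. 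Your first two steps (Taylor expansion of $W_t^{\eta\mathbf{A}_l}$ plus Duhamel expansion of $\rho_t$) amount to re-deriving \cite[Lemma 5.13]{OhmII} by hand; your third step is exactly the paper's conclusion. The ``main obstacle'' you flag is therefore not an obstacle at all: the required bond-localized expansion with uniform error bounds is precisely the content of \cite[Lemma 5.13]{OhmII}, which the paper simply cites.
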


\begin{proof}
By \cite[Lemmata 5.2 and 5.13]{OhmII}, for any $\mathbf{A}\in \mathbf{C}%
_{0}^{\infty }$, there is $\eta _{0}\in \mathbb{R}^{+}$ such that, for all $%
|\eta |\in (0,\eta _{0}]$ and $l\in \mathbb{R}^{+}$,
\begin{equation*}
\mathbf{P}^{(\omega ,\mathbf{A})}\left( t\right) -\mathfrak{I}_{\mathrm{d}%
}^{(\omega ,\mathbf{A})}\left( t\right) =\eta ^{2}\left\vert \Lambda
_{l}\right\vert \int\nolimits_{t_{0}}^{t}\mathrm{d}s_{1}\int_{t_{0}}^{t}%
\mathrm{d}s_{2}\ \mathbf{\hat{X}}_{l}^{(\omega )}(s_{1},s_{2})+\mathcal{O}%
(\eta ^{3}l^{d})
\end{equation*}%
uniformly for $\beta \in \mathbb{R}^{+}$, $\omega \in \Omega $, $\lambda \in
\mathbb{R}_{0}^{+}$ and $t\geq t_{0}$, where%
\begin{equation*}
\mathbf{\hat{X}}_{l}^{(\omega )}\left( s_{1},s_{2}\right) :=\frac{1}{%
4\left\vert \Lambda _{l}\right\vert }\underset{\mathbf{x},\mathbf{y}\in
\mathfrak{K}}{\sum }\int\nolimits_{0}^{\beta }\mathrm{d}\alpha \left(
\mathfrak{C}_{t-s_{2}+i\alpha }^{(\omega )}(\mathbf{x},\mathbf{y})-\mathfrak{%
C}_{i\alpha }^{(\omega )}(\mathbf{x},\mathbf{y})\right) \mathbf{E}_{s_{1}}^{%
\mathbf{A}_{l}}\left( \mathbf{x}\right) \mathbf{E}_{s_{2}}^{\mathbf{A}_{l}}(%
\mathbf{y})
\end{equation*}%
for any $s_{1},s_{2}\in \mathbb{R}$. The function $\mathbf{\hat{X}}%
_{l}^{(\omega )}$ is very similar to $\mathbf{X}_{l,0}^{(\omega )}$. Compare
indeed the last equation with (\ref{Lemma non existing2}). As a consequence,
one gets the assertion exactly in the same way one proves Theorem \ref{main
1 copy(2)} and Lemma \ref{lemma conductivty4 copy(1)}. We omit the details.
\end{proof}

\subsection{Hilbert Space of Current Fluctuations\label{section Current
Fluctuations copy(2)}}

\subsubsection{Positive Sesquilinear Forms\label{section fluct1}}

As explained in Section \ref{section Current Fluctuations} the linear
subspace
\begin{equation}
\mathcal{I}:=\mathrm{lin}\left\{ \mathrm{Im}(a^{\ast }\left( \psi
_{1}\right) a\left( \psi _{2}\right) ):\psi _{1},\psi _{2}\in \ell ^{1}(%
\mathfrak{L})\subset \ell ^{2}(\mathfrak{L})\right\} \subset \mathcal{U}
\label{space of currents}
\end{equation}%
is an invariant space of the one--parameter (Bogoliubov) group $\tau
^{(\omega ,\lambda )}$ for any $\omega \in \Omega $ and $\lambda \in \mathbb{%
R}_{0}^{+}$.

Let the random positive sesquilinear form $\langle \cdot ,\cdot \rangle _{%
\mathcal{I},l}^{(\omega )}\equiv \langle \cdot ,\cdot \rangle _{\mathcal{I}%
,l}^{(\beta ,\omega ,\lambda )}$ in $\mathcal{I}$ be defined by
\begin{equation}
\langle I,I^{\prime }\rangle _{\mathcal{I},l}^{(\omega )}:=\varrho ^{(\beta
,\omega ,\lambda )}\left( \mathbb{F}^{(l)}\left( I\right) ^{\ast }\mathbb{F}%
^{(l)}\left( I^{\prime }\right) \right) \ ,\qquad I,I^{\prime }\in \mathcal{I%
}\ ,  \label{Fluctuation2bisbis}
\end{equation}%
for any $l,\beta \in \mathbb{R}^{+}$, $\omega \in \Omega $ and $\lambda \in
\mathbb{R}_{0}^{+}$. Here, $\mathbb{F}^{(l)}$ is the fluctuation observable
defined by (\ref{Fluctuation2}), that is,
\begin{equation}
\mathbb{F}^{(l)}\left( I\right) :=\frac{1}{\left\vert \Lambda
_{l}\right\vert ^{1/2}}\underset{x\in \Lambda _{l}}{\sum }\left\{ \chi
_{x}\left( I\right) -\varrho ^{(\beta ,\omega ,\lambda )}\left( \chi
_{x}\left( I\right) \right) \mathbf{1}\right\} \ ,\qquad I\in \mathcal{I}\ ,
\label{Fluctuation2bis}
\end{equation}%
for each $l\in \mathbb{R}^{+}$. Recall that $\chi _{x}$, $x\in \mathfrak{L}$%
, are the (space) translation automorphisms.

In the following we aim to prove Theorem \ref{scalar product}. The latter
says that for $\omega $ in a subset $\tilde{\Omega}\equiv \tilde{\Omega}%
^{(\beta ,\lambda )}\subset \Omega $ of full measure, the limit $%
l\rightarrow \infty $ of $\langle \cdot ,\cdot \rangle _{\mathcal{I}%
,l}^{(\omega )}$ is a positive sesquilinear form $\langle \cdot ,\cdot
\rangle _{\mathcal{I}}\equiv \langle \cdot ,\cdot \rangle _{\mathcal{I}%
}^{(\beta ,\lambda )}$ on $\mathcal{I}$ which does not depend on $\omega \in
\tilde{\Omega}$. To prove this, it suffices to consider elements $%
I,I^{\prime }\in \mathcal{I}$ of the form
\begin{equation*}
I=\mathrm{Im}(a^{\ast }\left( \psi _{1}\right) a\left( \psi _{2}\right) )\
,\qquad I^{\prime }=\mathrm{Im}(a^{\ast }\left( \psi _{1}^{\prime }\right)
a\left( \psi _{2}^{\prime }\right) )\ ,
\end{equation*}%
with $\psi _{1},\psi _{2},\psi _{1}^{\prime },\psi _{2}^{\prime }\in \ell
^{1}(\mathfrak{L})$. In this case, we have a uniform estimate given by \cite[%
Lemma 5.10]{OhmII}: There is a constant $D\in \mathbb{R}^{+}$ such that, for
any $l,\beta \in \mathbb{R}^{+}$, $\omega \in \Omega $, $\lambda \in \mathbb{%
R}_{0}^{+}$ and all $\psi _{1},\psi _{2},\psi _{1}^{\prime },\psi
_{2}^{\prime }\in \ell ^{1}(\mathfrak{L})$,
\begin{equation}
\left\vert \langle \mathrm{Im}(a^{\ast }\left( \psi _{1}\right) a\left( \psi
_{2}\right) ),\mathrm{Im}(a^{\ast }\left( \psi _{1}^{\prime }\right) a\left(
\psi _{2}^{\prime }\right) )\rangle _{\mathcal{I},l}^{(\omega )}\right\vert
\leq D\left\Vert \psi _{1}\right\Vert _{1}\left\Vert \psi _{2}\right\Vert
_{1}\left\Vert \psi _{1}^{\prime }\right\Vert _{1}\left\Vert \psi
_{2}^{\prime }\right\Vert _{1}\ .  \label{bound incr 1 Lemma copy(3)-NEW}
\end{equation}%
Using this we can restrict the choice of $\psi _{1},\psi _{2},\psi
_{1}^{\prime },\psi _{2}^{\prime }$ to some convenient dense subset of $\ell
^{1}(\mathfrak{L})$: Let
\begin{equation}
\mathbb{\ell }_{0}^{\mathbb{Q}}:=\left\{ \psi \in \ell ^{1}(\mathfrak{L}%
):\psi \text{ is a }(\mathbb{Q}+i\mathbb{Q})\text{--valued function with
finite support}\right\}  \label{countable set}
\end{equation}%
and observe that it is a \emph{countable} and \emph{dense} subset of $\ell
^{1}(\mathfrak{L})$.

By countability of $\mathbb{\ell }_{0}^{\mathbb{Q}}$, it suffices to prove,
for each $\psi _{1},\psi _{2},\psi _{1}^{\prime },\psi _{2}^{\prime }\in
\mathbb{\ell }_{0}^{\mathbb{Q}}$, the existence of a subset $\tilde{\Omega}%
_{\psi _{1},\psi _{2},\psi _{1}^{\prime },\psi _{2}^{\prime }}\subset \Omega
$ of full measure such that the limit%
\begin{equation*}
\lim_{l\rightarrow \infty }\langle \mathrm{Im}(a^{\ast }\left( \psi
_{1}\right) a\left( \psi _{2}\right) ),\mathrm{Im}(a^{\ast }\left( \psi
_{1}^{\prime }\right) a\left( \psi _{2}^{\prime }\right) )\rangle _{\mathcal{%
I},l}^{(\omega )}
\end{equation*}%
exists and does not depend on $\omega \in \tilde{\Omega}_{\psi _{1},\psi
_{2},\psi _{1}^{\prime },\psi _{2}^{\prime }}$ in order to obtain a subset
\begin{equation*}
\tilde{\Omega}:=\underset{\psi _{1},\psi _{2},\psi _{1}^{\prime },\psi
_{2}^{\prime }\in \mathbb{\ell }_{0}^{\mathbb{Q}}}{\bigcap }\tilde{\Omega}%
_{\psi _{1},\psi _{2},\psi _{1}^{\prime },\psi _{2}^{\prime }}\subset \Omega
\end{equation*}%
of full measure with the required properties for all $\psi _{1},\psi
_{2},\psi _{1}^{\prime },\psi _{2}^{\prime }\in \mathbb{\ell }_{0}^{\mathbb{Q%
}}$. This is performed in the following lemma:

\begin{lemma}[Well--definiteness of $\langle \cdot ,\cdot \rangle _{\mathcal{%
I}}$]
\label{bound incr 1 Lemma copy(2)}\mbox{
}\newline
For any $\beta \in \mathbb{R}^{+}$, $\lambda \in \mathbb{R}_{0}^{+}$ and $%
\psi _{1},\psi _{2},\psi _{1}^{\prime },\psi _{2}^{\prime }\in \mathbb{\ell }%
_{0}^{\mathbb{Q}}$, there is a measurable subset $\tilde{\Omega}_{\psi
_{1},\psi _{2},\psi _{1}^{\prime },\psi _{2}^{\prime }}\subset \Omega $ of
full measure such that, for any $\tilde{\omega}\in \tilde{\Omega}_{\psi
_{1},\psi _{2},\psi _{1}^{\prime },\psi _{2}^{\prime }}$,
\begin{eqnarray*}
&&\underset{l\rightarrow \infty }{\lim }\langle \mathrm{Im}(a^{\ast }\left(
\psi _{1}\right) a\left( \psi _{2}\right) ),\mathrm{Im}(a^{\ast }\left( \psi
_{1}^{\prime }\right) a\left( \psi _{2}^{\prime }\right) )\rangle _{\mathcal{%
I},l}^{(\tilde{\omega})} \\
&=&\underset{l\rightarrow \infty }{\lim }\ \mathbb{E}\left[ \langle \mathrm{%
Im}(a^{\ast }\left( \psi _{1}\right) a\left( \psi _{2}\right) ),\mathrm{Im}%
(a^{\ast }\left( \psi _{1}^{\prime }\right) a\left( \psi _{2}^{\prime
}\right) )\rangle _{\mathcal{I},l}^{(\omega )}\right] \in \mathbb{R}\ .
\end{eqnarray*}
\end{lemma}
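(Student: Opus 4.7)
Fix $\beta,\lambda$ and $\psi_1,\psi_2,\psi_1',\psi_2'\in\ell_{0}^{\mathbb{Q}}$, and set $I:=\mathrm{Im}(a^{\ast}(\psi_1)a(\psi_2))$, $I':=\mathrm{Im}(a^{\ast}(\psi_1')a(\psi_2'))$. The plan is to reduce the double sum defining $\langle I,I'\rangle_{\mathcal{I},l}^{(\omega)}$ to a single spatial average and then apply Theorem \ref{Ackoglu--Krengel ergodic theorem II copy(1)}. Expanding $\mathbb{F}^{(l)}$ and exploiting self-adjointness of $I,I'$ yields
\begin{equation*}
\langle I,I'\rangle_{\mathcal{I},l}^{(\omega)}=\frac{1}{|\Lambda_l|}\sum_{x,y\in\Lambda_l}\left[\varrho^{(\beta,\omega,\lambda)}(\chi_x(I)\chi_y(I'))-\varrho^{(\beta,\omega,\lambda)}(\chi_x(I))\varrho^{(\beta,\omega,\lambda)}(\chi_y(I'))\right].
\end{equation*}
By uniqueness of KMS states together with the intertwining $u_x^{\ast}(\Delta_{\mathrm{d}}+\lambda V_\omega)u_x=\Delta_{\mathrm{d}}+\lambda V_{\tau_x\omega}$, where $\tau_x$ denotes the canonical shift on $\Omega$ induced by $x\in\mathfrak{L}$, one has $\varrho^{(\beta,\omega,\lambda)}\circ\chi_x=\varrho^{(\beta,\tau_x\omega,\lambda)}$. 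Setting $z:=y-x$, the above becomes
\begin{equation*}
\frac{1}{|\Lambda_l|}\sum_{z\in\mathfrak{L}}\ \sum_{x\in\Lambda_l\cap(\Lambda_l-z)}g_z(\tau_x\omega),\quad g_z(\omega):=\varrho^{(\beta,\omega,\lambda)}(I\chi_z(I'))-\varrho^{(\beta,\omega,\lambda)}(I)\varrho^{(\beta,\tau_z\omega,\lambda)}(I').
\end{equation*}

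Next, I would establish a uniform decay bound $|g_z(\omega)|\leq D(1+|z|)^{-(d^2+1)}$ with $D$ independent of $\omega$. Since $\varrho^{(\beta,\omega,\lambda)}$ is quasi-free, Wick's theorem expresses $\varrho^{(\beta,\omega,\lambda)}(I\chi_z(I'))$ as a sum of products of two-point functions $\langle\mathfrak{e}_y,\mathbf{d}_{\mathrm{fermi}}^{(\beta,\omega,\lambda)}\mathfrak{e}_{y'}\rangle$; the disconnected pairing contributes exactly $\varrho^{(\beta,\omega,\lambda)}(I)\varrho^{(\beta,\tau_z\omega,\lambda)}(I')$, so $g_z(\omega)$ consists only of crossed pairings, each containing at least one factor of the form $\langle\mathfrak{e}_y,\mathbf{d}_{\mathrm{fermi}}^{(\beta,\omega,\lambda)}\mathfrak{e}_{y'+z}\rangle$ with $y,y'$ in the finite supports of $\psi_1,\psi_2,\psi_1',\psi_2'$. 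The required polynomial decay of these matrix elements, uniformly in $\omega$, follows from the argument of Theorem \ref{decay bound theorem}(ii) applied to the symbol $\mathbf{d}_{\mathrm{fermi}}^{(\beta,\omega,\lambda)}=F_0^\beta(\Delta_{\mathrm{d}}+\lambda V_\omega)$, after replacing $F_0^\beta$ by a Schwartz function that agrees with it on a neighborhood of the spectrum of $\Delta_{\mathrm{d}}+\lambda V_\omega$, which is contained in the $\omega$-independent bounded set $[-2d-\lambda,2d+\lambda]$.

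For each fixed $z\in\mathfrak{L}$, the map $\Lambda\mapsto\sum_{x\in\Lambda}g_z(\tau_x\omega)$ is an additive process in the sense of Definition \ref{Additive process}: its mean is translation invariant because $\mathfrak{a}_\Omega$ is product-invariant under shifts, and boundedness and measurability of $\omega\mapsto g_z(\omega)$ follow from the continuity arguments of Theorem \ref{decay bound theorem}(iv). Theorem \ref{Ackoglu--Krengel ergodic theorem II copy(1)} thus furnishes a full-measure set $\hat\Omega_z\subset\Omega$ on which $|\Lambda_l|^{-1}\sum_{x\in\Lambda_l}g_z(\tau_x\omega)\to\mathbb{E}[g_z(\omega)]$, while the boundary pairs with $x\in\Lambda_l$ and $x+z\notin\Lambda_l$ contribute $O(l^{-1})$. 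Defining $\tilde\Omega_{\psi_1,\psi_2,\psi_1',\psi_2'}:=\bigcap_{z\in\mathfrak{L}}\hat\Omega_z$ gives a full-measure set (countable intersection) on which the slicewise convergence holds for every $z$. A cutoff $|z|\leq R$ combines the finitely many slicewise limits, whereas the tail $|z|>R$ is controlled uniformly in $l$ and $\omega$ by $D\sum_{|z|>R}(1+|z|)^{-(d^2+1)}\to 0$ as $R\to\infty$. The same uniform tail bound, combined with Fubini's theorem and translation invariance of $\mathfrak{a}_\Omega$, yields $\mathbb{E}[\langle I,I'\rangle_{\mathcal{I},l}^{(\omega)}]=\sum_z\frac{|\Lambda_l\cap(\Lambda_l-z)|}{|\Lambda_l|}\mathbb{E}[g_z(\omega)]\to\sum_{z\in\mathfrak{L}}\mathbb{E}[g_z(\omega)]$ by dominated convergence, so the two limits agree and the common value is a finite real number.

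The main technical obstacle is the uniform polynomial decay of the matrix elements of $F_0^\beta(\Delta_{\mathrm{d}}+\lambda V_\omega)$, since $F_0^\beta(\varkappa)=(1+\mathrm{e}^{\beta\varkappa})^{-1}$ is bounded but fails to be Schwartz on $\mathbb{R}$. The remedy above, truncating $F_0^\beta$ outside the bounded spectrum and invoking the propagator estimate of \cite[Lemma 4.2]{OhmI} through the Fourier representation, is conceptually straightforward and mimics step by step the proof of Theorem \ref{decay bound theorem}(ii).
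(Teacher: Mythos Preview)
Your argument is correct and essentially parallel to the paper's: both reduce the fluctuation sesquilinear form to a spatial average of a connected two--point quantity (your $g_z$, the paper's $\mathfrak{C}_0^{(\omega)}$ via \cite[Lemma 5.10, Eq.~(134)]{OhmII}), establish decay in $z$, and apply the Akcoglu--Krengel ergodic theorem slice by slice.

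The one genuine difference is how the decay at $\alpha=0$ is obtained. The paper does \emph{not} invoke boundedness of the spectrum; instead it approximates $\mathfrak{C}_0^{(\omega)}$ by $\mathfrak{C}_{i\alpha}^{(\omega)}$ for small $\alpha>0$, using $\Vert\mathfrak{C}_{i\alpha}^{(\omega)}-\mathfrak{C}_0^{(\omega)}\Vert_{\mathrm{op}}\leq D\alpha$, which pushes the problem into the regime $\alpha\in[\upsilon,\beta-\upsilon]$ where Theorem \ref{decay bound theorem} applies directly (since $F_\alpha^\beta$ is Schwartz there). This keeps all constants independent of $\lambda$, consistent with the stated aim of Theorem \ref{decay bound theorem}. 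Your route---truncating $F_0^\beta$ outside the $\omega$--independent spectral interval $[-\lambda,4d+\lambda]$ to make it Schwartz---is shorter and avoids the extra $i\alpha$ layer, at the price of $\lambda$--dependent decay constants. For the present lemma that price is harmless, so your approach is a legitimate simplification.
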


\begin{proof}
Let $\beta \in \mathbb{R}^{+}$, $\lambda \in \mathbb{R}_{0}^{+}$ and $\psi
_{1},\psi _{2},\psi _{1}^{\prime },\psi _{2}^{\prime }\in \mathbb{\ell }%
_{0}^{\mathbb{Q}}$. One uses the first equation of the proof of \cite[Lemma
5.10]{OhmII} as well as \cite[Eq. (134)]{OhmII}, that is all together,%
\begin{eqnarray}
&&\langle \mathrm{Im}(a^{\ast }\left( \psi _{1}\right) a\left( \psi
_{2}\right) ),\mathrm{Im}(a^{\ast }\left( \psi _{1}^{\prime }\right) a\left(
\psi _{2}^{\prime }\right) )\rangle _{\mathcal{I},l}^{(\omega )}  \notag \\
&=&\sum_{\mathbf{x}:=(x^{(1)},x^{(2)}),\mathbf{y}:=(y^{(1)},y^{(2)})\in
\mathfrak{L}^{2}}\psi _{1}(y^{(1)})\psi _{2}(y^{(2)})\psi _{1}^{\prime
}(x^{(1)})\psi _{2}^{\prime }(x^{(2)})  \notag \\
&&\times \left[ \frac{1}{4\left\vert \Lambda _{l}\right\vert }%
\sum_{z_{1},z_{2}\in \Lambda _{l}}\mathfrak{C}_{0}^{(\omega )}\left( \mathbf{%
x}+\left( z_{1},z_{1}\right) ,\mathbf{y}+\left( z_{2},z_{2}\right) \right) %
\right] \ .  \label{eq cool}
\end{eqnarray}%
See (\ref{map cool}) for the definition of the map $\mathfrak{C}_{t+i\alpha
}^{(\omega )}$. Then, one approximates $\mathfrak{C}_{0}^{(\omega )}$ by $%
\mathfrak{C}_{i\alpha }^{(\omega )}$ with $\alpha \ll \beta $ in (\ref{eq
cool}) by using the bounds%
\begin{multline*}
\left\vert \frac{1}{\left\vert \Lambda _{l}\right\vert }\sum_{z_{1},z_{2}\in
\Lambda _{l}}\mathfrak{C}_{0}^{(\omega )}\left( \mathbf{x}+\left(
z_{1},z_{1}\right) ,\mathbf{y}+\left( z_{2},z_{2}\right) \right) -\mathfrak{C%
}_{i\alpha }^{(\omega )}\left( \mathbf{x}+\left( z_{1},z_{1}\right) ,\mathbf{%
y}+\left( z_{2},z_{2}\right) \right) \right\vert \\
\leq \Vert \mathfrak{C}_{i\alpha }^{(\omega )}-\mathfrak{C}_{0}^{(\omega
)}\Vert _{\mathrm{op}}\leq D\alpha
\end{multline*}%
for sufficiently small $\alpha \in \lbrack 0,\beta ]$. Here, $D\in \mathbb{R}%
^{+}$ is a finite constant only depending on $\lambda \in \mathbb{R}_{0}^{+}$%
. For more details, see \cite[Lemma 5.3,\ Eq. (102)]{OhmII}. This allows us
to use Theorems \ref{decay bound theorem} and \ref{Ackoglu--Krengel ergodic
theorem II copy(1)} in order to prove the assertion. We omit the details as
it is a simpler version of results proven in this paper. See for instance
Theorem \ref{main 1 copy(2)}.
\end{proof}

Therefore, we define the \emph{deterministic} positive sesquilinear form $%
\langle \cdot ,\cdot \rangle _{\mathcal{I}}\equiv \langle \cdot ,\cdot
\rangle _{\mathcal{I}}^{(\beta ,\lambda )}$ in $\mathcal{I}$ to be
\begin{equation*}
\langle I,I^{\prime }\rangle _{\mathcal{I}}:=\underset{l\rightarrow \infty }{%
\lim }\ \mathbb{E}\left[ \langle I,I^{\prime }\rangle _{\mathcal{I}%
,l}^{(\omega )}\right] \in \mathbb{R}\ ,\qquad I,I^{\prime }\in \mathcal{I}\
.
\end{equation*}%
By combining (\ref{bound incr 1 Lemma copy(3)-NEW}) and Lemma \ref{bound
incr 1 Lemma copy(2)} we deduce the existence of this limit for all $%
I,I^{\prime }\in \mathcal{I}$ as well as Theorem \ref{scalar product}:

\begin{satz}[Sesquilinear form from current fluctuations]
\label{toto fluctbis}\mbox{
}\newline
Let $\beta \in \mathbb{R}^{+}$ and $\lambda \in \mathbb{R}_{0}^{+}$. Then,
one has:\newline
\emph{(i)} The positive sesquilinear form $\langle \cdot ,\cdot \rangle _{%
\mathcal{I}}$ is well--defined, i.e.,
\begin{equation*}
\langle I,I^{\prime }\rangle _{\mathcal{I}}:=\underset{l\rightarrow \infty }{%
\lim }\ \mathbb{E}\left[ \langle I,I^{\prime }\rangle _{\mathcal{I}%
,l}^{(\omega )}\right] \in \mathbb{R}\ ,\qquad I,I^{\prime }\in \mathcal{I}\
.
\end{equation*}%
\emph{(ii)} There is a measurable subset $\tilde{\Omega}\equiv \tilde{\Omega}%
^{(\beta ,\lambda )}\subset \Omega $ of full measure such that, for any $%
\omega \in \tilde{\Omega}$,
\begin{equation*}
\langle I,I^{\prime }\rangle _{\mathcal{I}}=\underset{l\rightarrow \infty }{%
\lim }\langle I,I^{\prime }\rangle _{\mathcal{I},l}^{(\beta ,\omega ,\lambda
)}\ ,\qquad I,I^{\prime }\in \mathcal{I}\ .
\end{equation*}
\end{satz}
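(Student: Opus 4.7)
The plan is to prove both parts (i) and (ii) in tandem by combining the already-established pointwise result on the countable dense subset $\ell_0^{\mathbb{Q}} \subset \ell^1(\mathfrak{L})$ with a uniform continuity argument driven by the a priori bound (\ref{bound incr 1 Lemma copy(3)-NEW}). First, by real-bilinearity of both $\langle \cdot, \cdot \rangle_{\mathcal{I},l}^{(\omega)}$ and the candidate form $\langle \cdot, \cdot \rangle_{\mathcal{I}}$, it suffices to establish the convergence statements for generators of $\mathcal{I}$, namely for $I = \mathrm{Im}(a^*(\psi_1) a(\psi_2))$ and $I' = \mathrm{Im}(a^*(\psi_1') a(\psi_2'))$ with $\psi_1, \psi_2, \psi_1', \psi_2' \in \ell^1(\mathfrak{L})$.

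Second, I would apply Lemma \ref{bound incr 1 Lemma copy(2)} to every quadruple in the countable set $(\ell_0^{\mathbb{Q}})^4$, obtaining for each $(\psi_1, \psi_2, \psi_1', \psi_2') \in (\ell_0^{\mathbb{Q}})^4$ a measurable subset $\tilde{\Omega}_{\psi_1, \psi_2, \psi_1', \psi_2'} \subset \Omega$ of full measure on which the almost-sure limit of the sesquilinear form coincides with the limit of its expectation value (and both exist in $\mathbb{R}$). Setting
\begin{equation*}
\tilde{\Omega} := \bigcap_{(\psi_1, \psi_2, \psi_1', \psi_2') \in (\ell_0^{\mathbb{Q}})^4} \tilde{\Omega}_{\psi_1, \psi_2, \psi_1', \psi_2'} \subset \Omega
\end{equation*}
produces, as a \emph{countable} intersection of full-measure sets, a measurable subset of full measure on which both (i) and (ii) hold simultaneously for all generators with entries in $\ell_0^{\mathbb{Q}}$.

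Third, to pass from $\ell_0^{\mathbb{Q}}$ to arbitrary $\ell^1(\mathfrak{L})$ arguments, I would invoke the uniform multilinear estimate (\ref{bound incr 1 Lemma copy(3)-NEW}): this bound is independent of both $l \in \mathbb{R}^+$ and $\omega \in \Omega$ and it expresses joint continuity of $(\psi_1, \psi_2, \psi_1', \psi_2') \mapsto \langle \mathrm{Im}(a^*(\psi_1)a(\psi_2)), \mathrm{Im}(a^*(\psi_1')a(\psi_2')) \rangle_{\mathcal{I},l}^{(\omega)}$ with respect to the $\ell^1$-norms, uniformly in $(l, \omega)$. The same bound transfers to the expectation values and, whenever they exist, to the limits. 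Since $\ell_0^{\mathbb{Q}}$ is dense in $\ell^1(\mathfrak{L})$, a standard $3\varepsilon$-argument then extends the pointwise convergence from $(\ell_0^{\mathbb{Q}})^4$ to $(\ell^1(\mathfrak{L}))^4$, simultaneously in expectation (giving (i)) and $\omega$-almost surely on the single set $\tilde{\Omega}$ (giving (ii)), with the limit form being automatically real.

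The main (and essentially only) obstacle is conceptual bookkeeping: the technical core is already absorbed into Lemma \ref{bound incr 1 Lemma copy(2)} — which in turn rests on Theorem \ref{decay bound theorem} for the decomposition of complex-time two-point correlation functions and on the ergodic Theorem \ref{Ackoglu--Krengel ergodic theorem II copy(1)} — so the present task reduces to a density/uniform-continuity extension. The delicate point is that the $3\varepsilon$-argument must yield a \emph{single} full-measure set $\tilde{\Omega}$ working for all of $\mathcal{I}$; this is feasible precisely because the bound (\ref{bound incr 1 Lemma copy(3)-NEW}) is uniform in $l$ and $\omega$, so no new null sets appear in the approximation step and no uncountable intersection is needed.
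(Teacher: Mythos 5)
Your proposal is correct and follows essentially the same route as the paper: reduce to generators $\mathrm{Im}(a^{\ast}(\psi_{1})a(\psi_{2}))$, apply Lemma \ref{bound incr 1 Lemma copy(2)} to each quadruple in the countable dense set $\mathbb{\ell}_{0}^{\mathbb{Q}}$, take the countable intersection of the resulting full--measure sets, and extend to all of $\ell^{1}(\mathfrak{L})$ via the $(l,\omega)$--uniform bound (\ref{bound incr 1 Lemma copy(3)-NEW}) and density. This is precisely the argument the paper carries out in Section \ref{section fluct1}, concluding Theorem \ref{toto fluctbis} by combining (\ref{bound incr 1 Lemma copy(3)-NEW}) with Lemma \ref{bound incr 1 Lemma copy(2)}.
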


We are now in position to introduce next the Hilbert space of current
fluctuations.

\subsubsection{Hilbert Space and Dynamics\label{Sect Hilbert Space and
Dynamics}}

As explained in Section \ref{section Current Fluctuations}, the quotient $%
\mathcal{\check{H}}_{\mathrm{fl}}:=\mathcal{I}/\mathcal{I}_{0}$ is a
pre--Hilbert space w.r.t. to the (well--defined) scalar product
\begin{equation}
\langle \lbrack I],[I^{\prime }]\rangle _{\mathcal{\check{H}}_{\mathrm{fl}%
}}:=\ \langle I,I^{\prime }\rangle _{\mathcal{I}}\ ,\qquad \lbrack
I],[I^{\prime }]\in \mathcal{\check{H}}_{\mathrm{fl}}\ ,
\label{Fluctuation3}
\end{equation}%
where
\begin{equation*}
\mathcal{I}_{0}:=\left\{ I\in \mathcal{I}:\langle I,I\rangle _{\mathcal{I}%
}=0\right\} \ .
\end{equation*}%
The completion of $\mathcal{\check{H}}_{\mathrm{fl}}$ w.r.t. the scalar
product $\langle \cdot ,\cdot \rangle _{\mathcal{\check{H}}_{\mathrm{fl}}}$
is the Hilbert space of current fluctuations denoted by%
\begin{equation}
\left( \mathcal{H}_{\mathrm{fl}},\langle \cdot ,\cdot \rangle _{\mathcal{H}_{%
\mathrm{fl}}}\right) \, .  \label{Fluctuation4}
\end{equation}%
The random dynamics defined by $\tau ^{(\omega ,\lambda )}$ on $\mathcal{U}$
induces a unitary time evolution on $\mathcal{H}_{\mathrm{fl}}$:

\begin{satz}[Dynamics of current fluctuations]
\label{bound incr 1 Lemma copy(1)}\mbox{
}\newline
Let $\beta \in \mathbb{R}^{+}$ and $\lambda \in \mathbb{R}_{0}^{+}$. Then,
there is a measurable subset $\tilde{\Omega}\equiv \tilde{\Omega}^{(\beta
,\lambda )}\subset \Omega $ of full measure such that, for any $\omega \in
\tilde{\Omega}$, there is a unique, strongly continuous one--parameter
unitary group $\{\mathrm{V}_{t}^{(\omega ,\lambda )}\}_{t\in {\mathbb{R}}}$
on the Hilbert space $\mathcal{H}_{\mathrm{fl}}$ obeying, for any $t\in {%
\mathbb{R}}$,
\begin{equation*}
\mathrm{V}_{t}^{(\omega ,\lambda )}([I])=[\tau _{t}^{(\omega ,\lambda
)}(I)]\ ,\qquad \lbrack I]\in \mathcal{\check{H}}_{\mathrm{fl}}\ .
\end{equation*}
\end{satz}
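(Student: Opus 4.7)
The plan is to establish the claim in three stages: (i) construct $\mathrm{V}_t^{(\omega,\lambda)}$ as a well-defined isometry on the pre--Hilbert space $\mathcal{\check{H}}_{\mathrm{fl}}$; (ii) extend it by uniform continuity to a unitary on the completion $\mathcal{H}_{\mathrm{fl}}$; (iii) verify the group law and strong continuity. The invariance $\tau_t^{(\omega,\lambda)}(\mathcal{I})\subseteq\mathcal{I}$ is immediate: since $\tau_t^{(\omega,\lambda)}$ acts on a bilinear $\mathrm{Im}(a^{\ast}(\psi_1)a(\psi_2))$ as the Bogoliubov transformation $\mathrm{Im}(a^{\ast}(\mathrm{U}_{-t}^{(\omega,\lambda)}\psi_1)a(\mathrm{U}_{-t}^{(\omega,\lambda)}\psi_2))$, the task reduces to checking $\mathrm{U}_{-t}^{(\omega,\lambda)}\ell^{1}(\mathfrak{L})\subseteq\ell^{1}(\mathfrak{L})$, which follows from the Combes--Thomas-type estimate (\ref{estimate ohm idiot}) applied first to finitely supported functions and then extended to $\ell^{1}(\mathfrak{L})$ by continuity.

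The heart of the proof is the isometry identity $\langle\tau_t^{(\omega,\lambda)}(I),\tau_t^{(\omega,\lambda)}(I^{\prime})\rangle_{\mathcal{I}}=\langle I,I^{\prime}\rangle_{\mathcal{I}}$ for $I,I^{\prime}\in\mathcal{I}$ and $\omega$ in a common full--measure set, which yields both the preservation of $\mathcal{I}_{0}$ and the isometric action on $\mathcal{\check{H}}_{\mathrm{fl}}$. By the density argument used for Theorem \ref{scalar product}, combined with the uniform bound (\ref{bound incr 1 Lemma copy(3)-NEW}), it suffices to treat $I,I^{\prime}$ of the form $\mathrm{Im}(a^{\ast}(\psi_1)a(\psi_2))$ with $\psi_1,\psi_2\in\mathbb{\ell }_{0}^{\mathbb{Q}}$. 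The natural starting point is the stationarity $\varrho^{(\beta,\omega,\lambda)}\circ\tau_t^{(\omega,\lambda)}=\varrho^{(\beta,\omega,\lambda)}$, which gives
\begin{equation*}
\varrho^{(\beta,\omega,\lambda)}\!\bigl(\mathbb{F}^{(l)}(I)^{\ast}\mathbb{F}^{(l)}(I^{\prime})\bigr)=\varrho^{(\beta,\omega,\lambda)}\!\bigl(\tau_t^{(\omega,\lambda)}(\mathbb{F}^{(l)}(I))^{\ast}\tau_t^{(\omega,\lambda)}(\mathbb{F}^{(l)}(I^{\prime}))\bigr).
\end{equation*}
The subtle point, and the main obstacle, is that $\tau_t^{(\omega,\lambda)}$ does \emph{not} commute with the lattice translations $\chi_{x}$ appearing in $\mathbb{F}^{(l)}$; one has instead the equivariance $\chi_{x}\circ\tau_t^{(\omega,\lambda)}=\tau_t^{(S_{-x}\omega,\lambda)}\circ\chi_{x}$, where $S_{-x}\omega$ denotes the $x$--translated disorder configuration. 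Consequently $\mathbb{F}^{(l)}(\tau_t^{(\omega,\lambda)}(I))$ differs from $\tau_t^{(\omega,\lambda)}(\mathbb{F}^{(l)}(I))$ by a $|\Lambda_l|^{-1/2}$--normalized sum of ``environment--difference'' terms $(\tau_t^{(S_{-x}\omega,\lambda)}-\tau_t^{(\omega,\lambda)})\chi_{x}(I)$, and I would show that this discrepancy is asymptotically negligible in $\varrho^{(\beta,\omega,\lambda)}$--quadratic mean by rewriting the relevant quadratic form via the two--point identity (\ref{eq cool}) as a double space sum of $\mathfrak{C}_{0}^{(\omega)}$--matrix elements, inserting the decomposition $C_{i\alpha}^{(\omega)}=A_{i\alpha,\upsilon,\varepsilon}^{(\omega)}+B_{i\alpha,\upsilon,\varepsilon}^{(\omega)}$ of Theorem \ref{decay bound theorem} to gain spatial decay, and applying the ergodic Theorem \ref{Ackoglu--Krengel ergodic theorem II} to the resulting additive processes, in close analogy with the proofs of Lemma \ref{bound incr 1 Lemma copy(2)} and Theorem \ref{main 1 copy(2)}. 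A countable intersection over $t\in\mathbb{Q}$ and quadruples in $(\mathbb{\ell }_{0}^{\mathbb{Q}})^{4}$ furnishes a single full--measure set $\tilde{\Omega}^{(\beta,\lambda)}$, and the norm continuity of $t\mapsto\tau_t^{(\omega,\lambda)}(I)$ in $\mathcal{U}$ together with (\ref{bound incr 1 Lemma copy(3)-NEW}) upgrades the identity from $t\in\mathbb{Q}$ to all $t\in\mathbb{R}$.

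With the isometry on $\mathcal{\check{H}}_{\mathrm{fl}}$ in hand, $\mathrm{V}_t^{(\omega,\lambda)}[I]:=[\tau_t^{(\omega,\lambda)}(I)]$ extends uniquely to an isometry on $\mathcal{H}_{\mathrm{fl}}$ by density. The group law $\mathrm{V}_t^{(\omega,\lambda)}\mathrm{V}_s^{(\omega,\lambda)}=\mathrm{V}_{t+s}^{(\omega,\lambda)}$ is inherited from $\tau_t^{(\omega,\lambda)}\tau_s^{(\omega,\lambda)}=\tau_{t+s}^{(\omega,\lambda)}$ on the dense subspace $\mathcal{\check{H}}_{\mathrm{fl}}$ and extended by continuity; choosing $s=-t$ together with $\mathrm{V}_{0}^{(\omega,\lambda)}=\mathrm{Id}$ yields surjectivity and hence unitarity, while uniqueness is automatic because $\mathrm{V}_t^{(\omega,\lambda)}$ is prescribed on a dense subspace. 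Strong continuity at $t=0$ on bilinears follows from norm continuity of $\tau_t^{(\omega,\lambda)}(I)$ in $\mathcal{U}$ (cf.\ \cite[Theorem 5.2.5]{BratteliRobinson}) combined with (\ref{bound incr 1 Lemma copy(3)-NEW}), which dominates $\|\mathrm{V}_t^{(\omega,\lambda)}[I]-[I]\|_{\mathcal{H}_{\mathrm{fl}}}$ by a multiple of the $\ell^{1}$--norms of the wavefunctions (themselves controlled by Theorem \ref{decay bound theorem}); a standard $\varepsilon/3$--argument using the group property promotes this to strong continuity on all of $\mathcal{H}_{\mathrm{fl}}$ and for all $t\in\mathbb{R}$.
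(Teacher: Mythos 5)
Your overall architecture (isometry on $\mathcal{\check{H}}_{\mathrm{fl}}$, extension to the completion, group law, strong continuity at $t=0$) coincides with the paper's, and your observation that $\tau _{t}^{(\omega ,\lambda )}$ does \emph{not} commute with the lattice translations $\chi _{x}$ entering $\mathbb{F}^{(l)}$ — so that $\mathbb{F}^{(l)}(\tau _{t}^{(\omega ,\lambda )}(I))\neq \tau _{t}^{(\omega ,\lambda )}(\mathbb{F}^{(l)}(I))$ for $\lambda >0$ — correctly isolates the one delicate point in the isometry; the paper dispatches that point in a single line from the stationarity (\ref{stationary}) together with Theorem \ref{toto fluctbis}, and spends essentially all of its effort on strong continuity instead. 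The problem is that your proposed mechanism for the isometry does not work. Writing
\begin{equation*}
\mathbb{F}^{(l)}(\tau _{t}^{(\omega ,\lambda )}(I))-\tau _{t}^{(\omega ,\lambda )}(\mathbb{F}^{(l)}(I))=\left\vert \Lambda _{l}\right\vert ^{-1/2}\underset{x\in \Lambda _{l}}{\sum }\left\{ E_{x}-\varrho ^{(\beta ,\omega ,\lambda )}\left( E_{x}\right) \mathbf{1}\right\} \ ,\qquad E_{x}:=\chi _{x}(\tau _{t}^{(\omega ,\lambda )}(I))-\tau _{t}^{(\omega ,\lambda )}(\chi _{x}(I))\ ,
\end{equation*}
each $E_{x}$ is a centered, uniformly bounded observable localized (up to polynomial tails) near $x$, and for $\lambda >0$ it is generically of order one: the two evolutions of $\chi _{x}(I)$ are driven by what are, for large $\left\vert x\right\vert $, \emph{independent} pieces of the i.i.d.\ disorder (the potential near the origin versus the potential near $x$). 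Consequently $\varrho ^{(\beta ,\omega ,\lambda )}(D_{l}^{\ast }D_{l})$, with $D_{l}$ the above difference, equals $\left\vert \Lambda _{l}\right\vert ^{-1}\sum_{x,y}$ of truncated correlations of the $E$'s and, by the clustering of the quasi--free KMS state, converges to a constant that has no reason to vanish and is generically strictly positive — the discrepancy is itself a fluctuation--scale object, not an error term. So the claim that it is "asymptotically negligible in quadratic mean" is false, and with it the central step of your argument. If the isometry is to be established along your lines, one must instead prove that the \emph{limits of the inner products} coincide — i.e., work directly with the limiting sesquilinear form through (\ref{eq cool}) and the ergodic theorem as in Lemma \ref{bound incr 1 Lemma copy(2)} — which is strictly weaker than the vanishing of the difference vector and rests on a different cancellation; as it stands, your proposal does not supply it.

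Two smaller remarks. In the strong--continuity step, "norm continuity of $t\mapsto \tau _{t}^{(\omega ,\lambda )}(I)$ in $\mathcal{U}$" is the wrong continuity to invoke: since $\Vert \mathbb{F}^{(l)}(B)\Vert $ grows like $\left\vert \Lambda _{l}\right\vert ^{1/2}\Vert B\Vert $, continuity in the $C^{\ast }$--norm yields nothing in the limit $l\rightarrow \infty $. What is needed — and what the paper actually uses — is $\lim_{t\rightarrow 0}\Vert \psi -\mathrm{U}_{t}^{(\omega ,\lambda )}\psi \Vert _{1}=0$, fed into the uniform bound (\ref{bound incr 1 Lemma copy(3)-NEW}) after decomposing $\tau _{t}^{(\omega ,\lambda )}(I)-I$ into the bilinears $a^{\ast }(\psi _{1}-\mathrm{U}_{t}\psi _{1})a(\psi _{2})$ and $a^{\ast }(\mathrm{U}_{t}\psi _{1})a(\psi _{2}-\mathrm{U}_{t}\psi _{2})$; you do mention the $\ell ^{1}$--norms, so this is an imprecision rather than a gap. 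On the other hand, your explicit verification that $\tau _{t}^{(\omega ,\lambda )}$ preserves $\mathcal{I}$ via the $\ell ^{1}$--boundedness of the one--particle propagator is correct and is a point the paper leaves implicit.
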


\begin{proof}
By (\ref{bound incr 1 Lemma copy(3)-NEW}), Theorem \ref{toto fluctbis} and
the stationarity of the KMS state $\varrho ^{(\beta ,\omega ,\lambda )}$\
(cf. (\ref{stationary})), for any $\beta \in \mathbb{R}^{+}$, $\omega \in
\Omega $ and $\lambda \in \mathbb{R}_{0}^{+}$, the one--parameter group $%
\tau ^{(\omega ,\lambda )}$ defines a unitary group $\{\mathrm{V}%
_{t}^{(\omega ,\lambda )}\}_{t\in {\mathbb{R}}}$ on the Hilbert space $(%
\mathcal{H}_{\mathrm{fl}},\langle \cdot ,\cdot \rangle _{\mathcal{H}_{%
\mathrm{fl}}})$ as explained in the theorem: $\tau _{t}^{(\omega ,\lambda )}(%
\mathcal{I})\subset \mathcal{I}$ while the stationarity of $\varrho ^{(\beta
,\omega ,\lambda )}$ implies
\begin{equation*}
\left\Vert \lbrack \tau _{t}^{(\omega ,\lambda )}(I)]\right\Vert _{\mathcal{H%
}_{\mathrm{fl}}}=\left\Vert [I]\right\Vert _{\mathcal{H}_{\mathrm{fl}}}\
,\qquad \lbrack I]\in \mathcal{\check{H}}_{\mathrm{fl}}\ ,
\end{equation*}%
for all $t\in {\mathbb{R}}$. In particular, $\tau _{t}^{(\omega ,\lambda )}(%
\mathcal{I}_{0})\subset \mathcal{I}_{0}$ and hence, $[\tau _{t}^{(\omega
,\lambda )}(I)]\in \mathcal{\check{H}}_{\mathrm{fl}}$ depends only on the
equivalence class $[I]\in \mathcal{\check{H}}_{\mathrm{fl}}$ for all $I\in
\mathcal{I}$ and $t\in {\mathbb{R}}$. It remains to show that, for any $%
\beta \in \mathbb{R}^{+}$, $\omega \in \Omega $ and $\lambda \in \mathbb{R}%
_{0}^{+}$, $\mathrm{V}_{t}^{(\omega ,\lambda )}$ is strongly continuous at $%
t=0$ on a dense subset of $\mathcal{H}_{\mathrm{fl}}$.

To this end, observe that, for any $[I]$ in the dense subspace $\mathcal{%
\check{H}}_{\mathrm{fl}}$ of $\mathcal{H}_{\mathrm{fl}}$ and any \emph{fixed}
$\omega \in \Omega $,
\begin{eqnarray*}
\left\Vert \mathrm{V}_{t}^{(\omega ,\lambda )}\left( [I]\right)
-[I]\right\Vert _{\mathcal{H}_{\mathrm{fl}}}^{2} &=&\underset{l\rightarrow
\infty }{\lim }\varrho ^{(\beta ,\omega ,\lambda )}\left( \mathbb{F}%
^{(l)}\left( I-\tau _{t}^{(\omega ,\lambda )}\left( I\right) \right) ^{\ast }%
\mathbb{F}^{(l)}\left( I\right) \right) \\
&&+\underset{l\rightarrow \infty }{\lim }\varrho ^{(\beta ,\omega ,\lambda
)}\left( \mathbb{F}^{(l)}\left( I-\tau _{-t}^{(\omega ,\lambda )}\left(
I\right) \right) ^{\ast }\mathbb{F}^{(l)}\left( I\right) \right) \ .
\end{eqnarray*}%
We assume w.l.o.g. that $I=a^{\ast }\left( \psi _{1}\right) a\left( \psi
_{2}\right) $ for some $\psi _{1},\psi _{2}\in \ell ^{1}(\mathfrak{L})$.
Then, explicit computations starting from the last equality lead to
\begin{eqnarray}
\left\Vert \mathrm{V}_{t}^{(\omega ,\lambda )}\left( [I]\right)
-[I]\right\Vert _{\mathcal{H}_{\mathrm{fl}}}^{2} &=&\underset{l\rightarrow
\infty }{\lim }\langle I_{1}^{(\omega )}\left( t\right) ,I\rangle _{\mathcal{%
I},l}^{(\omega )}+\underset{l\rightarrow \infty }{\lim }\langle
I_{2}^{(\omega )}\left( t\right) ,I\rangle _{\mathcal{I},l}^{(\omega )}
\label{ineq cool} \\
&&+\underset{l\rightarrow \infty }{\lim }\langle I_{1}^{(\omega )}\left(
-t\right) ,I\rangle _{\mathcal{I},l}^{(\omega )}+\underset{l\rightarrow
\infty }{\lim }\langle I_{2}^{(\omega )}\left( -t\right) ,I\rangle _{%
\mathcal{I},l}^{(\omega )}\ ,  \notag
\end{eqnarray}%
where, for any $\psi _{1},\psi _{2}\in \ell ^{1}(\mathfrak{L})$,%
\begin{equation*}
I_{1}^{(\omega )}\left( t\right) :=a^{\ast }(\psi _{1}-\mathrm{U}%
_{t}^{(\omega ,\lambda )}\psi _{1})a(\psi _{2})\quad \text{and}\quad
I_{2}^{(\omega )}\left( t\right) :=a^{\ast }(\mathrm{U}_{t}^{(\omega
,\lambda )}\psi _{1})a(\psi _{2}-\mathrm{U}_{t}^{(\omega ,\lambda )}\psi
_{2})\ .
\end{equation*}%
Then, by using (\ref{bound incr 1 Lemma copy(3)-NEW}) together with
\begin{equation*}
\underset{t\rightarrow 0}{\lim }\Vert \psi _{1}-\mathrm{U}_{t}^{(\omega
,\lambda )}\psi _{1}\Vert _{1}=\underset{t\rightarrow 0}{\lim }\Vert \psi
_{2}-\mathrm{U}_{t}^{(\omega ,\lambda )}\psi _{2}\Vert _{1}=0\ ,\text{ \ }%
\underset{t\rightarrow 0}{\lim }\Vert \mathrm{U}_{t}^{(\omega ,\lambda
)}\psi _{1}\Vert _{1}=\Vert \psi _{1}\Vert _{1}\ ,
\end{equation*}%
we infer from (\ref{ineq cool}) that
\begin{equation*}
\underset{t\rightarrow 0}{\lim }\left\Vert \mathrm{V}_{t}^{(\omega ,\lambda
)}\left( [I]\right) -[I]\right\Vert _{\mathcal{H}_{\mathrm{fl}}}=0
\end{equation*}%
for any $\beta \in \mathbb{R}^{+}$, $\omega \in \Omega $ and $\lambda \in
\mathbb{R}_{0}^{+}$.
\end{proof}

Note that the strongly continuous one--parameter unitary group $\{\mathrm{V}%
_{t}^{(\omega ,\lambda )}\}_{t\in {\mathbb{R}}}$ on the Hilbert space $(%
\mathcal{H}_{\mathrm{fl}},\langle \cdot ,\cdot \rangle _{\mathcal{H}_{%
\mathrm{fl}}})$ is a priori depending on the parameter $\omega \in \tilde{%
\Omega}$, even if Equation (\ref{CCR4}) does not depend on $\omega \in
\tilde{\Omega}$. In fact, one can also construct a \emph{direct integral}
Hilbert space to get a deterministic, strongly continuous one--parameter
unitary group $\{\mathrm{\bar{V}}_{t}^{(\lambda )}\}_{t\in {\mathbb{R}}}$.
For the interested reader, we sketch this construction in the next
subsection:

\subsubsection{Averaged Initial State and Dynamics\label{Section average
dynamics}}

Note that the map $\omega \mapsto \varrho ^{(\beta ,\omega ,\lambda )}$ is
continuous w.r.t. the topology on $\Omega $ of which $\mathfrak{A}_{\Omega }$
is the Borel $\sigma $--algebra and the weak$^\ast$--topology for states. It
is a consequence of a result similar to \cite[Proposition 5.3.25.]%
{BratteliRobinson} together with the uniqueness of $(\tau ^{(\omega ,\lambda
)},\beta )$--KMS states. Then, define, for any $\beta \in \mathbb{R}^{+}$
and $\lambda \in \mathbb{R}_{0}^{+}$, the averaged state $\bar{\varrho}%
^{(\beta ,\lambda )}\in \mathcal{U}^{\ast }$ by%
\begin{equation*}
\bar{\varrho}^{(\beta ,\lambda )}\left( B\right) :=\mathbb{E}\left[ \varrho
^{(\beta ,\omega ,\lambda )}\left( B\right) \right] \ ,\qquad B\in \mathcal{U%
}\ .
\end{equation*}%
%
%
%
%
%
%
%
%
%
%
%
%

For any $\beta \in \mathbb{R}^{+}$, $\lambda \in \mathbb{R}_{0}^{+}$ and $%
\omega \in \Omega $, let $(\mathcal{H}^{(\omega )},\pi ^{(\omega )},\Psi
^{(\omega )})$ be the GNS representation of the $(\tau ^{(\omega ,\lambda
)},\beta )$--KMS state $\varrho ^{(\beta ,\omega ,\lambda )}$. The vector $%
\Psi ^{(\omega )}$ is cyclic and the CAR $C^{\ast }$--algebra $\mathcal{U}$
is separable. Therefore, there is a sequence $\{B_{n}\}_{n\in \mathbb{N}%
}\subset \mathcal{U}$ such that the subset
\begin{equation*}
\{\pi ^{(\omega )}\left( B_{n}\right) \Psi ^{(\omega )}\}_{n\in \mathbb{N}%
}\subset \mathcal{H}^{(\omega )}\ ,\qquad \omega \in \Omega \ ,
\end{equation*}%
is dense in $\mathcal{H}^{(\omega )}$. Moreover, the map
\begin{equation*}
\omega \mapsto \langle \pi ^{(\omega )}\left( B_{n}\right) \Psi ^{(\omega
)},\pi ^{(\omega )}\left( B_{m}\right) \Psi ^{(\omega )}\rangle _{\mathcal{H}%
^{(\omega )}}=\varrho ^{(\beta ,\omega ,\lambda )}\left( B_{n}^{\ast
}B_{m}\right)
\end{equation*}%
is bounded and measurable w.r.t. the $\sigma $--algebra $\mathfrak{A}%
_{\Omega }$ for all $n,m\in \mathbb{N}$. It follows that $\{\mathcal{H}%
^{(\omega )}\}_{\omega \in \Omega }$ is a \emph{measurable} family, see \cite%
[Definition 4.4.1B.]{BratteliRobinsonI}. In particular, as the probability
measure $\mathfrak{a}_{\mathbf{0}}$ is a \emph{standard} measure, there is a
\emph{direct integral} Hilbert space
\begin{equation*}
\mathcal{\bar{H}}:=\int_{\Omega }^{\oplus }\mathcal{H}^{(\omega )}\text{ }%
\mathrm{d}\mathfrak{a}_{\mathbf{0}}(\omega )
\end{equation*}%
with scalar product
\begin{equation*}
\langle b_{1},b_{2}\rangle _{\mathcal{\bar{H}}}:=\int_{\Omega }\langle
b_{1}^{(\omega )},b_{2}^{(\omega )}\rangle _{\mathcal{H}^{(\omega )}}\mathrm{%
d}\mathfrak{a}_{\mathbf{0}}(\omega )\ .
\end{equation*}

Note that $\mathcal{U}$ is the inductive limit of (finite dimensional)
simple $C^{\ast }$--algebras $\{\mathcal{U}_{\Lambda }\}_{\Lambda \in
\mathcal{P}_{f}(\mathfrak{L})}$, see \cite[Lemma IV.1.2]{simon}. By \cite[%
Corollary 2.6.19.]{BratteliRobinsonI}, $\mathcal{U}$ is thus simple and
hence, the $(\tau ^{(\omega ,\lambda )},\beta )$--KMS state $\varrho
^{(\beta ,\omega ,\lambda )}$ is faithful. In particular, $\pi ^{(\omega )}$
is injective for any $\omega \in \Omega $. We define a \emph{separating}
vector
\begin{equation*}
\bar{\Psi}:=\int_{\Omega }^{\oplus }\Psi ^{(\omega )}\mathrm{d}\mathfrak{a}_{%
\mathbf{0}}(\omega )\in \mathcal{\bar{H}}
\end{equation*}%
and a non--degenerate and injective representation
\begin{equation*}
\bar{\pi}:=\int_{\Omega }^{\oplus }\pi ^{(\omega )}\mathrm{d}\mathfrak{a}_{%
\mathbf{0}}(\omega )
\end{equation*}%
of the $C^{\ast }$--algebra $\mathcal{U}$ into the space $\mathcal{B}(%
\mathcal{\bar{H}})$. Then we have
\begin{equation*}
\bar{\varrho}^{(\beta ,\lambda )}\left( B\right) =\langle \bar{\Psi},\bar{\pi%
}(B)\bar{\Psi}\rangle _{\mathcal{\bar{H}}}\ ,\qquad B\in \mathcal{U}\ .
\end{equation*}%
In other words, $(\mathcal{\bar{H}},\bar{\pi})$ is a faithful representation
of the $C^\ast$--algebra $\mathcal{U}$ and $\bar{\Psi}$ is a separating
vector representing the state $\bar{\varrho}^{(\beta ,\lambda )}$.

Observe that the one--parameter group $\tau ^{(\omega ,\lambda )}$ has a
unique unitary representation $\{\mathrm{e}^{it\mathcal{L}^{(\omega
)}}\}_{t\in \mathbb{R}}\subset \pi ^{(\omega )}\left( \mathcal{U}\right)
^{\prime \prime }$ with $\mathcal{L}^{(\omega )}$ being a self--adjoint
operator acting on the Hilbert space $\mathcal{H}^{(\omega )}$ such that $%
\Psi ^{(\omega )}\in \mathrm{Dom}(\mathcal{L}^{(\omega )})$ and $\mathcal{L}%
^{(\omega )}\Psi ^{(\omega )}=0$. The family $\{\mathrm{e}^{it\mathcal{L}%
^{(\omega )}}\}_{t\in \mathbb{R},\omega \in \Omega }$ defines a strongly
continuous one--parameter unitary group $\{\bar{U}_{t}\}_{t\in \mathbb{R}}$
on $\mathcal{\bar{H}}$ by
\begin{equation*}
\bar{U}_{t}:=\int_{\Omega }^{\oplus }\mathrm{e}^{it\mathcal{L}^{(\omega )}}%
\mathrm{d}\mathfrak{a}_{\mathbf{0}}(\omega )\ .
\end{equation*}%
It defines an averaged unitary dynamics on $\mathcal{\bar{H}}$ which
satisfies $\bar{U}_{t}\bar{\Psi}=0$. In particular we can define a \emph{%
deterministic} one--parameter group $\bar{\tau}^{(\lambda )}\equiv \{\bar{%
\tau}_{t}^{(\lambda )}\}_{t\in {\mathbb{R}}}$ of automorphisms of $\mathcal{B%
}(\mathcal{\bar{H}})$ by
\begin{equation*}
\forall t\in \mathbb{R},\ B\in \mathcal{B}(\mathcal{\bar{H}}) :\qquad \bar{%
\tau}_{t}\left( B\right) :=\bar{U}_{t}B\bar{U}_{t}^{\ast }\in \mathcal{B}%
\left( \mathcal{\bar{H}}\right) \ .
\end{equation*}

Using these constructions, one can perform all the arguments of Sections \ref%
{section fluct1}--\ref{Sect Hilbert Space and Dynamics} by taking the
invariant space
\begin{equation*}
\bar{\mathcal{I}}:=\int_{\Omega }^{\oplus }\pi ^{(\omega )}(\mathcal{I})%
\mathrm{d}\mathfrak{a}_{\mathbf{0}}(\omega )\subset \mathcal{B}(\mathcal{%
\bar{H}})
\end{equation*}%
(cf. (\ref{space of currents})) of the group $\bar{\tau}^{(\lambda )}$. See,
e.g., Theorem \ref{toto fluctbis} (i). Then, for any $\beta \in \mathbb{R}%
^{+}$ and $\lambda \in \mathbb{R}_{0}^{+}$, one obtains the existence of a
unique, strongly continuous one--parameter \emph{deterministic} unitary
group $\{\mathrm{\bar{V}}_{t}^{(\lambda )}\}_{t\in {\mathbb{R}}}$ on the
Hilbert space constructed from the space of equivalence classes $\bar{%
\mathcal{I}}/\bar{\mathcal{I}}_{0}$ and denoted again by $(\mathcal{H}_{%
\mathrm{fl}},\langle \cdot ,\cdot \rangle _{\mathcal{H}_{\mathrm{fl}}})$.
The unitary group $\{\mathrm{\bar{V}}_{t}^{(\lambda )}\}_{t\in {\mathbb{R}}}$
obeys, for any $t\in {\mathbb{R}}$,
\begin{equation*}
\mathrm{\bar{V}}_{t}^{(\lambda )}([\bar{\pi}(I)])=[\bar{\tau}_{t}(\bar{\pi}%
(I))]\ ,\quad I\in \mathcal{I}\ .
\end{equation*}%
Moreover, by Theorems \ref{thm charged transport coefficient} (p) and \ref%
{toto fluctbis} (i),%
\begin{equation*}
\left\{ \mathbf{\Xi }_{\mathrm{p}}\left( t\right) \right\} _{k,q}=2\mathrm{Im%
}\left\{ \left\langle [\bar{\pi}(I_{e_{k},0})],\int\nolimits_{0}^{t}\mathrm{%
\bar{V}}_{s}^{(\lambda )}([\bar{\pi}(I_{e_{q},0})])\mathrm{d}s\right\rangle
_{\mathcal{H}_{\mathrm{fl}}}\right\}
\end{equation*}%
for any $\beta \in \mathbb{R}^{+}$, $\lambda \in \mathbb{R}_{0}^{+}$, $t\in
\mathbb{R}$ and $k,q\in \{1,\ldots ,d\}$.\bigskip

\noindent \textit{Acknowledgments:} We would like to thank Volker Bach,
Horia Cornean, Abel Klein and Peter M\"{u}ller for relevant references and
interesting discussions as well as important hints. JBB and WdSP are also
very grateful to the organizers of the Hausdorff Trimester Program entitled
\textquotedblleft \textit{Mathematical challenges of materials science and
condensed matter physics}\textquotedblright\ for the opportunity to work
together on this project at the Hausdorff Research Institute for Mathematics
in Bonn. This work has also been supported by the grant MTM2010-16843, MTM2014-53850 and the BCAM Severo Ochoa accreditation SEV-2013-0323
(MINECO) of the Spanish {}\textquotedblleft Ministerio de Ciencia e Innovaci{%
\'{o}}n\textquotedblright as well as the FAPESP grant 2013/13215--5 and the
Basque Government through the grant IT641-13 and the BERC 2014-2017 program.
Finally, we thank very much the referees for their work and interest in the
improvement of the paper.

\end{document}